\definecolor{Gray}{gray}{0.9}
\newtheorem{rem}{Remark}
\newtheorem{lem}{Lemma}
\newtheorem{thm}{Theorem}
\newtheorem{assump}{Assumption}
\author{Seyedyashar Mousavi} 
\keywords{} 
\begin{document}

\frontmatter 

\pagestyle{plain} 


\begin{titlepage}
\begin{center}
\includegraphics[width=2 in]{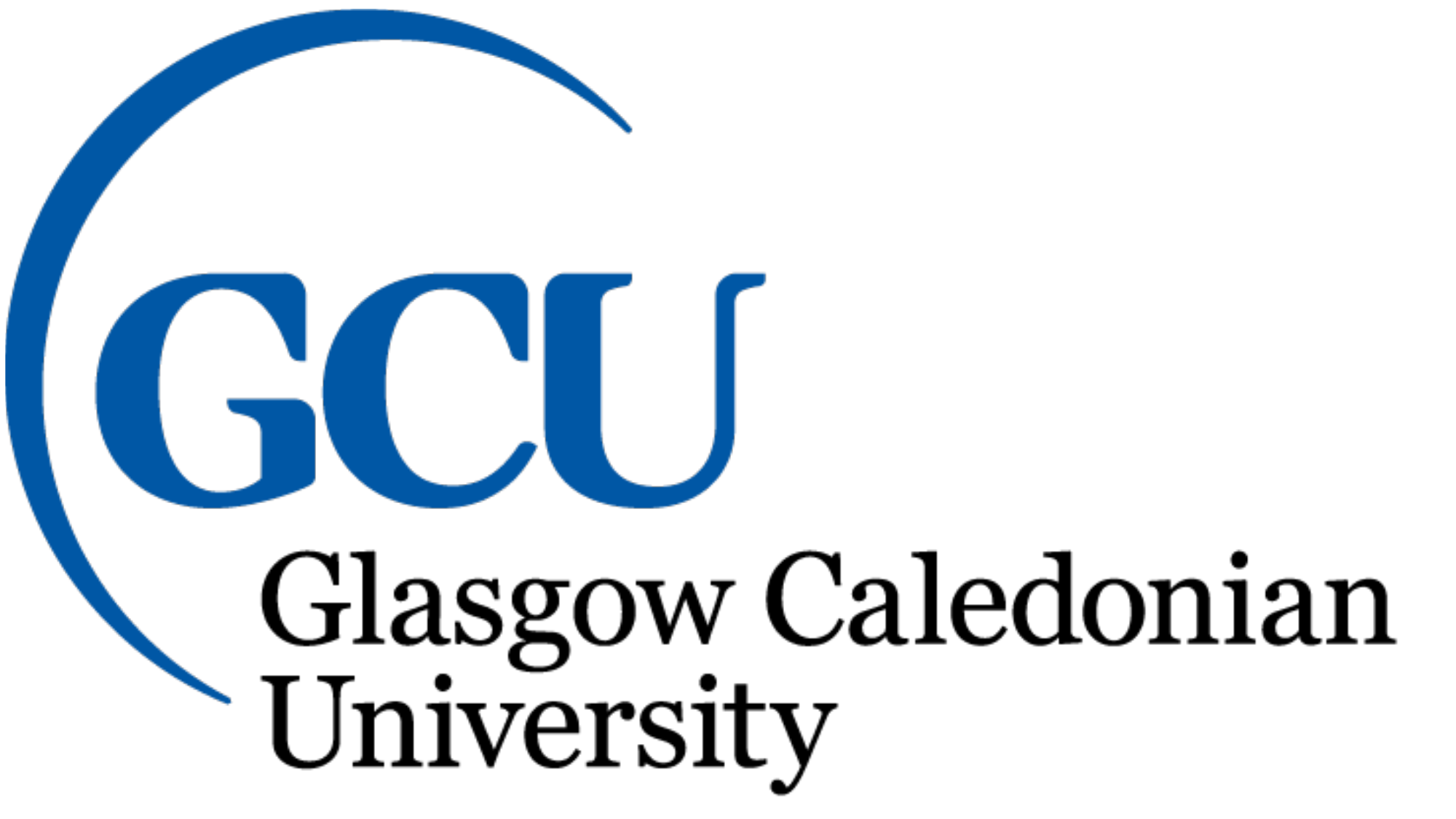} 

\vspace*{.06\textheight}
{\scshape\LARGE \univname\par}\vspace{1.5cm} 
\textsc{\Large Doctoral Thesis}\\[0.5cm] 

\HRule \\[0.4cm] 
{\huge \bfseries \ttitle\par}\vspace{0.4cm} 
\HRule \\[1.5cm] 

\begin{minipage}[t]{0.4\textwidth}
\emph{Author:}\\
\href{https://scholar.google.com/citations?hl=en&user=y_WAYDQAAAAJ&view_op=list_works&sortby=pubdate}{\authorname} 
\end{minipage}
\begin{minipage}[t]{0.5\textwidth}
\emph{Supervisor:} \\
{\supname} 
\end{minipage}\\[3cm]

\vfill

\large \textit{Doctoral Thesis}\\[0.3cm] 
\groupname\\\deptname\\[2cm] 

\vfill

{\large \today}\\[4cm] 

\vfill
\end{center}
\end{titlepage}

\section*{Abstract}
\addchaptertocentry{\abstractname} 

Motivated by the increasing concerns over environmental challenges such as global warming and the exhaustion of fossil-fuel reserves, the renewable energy industry has become the most demanded electrical energy production source worldwide. In this context, wind energy conversion systems (WECSs) are the most dominant and fastest-growing alternative energy production technologies, playing an increasingly vital role in renewable power generation. To meet this growing demand and considering the vulnerability of WECSs' performance to various sets of internal/external-caused faults, the cost-effectiveness and efficient power production of WECSs must be ensured, highlighting the critical role of the control system. This topic has been intensively studied in the literature, and many control approaches have been developed to cope with the simultaneous efficiency and reliability enhancement of WECSs. However, among all control strategies, sliding mode control (SMC) has shown its reliable and superior performance due to its robustness against uncertainties and disturbances along with the relative simplicity of implementation.

One of the most prominent deficiencies of conventional SMC controllers (SMCs) is the chattering phenomenon, which is a high-frequency oscillation caused by the discontinuous switching control.  \textcolor{black}{This phenomenon makes the state trajectories quickly oscillate around the sliding surface, leading to degradation of the control system's performance. These oscillations make the controller consume more power to deliver the desired performance, which leads to unwanted larger control signals. Also, undesirable heat losses for electrical power circuits or higher wear of moving mechanical parts can result in practice. This issue manifests itself when the system is faulty, a semi-common situation for WECSs that are constantly exposed to environment changes and characteristic changes of mechanical parts.} As one solution for the chattering problem, higher-order SMCs are developed that consider higher-order derivatives with respect to time. Thus, the sliding variable and its consecutive derivatives tend to zero with higher stabilization accuracy in finite time. Consequently, the chattering phenomenon's undesired effects are reduced; however, the problem is not entirely resolved and needs more effort and improvements. On the other hand, fractional-order derivatives add distinctive memory features to the control system, resulting in faster convergence of the state variables to the equilibrium point and mitigating the conventional SMCs' chattering problem. However, \textcolor{black}{due to the highly nonlinear behavior of WECSs stemming from its electro-mechanical components}, very few studies in the literature have dealt with the control problem of WECSs using fractional-calculus-based SMCs.

To tackle the pitch control problem of WECSs, this thesis proposes an optimal fault-tolerant fractional-order pitch control strategy for pitch angle regulation of WT blades subjected to sensor, actuator, and system faults. \textcolor{black}{To showcast the effects of faults, changes in the system parameters are considered as a result of sensor, actuator and system faults with various levels of severity.} Furthermore, taking advantage of the favourable merits of higher-order SMCs and fractional calculus, this thesis develops a fault-tolerant fractional-calculus-based higher-order sliding mode controller for optimum rotor speed tracking and power production maximization of WECSs. \textcolor{black}{The partial loss of the generator output torque is considered as an actuator failure, leading to loss of partial actuation power.} \textcolor{black}{Moreover, active fault-tolerant fractional-order higher-order SMC strategies are developed for rotor current regulation and speed trajectory tracking of doubly-fed induction generator (DFIG) -driven WECSs subjected to model uncertainties and rotor current sensor faults.} The developed controllers are augmented with two state observers, an algebraic state estimator and a sliding mode observer, to estimate the rotor current dynamics during sensors' faults. \textcolor{black}{The developed control schemes demonstrate robustness against model uncertainties and sensor faults.} In addition to tackling the control problems of WECSs, the proposed SMC-based controllers aim to alleviate the chattering problem, guarantee a fast finite-time convergence of the system states, provide robustness against external disturbances and model uncertainties, and deliver higher control precision.

\clearpage
\section*{List of Publications (Achieved from the PhD thesis)}
\subsection*{Published}
\begin{enumerate}
\item \textbf{Y. Mousavi}, G. Bevan, I. B. Kucukdemiral, A. Fekih, "Fault-tolerant observer-based Higher-order Sliding Mode Control of DFIG-based Wind Turbines with Sensor Faults," \textit{IEEE Transactions on Industry Applications}, 2023.
  \item \href{https://www.sciencedirect.com/science/article/pii/S1364032122006232}{\textbf{Y. Mousavi}, G. Bevan, I. B. Kucukdemiral, A. Fekih, "Sliding mode control of wind energy conversion systems: trends and applications," \textit{Renewable and Sustainable Energy Reviews}, 167, 2022.}
  \item \textbf{Y. Mousavi}, G. Bevan, I. B. Kucukdemiral, A. Fekih, "Active fault-tolerant fractional-order terminal sliding mode control for DFIG-based wind turbines subjected to sensor faults," \textit{2022 IEEE IAS Global Conference on Emerging Technologies (GlobConET), May 2022}.
  \item \href{https://www.sciencedirect.com/science/article/pii/S0019057821005383}{\textbf{Y. Mousavi}, G. Bevan, I. B. Kucukdemiral, "Fault-tolerant optimal pitch control of wind turbines using dynamic weighted parallel firefly algorithm," \textit{ISA Transactions}, 128, 301-317 2022.}
  \item \href{https://www.mdpi.com/1996-1073/14/18/5887}{\textbf{Y. Mousavi}, G. Bevan, I. B. Kucukdemiral, A. Fekih, "Maximum power extraction from wind turbines using a fault-tolerant fractional-order nonsingular terminal sliding mode controller," \textit{Energies}, 14(18), 2021.}
\end{enumerate}

\section*{List of Publications (partly related to the PhD thesis)}
\subsection*{Published}
\begin{enumerate}
\item \href{https://ieeexplore.ieee.org/abstract/document/10194923}{A. Mehrzad, M. Darmiani, \textbf{Y. Mousavi}, M. Shafie-khah, M. Aghamohammadi, "A Review on Data-driven Security Assessment of Power Systems: Trends and Applications of Artificial Intelligence," \textit{IEEE Access}, 2023.}
\item \href{https://www.sciencedirect.com/science/article/pii/S0016003223004787?casa_token=L9J9P3buiMgAAAAA:bRaWqkECUdPhIcc9pCH315qamuokcY9GJNjCkEtLzZN3Ajlu6Bo1uhXIi_yQav2L766nkOjpJg}{Z. S. Aghayan, A. Alfi, \textbf{Y. Mousavi}, A. Fekih, "Stability analysis of a class of variable fractional-order uncertain neutral-type systems with time-varying delay," \textit{Journal of the Franklin Institute}, 2023.}
  \item \textbf{Y. Mousavi}, G. Bevan, I. B. Kucukdemiral, A. Fekih, "Disturbance observer and tube-based model reference adaptive control for active suspension systems with non-ideal actuators," \textit{22nd World Congress of the International Federation of Automatic Control (IFAC2023), 2023}.
\item \href{https://ieeexplore.ieee.org/document/10068797}{Z. S. Aghayan, A. Alfi, \textbf{Y. Mousavi}, A. Fekih, "Criteria for stability and stabilization of variable fractional-order uncertain neutral systems with time-varying delay: Delay-dependent analysis," \textit{IEEE Transactions on Circuits and Systems II: Express Briefs}, 2023.}
  \item \href{https://www.sciencedirect.com/science/article/pii/S0960077922007238}{Z. S. Aghayan, A. Alfi, \textbf{Y. Mousavi}, I. B. Kucukdemiral, A. Fekih, "Guaranteed cost robust output feedback control design for fractional-order uncertain neutral delay systems," \textit{Chaos Solitons \& Fractals}, 2022.}
  \item \href{https://www.ieee-jas.net/en/article/doi/10.1109/JAS.2022.105470}{\textbf{Y. Mousavi}, A. Alfi, I. B. Kucukdemiral, A. Fekih, "Tube-based model reference adaptive control for vibration suppression of active suspension systems," \textit{IEEE-CAA Journal of Automatica Sinica}, 9(4), 728-731, 2022.}
  \item \href{https://www.sciencedirect.com/science/article/pii/S0019057821002883}{\textbf{Y. Mousavi}, A. Zarei, Z. Sane Jahromi, "Robust adaptive fractional-order nonsingular terminal sliding mode stabilization of three-axis gimbal platforms," \textit{ISA Transactions}, 123, 98-109, 2022.}
  \item \href{https://ieeexplore.ieee.org/document/9777992}{A. Mehrzad, M. Darmiani, \textbf{Y. Mousavi}, M. Shafie-khah, M. Aghamohammadi, "An efficient rapid method for generators coherency identification in large power systems," \textit{IEEE Open Access Journal of Power and Energy}, 9, 151-160, 2022.}
  \item \href{https://link.springer.com/article/10.1007/s11633-021-1282-3}{\textbf{Y. Mousavi}, A. Zarei, A. Mousavi, M. Biari, "Robust optimal higher-order-observer-based dynamic sliding mode control for VTOL unmanned aerial vehicles," \textit{International Journal of Automation and Computing}, 18(5), 802-813, 2021.}
  \item \href{https://ieeexplore.ieee.org/abstract/document/9151887}{\textbf{Y. Mousavi}, A. Alfi, I. B. Kucukdemiral, "Enhanced fractional chaotic whale optimization algorithm for parameter identification of isolated wind-diesel power systems," \textit{IEEE Access}, 8, 140862-140875, 2020.}
\end{enumerate}
\subsection*{Submitted/Under Review}
\begin{enumerate}
  \item Z. S. Aghayan, A. Alfi, \textbf{Y. Mousavi}, A. Fekih, "Robust delay-dependent output-feedback PD controller design for variable fractional-order uncertain neutral systems with time-varying delay," \textit{IEEE Transactions on Systems, Man, and Cybernetics: Systems}, under review, 2023. Manuscript ID: SMCA-22-05-1364.
\end{enumerate}
\clearpage
\hypersetup{linkcolor=black}
\tableofcontents 

\listoffigures 

\listoftables 

\begin{abbreviations}{ll} 

ANFIS & Adaptive Neuro-Fuzzy Inference System  \\
APC & Active Power Control  \\
ASMC & Adaptive Sliding Mode Control  \\
CART & Controls Advanced Research Turbine \\
CCGBFO & Enhanced Bacterial Foraging Optimization \\
CEC & Congress on Evolutionary Computation \\
CR & Chattering Reduction \\
DCVR & DC-link Voltage Regulation \\
DDSG & Direct Driven Synchronous Generator   \\
DFIG & Doubly-Fed Induction Generator  \\
DOIG & Double Output Induction Generator  \\
DSIG & Dual Stator Induction Generator  \\
EA & Evolutionary Algorithm \\
EGWO & Enhanced Grey Wolf Optimization \\
EM-FOPID & Extended Memory FOPID \\
F-SMC & Fuzzy Sliding Mode Control  \\
FA & Firefly Algorithm \\
FIS & Fuzzy Inference System  \\
FLC & Fuzzy Logic Control  \\
FNTSMC & Fractional-order Nonsingular Terminal Sliding Mode Control \\
FO & Fractional Order\\
FO-SMC & Fractional-Order Sliding Mode Control \\
FOFA & Fractional-order FA \\
FOPID & Fractional-order PID \\
FPSOMA & Fractional PSO-based memetic algorithm \\
FTC & Fault Tolerant Control \\
FTPC & Fault Tolerant Pitch Control \\
G-L & Grunwald--Letnikov \\
GA & Genetic Algorithm \\
GSA & Gravitational Search Algorithm  \\
GSC & Grid-Side Converter  \\
GSCV & Grid-Side Converter Voltage regulation \\
GWO & Grey Wolf Optimizer \\
HAWT & Horizontal Axis Wind Turbine \\
HO-SMC & Higher-Order Sliding Mode Control   \\
ISCO & Integral of Squared Control\\
ISMC & Integral Sliding Mode Control \\
ITSE & Integral of Time Multiplied Squared Error \\
LFC & Load Frequency Control  \\
LQR & Linear Quadratic Regulator  \\
MIMO & Multi-Input Multi-Output   \\
ML & Mittag--Leffler \\
MPC & Model Predictive Control \\
MPE & Maximum Power Extraction  \\
MPPT & Maximum Power Point Tracking \\
MSM & Mechanical Stress Minimization\\
NFE & Number of Maximum Function Evaluation\\
NN-SMC & Neural Network Sliding Mode Control\\
OFOPID & Optimal FOPID \\
PAC & Pitch Angle Control\\
PI & Proportional Integral\\
PID & Proportional Integral Derivative\\
PMSG & Permanent-Magnet Synchronous Generator\\
PMSM & Permanent-Magnet Synchronous Motor\\
PSO & Particle Swarm Optimization\\
PWM & Pulse Width Modulation \\
R-L & Riemann--Liouville \\
RBFNN & Radial Basis Function Neural Network \\
RPC & Reactive Power Control\\
RSC & Rotor-Side Converter\\
RSCV & Rotor-Side Converter Voltage Regulation\\
SA & Simulated Annealing \\
SCIG & Squirrel-Cage Induction Generator \\
SEIG & Self-Excited Induction Generator\\
SFG & Simple First-order Generator\\
SMC & Sliding Mode Control\\
SMO & Sliding Mode Observer\\
SO-SMC & Second-Order Sliding Mode Control\\
SOFTSMC & Second-order Fast Terminal SMC \\
SSG & Simple Second-order Generator\\
ST-SMC & Super-Twisting Sliding Mode Control\\
TSMC & Terminal Sliding Mode Control\\
WECS & Wind Energy Conversion System\\
WT & Wind Turbine\\
\end{abbreviations}


\begin{symbols}{lll} 

\textbf{\scalebox{1.2}{Aerodynamics}} & & \\
\textbf{Symbol} & \textbf{Parameter} & \textbf{Unit}\\
$c$&Scale factor & --\\
$C_P, C_q, C_t$&Power, torque, and thrust coefficients & -- \\
$C_{p,\max}$&Maximum power coefficient & --\\
$F_t$&Thrust force & \si{N}\\
$k$&Shape factor & --\\
$P_a, T_a$&Aerodynamic power and torque & \si{W}, \si{N.m}  \\
$P_{g}$ & Generated power by the generator & \si{W}\\
$P_{ref}$ & Reference power & \si{W}\\
$R$&Blade radius & \si{m}\\
$\beta$&Blade pitch angle & \si{\circ}\\
$\beta_{opt}$&Optimum blade pitch angle & \si{\circ}\\
$\beta _{r} $ & Command signal for the pitch angle & \si{\circ}\\
$\eta _{g}$ & Generator's efficiency & --\\
$\lambda$&Tip-speed ratio & --\\
$\lambda_{opt}$&Optimum tip-speed ratio & --\\
$\xi $ & Damping factor & --\\
$\xi_0$ & Damping factor nominal value & --\\
$\xi_f$ & Damping factor low pressure value & --\\
$\rho$&Air density & \si{kg/m^3}\\
$\upsilon_{w}$&Effective wind speed & \si{m/s}\\
$\upsilon _{m} \left(t\right)$ & Slow wind variations & \si{m/s}\\
$\upsilon _{s} \left(t\right)$ & Stochastic wind behavior & \si{m/s}\\
$\upsilon _{ws} \left(t\right)$ & Wind shear effects & \si{m/s}\\
$\upsilon _{ts} \left(t\right)$ & Tower shadow effects & \si{m/s}\\
$\omega _{g} $ & Generator speed & \si{rad/s} \\
$\omega _{n} $ & Natural frequency & \si{rad/s}\\
$\omega _{n,0}$ & Natural frequency nominal value & \si{rad/s}\\
$\omega _{n,f}$ & Natural frequency low pressure value & \si{rad/s}\\
$\omega_r$&Rotor speed & \si{rad/s} \\
$\omega _{r,opt}$ & Optimal rotor speed & \si{rad/s}\\
\addlinespace 
\textbf{\scalebox{1.2}{Drivetrain}} & & \\
\textbf{Symbol} & \textbf{Parameter} & \textbf{Unit}\\
$B_{G}$ & Viscous friction of the high-speed shaft & \si{N.m.s/rad}\\
$D_B, D_H, D_{BH}$&Blade, hub, and turbine damping coefficients & \si{N.s/rad}\\
$D_{LS}, D_{HS}$&Low and high speed shaft damping coefficients & \si{N.s/rad}\\
$D_G, D_{GB}$&Generator and gearbox damping coefficients & \si{N.s/rad}\\
$D_{GBG}$&Gearbox and generator damping coefficient & \si{N.s/rad}\\
$D_R$ & Rotor external damping coefficient & \si{N.s/rad}\\
$D_t$ & Induced total external damping on the rotor side & \si{N.s/rad} \\
$D_{WT}$&Damping coefficient of the 1-Mass system & \si{N.s/rad}\\
$J_B, J_G, J_{GB}, J_H$&Blade, generator, gearbox, and hub inertias & \si{kg.m^2}\\
$J_{GBG}$&Summation of gearbox and generator inertia & \si{kg.m^2}\\
$J_{BH}$&Summation of blade and hub inertia & \si{kg.m^2}\\
$J_{R} $ & Rotor inertia & \si{kg.m^2}\\
$J_t$ & Induced total inertia & \si{kg.m^2}\\
$J_{WT}$&Summation of all rotating components' masses & \si{kg.m^2}\\
$K_{LS}, K_{HS}$&Low and high speed shaft stiffness constants & \si{N.m/rad}\\
$K_{BH}$&Blade stiffness constant & \si{N.m/rad}\\
$K_{HLS}$&Parallel shaft stiffness & \si{N.m/rad}\\
$N_{GB}$&Gearbox ratio & -- \\
$T_G, T_W$&Generator and aerodynamic torques & \si{N.m}\\
$T_{G,ref} $ & Torque reference to the generator & \si{N.m}\\
$T_{LS}$ & Low speed shaft torque & \si{N.m}\\
$T_{HS}$ & High speed shaft torque & \si{N.m}\\
$\alpha _{gc}$ & Generator and converter unit coefficient & -- \\
$\zeta_f\left(t\right)$ & Actuator efficiency factor & --\\
$\eta _{dt} $ & Drivetrain efficiency & -- \\
$\theta_{BH}$&Angle between blade and hub & \si{\circ}\\
$\theta_{LS}$&Angle between hub and gearbox & \si{\circ}\\
$\theta_{HS}$&Angle between gearbox and generator rotor & \si{\circ}\\
$\theta_{HLS}$&Angle between $J_{BH}$ and $J_{GBG}$ & \si{\circ}\\
$\theta _{\Delta } $ & Drivetrain torsion angle & \si{\circ}\\
$\theta_{R}$, $\theta_{LS}$ & Rotation angle of the rotor and generator shafts & \si{\circ}\\
$\tau_{gc}$ & Generator and converter time constant & \si{s}\\
$\omega_B, \omega_H, \omega_GB, \omega_G$&Blade, hub, gearbox, and generator speeds & \si{rad/s}\\
$\omega_0$&Synchronous speed of electrical system & \si{rad/s}\\
$\omega_{BH}$&Turbine rotational speed & \si{rad/s}\\
$\omega_{GBG}$&Summation of gearbox and generator speeds & \si{rad/s}\\
$\omega_{WT}$&Rotational speed on the 1-Mass system & \si{rad/s}\\
$\omega_{LS}$ & Low shaft speed & \si{rad/s}\\
\addlinespace 
\textbf{\scalebox{1.2}{Generators}} & & \\
\textbf{Symbol} & \textbf{Parameter} & \textbf{Unit}\\
\textbf{DFIG} & & \\
$f_r$ & Friction coefficient & --\\
$I_{ds}$,$I_{dr}$, $I_{qs}$, $I_{qr}$&d-axis, q-axis stator and rotor currents & \si{A}\\
$I_{dr-ref}$, $I_{qr-ref}$ & Reference d-axis, q-axis stator and rotor currents & \si{A}\\
$L_s$, $L_r$, $L_m$&Stator, rotor, and magnetizing inductances & \si{H}\\
$N_P$ &Number of pole pairs & -- \\
$p_i, q_i, \phi, \alpha_i, \beta_i, \kappa_i, \eta_i$ & Developed FTSMC controllers' parameters & --\\
$P_s$, $Q_s$, $P_r$, $Q_r$&Stator and rotor active and reactive powers & \si{W}, \si{VAR}\\
$R_s$, $R_r$&Stator and rotor resistances & \si{\Omega}\\
$T_{em}$&Electromagnetic torque & \si{N.m}\\
$V_{ds}$, $V_{dr}$, $V_{qs}$, $V_{qr}$&d-axis, q-axis stator and rotor voltages & \si{V}\\
$\psi_{ds}$, $\psi_{dr}$, $\psi_{qs}$, $\psi_{qr}$&d-axis, q-axis stator and rotor flux & \si{Wb}\\
$\omega_s$ & Stator speed & \si{rad/s}\\
\textbf{PMSG} & & \\
$I_{sd}$,$V_{sd}$&d-axis stator current and voltage & \si{A}, \si{V}\\
$I_{sq}$,$V_{sq}$&q-axis stator current and voltage & \si{A}, \si{V}\\
$L_d$, $L_q$&d-q axis mutual inductances & \si{H}\\
$\psi_s$&Flux linkage & \si{Wb}\\
$\omega_e$&Rotor electrical angular speed & \si{rad/s}\\
\addlinespace 
\textbf{\scalebox{1.2}{DWPFA Algorithm}} & & \\
\textbf{Symbol} & \textbf{Parameter} & \textbf{Unit}\\
$\mathfrak I$ &Light intensity & --\\
$\mathfrak I_{0} $ &Initial light intensity & --\\
$\mathfrak I_{GL}$ &Subgroup leader's light intensity & --\\
$n_{it} $, $n_{it,\max } $ &Current iteration and max iteration & --\\
$n_{p}$ &Total members of the populations & --\\
$n_{sg}$ &Number of predefined groups & --\\
$\mathfrak r=0$ &Distance between two fireflies & --\\
$R_f$ & DWPFA damping coefficient & --\\
$\mathfrak X_{GL} $ &Subgroup leader's position & --\\
$w_{sg}$ &Weighting coefficient of subgroups & --\\
$\gamma_f $ &Light absorption coefficient & --\\
$\eta_f $, $\psi_f $ & Random numbers within the interval [0,1] & --\\
$\kappa_f $ &Switching coefficient sensitivity parameter & --\\
$\chi_f$ &Firefly's attractiveness & --\\
$\chi_{f,0} $ &Firefly's attractiveness at $\mathfrak r=0$ & --\\
\addlinespace 
\textbf{\scalebox{1.2}{Fractional-order Calculus}} & &\\
\textbf{Symbol} & \textbf{Parameter} & \textbf{Unit}\\
$\mathfrak D$ & Fractional differentiator/integrator & --\\
$J$ &Objective function of the controller parameters & --\\
$K$ & Oustaloup recursive approximation adaptive gain & -- \\
$K_{p}, K_{i}, K_{d}$ & FOPID parameters & --\\
$L\{\cdot\}$ & Laplace transformation & --\\
$\mathfrak{p}$ & Laplace operator & -- \\
$R_{CH}$ &Radius of convergence & --\\
$T$ &Sampling time & \si{s}\\
$V_c$ &Constraints violation coefficient & --\\
$\gamma$ &Fractional derivation order & --\\
$\Gamma \left(\cdot\right)$ & Euler's gamma function & --\\
$\sigma_{ML} $ &Order of ML function & --\\
$\Upsilon$ &Truncation order & --\\
$\omega _{b} $& Lower constraint of the approximation frequency & --\\
$\omega _{h} $ & Upper constraint of the approximation frequency& --\\
\end{symbols}


\mainmatter 

\pagestyle{thesis} 



\chapter{Introduction} 

\label{Chapter1} 


\newcommand{\keyword}[1]{\textbf{#1}}
\newcommand{\tabhead}[1]{\textbf{#1}}
\newcommand{\code}[1]{\texttt{#1}}
\newcommand{\file}[1]{\texttt{\bfseries#1}}
\newcommand{\option}[1]{\texttt{\itshape#1}}

\section{Brief Introduction}
\label{sec:1.01}

This chapter outlines the motivation for the research undertaken on optimal and robust fault tolerant control of wind turbines working under sensor, actuator, and system faults, the research aims and objectives, and then conclude by outlining the structure of the remainder of the thesis.

\section{Motivation}
\label{sec:1.1}
Wind energy technology has seen considerable growth in the past two decades by continuing to push the boundaries in energy efficiency and reliability of supply \cite{adedeji2020wind, fu2020fatigue, pape2019offshore}. Taking into consideration the global concerns on climate change, global warming, and exhaustion of fossil energy resources, wind energy conversion systems (WECSs) have established themselves as the world's leading economically and ecologically power production technologies, overtaking other renewable green resources such as solar, hydro, and ocean waves.

WECSs are complex systems involving nonlinear dynamics with strongly coupled internal variables, parameter uncertainties, and external disturbances, which are expected to maximize the electric power generated from the intermittent stochastic wind and minimize the operational costs \cite{yang2016survey}. In accordance with the wind speed and to meet the power generation capacity and safe operation of WECSs, four operational regions are considered. \textcolor{black}{Figure \ref{fig:11-11} demonstrates the operating regions, where $V_{cut-in}$ and $V_{cut-out}$ stand for the wind speed at which the power generation start and stops, respectively. Also, $V_{rated}$ denotes the wind speed that the blades reach the speed, where the maximum power ($P_{g,rated}$) is generated. In regions I and IV, the wind speed is respectively, too low or too high, so the rotor is detached from the blades}. In region II (\textit{i.e.} the partial-load region), the wind speed is lower than the rated wind speed and a generator torque controller is usually used to maximize the power capture. At the rated wind speed, the turbine is capable of generating electricity at its optimal capacity. In region III (\textit{i.e.} the full-load region), the wind speed exceeds the rated value. \textcolor{black}{Accordingly, the WECS control objectives considering the generator types can be categorized into three main classes:} (a) maximum power extraction (MPE), which is pursued in region II by controlling the generator torque, (b) active and reactive power regulation, which is broadly pursued in region III by performing the generator DC-link voltage regulation (DCVR), grid-side converter (GSC) and rotor-side converter (RSC) control as well as pitch angle control (PAC), and (c) load mitigation, which is to ensure a smooth transition from region II to III.
\begin{figure}
\centering
\includegraphics[width=3.5 in]{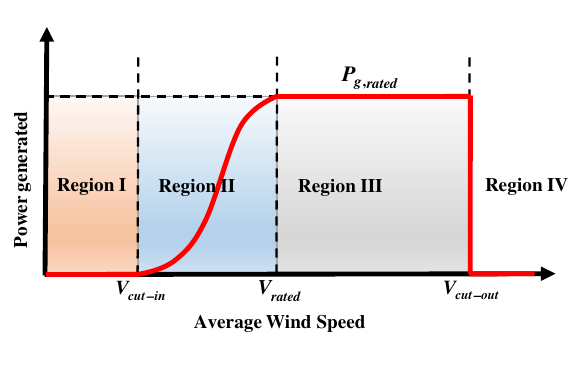}
\caption{Wind turbine operational regions.}
\label{fig:11-11}
\end{figure}

A critical issue to be considered is that similar to other power generation systems, the WECSs are prone to various possible faults such as sensors malfunctioning, unsymmetrical loads and grid voltage dips, actuator faults, and different types of system faults, while the gearbox faults have found to be the most common cause of \textcolor{black}{WECS failure. Thus, it has always been a significant challenge for engineers to overcome the control complexities of guaranteeing the MPE and the desired active and reactive power especially under faulty situations.} As a result, a great deal of research effort in academia and industry has been devoted to the development of reliable control approaches that guarantee the WECS power generation \cite{yin2016turbine, ghanbarpour2020dependable, kamal2013fuzzy, shahbazi2012fpga, merabet2018power}, power conversion efficiency and quality increment \cite{sharmila2019fuzzy, nian2014independent, kuhne2018fault, jlassi2019fault, shaker2014active}, mechanical damage alleviation \cite{cetrini2019line, brekken2005control, jain2018fault, camblong2014wind}, and operational and maintenance costs reduction \cite{jain2018fault, carroll2017availability, jimmy2019energy} both with and without considering faults effects.

Classical proportional-integral-derivative (PID) controllers are the standard conventional control methods that are traditionally being used for WECS control \cite{ren2016nonlinear}; however, due to lack of robustness in system parameters and the operating point changing conditions, and also nonlinear behavior of WTs and the existence of faults, they cannot be counted as reliable control methods. In this regard, researchers have proposed various innovative advanced control strategies to enhance the performance of the control system. Robust control methods are widely investigated in the literature to mitigate undesirable effects of wind and enhance the power quality, where $H_2$ and $H_{\infty}$ methods are the most used ones \cite{ surinkaew2014robust, das2018h}. Soft computing-based methods such as fuzzy logic control, artificial neural networks, and metaheuristic algorithms-based controllers are other investigated advanced approaches that offer an efficient and quick response to overcome uncertainties in WECSs \cite{kuhne2018fault, aguilar2020multi, dewangan2021performance}. 

Among all the existing control strategies, SMC has been shown to be a suitable and effective method for various nonlinear control problems due to fast dynamic response, good transient performance, stability, and robustness to matched parameter variations and external disturbances \cite{bakhshande2020robust,banza2020integral, incremona2016sliding, caballero2018sliding, zhang2020adaptive}. However, regardless of the satisfactory tracking performance of the conventional SMC technique in practical applications, it still has some shortcomings such as the vulnerability to measurement noise, the hardness in asymptotical stability achievement when the system is affected by mismatched perturbations, producing unnecessarily large control signals to overcome the parametric uncertainties, and the most serious one which is associated with the high-frequency oscillations caused by the discontinuous switching control; the chattering phenomenon \cite{dursun2020second, kelkoul2020stability, morshed2020design, song2019dynamic}. Due to the non-existence of a switching component element capable of shifting to an infinite frequency, the effects of the chattering phenomenon on the real-application systems cannot be precisely estimated \cite{liu2020event}. Hence, the main effort is usually to decrease/alleviate this phenomenon to prevent its consequences \cite{ni2016fast}. Besides, since the conventional SMC design methods rely on a linear sliding surface, the system's states may fail to converge to the equilibrium point in finite time \cite{incremona2016sliding}. Thus, in order to achieve high convergence speed in the conventional SMC, the reaching gains should be increased to be large enough when the SMC surface is around the equilibrium point, which is not desirable.

Consequently, although conventional SMCs have been found to be reliable control schemes, many efforts have been made to enhance their performance and maximize their reliability in dealing with nonlinear systems \cite{jiang2019takagi, liu2014finite, zhang2019semiglobal}. As one alternative solution, higher-order SMC approaches have been developed to fill these gaps by considering higher-order derivatives and introducing nonlinear functions in the SMC's surface in order to achieve finite-time convergence of the system dynamics and reduce the chattering problem \cite{levant2003higher, laghrouche2021barrier, edwards2016adaptive, tripathi2020finite}. However, despite faster convergence in finite-time \cite{laghrouche2021barrier}, better chattering reduction \cite{edwards2016adaptive}, and higher control precision \cite{tripathi2020finite} of higher-order SMCs in comparison with the conventional SMCs, they still need more improvements in performance and reliability. Fractional-order calculus has recently demonstrated its promising performance in tackling the deficiencies of SMCs, which has led to different integrations of SMC methods and fractional-order calculus in the literature \cite{sun2017discrete, wang2019adaptive, sun2017practical, ren2019fractional}. Compared with the conventional SMC and higher-order SMC approaches, the integration of fractional-order calculus with SMC offers more degrees of freedom due to adding more design parameters to the system. The fractional-order calculus adds a memory to the controller that allows it to consider the whole history of input signals, which effectively reduces the chattering phenomenon and the tracking errors of conventional SMC \cite{wang2018fractional, kumar2019fractional, hui2021chattering}.

Motivated by the above-discussed literature, although higher-order and fractional-order SMCs have successfully established themselves as desirable alternatives with outstanding performance and superiorities over other SMC-based approaches in terms of ensured fast finite-time convergence, mitigated chattering, robustness against external disturbances and model uncertainties, and higher control precision; more developments are yet to be done to achieve better WT control performances.


\section{Research Aims and Objectives}
\label{sec:1.2}
The aim of this thesis is to investigate reliable advanced control schemes to deal with the control problems of wind turbine systems in the presence of various sets of fault scenarios. This thesis is divided into three parts: a) blade pitch control in Chapter \ref{Chapter4}, b) rotor speed control and maximum power extraction in Chapter \ref{Chapter5}, and c) DFIG-driven WECSs power control through current estimation and rotor-side converter control in Chapter \ref{Chapter6}. Fractional-calculus-based higher-order sliding mode controllers are proposed to tackle various control problems of wind turbine systems in the presence of different faults. The proposed modifications in the conventional SMC aim to alleviate the chattering problem associated with the conventional SMCs, ensure fast finite-time convergence with higher control precision, and achieve higher robustness against external disturbances and faults. This main contributions of this research are outlined as follows:

\begin{itemize}

  \item Developing a fault-tolerant optimal fractional-order PID controller to adjust the pitch angle of WT blades subjected to sensor, actuator, and system faults. Benefitting from the fractional calculus, a new concept of pitch angle control with extended memory will be introduced and incorporated with the proposed optimal fractional-order PID control scheme (Chapter \ref{Chapter4}).
  \item Proposing a fractional-order nonsingular terminal sliding mode controller with enhanced finite-time convergence speed of system states and alleviated chattering to track the optimum rotor speed and maximize the power production of WECSs (Chapter \ref{Chapter5}).
  \item Proposing active fault-tolerant nonlinear control strategies for the rotor-side converter control of DFIG-driven WECSs subjected to model uncertainties and rotor current sensor faults. In this regard, two fractional-order nonsingular terminal sliding mode controllers with guaranteed fast finite-time convergence of system states and suppressed chattering will be proposed for rotor current regulation and speed trajectory tracking. Furthermore, the control scheme will be incorporated with two state observers, an algebraic state estimator and a sliding mode observer, to estimate the rotor current dynamics during sensors' faults (Chapter \ref{Chapter6}).
\end{itemize}

\section{Thesis Layout}
\label{sec:1.3}
This thesis consists of seven chapters structured as follows:

In Chapter \ref{Chapter2}, the model of wind energy conversion systems is presented in detail, that includes aerodynamics, pitch actuator system, drivetrain, and generators. This is then followed by a brief introduction to sliding mode control and some preliminary background in fractional calculus.

In Chapter \ref{Chapter3}, a comprehensive review of the state-of-the-art literature on the application of sliding mode control-based approaches with application to different control problems of wind energy conversion systems is provided.

In Chapter \ref{Chapter4}, a fault-tolerant pitch control strategy with extended memory of pitch angles is proposed that improves the power generation of wind turbines. A novel optimization algorithm, namely the dynamic weighted parallel firefly algorithm, is also proposed to tune the controller parameters optimally. Comparative simulations are provided that reveal the remarkable performance of the proposed pitch control strategy with respect to other approaches.

Chapter \ref{Chapter5} investigates the maximum power extraction problem of wind energy conversion systems operating below their rated wind speeds in the presence of actuator faults. To this end, a fractional-order nonsingular terminal sliding mode controller with alleviated chattering and enhanced finite-time convergence speed of system states is proposed to track the optimum rotor speed and maximize power production. Simulation results and analysis are provided and demonstrate the notable optimal rotor speed tracking and satisfactory power extraction performance of the proposed control scheme with fewer fluctuations and faster transient response with respect to other approaches.

In Chapter \ref{Chapter6}, aiming at rotor current regulation and speed trajectory tracking of doubly-fed induction generators-based wind turbines, two active fault-tolerant control schemes based on fractional-order nonsingular terminal sliding mode controllers are proposed. Benefitting from the proposed sliding surfaces, fast finite-time convergence of system states is guaranteed, and the chattering is effectively suppressed. The first control scheme is augmented with a well-performed algebraic state estimator to estimate the rotor current dynamics during sensors' faults. In contrast, a sliding mode observer is developed for the second control scheme to deal with the same problem. Comparative performance assessments are provided and reveal the promising performance of the proposed control scheme.

Finally, Chapter \ref{Chapter7} summarizes the contributions of the thesis and provides the conclusions and potential future works. 

\chapter{Preliminaries} 

\label{Chapter2} 



\section{Wind Turbine Modelling}
\label{sec:2.1}
In this section, a brief summary of the wind-to-electric energy conversion concept is presented. Figure \ref{fig:1} depicts the block diagram of WECS, where the aerodynamic wind energy is converted into useful mechanical energy through the blades driving the rotor. WECS can be classified into three main parts as follows \cite{bianchi2006wind}:
\begin{enumerate}
\item The mechanical part consists of the turbine blades, aerodynamic torque, and the drivetrain. This section's input is the wind kinetic energy, which is converted into rotational mechanical energy, and then transmitted to the wind generator through the drivetrain.
\item \textcolor{black}{The electrical part comprises the generator, rotor- and grid-side converters (RSC and GSC in DFIG-based WECSs) alongside their harmonic filters, and the transformer.} The rotational mechanical energy produced by the mechanical section is converted into electrical energy by the generator.
\item The control part includes various controllers to perform the desired tasks such as: (a) maximize the wind power capture, (b) provide the safe-mode operating situation for the turbine by controlling the power, current, voltage, rotational speed, and torque considering the limitations, and (c) minimize the mechanical stress on the drivetrain.
\end{enumerate}

In order to calculate the wind energy, a knowledge of the wind speed distribution at the given site is necessary. Various wind speed models are used in the literature, while the Weibull statistical distribution has found to be the most commonly used method for analyzing wind speed measurements and determining wind energy potential \cite{wais2017review}. The Weibull distribution can be expressed by the following probability density function.
\begin{equation}
f\left(\omega \right)=\frac{k}{c} \left(\frac{\omega }{c} \right)^{k-1} e^{-\left(\omega /c\right)^{k} },
\label{Equation_W1_}
\end{equation}
where the dimensionless parameter $k$ is the shape factor, $c$ denotes the scale factor, and $\omega $ is the wind speed.

The average expected wind speed is calculated as
\begin{equation}
\bar{\omega }=\int _{0}^{\infty }\omega f\left(\omega \right)d\omega =\frac{c}{k} \Gamma \left(\frac{1}{k} \right)
\label{Equation_W2_}
\end{equation}
where $\Gamma \left(t\right)=\int _{0}^{\infty }e^{-u} u^{t-1} du $ denotes Euler's Gamma function \cite{sabatier2007advances}.

The Weibull distribution is known as the Rayleigh distribution if $k=2$. Hence, for the Rayleigh distribution and considering the average wind speed, the scale factor is calculated as $c=\frac{2}{\sqrt{\pi } } \bar{\omega }$. The wind speed probability density function of the Rayleigh distribution with average wind speeds are illustrated in Figure \ref{fig:W1}. More information regarding Weibull distributions can be found in \cite{wais2017review}.

\begin{figure*}
\centering
\includegraphics[width=4.5 in]{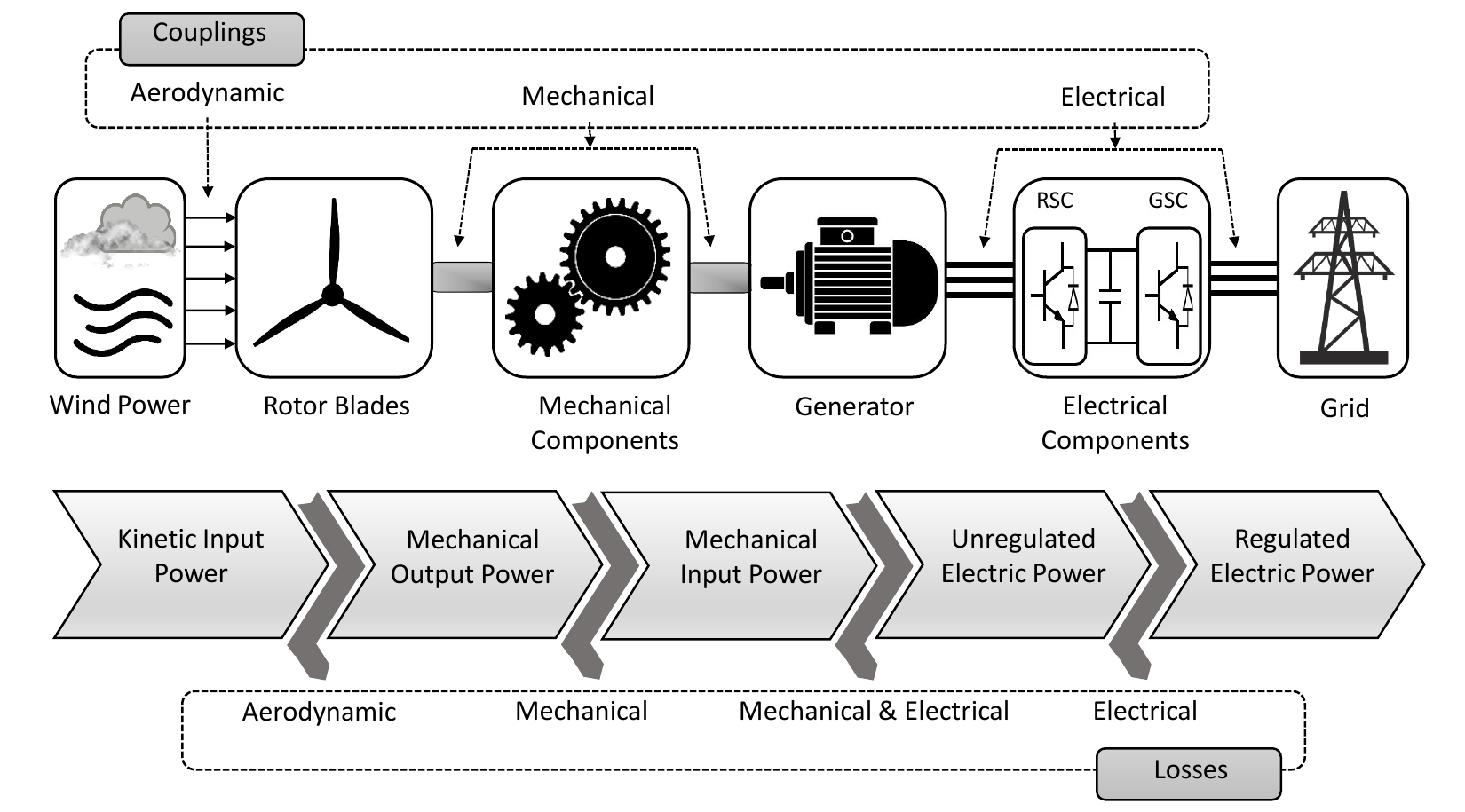}
\caption{\textcolor{black}{General block diagram of wind energy conversion systems, along with their coupling types \cite{yaramasu2016model}.}}
\label{fig:1}
\end{figure*}

\subsection{Aerodynamics}
\label{sec:2.1.1}
This section represents the conversion of wind power to rotational energy, where the aerodynamic performance of WECS is affected by the wind speed, pitch angles of the blades, and the rotor speed. The aerodynamic power ($P_{a} $), aerodynamic torque ($T_{a} $), and thrust force ($F_{t} $) of the turbine can be expressed as follows:
\begin{equation}
P_{a} =\frac{1}{2} \rho \pi R^{2} \upsilon _{w} ^{3} C_{P} \left(\lambda ,\beta \right),
\label{Equation_1_}
\end{equation}
\begin{equation}
\textcolor{black}{T_{a} =\frac{1}{2} \rho \pi R^{3} \upsilon _{w} ^{2} C_{q} \left(\lambda ,\beta \right),}
\label{Equation_2_}
\end{equation}
\begin{equation}
F_{t} =\frac{1}{2} \rho \pi R^{2} \upsilon _{w} ^{2} C_{t} \left(\lambda ,\beta \right),
\label{Equation_3_}
\end{equation}
where $\rho=1.225$ [\si{kg/m^{3}}] denotes the air density at sea level for the International Standard Atmosphere (ISO 2533:1975), $R$ [\si{m}] is the rotor radius, and $\upsilon _{w} $ [\si{m/s}] is the effective wind model at the rotor plane. $C_{p} $, $C_{q} $, and $C_{t} $ represent the power, torque, and thrust coefficients, respectively, where $C_{q} \left(\lambda ,\beta \right)=C_{p} \left(\lambda ,\beta \right)/\lambda $.

\begin{figure*}
\centering
\includegraphics[width=3.2 in]{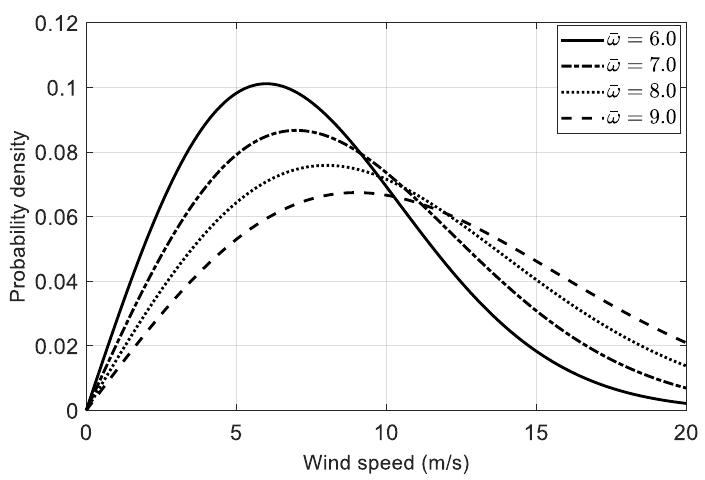}
\caption{Probability density of the Rayleigh distribution.}
\label{fig:W1}
\end{figure*}

The aerodynamic torque produced by the WECS can be expressed by $T_{a} =P_{a} /\omega _{r} $, where $\omega _{r}$ [\si{rad/s}] represents the rotational speed of the rotor. The dimensionless power coefficient $C_{p} $ is a function of the blade parameters, the tip-speed ratio ($\lambda =R\omega _{r} /\upsilon _{w} $) and the pitch angle $\beta $; an experimental coefficient that is generally adopted from a look-up table in terms of ($\lambda $,$\beta $). However, using nonlinear curve-fitting techniques \cite{abolvafaei2020maximum, deraz2013new, sharmila2019fuzzy} it can be approximated as mathematical functions such as \eqref{Equation_4_} and \eqref{Equation_5_} as follows.
\begin{equation}
C_{P} \left(\lambda ,\beta \right)=C_{1} \left(\frac{C_{2} }{\lambda _{i} } -C_{3} \beta -C_{4} \right)e^{-C_{5} /\lambda _{i} } +C_{6} \lambda,
\label{Equation_4_}
\end{equation}
\begin{equation}
C_{P} \left(\lambda ,\beta \right)=C_{1} -C_{2} \beta \sin \left(\frac{\pi \left(\lambda _{i} +C_{3} \right)}{C_{4} +C_{5} \beta } \right)-C_{6} \left(\lambda _{i} -3\right)\beta,
\label{Equation_5_}
\end{equation}
where
\begin{equation}
\frac{1}{\lambda _{i} } =\frac{1}{\lambda +C_{7} \beta } -\frac{C_{8} }{\beta ^{3} +1},
\label{Equation_6_}
\end{equation}
In order to capture the maximum power from the wind, the power coefficient $C_{p} $ should be obtained based on the optimum pitch angle $\beta _{opt} $ and the optimum tip speed ratio $\lambda _{opt} $, i.e. $C_{p,\max } \triangleq C_{p} \left(\beta _{opt},\lambda _{opt} \right)$, where $C_{p,\max } $ denotes the maximum value of the power coefficient for the turbine. Setting $ C_{1}=0.5176$, $ C_{2}=116$, $ C_{3}=0.4$, $ C_{4}=5$, $ C_{5}=21$, $ C_{6}=0.0068$ \cite{mousavi2021fault}, the variations of power coefficient $C_{P}$ \eqref{Equation_4_} for different values of $\lambda$ and $\beta$ can be demonstrated as Figure \ref{fig:1-4}. According to Figure \ref{fig:1-4} it can be observed that the WT has a maximum efficiency of approximately $C_{P,\max }=0.4797$ for a tip speed ratio $\lambda _{opt}=8.2$.
\begin{figure}
\centering
\includegraphics[width=3.5 in]{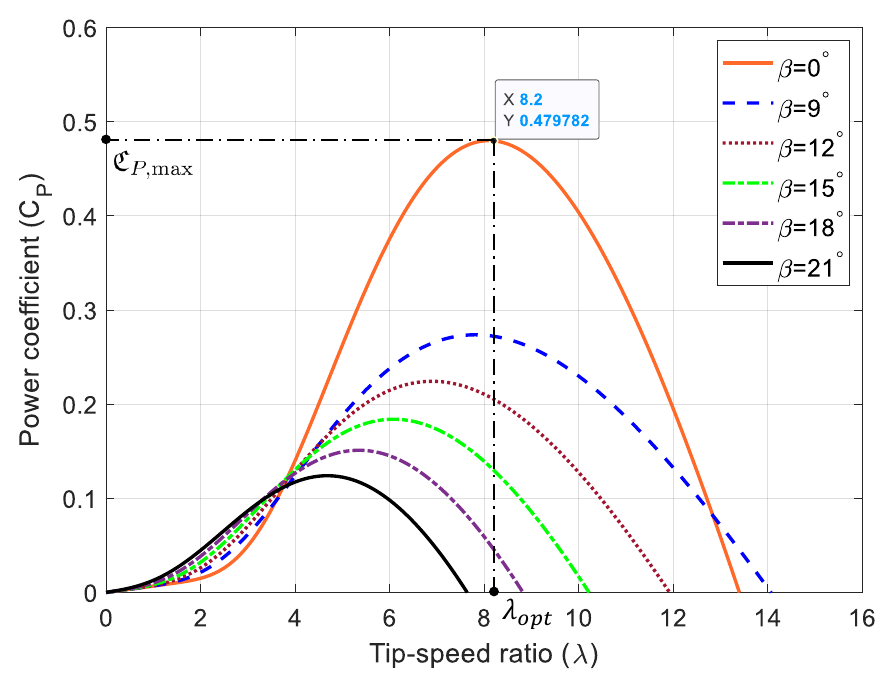}
\caption{The power coefficient curve based on the optimal tip-speed ratio for different pitch angles.}
\label{fig:1-4}
\end{figure}

The theoretical value of $C_{p,\max } $ cannot exceed $0.593$, which is known as the Betz limit \cite{betz1920maximum}, \textcolor{black}{and means that the turbine can extract maximum $59.3\%$ of the available wind power;} however, this value cannot be reached in practice \cite{ragheb2011wind}.

\subsection{Pitch Actuator System}
\label{sec:2.1.2}
The main objective in region III is to maintain the generated power around the rated values while limiting structural loads to avoid mechanical and electrical constraints violations. The pitch system consists of three identical pitch actuators that are assumed to have the same dynamic structure, where each one can be individually controlled using its associated internal controller. The pitch actuator provides the rotational movement of each blade around its longitudinal axis and adjusts the pitch angle, the angle between the blade string and the blade rotation plane. The hydraulic pitch system is generally modelled as a closed-loop transfer function between the measured pitch angle $\beta$ [$^{\si{\circ}}$] and its reference $\beta_{ref}$ [$^{\si{\circ}}$]. Considering only one pitch actuator, the pitch system is usually modelled by the following first-order and second-order systems \cite{anaya2011wind,colombo2020pitch}.
\begin{subequations}
\label{Equation_611}
\begin{align}
\dot{\beta}\left(t\right)&=\tau^{-1}\Big(\beta_{ref}\left(t\right)-\beta\left(t\right)\Big) \,\,\, \left[^{\circ}/\si{s}\right], \\
\ddot{\beta}\left(t\right)&=-2\zeta\omega_n \dot{\beta}\left(t\right)-\omega_n^2 \beta\left(t\right)+\omega_n^2\beta_{ref}\left(t-t_d\right) \,\,\, \left[^{\circ}/\si{s^2}\right],
\end{align}
\end{subequations}
where $\zeta$ and $\omega _{n}$ [\si{rad/s}] denote the damping ratio and the natural frequency of the pitch actuator model, respectively, $t_d$ represents the communication delay to the pitch actuator in [\si{s}], and $\tau$ is the pitch actuator time constant.

According to traditional pitch control schemes, the foregoing objectives are accomplished by maintaining the generator torque constant while the collective pitch control regulates the generator speed to the rated value. As a baseline controller, the standard collective gain scheduling PI controller has been broadly employed in the literature to deal with the pitch control problem and cope with the nonlinearities caused by the pitch actuation mechanism. However, other advanced blade pitch control approaches have been widely investigated in the literature to overcome the deficiencies associated with the standard PI approach and deliver more desirable control performance, such as fractional-order PID with extended pitch memory \cite{mousavi2021fault}, high-order SMC \cite{aghaeinezhad2021individual}, MPC \cite{wakui2021stabilization}, fuzzy-based control \cite{van2015advanced}, and adaptive neural network control \cite{jiao2019adaptive}.

\subsection{Drivetrain}
\label{sec:2.1.3}
The mechanical part of WT, namely the drivetrain, is an intricate part comprising the low-speed shaft, high-speed generator shaft, gearbox, mechanical coupling sections, bearings, and mechanical brakes. The drivetrain provides the generator's required rotational speed by converting the high torque on the low-speed shaft to a low torque on the high-speed shaft to be transferred to the generator unit. Various $n$-mass drivetrain models have been studied in the literature \cite{muyeen2007comparative, mahmoud2019adaptive, xu2011influence, mandic2012active}, $n=1,2,...,6$. The one-mass model is straightforward, where all the drivetrain components are lumped together and work as a single rotating mass. Accordingly, when the focus is on the study of the interactions between wind farms and AC grids, it can be used for the sake of time efficiency \cite{yin2007modeling}; however, due to neglecting the dynamics of the flexible shaft, the one-mass model cannot represent the mechanical oscillations properly and thus, may result in unstable mechanical modes \cite{muyeen2007comparative}. In the two-mass model, the generator and gearbox masses are lumped together. In this model, the turbine's low-speed mass is connected to the high-speed mass of the generator through a flexible shaft. Thus, this model has been found to be acceptably accurate for transient stability studies of wind energy systems \cite{shubhanga2017rotor,prajapat2017stability,schechner2018scaling}. On the other hand, the three-mass model is more accurate with a higher transient stability than that of the two-mass model, but it requires more computational effort. In this model, the blades and shaft flexibilities are considered, where the blades are considered as a single mass element separated from the hub by an elastic component \cite{zhang2014small}. More accurate dynamic behavior of WT drivetrain can be achieved by increasing the number of masses \cite{muyeen2008stability, mahmoud2019adaptive, seixas2014offshore}. Although higher-mass models have been found to be more appropriate for transient stability analysis, especially during fault conditions, they slow the simulations due to complex and lengthy mathematical computation. Thus, for the sake of simplicity, the lower-mass drivetrain models are mostly taken into consideration by researchers for analyzing the transient behavior of WTs. A brief review on dynamic models of four-, three-, two-, and one-mass are presented in this section. The four-mass drivetrain dynamic model shown in Figure \ref{fig:501} can be represented by \cite{mahmoud2019adaptive}:
\begin{figure}[!b]
\centering
\includegraphics[width=4 in]{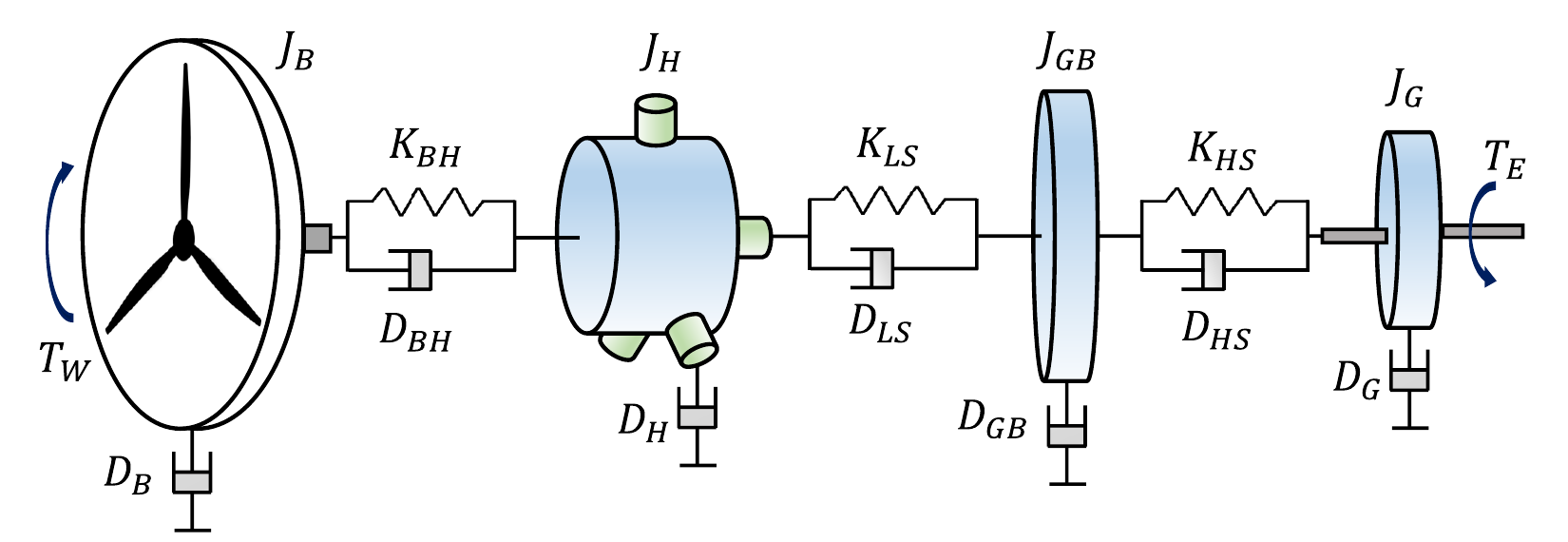}
\caption{\textcolor{black}{Schematic diagram of a four-mass drivetrain.}}
\label{fig:501}
\end{figure}
\begin{subequations}
\label{Equation_7_}
\allowdisplaybreaks
\begin{align}
2J_{B} \dot{\omega }_{B} &= T_{W} -K_{BH} \theta _{BH} -D_{BH} \left(\omega _{B} -\omega _{H} \right)-D_{B} \omega _{B}, \\
2J_{H} \dot{\omega }_{H} &= K_{BH} \theta _{BH} +D_{BH} \left(\omega _{B} -\omega _{H} \right) \nonumber \\
&-K_{LS} \theta _{LS} -D_{LS} \left(\omega _{H} -\omega _{GB} \right)-D_{H} \omega _{H}, \\
2J_{GB} \dot{\omega }_{GB} &= K_{LS} \theta _{LS} +D_{LS} \left(\omega _{H} -\omega _{GB} \right) \nonumber \\
&-\left(K_{HS} \theta _{HS} +D_{HS} \left(\omega _{GB} -\omega _{G} \right)\right)D_{GB} \omega _{GB}, \\
2J_{G} \dot{\omega }_{G} &= K_{HS} \theta _{HS} +D_{HS} \left(\omega _{GB} -\omega _{G} \right)-D_{G} \omega _{G} -T_{G}, \\
\dot{\theta }_{BH} &= \omega _{0} \left(\omega _{B} -\omega _{H} \right), \\
\dot{\theta }_{LS} &= \omega _{0} \left(\omega _{H} -\omega _{GB} \right), \\
\dot{\theta }_{HS} &= \omega _{0} \left(\omega _{GB} -\omega _{G} \right),
\end{align}
\end{subequations}
where $J_{B} $, $J_{G} $, $J_{GB} $, and $J_{H} $ denote the blade, generator, gearbox, and hub inertia in [\si{kg.m^{2}}], respectively; $K_{LS} $ and $K_{HS} $ are the low- and high-speed shaft spring constants in [\si{N.m/rad}], respectively, and $K_{BH} $ [\si{N.m/rad}] represents the blade stiffness constant. $\theta _{BH} $ [\si{^{\circ}}] denotes the angle between the blade disk and the hub, $\theta _{LS} $[\si{^{\circ}}] is the angle between the hub and gearbox, and $\theta _{HS} $ [\si{^{\circ}}] is the angle between the gearbox and generator rotor. $D_{B} $ [\si{N.s/rad}] is the blade damping coefficient, and $D_{H} $, $D_{BH} $, $D_{LS} $, $D_{HS} $, $D_{G} $, and $D_{GB} $ represent the damping coefficients of hub, turbine, low-speed shaft, high-speed shaft, generator, and gearbox in [\si{N.m.s/rad}], respectively. $\omega _{B} $, $\omega _{H} $, $\omega _{GB} $, and $\omega _{G} $ respectively represent the rotational speeds of the flexible blade disk, hub, gearbox, and generator in [\si{rad/s}]. $\omega _{0} $ [\si{rad/s}] is the synchronous speed of the electrical system, and $T_{G} $ and $T_{W} $ represent the generator and aerodynamic torques in [\si{N.m}], respectively.

The three-mass drivetrain dynamic model illustrated in Figure \ref{fig:502} can be represented by \cite{xu2011influence}:
\begin{figure}[!b]
\centering
\includegraphics[width=3.8 in]{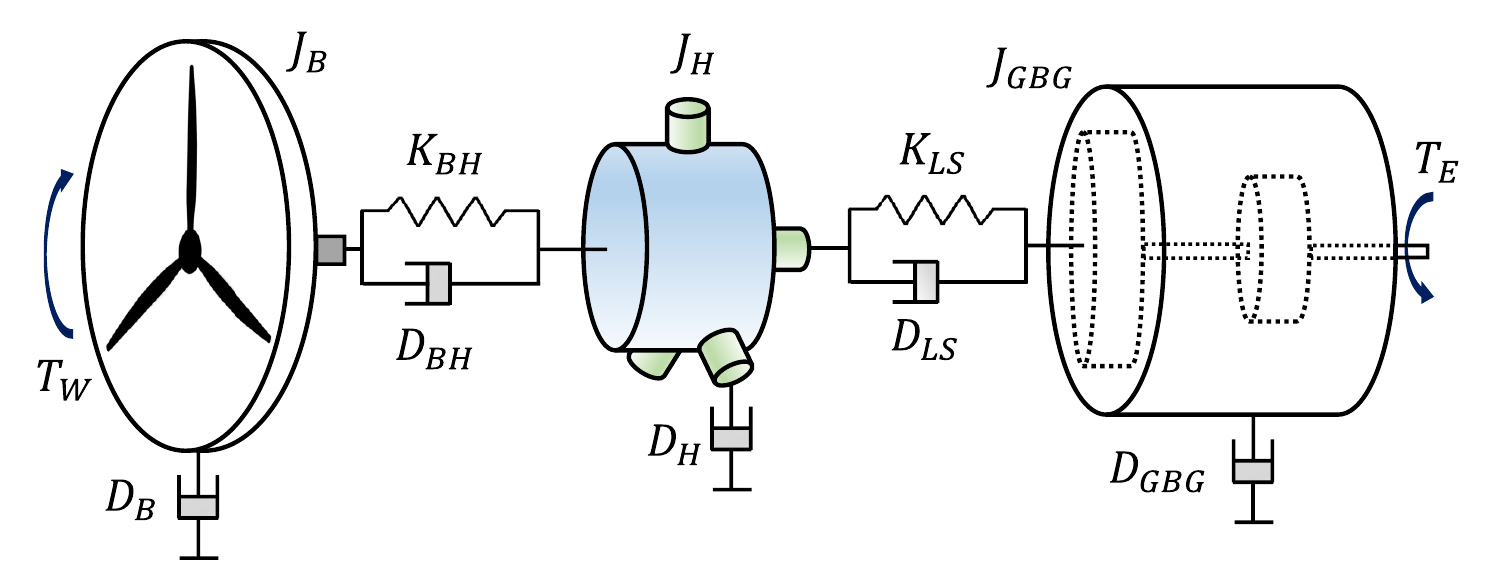}
\caption{\textcolor{black}{Schematic diagram of a three-mass drivetrain.}}
\label{fig:502}
\end{figure}
\begin{subequations}
\label{Equation_8_}
\begin{align}
2J_{B} \dot{\omega }_{B} &=T_{W} -K_{BH} \theta _{BH} -D_{BH} \left(\omega _{B} -\omega _{H} \right)-D_{B} \omega _{B}, \\
2J_{H} \dot{\omega }_{H} &=K_{BH} \theta _{BH} +D_{BH} \left(\omega _{B} -\omega _{H} \right)-K_{LS} \theta _{LS} \nonumber \\
&-D_{LS} \left(\omega _{H} -\omega _{GBG} \right)-D_{H} \omega _{H}, \\
2J_{GBG} \dot{\omega }_{GBG} &=K_{LS} \theta _{LS}+D_{LS} \left(\omega _{H} -\omega _{GBG} \right)-D_{GBG} \omega _{GBG} -T_{G}, \\
\dot{\theta }_{BH} &=\omega _{0} \left(\omega _{B} -\omega _{BH} \right), \\
\dot{\theta }_{LS} &=\omega _{0} \left(\omega _{H} -\omega _{GBG} \right),
\end{align}
\end{subequations}
where $J_{BH}$ is the summation of blade and hub inertia, $J_{GBG} $ [\si{kg.m^{2}}] represents the summation of gearbox and generator inertia, $D_{GBG} $ [\si{N.m.s/rad}] is the gearbox and generator damping coefficient, $\omega _{BH} $ and $\omega _{GBG} $ are the turbine rotational speed, and the summation of gearbox and generator rotational speeds in [\si{rad/s}], respectively.

The two-mass drivetrain dynamic model depicted in Figure \ref{fig:503} can be represented by:
\begin{figure}[!hb]
\centering
\includegraphics[width=2.7 in]{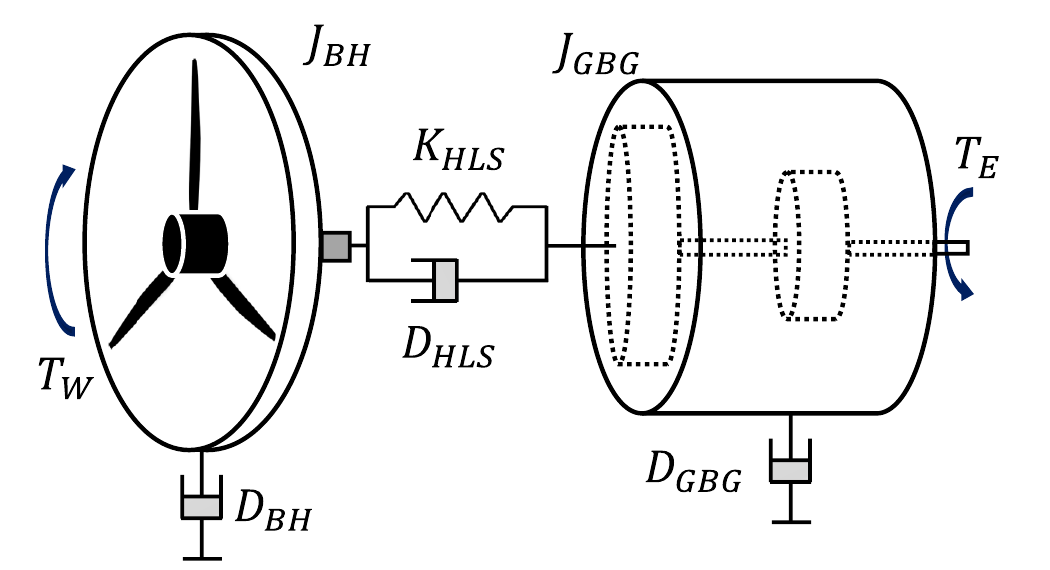}
\caption{\textcolor{black}{Schematic diagram of a two-mass drivetrain.}}
\label{fig:503}
\end{figure}
\begin{subequations}
\label{Equation_9_}
\begin{align}
2J_{BH} \dot{\omega }_{BH} &=T_{W} -K_{HLS} \theta _{HLS} -D_{HLS} \left(\omega _{BH} -\omega _{GBG} \right)-D_{BH} \omega _{BH}, \\
2J_{GBG} \dot{\omega }_{GBG} &=K_{HLS} \theta _{HLS} +D_{HLS} \left(\omega _{BH} -\omega _{GBG} \right)-D_{GBG} \omega _{GBG} -T_{G}, \\
\dot{\theta }_{HLS} &=\omega _{0} \left(\omega _{BH} -\omega _{GBG} \right),
\end{align}
\end{subequations}
where $\theta _{HLS} $ [\si{^{\circ}}] is the angle between $J_{BH} $ and $J_{GBG} $. $K_{HLS} $ [\si{N.m/rad}] represents the parallel shaft stiffness, expressed as
\begin{equation}
K_{HLS} =\left(\frac{N_{GB}^{2} }{K_{HS} } +\frac{1}{K_{LS} } \right)^{-1},
\label{Equation_10_}
\end{equation}
where $N_{GB} $ denotes the gearbox ratio.

The one-mass drivetrain dynamic model shown in Figure \ref{fig:504} can be represented by:
\begin{figure}[!h]
\centering
\includegraphics[width=2 in]{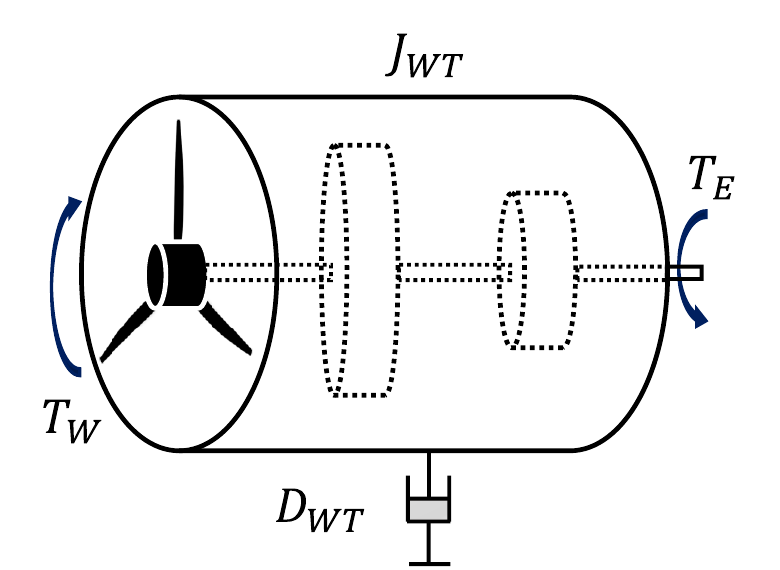}
\caption{Schematic diagram of a one-mass drivetrain.}
\label{fig:504}
\end{figure}
\begin{equation}
2J_{WT} \dot{\omega }_{WT} =T_{W} -T_{G} -D_{WT} \omega _{WT},
\label{Equation_11_}
\end{equation}
where $J_{WT} $ [\si{kg.m^{2}}] represents all the inertia of rotating components, $\omega _{WT} $ [\si{rad/s}] is the rotational speed, and $D_{WT} $ [\si{N.m.s/rad}] denotes the damping coefficient of the single mass.

\subsection{Generators}
\label{sec:2.1.4}
The generator converts the mechanical wind energy into electrical energy. Various generators have been developed for WECS, where the most used ones are doubly fed induction generator (DFIG), permanent-magnet synchronous generator (PMSG), and squirrel-cage induction generator (SCIG) \cite{anaya2011wind, alnasir2013analytical}. The DFIG has established itself as the most commonly used generator thanks to advantages such as independent control of active and reactive power using partial capacity converters, low installation costs, and the flexibility to operate in both sub- and super-synchronous speeds with the state-of-the-art improvements in power electronic devices and corresponding controllers \cite{morshed2019sliding}. However, since the stator is directly connected to the grid, DFIGs are highly sensitive to voltage fluctuations and grid disturbances \cite{li2019sliding}, in the sense that any amount of voltage dip can lead to a sharp increase in stator and rotor currents and damage the converters, which can result in deterioration of WT's energy conversion process \cite{yao2012enhanced}. Besides, due to the gearbox requirement, the WT's overall size is increased, and regular maintenance is also necessary. On the other hand, PMSG is preferred in offshore applications due to the self-excitation system and the elimination of the gearbox, which results in reduced mechanical losses, enhanced reliability, and lower maintenance costs. In addition, being capable of producing higher torque at low speeds, they can provide the maximum electric power to the grid with improved quality \cite{yang2018passivity}. However, PMSGs can also be impacted by grid side voltage sag, which can result in ripples with rising magnitude in the DC-link voltage due to unbalanced power flow between the generator and the GSC and also rising magnitude of the injected currents beyond the safe operating limit of the inverter \cite{thakur2018control}. Besides, the magnets' characteristics tend to change when subjected to high currents or temperatures over time \cite{errouissi2017novel}. Similar to PMSGs, SCIGs are counted as good solutions in isolated power systems, as they are relatively inexpensive and require minimal maintenance \cite{chen2011analysis}.

\subsubsection{DFIG}
\label{sec:2.1.4.1}
\textcolor{black}{The dynamic rotor and stator currents equivalent circuit equations of DFIG in the synchronous $dq$ reference frame shown in Figure \ref{fig:2_222} can be expressed as follows \cite{shi2019perturbation}:}
\begin{subequations}
\label{Equation_12}
{\color{black}\begin{align}
V_{ds}&=R_s I_{ds}+\frac{d}{dt}\psi_{ds}-\omega_s\psi_{qs}, \\
V_{qs}&=R_s I_{qs}+\frac{d}{dt}\psi_{qs}-\omega_s\psi_{ds}, \\
V_{dr}&=R_r I_{ds}+\frac{d}{dt}\psi_{dr}-\left(\omega_s-\omega_r\right)\psi_{qr}, \\
V_{dr}&=R_r I_{qs}+\frac{d}{dt}\psi_{qr}+\left(\omega_s-\omega_r\right)\psi_{dr},
\end{align}}
\end{subequations}

The flux equations of stator and rotor are given as
\begin{subequations}
\label{Equation_13_}
\begin{align}
\psi _{ds} &=L_{s} I_{ds} +L_{m} I_{dr}, \\
\psi _{qs} &=L_{s} I_{qs} +L_{m} I_{qr}, \\
\psi _{dr} &=L_{r} I_{dr} +L_{m} I_{ds}, \\
\psi _{qr} &=L_{r} I_{qr} +L_{m} I_{qs},
\end{align}
\end{subequations}
where $\left\{I_{ds} ,I_{qs} \right\}$ and $\left\{I_{dr} ,I_{qr} \right\}$ are the d-axis and q-axis stator and rotor current components in [A], respectively, and $\left\{V _{ds} ,V _{qs} \right\}$ and $\left\{V _{dr} ,V _{qr} \right\}$ represent the d-axis and q-axis stator and rotor voltage components in [V]. $R_{s} $ and $R_{r} $ are the stator and rotor resistances in [$\Omega $], $L_{s} $, $L_{r} $, and $L_{m} $ are stator, rotor, and magnetizing inductances in [H]. $\left\{\psi _{ds} ,\psi _{qs} \right\}$ and $\left\{\psi _{dr} ,\psi _{qr} \right\}$ are the stator and rotor flux components in [Wb], respectively. $\omega _{r} =P\varsigma _{r} $ and $\omega _{s} $ represent the rotor electrical speed and synchronous frequency, respectively, where $\varsigma _{r} $  [\si{rad/s}] is the rotor speed and $N_P$ is the number of pole pairs.

\begin{figure}
\centering
\includegraphics[width=4.2 in]{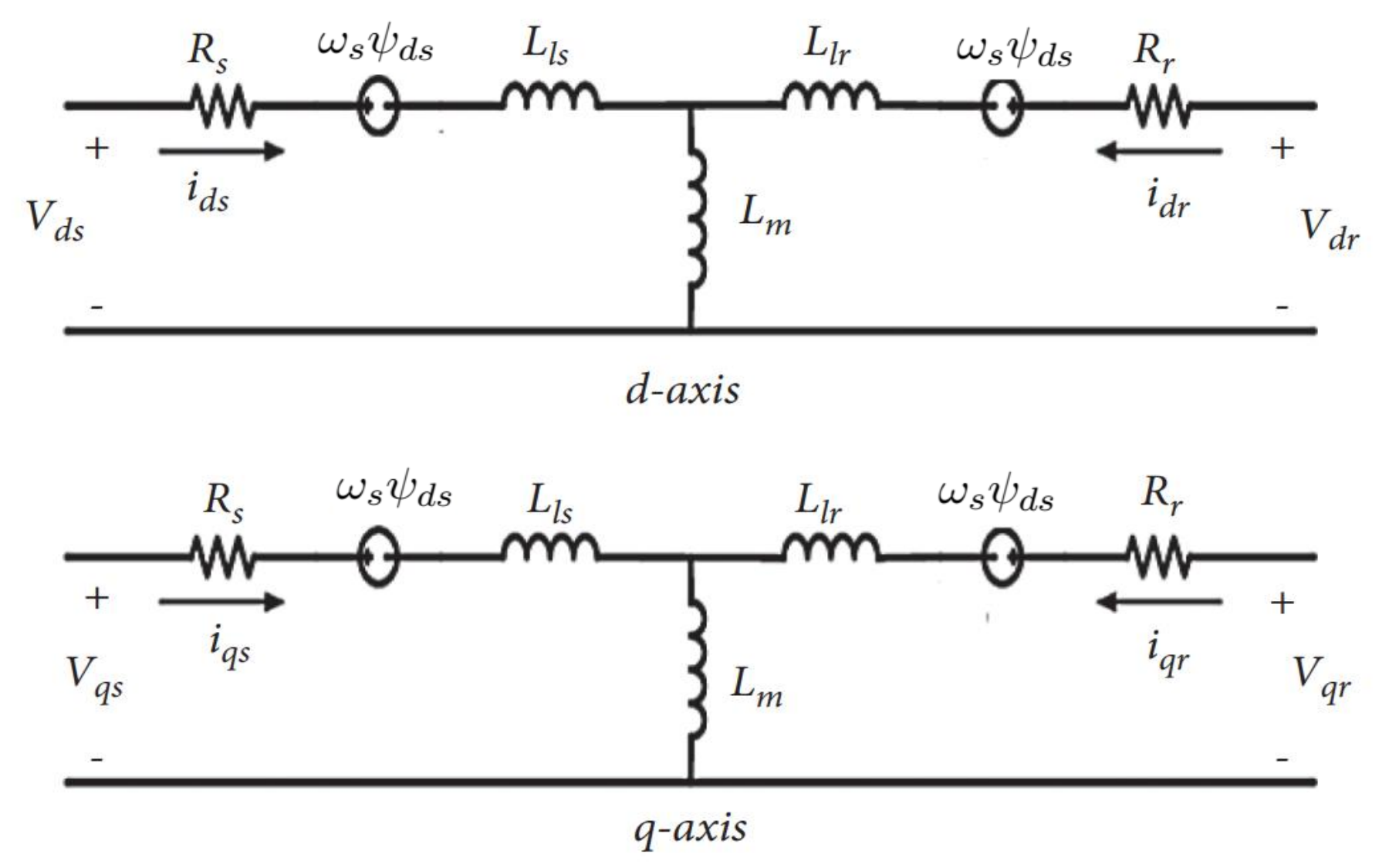}
\caption{\textcolor{black}{The DFIG equivalent circuit}.}
\label{fig:2_222}
\end{figure}

The electromagnetic torque $T_{em} $ can be expressed with stator fluxes and currents as
\begin{equation}
T_{em} =\frac{3P}{2} \left(\psi _{ds} I_{qs} -\psi _{qs} I_{ds} \right).
\label{Equation_14_}
\end{equation}

Neglecting the power losses associated with the rotor, the stator and rotor active and reactive power can be expressed as
\begin{subequations}
\label{Equation_15_}
\begin{align}
P_{s} =-V _{ds} I_{ds} -V _{qs} I_{qs}, \\
Q_{s} =-V _{qs} I_{ds} +V _{ds} I_{qs}, \\
P_{r} =-V _{dr} I_{dr} -V _{qr} I_{qr}, \\
Q_{r} =-V _{qr} I_{dr} +V _{dr} I_{qr}.
\end{align}
\end{subequations}

\begin{rem}
\label{rmk:1}
Setting $V _{dr} =V _{qr} =0$ in the DFIG dynamic model leads to the dynamic model of the SCIG \cite{mahmoud2019adaptive}.
\end{rem}

\subsubsection{PMSG}
\label{sec:2.1.4.2}
The dynamic currents equations of PMSG in the synchronous $dq$ reference frame is expressed as follows \cite{zhang2015discrete}:
\begin{subequations}
\label{Equation_16_}
\begin{align}
\frac{dI_{ds} }{dt} &=-\frac{R_{s} }{L_{d} } I_{ds} +\omega _{e} \frac{L_{q} }{L_{d} } I_{qs} +\frac{1}{L_{d} } V _{ds}, \\
\frac{dI_{qs} }{dt} &=-\frac{R_{s} }{L_{q} } I_{qs} -\omega _{e} \frac{L_{d} }{L_{q} } I_{ds} -\omega _{e} \psi _{s} +\frac{1}{L_{q} } V _{qs},
\end{align}
\end{subequations}
where, $\left\{I_{ds} ,I_{qs} \right\}$ [\si{A}] and $\left\{V _{ds} ,V _{qs} \right\}$ [\si{V}] denote the stator currents and voltages, respectively, $R_{s} $ [$\Omega $] is the stator resistance, $\omega _{e} $ [\si{rad/s}] is the rotor electrical angular speed, $L_{d} $ and $L_{q} $ represent the \textit{d--q} axis mutual inductance [\si{H}], and $\varphi _{s} $ [\si{Wb}] is the flux linkage generated by the permanent magnets.

The stator flux equations are given as
\begin{subequations}
\label{Equation_17_}
\begin{align}
\psi _{ds} &=L_{d} I_{ds} +\psi _{m}, \\
\psi _{qs} &=L_{q} I_{qs}.
\end{align}
\end{subequations}

The electromagnetic torque $T_{em} $ generated by the PMSG can be calculated by
\begin{equation}
\label{Equation_18_}
\left\{\begin{array}{ll} {T_{em} =\frac{3P}{2} \psi _{m} I_{qs} +\frac{3P}{2} \left(L_{d} -L_{q} \right)I_{ds} I_{qs} } & {\textcolor{black}{\mathrm{if}}\,\, L_{d} \ne L_{q} } \\ {T_{em} =\frac{3P}{2} \psi _{m} I_{qs} } & {\textcolor{black}{\mathrm{if}}\,\, L_{d} =L_{q} } \end{array}\right.
\end{equation}

\section{Sliding Mode Control}
\label{sec:2.2}

Sliding mode control is one of the most effective control methods under high uncertainty conditions \cite{shtessel2014sliding, edwards1998sliding}. The sliding mode strategy is based on the discontinuous control signal design driving the system states toward predesigned surfaces in state space. SMC design comprises two steps: (i) the design of a stable sliding surface to obtain the desired dynamics and (ii) the design of a control law that ensures the reaching of the states to the chosen sliding surface in finite time and staying on it. Essentially, in a system controlled by an SMC, state dynamics consist of two modes: reaching mode and sliding mode \cite{edwards1998sliding}. In the reaching mode, the controller forces the states to reach the sliding surface, and as all the states reach the sliding surface, the sliding mode occurs \cite{shtessel2014sliding}. Let us consider the nonlinear system represented by the following state equation:

\begin{equation}
\mathbf{x}^{(n)} =\mathbf{f}\left(x,t\right)+\mathbf{g}\left(x,t\right)u\left(t\right),
\label{Equation_S1_}
\end{equation}
where $\mathbf{x}$ is an \textit{n}-dimensional column state vector, $\mathbf{f}$ and $\mathbf{g}$ are \textit{n}-dimensional continuous functions in $x$ and $t$ vector fields, and $u$ is the control input.

Generally, the time-varying sliding surface of the \textit{n}-order system is selected as follows:
\begin{equation}
\mathbf{S}\left(\mathbf{x},t\right) = \mathbf{C}^T\mathbf{x}=\sum_{i-1}^{n}c_i x_i=\sum_{i=1}^{n-1}c_i x_i+x_n,
\label{Equation_S2_}
\end{equation}
where $\mathbf{C}=\left[c_1, c_2,...,c_{n-1},1\right]^T$ should be selected so that the polynomial $\mathfrak{p}^{n-1}+c_{n-1}\mathfrak{p}^{n-2}+...+c_2\mathfrak{p}+c_{1}$ is Hurwitz, where $\mathfrak{p}$ represents the Laplace operator.

The control law should meet the sufficiency conditions for a sliding mode's existence and reachability such that $\mathbf{S}\dot{\mathbf{S}}\leq {0}$. The existence of a sliding mode on the sliding surface implies stability of the system \cite{choi2007lmi,roh1999robust}. For the considered system, the control input consists of two components; a continuous component $u_{eq}$ and a discontinuous one $u_{n}$ \cite{slotine1984sliding}.
\begin{equation}
u\left(t\right) = u_{eq}\left(t\right)+u_{n}\left(t\right),
\label{Equation_S3_}
\end{equation}

The continuous component ensures the system's movement on the sliding surface whenever the system is on the surface. This component maintains the sliding mode and satisfies the condition $\dot{\mathbf{S}}=0$, resulting in the convergence of the output to $\mathbf{x}_d$. On the other hand, the discontinuous component is responsible for forcing the states with arbitrary initial values toward the sliding surface in a finite time, so that the state trajectory is attracted to the sliding surface $\mathbf{S}\left(x,t\right)=0$. It is worth mentioning that, the discontinuous component $u_n$ normally includes the $sgn\left(\cdot\right)$ function, which causes undesirable chattering. In this context, many studies use the boundary layer methods and replace the $sgn\left(\cdot\right)$ function  with the saturation function $sat\left(\cdot\right)$ or $tanh\left(\cdot\right)$ function to restrain the chattering phenomenon. To ensure an attractive and an invariant sliding surface, a positive definite Lyapunov function $V\left(x\right)$ is defined, where in order to provide the asymptotic stability about the equilibrium point, the derivative of $V\left(x\right)$ with respect to time must be negative definite $\dot {V}\left(x\right)<0$.

\section{Fractional-order Calculus}
\label{sec:2.3}

This section presents a brief overview of fractional-order calculus. To this end, first, some preliminary definitions are expressed and addressed. Then, some discussions on the stability and convergence of fractional-order systems are provided to address its effects on the chattering problem in sliding mode control.

\subsection{Preliminaries on Fractional-order Calculus}
\label{sec:2.3.1}
Fractional-order calculus extends the integer-order calculus concept, demonstrating the hereditary characteristics of processes and describing an infinite memory of the terms \cite{sabatier2007advances}. The general representation of the fractional-order integrator and differentiator can be expressed as ${}_{t_{0} } \mathfrak D_{t}^{\gamma } $, where $t_{0} $ is the initial time, and $\gamma $ is the fractional order, while $\gamma >0$  and $\gamma <0$  denote the integration and differentiation characteristics, respectively. Several definitions for fractional differential operators are investigated in the literature such as Grunwald--Letnikov (G-L), Riemann--Liouville (R-L), Caputo, Atangana–Baleanu, etc \cite{machado2011recent}. As an example, the Riemann-Liouville definition can be expressed as follows:
\begin{equation}
\label{GrindEQ__F1_}
{}_{t_{0} } \mathfrak D_{t}^{\gamma } f\left(t\right)=\frac{d^{\gamma } f\left(t\right)}{dt^{\gamma } } =\frac{1}{\Gamma \left(n-\gamma \right)} \frac{d^{n} }{dt^{n} } \int _{t_{0} }^{t}\frac{f\left(\tau \right)}{\left(t-\tau \right)^{\gamma -n+1} } d\tau  ,\, \, \, t>t_{0}
\end{equation}
where $n\in {\mathbb N}$\textit{ }is the first integer, and $n-1\le \gamma <n$. It is worth noting that since the R-L derivative \eqref{GrindEQ__F1_} performs the derivation operation after the integration, the fractional derivation of a constant number is not zero.

The Laplace transformation of the fractional-order derivation is given as
\begin{equation}
\label{GrindEQ__F2_}
\int _{0}^{\infty }{}_{0} \mathfrak D_{t}^{\gamma }  f\left(t\right)e^{-st} dt=S^{\gamma } L\left\{f\left(t\right)\right\}-\sum _{\kappa =0}^{n-1}S^{\kappa } {}_{0} \mathfrak D_{t}^{\gamma -\kappa -1} f\left(t\right)\mathop{|}\nolimits_{t=0}
\end{equation}

The Oustaloup recursive approximation algorithm is applied to approximate the fractional orders by integer-order transfer function and synthesize fractional operators in the frequency domain by a recursive distribution of zeros and poles as follows:
\begin{equation}
\label{GrindEQ__F3_}
S^{\gamma } \approx K\prod _{n=-N}^{N}\frac{1+\left(S/\omega _{z,n} \right)}{1+\left(S/\omega _{p,n} \right)}, \, \,  {\gamma >0}
\end{equation}
\begin{subequations} \label{GrindEQ__F4_}
\begin{align}
\omega _{z,n} =\omega _{b} \left(\frac{\omega _{h} }{\omega _{b} } \right)^{\left(n+N+\frac{1-\gamma }{2}/{2N+1}\right)} \\
\omega _{p,n} =\omega _{b} \left(\frac{\omega _{h} }{\omega _{b} } \right)^{\left(n+N+\frac{1+\gamma }{2}/{2N+1}\right)}
\end{align}
\end{subequations}
where the number of poles and zeros is $2N+1$, $K=\omega _{h}^{\gamma } $ is the adaptive gain, and $\omega _{b} $\textit{ }and $\omega _{h} $\textit{ }represent the lower and upper constraints of the approximation frequency, respectively. More information regarding fractional-order calculus can be found in {\cite{sabatier2007advances}}.

\subsection{Convergence of Fractional-order Systems}
\label{sec:2.3.2}
In order to investigate the characteristics of fractional-order systems along with their effects on the chattering phenomenon in SMCs, the stability of these systems should be first studied. In this regard, let us consider a fractional-order system defined as follows:
\begin{equation} \label{GrindEQ__F5_}
{}_{t_{0} } \mathfrak D_{t}^{\gamma } x\left(t\right)=f\left(x\left(t\right)\right),\, \, \, \, x\left(0\right)=x_{0}
\end{equation}

By applying the Laplace transformation and according to the two-parametric Mittag-Leffler (ML) definition \cite{machado2011recent}, the solution of \eqref{GrindEQ__F5_} can be expressed as follows:
\begin{equation} \label{GrindEQ__F6_}
\mathcal E_{\gamma ,\sigma_{ML} } \left(t\right)=\sum _{\kappa =0}^{\infty }\frac{t^{\kappa } }{\Gamma \left(\gamma \kappa +\sigma_{ML} \right)}  ,\, \, \, \, 0<\gamma \le 1,\, \, \sigma_{ML} >0
\end{equation}
where $\sigma_{ML} $ is the order of ML function.

\begin{lem}
\label{lem:1}
\sloppy \cite{machado2011recent} Defining the coefficient $c_{\kappa } :=\frac{1}{\Gamma \left(\gamma \kappa +1\right)} $ in \eqref{GrindEQ__F6_} and according to the Cauchy-Hadamard formula \cite{sabatier2007advances} for the radius of convergence $R_{CH}=\lim \sup _{\kappa \to \infty } \frac{\left|c_{\kappa } \right|}{\left|c_{\kappa +1} \right|} $, one can observe that \eqref{GrindEQ__F6_} converges in the whole complex plane for all $\gamma >0$.
\end{lem}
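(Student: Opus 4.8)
The plan is to apply the ratio form of the Cauchy--Hadamard criterion directly to the coefficient sequence $c_{\kappa}=1/\Gamma\left(\gamma\kappa+1\right)$ and to show that the resulting limit diverges, which forces the radius of convergence to be infinite and hence guarantees that $\mathcal E_{\gamma,\sigma_{ML}}$ converges on the whole complex plane.

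First I would form the ratio explicitly,
\[
\frac{\left|c_{\kappa}\right|}{\left|c_{\kappa+1}\right|}=\frac{\Gamma\left(\gamma\left(\kappa+1\right)+1\right)}{\Gamma\left(\gamma\kappa+1\right)}=\frac{\Gamma\left(\gamma\kappa+1+\gamma\right)}{\Gamma\left(\gamma\kappa+1\right)},
\]
so that the problem reduces to understanding the asymptotics of a Gamma-function ratio of the type $\Gamma\left(x+\gamma\right)/\Gamma\left(x\right)$ with $x=\gamma\kappa+1\to\infty$. The key step is then the classical estimate $\Gamma\left(x+\gamma\right)/\Gamma\left(x\right)\sim x^{\gamma}$ as $x\to\infty$, which follows from Stirling's formula for the Gamma function. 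Substituting $x=\gamma\kappa+1$ yields $\left|c_{\kappa}\right|/\left|c_{\kappa+1}\right|\sim\left(\gamma\kappa+1\right)^{\gamma}$, and since $\gamma>0$ by hypothesis the right-hand side tends to $+\infty$ as $\kappa\to\infty$. Therefore $R_{CH}=\infty$, and by the standard power-series convergence theorem the series defining the ML function converges absolutely for every $t\in{\mathbb C}$.

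The main obstacle is justifying the Gamma-ratio asymptotic rigorously for an arbitrary real order $\gamma>0$: for non-integer $\gamma$ the ratio cannot be collapsed by telescoping the functional equation $\Gamma\left(z+1\right)=z\Gamma\left(z\right)$, so I would lean on Stirling's expansion to treat the non-integer case cleanly, or invoke the known monotone asymptotic behaviour of $\Gamma\left(x+\gamma\right)/\Gamma\left(x\right)$ directly. A minor care point is that the formula quoted for $R_{CH}$ is really the ratio test, which returns the true radius only when the limit exists; here the limit does exist and equals $+\infty$, so there is no discrepancy between the $\limsup$ and the ordinary limit, and the conclusion holds for all $\gamma>0$ as claimed.
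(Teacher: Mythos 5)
Your proposal is correct. Note that the paper does not actually prove this lemma at all — it is stated as a cited result (to the Mittag--Leffler literature) with no argument supplied — so there is no in-paper proof to compare against; your argument is precisely the standard one those references rely on. The two technical points you flag are handled properly: the Gamma-ratio asymptotic $\Gamma\left(x+\gamma\right)/\Gamma\left(x\right)\sim x^{\gamma}$ from Stirling's formula is the right tool for non-integer $\gamma$, and your observation that the quoted formula for $R_{CH}$ is the ratio test (valid here because the limit exists and equals $+\infty$) correctly resolves the slight abuse of terminology in the lemma's statement, which labels the ratio formula as Cauchy--Hadamard.
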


\begin{lem}
\label{lem:2}
\sloppy \cite{li2010stability} The state $x\left(t\right)$ satisfies the ML stability if $\left\| x\left(t\right)\right\| \le \left\{m\left[x\left(0\right)\right] \mathcal E_{\gamma } \left(-\phi _{1} t^{\gamma } \right)\right\}^{\phi _{2} } $, where $\mathcal E_{\gamma } \left(.\right)$ denotes the ML function defined as \eqref{GrindEQ__F6_}, $\phi _{1} ,\phi _{2} >0$, $m\left(0\right)=0$, $m\left(x\right)\ge 0$, and $m\left(x\right)$ is locally Lipschitz on $x\in {\mathbb R}^{n} $.
\end{lem}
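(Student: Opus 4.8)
The plan is to read this statement not as a bare definition but as the substantive assertion it is used for later, namely that any trajectory obeying the stated Mittag--Leffler bound is asymptotically stable: the origin is both Lyapunov-stable and attractive, so that $\|x(t)\|\to 0$ as $t\to\infty$ while the whole trajectory stays controlled by its initial data. The two ingredients I would isolate are (i) the local Lipschitz property of $m$ together with $m(0)=0$, which governs the behaviour near the origin, and (ii) the decay of the one-parameter Mittag--Leffler function $\mathcal{E}_\gamma(-\phi_1 t^\gamma)$, which supplies convergence.

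First I would establish stability. Evaluating the series \eqref{GrindEQ__F6_} at $t=0$ gives $\mathcal{E}_\gamma(0)=1/\Gamma(1)=1$, so at the initial instant the bound reads $\|x(0)\|\le\{m[x(0)]\}^{\phi_2}$. Since $m$ is locally Lipschitz with $m(0)=0$, there is a constant $L>0$ with $m[x(0)]=m[x(0)]-m(0)\le L\|x(0)\|$ in a neighbourhood of the origin. Because $0<\gamma\le 1$ and the argument $-\phi_1 t^\gamma$ is nonpositive, $\mathcal{E}_\gamma(-\phi_1 t^\gamma)$ is nonnegative, nonincreasing in $t$, and bounded above by $\mathcal{E}_\gamma(0)=1$; hence $\|x(t)\|\le\{L\|x(0)\|\}^{\phi_2}$ for all $t\ge 0$. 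Given $\varepsilon>0$, choosing $\|x(0)\|$ small enough makes the right-hand side smaller than $\varepsilon$, which is exactly Lyapunov stability.

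Next I would establish attractivity. Here Lemma \ref{lem:1} guarantees that $\mathcal{E}_\gamma$ is entire, so $\mathcal{E}_\gamma(-\phi_1 t^\gamma)$ is well defined for every $t$; what remains is its behaviour as the argument tends to $-\infty$. For $\gamma=1$ the function reduces to $e^{-\phi_1 t}$, which vanishes as $t\to\infty$. For $0<\gamma<1$ I would invoke the standard asymptotic expansion for large negative real argument, $\mathcal{E}_\gamma(-z)=\tfrac{1}{z\,\Gamma(1-\gamma)}+O(z^{-2})$ as $z\to\infty$, derived from the Hankel-contour integral representation rather than from the power series. This yields $\mathcal{E}_\gamma(-\phi_1 t^\gamma)\to 0$, and substituting into the bound gives $\|x(t)\|\to 0$. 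Combining boundedness with this limit delivers asymptotic stability.

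The hard part will be this second step: although the series \eqref{GrindEQ__F6_} converges on the whole plane by Lemma \ref{lem:1}, it is essentially useless for the large-argument behaviour, since it is an alternating series whose terms first grow before they decay, so cancellation hides the true magnitude. Moreover the decay is only algebraic, of order $t^{-\gamma}$, in sharp contrast to the exponential rate of the classical case $\gamma=1$, and rigorously pinning it down requires the contour-integral representation of $\mathcal{E}_\gamma$ or an appeal to the known asymptotic results in the fractional-calculus literature. Once that decay estimate is secured, the rest of the argument is routine.
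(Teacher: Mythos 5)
There is no proof in the paper for you to be compared against: Lemma~\ref{lem:2} is quoted verbatim from \cite{li2010stability}, where it functions as the \emph{definition} of Mittag--Leffler stability, and the thesis uses it purely as background; the surrounding text (\eqref{GrindEQ__F8_}--\eqref{GrindEQ__F11_}) only discusses decay informally. Your proposal therefore proves something the paper never attempts, namely the implication that gives the definition its force: any trajectory obeying the ML bound is Lyapunov stable and attractive. That reading is the standard one, and your two-step argument is correct in outline. Two points, however, rest on facts you assert rather than establish. First, the claim that $\mathcal{E}_{\gamma}(-\phi_{1}t^{\gamma})$ is nonnegative, nonincreasing and bounded by $1$ is not visible from the series \eqref{GrindEQ__F6_} (an alternating series with initially growing terms); it is Pollard's complete-monotonicity theorem for $0<\gamma\le 1$ and must be cited. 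Second, the large-argument behaviour indeed needs the Hankel-contour asymptotics, as you say. Both steps can be replaced at once by the known two-sided bound
\begin{equation*}
\frac{1}{1+\Gamma(1-\gamma)\,x}\;\le\;\mathcal{E}_{\gamma}(-x)\;\le\;\frac{1}{1+x/\Gamma(1+\gamma)},\qquad x\ge 0,\ 0<\gamma<1,
\end{equation*}
whose upper half simultaneously gives the uniform bound needed for stability and the convergence $\mathcal{E}_{\gamma}(-\phi_{1}t^{\gamma})\to 0$ needed for attractivity, making the argument self-contained up to a single citation. Finally, your closing observation that the decay is only algebraic, of order $t^{-\gamma}$, is correct --- and it is worth noting that it contradicts the paper's own prose, which asserts that the fractional-order cases converge \emph{faster} than the exponential integer-order case; on that point your version, not the paper's, is the mathematically accurate one.
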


\begin{lem}
\label{lem:3}
\sloppy \cite{li2010stability} The fractional-order system \eqref{GrindEQ__F5_} is ML stable at the equilibrium point $x\left(t\right)=0$, if there exists a continuous differentiable Lyapunov function $V\left(x\left(t\right),t\right)$ satisfying $\varsigma _{1} \left\| x\right\| ^{\varsigma _{2} } \le V\left(x,t\right)\le \varsigma _{3} \left\| x\right\| ^{\varsigma _{2} \varsigma _{4} } $ and ${}_{t_{0} } \mathfrak D_{t}^{\gamma } V\left(x,t\right)\le -\varsigma _{5} \left\| x\right\| ^{\varsigma _{2} \varsigma _{4} } $, where $\varsigma _{i} ,\, i=1,...,5$ are arbitrary positive constants and $\left\| \cdot \right\| $ denotes the $\ell _{2} $ norm. If these assumptions hold globally on ${\mathbb R}^{n} $, then $x\left(t\right)=0$ is globally ML stable.
\end{lem}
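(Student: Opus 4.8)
The plan is to follow the fractional Lyapunov direct method: collapse the two hypotheses into a single scalar fractional inequality for $V$, solve it by Laplace transform, discard a provably nonnegative term, and then transfer the resulting decay estimate back to $\left\| x \right\|$ through the sandwiching bounds so as to match the Mittag--Leffler form required by Lemma \ref{lem:2}. First I would eliminate the state norm by combining the decay condition ${}_{t_{0}} \mathfrak D_{t}^{\gamma} V \le -\varsigma_5 \left\| x \right\|^{\varsigma_2 \varsigma_4}$ with the upper bound $V \le \varsigma_3 \left\| x \right\|^{\varsigma_2 \varsigma_4}$, which gives the scalar inequality
\begin{equation}
{}_{t_{0}} \mathfrak D_{t}^{\gamma} V\left(x,t\right) \le -\frac{\varsigma_5}{\varsigma_3}\, V\left(x,t\right).
\end{equation}

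Next I would introduce a nonnegative slack function $M\left(t\right)\ge 0$ turning this into the exact equation ${}_{t_{0}} \mathfrak D_{t}^{\gamma} V + M\left(t\right) = -\left(\varsigma_5/\varsigma_3\right) V$. Applying the Laplace transform via the rule \eqref{GrindEQ__F2_} (for $0<\gamma\le 1$ only the single initial term survives) and writing $V\left(s\right)=L\left\{V\left(t\right)\right\}$, solving algebraically yields
\begin{equation}
V\left(s\right) = \frac{V\left(t_0\right) s^{\gamma-1} - M\left(s\right)}{s^{\gamma} + \varsigma_5/\varsigma_3}.
\end{equation}
Inverting term by term with the standard identities $L^{-1}\left\{s^{\gamma-1}/\left(s^\gamma+a\right)\right\} = \mathcal E_\gamma\left(-a t^\gamma\right)$ and $L^{-1}\left\{1/\left(s^\gamma+a\right)\right\} = t^{\gamma-1}\mathcal E_{\gamma,\gamma}\left(-a t^\gamma\right)$ expresses $V\left(t\right)$ as $V\left(t_0\right)\mathcal E_\gamma\left(-\left(\varsigma_5/\varsigma_3\right)t^\gamma\right)$ minus the convolution of $M$ against the kernel $t^{\gamma-1}\mathcal E_{\gamma,\gamma}\left(-\left(\varsigma_5/\varsigma_3\right)t^\gamma\right)$.

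The decisive step is to drop the convolution term. Since $M\left(t\right)\ge 0$, this is legitimate precisely when the kernel $t^{\gamma-1}\mathcal E_{\gamma,\gamma}\left(-\left(\varsigma_5/\varsigma_3\right)t^\gamma\right)$ is nonnegative for $0<\gamma\le 1$, so I would invoke the complete monotonicity of the two-parameter Mittag--Leffler function (where the convergence of the series in Lemma \ref{lem:1} and the classical sign properties of $\mathcal E_{\gamma,\gamma}$ enter) to conclude $V\left(t\right) \le V\left(t_0\right)\mathcal E_\gamma\left(-\left(\varsigma_5/\varsigma_3\right)t^\gamma\right)$. I expect this nonnegativity of the Mittag--Leffler kernel to be the main obstacle, as it is the sole analytic fact in the argument that is not a routine algebraic or transform manipulation.

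Finally I would close by sandwiching. The lower bound $\varsigma_1\left\| x \right\|^{\varsigma_2} \le V$ together with $V\left(t_0\right)\le \varsigma_3\left\| x\left(t_0\right)\right\|^{\varsigma_2\varsigma_4}$ gives
\begin{equation}
\left\| x\left(t\right) \right\| \le \left[\frac{V\left(t_0\right)}{\varsigma_1}\,\mathcal E_\gamma\!\left(-\frac{\varsigma_5}{\varsigma_3}\,t^{\gamma}\right)\right]^{1/\varsigma_2},
\end{equation}
which is exactly the Mittag--Leffler stability estimate of Lemma \ref{lem:2} upon setting $m\left[x\left(0\right)\right]=V\left(t_0\right)/\varsigma_1$, $\phi_1=\varsigma_5/\varsigma_3$, and $\phi_2=1/\varsigma_2$. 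The required structural properties of $m$ follow from the hypotheses: continuity of $V$ gives local Lipschitzness, while the two-sided bound forces $V\left(0,t\right)=0$, hence $m\left(0\right)=0$ and $m\left(x\right)\ge 0$. Because every hypothesis is assumed to hold on all of $\mathbb{R}^{n}$, the estimate holds globally and $x\left(t\right)=0$ is globally Mittag--Leffler stable.
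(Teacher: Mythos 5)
The thesis never proves Lemma \ref{lem:3}: it is stated as a quoted result from \cite{li2010stability}, so there is no internal proof to compare against. Your argument is, in substance, exactly the proof of the fractional Lyapunov direct method given in that reference — collapse the two hypotheses into ${}_{t_{0}}\mathfrak{D}_{t}^{\gamma}V\le-(\varsigma_{5}/\varsigma_{3})V$, introduce a nonnegative slack $M(t)$, solve by Laplace transform, discard the convolution against the nonnegative kernel $t^{\gamma-1}\mathcal{E}_{\gamma,\gamma}\left(-(\varsigma_{5}/\varsigma_{3})t^{\gamma}\right)$, and sandwich back through $\varsigma_{1}\left\|x\right\|^{\varsigma_{2}}\le V$ — and you correctly identify the nonnegativity (complete monotonicity) of the Mittag--Leffler kernel as the only genuinely analytic ingredient.

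One step does need repair. You invoke the paper's transform rule \eqref{GrindEQ__F2_}, which is the Riemann--Liouville rule (consistent with the paper's definition \eqref{GrindEQ__F1_}), but your displayed formula carries the initial term $V\left(t_{0}\right)s^{\gamma-1}$, which is the Caputo rule. Under \eqref{GrindEQ__F2_} with $0<\gamma\le1$ the surviving term is the constant ${}_{t_{0}}\mathfrak{D}_{t}^{\gamma-1}V\big|_{t=t_{0}}$, so inversion yields $\left({}_{t_{0}}\mathfrak{D}_{t}^{\gamma-1}V\big|_{t=t_{0}}\right)t^{\gamma-1}\mathcal{E}_{\gamma,\gamma}\left(-(\varsigma_{5}/\varsigma_{3})t^{\gamma}\right)$ as the leading term rather than $V\left(t_{0}\right)\mathcal{E}_{\gamma}\left(-(\varsigma_{5}/\varsigma_{3})t^{\gamma}\right)$; Mittag--Leffler stability still follows, only with a different $m$, which is how \cite{li2010stability} treats the R--L case. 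Alternatively, first pass to the Caputo derivative via ${}^{C}\mathfrak{D}^{\gamma}V={}_{t_{0}}\mathfrak{D}_{t}^{\gamma}V-\frac{V\left(t_{0}\right)}{\Gamma\left(1-\gamma\right)}\left(t-t_{0}\right)^{-\gamma}\le{}_{t_{0}}\mathfrak{D}_{t}^{\gamma}V$ (valid since $V\ge0$), after which your display is exact. Finally, local Lipschitzness of $m$ follows from the assumed continuous differentiability of $V$, not from mere continuity as you wrote; both points are small fixes rather than flaws in the approach.
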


Since $\sigma_{ML} =1$ in {\eqref{GrindEQ__F6_}}, the ML is the same as the one-parametric ML, thus $\mathcal E_{\gamma ,1} \left(t\right)=\mathcal E_{\gamma } \left(t\right)$. According to \eqref{GrindEQ__F6_}, the integer-order case with $\gamma =1$ can be expressed as
\begin{equation} \label{GrindEQ__F8_}
\mathcal E_{1} \left(t\right)=\sum _{\kappa =0}^{\infty }\frac{t^{\kappa } }{\kappa !}  =e^{t}
\end{equation}

On the other hand, the ML function with a fractional-order $0<\gamma <1$ holds two situations; (a) $\gamma \ne 0.5$, and (b) $\gamma =0.5$. The ML function follows the Cauchy inequality \cite{aghayan2022guaranteed,aghayan2023criteria} and $\Gamma \left(.\right)$ characteristics, hence, for the first situation it is assumed that there exists a $\kappa \ge 0$ and a positive scalar $r\left(\kappa \right)$ such that,
\begin{equation} \label{GrindEQ__F9_}
M_{E_{\gamma } } \left(r\right):=\max _{\left|t\right|=r} \left| \mathcal E_{\gamma } \left(t\right)\right|<e^{r^{\kappa } } ,\, \, \, \forall r>r\left(\kappa \right)
\end{equation}
while for the second situation, which is the particular case of ML function we have,
\begin{equation} \label{GrindEQ__F10_}
\mathcal E_{0.5} \left(t\right)=\sum _{\kappa =0}^{\infty }\frac{t^{\kappa } }{\Gamma \left(0.5\kappa +1\right)} =e^{t^{2} } erfc\left(-t\right)
\end{equation}
where $erfc\left(.\right)$ is complementary to the well-known error function $erf\left(.\right)$:
\begin{equation} \label{GrindEQ__F11_}
erfc\left(t\right):=\frac{2}{\sqrt{\pi } } \int _{t}^{\infty }e^{-\tau ^{2} } d\tau =1-erf\left(t\right),\, \, t\in {\mathbb C}
\end{equation}

\sloppy According to \eqref{GrindEQ__F8_}-\eqref{GrindEQ__F11_}, it is observed that as $t\to \infty $, the ML distribution exhibits an exponential decay of $e^{-t} $ in the integer-order case, meaning that the state $x\left(t\right)$ converges to zero with an exponential decay law of $e^{-t} $, while the fractional-order cases exhibit faster convergence with exponential decay laws of $e^{-r^{\kappa } } $ and $e^{-t^{2} } erfc\left(-t\right)$ for $\gamma \ne 0.5$ and $\gamma =0.5$, respectively. Figure \ref{fig:111} shows this difference in the chattering phenomenon occurrence on sliding surface of fractional-order SMC compared to the integer-order SMC, where it can be seen that the characteristics of fractional-order calculus can desirably reduce the chattering phenomenon.

\begin{figure}[!t]
\centering
\includegraphics[width=2.8in]{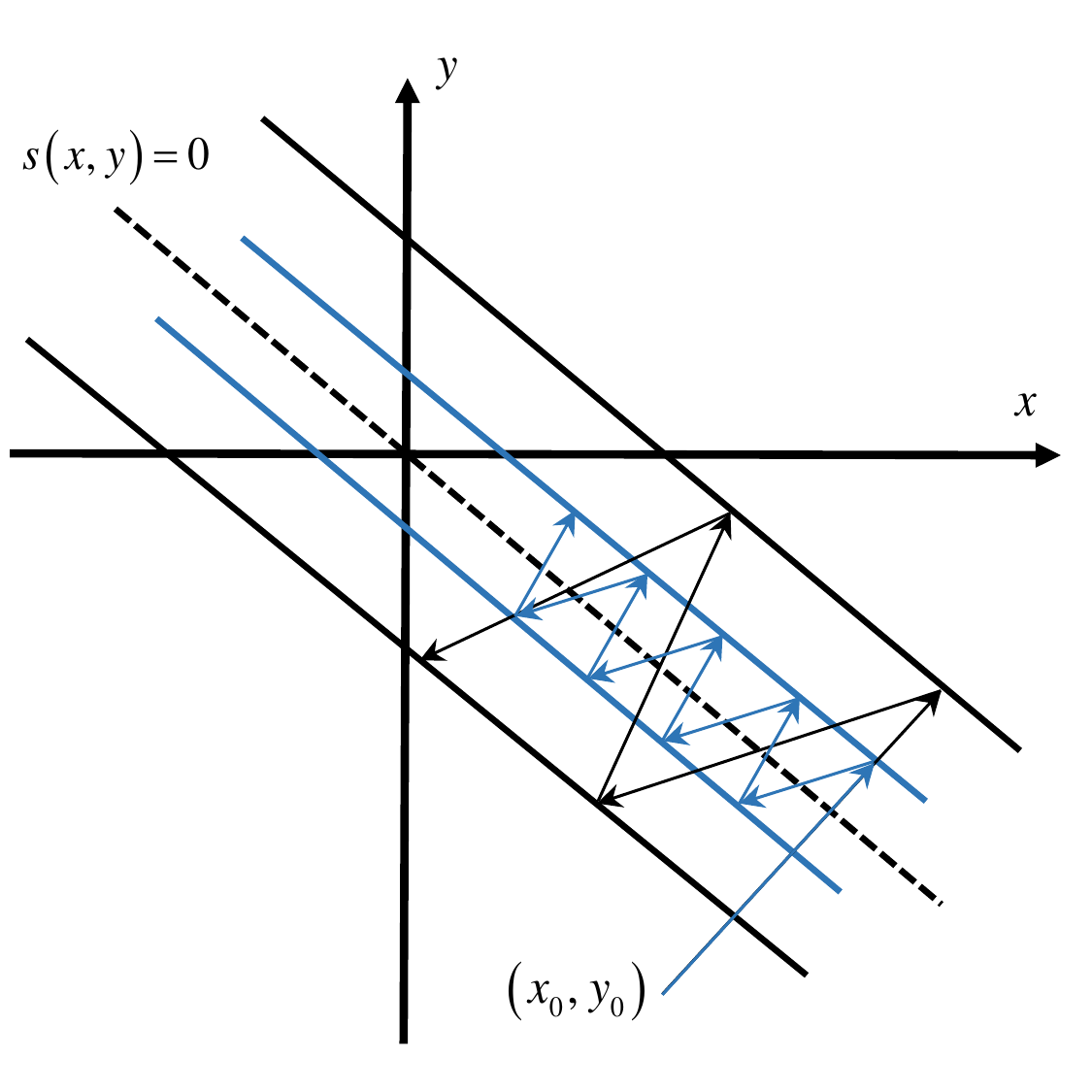}
\caption{Sliding mode movement under fractional-order (blue) and integer-order (black) control system}
\label{fig:111}
\end{figure}

\section{Conclusions}
\label{sec:2.4}
\textcolor{black}{This chapter investigated the model of wind energy conversion systems in detail. Accordingly, the main components of the WECS consisting of aerodynamics, pitch actuator system, drivetrain, and generators were presented. In addition, since the main control schemes being developed in Chapters 5 and 6 are on the basis of sliding mode control method, a brief introduction to sliding mode control was delivered. Moreover, some preliminary definitions of fractional calculus as well as discussions on the stability and convergence of fractional-order systems were provided. Having introduced the principles of WECSs, SMC and fractional-order calculus, the next chapter will review the application of SMC controllers on WECSs.}

\chapter{Literature Review} 

\label{Chapter3} 




\section{Introduction}
\label{sec:3.1}

This chapter presents an extensive review of the existing literature on the application of conventional SMC and its modifications in dealing with different control problems of WECSs. \textcolor{black}{Accordingly, separate sections are provided as conventional SMC (Section \ref{sec:3.2}), adaptive SMC (Section \ref{sec:3.3}), fractional-order SMC (Section \ref{sec:3.4}), higher-order SMC (Section \ref{sec:3.5}), fuzzy SMC (Section \ref{sec:3.6}), and neural network SMC (Section \ref{sec:3.7}). Besides, in each section, studies are categorized based on generator types consisting of DFIG and PMSG as the most used ones, along with other generators which include SCIG, direct driven synchronous generator (DDSG), double output induction generator (DOIG), dual stator induction generator (DSIG), simple first-order generator (SFG), and simple second-order generator (SSG).}

Depending on the combinations of fixed/variable speed with fixed/variable pitch angle of the blades, different classifications can be considered for WTs. In fixed-pitch turbines, the pitch angle of the blades cannot be adjusted/controlled. Thus, they are unable to provide a reliable power production performance due to their inability to mitigate structural loads \cite{njiri2016state}. On the other hand, in variable-pitch turbines, the captured aerodynamic power is limited by controlling the pitch angle at above rated wind speed situations, which keeps the turbine operating at its rated power. Variable-speed WTs are capable of producing power under variable wind speed conditions. However, they require converters to guarantee the desired generated power performance as well as coupling them to the grid \cite{soliman2020novel}. On the contrary, fixed-speed WTs are designed to operate at a predetermined wind speed (rated wind speed) or designed for optimum operation at a rated wind speed; however, in practice, they have been found to operate in variable winds and are designed accordingly. The generator of fixed-speed WTs is directly coupled to the grid, which makes the generator speed dependent on the grid frequency. Furthermore, although compared to the variable-speed WTs, the fixed-speed ones have simpler structures, they have been found to be less effective in terms of wind energy extraction and induction of mechanical stress under variable wind speed conditions \cite{wang2007survey}. Accordingly, the most commonly used large-scale WT structure has been the variable-speed variable-pitch WT structure in the past decade.

\section{Conventional SMC for WECS}
\label{sec:3.2}
Numerous studies with various control objectives have been carried out using conventional SMC for WECS with different generator types. Some of the studies are discussed in this section; however, the full summary can be found in Table \ref{tab:2}, where the index ``ex.'' denotes that experimental investigations were also carried out in the study. Accordingly, studies are categorized based on generator types consisting of doubly fed induction generator (DFIG), permanent-magnet synchronous generator (PMSG), and squirrel-cage induction generator (SCIG), permanent magnet synchronous motor (PMSM), direct driven synchronous generator (DDSG), double output induction generator (DOIG), dual stator induction generator (DSIG), simple first-order generator (SFG), and simple second-order generator (SSG). Also, the main control objectives pursued in each study are categorized as MPE, active power control (APC) and reactive power control (RPC) denoting the power setpoints tracking rather than the maximum power points, chattering reduction (CR), rotor-side (RSCV) and grid-side (GSCV) converter voltage regulation, DC-link voltage regulation (DCVR), fault-tolerant control (FTC), pitch angle control (PAC), load frequency control (LFC), and mechanical stress minimization (MSM). Figures \ref{fig:2} and \ref{fig:3} demonstrate the percentage-based and year-based distribution of studies devoted to SMC-based control of WECS during the past two decades, respectively. As it is observed, compared to other SMC-based methods, the conventional SMC has gained the most attention owing to its simplicities in design and implementation procedure. However, the simplest common modification on conventional SMC has been the substitution of $sgn\left(\cdot\right)$ function in the conventional SMC's sliding surface with $sat\left(\cdot\right)$ or $tanh\left(\cdot\right)$ to reduce the undesirable chattering effects. In this respect, although $tanh\left(\cdot\right)$ has reportedly demonstrated superior performance over $sat\left(\cdot\right)$ and $sgn\left(\cdot\right)$, respectively, higher-order SMC strategies' preferabilities among scholars have shown a growing trend during the past few years.

\begin{figure}
\centering
\includegraphics[width=3 in]{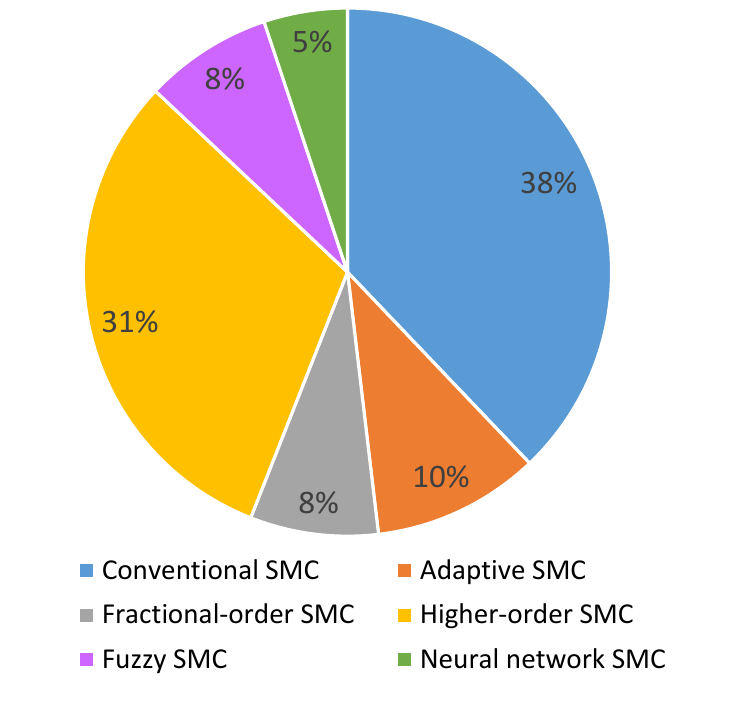}
\caption{Distribution of the total research studies devoted to SMC-based control of WECS.}
\label{fig:2}
\end{figure}

\begin{figure}
\centering
\includegraphics[width=4.2 in]{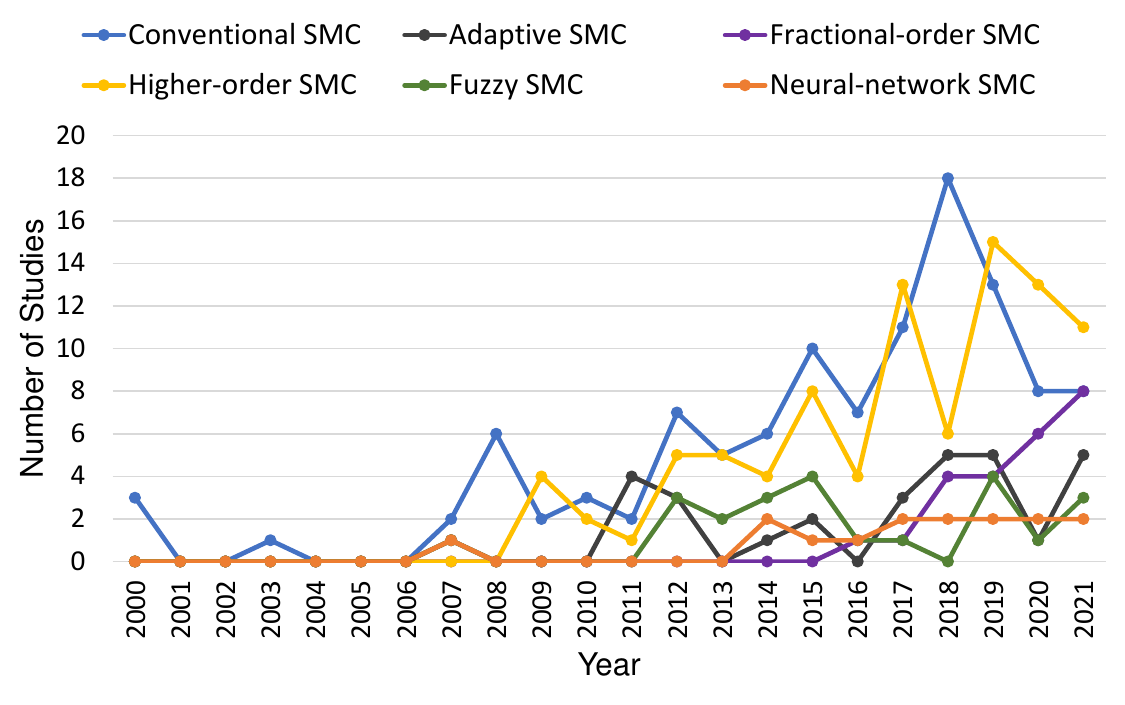}
\caption{The number of research studies devoted to SMC-based control of WECS over the past two decades.}
\label{fig:3}
\end{figure}

\begin{table}[hbt!]
\caption{Summary of conventional SMC methods applied to WECS. The index ``ex.'' indicates experimental studies.}
\centering
\label{tab:2}
\resizebox{\textwidth}{!}{
\begin{tabular}{l c c c c c c c c c c c c}
\hline\hline \\[-3mm]
\multicolumn{1}{l}{Reference} & \multicolumn{11}{c}{Main control objectives} & \multicolumn{1}{c}{Generator} \\ \hline
 & \multicolumn{1}{c}{MPE} & \multicolumn{1}{c}{APC}  & \multicolumn{1}{c}{RPC} & \multicolumn{1}{c}{CR} & \multicolumn{1}{c}{RSCV} & \multicolumn{1}{c}{GSCV} & \multicolumn{1}{c}{DCVR} & \multicolumn{1}{c}{FTC} & \multicolumn{1}{c}{PAC} & \multicolumn{1}{c}{LFC} & \multicolumn{1}{c}{MSM}  \\[1.2ex] \hline
\cite{pan2008maximum, weng2014sliding, ihedrane2017power, zheng2009sliding, amira2020sliding} & * &  &  &  &  &  &  &  &  &  &  & DFIG \\
\rowcolor{lightgray}\cite{jaladi2018dc}${}_{ex.}$ & * &  & * &  &  &  & * & * &  &  &  & DFIG \\
\cite{valenciaga2009geometric,liu2018dfig}, \cite{xiong2021event}${}_{ex.}$ & * &  &  & * &  &  &  &  &  &  &  & DFIG \\
\rowcolor{lightgray}\cite{barambones2019robust} & * &  &  &  &  &  & * &  &  &  &  & DFIG \\
\cite{noussi2021adaptive} & * &  & * & * &  &  & * &  &  &  &  & DFIG \\
\rowcolor{lightgray}\cite{munteanu2008energy}${}_{ex.}$ & * &  &  &  &  &  &  &  &  &  & * & DFIG \\
\cite{el2016comparative, bekakra2014dfig, hu2010direct, pande2013discrete, mahboub2017sliding, hamane2014control, hagh2015direct, tohidi2013multivariable,hamid2020improved}, \cite{jeong2008sliding}${}_{ex.}$ &  & * & * &  &  &  &  &  &  &  &  & DFIG \\
\rowcolor{lightgray}\cite{nayeh2020multivariable, pan2021integral}, \cite{djoudi2018sliding}${}_{ex.}$ &  & * & * & * &  &  &  &  &  &  &  & DFIG \\
\cite{aghatehrani2011sliding} &  & * & * &  &  &  & * &  &  &  &  & DFIG \\
\rowcolor{lightgray}\cite{saad2015low}, \cite{shang2012sliding}${}_{ex.}$, \cite{villanueva2018grid}${}_{ex.}$ &  & * & * &  &  &  &  & * &  &  &  & DFIG \\
\cite{jafari2017analysis} &  & * &  & * &  &  &  &  &  &  &  & DFIG \\
\rowcolor{lightgray}\cite{dahiya2019optimal} &  &  &  & * &  &  &  &  &  & * &  & DFIG \\
\cite{martinez2011sliding,chojaa2021integral} &  & * &  &  & * & * &  &  &  &  &  & DFIG \\
\rowcolor{lightgray}\cite{shehata2015sliding, martinez2013sliding, taher2018new, li2019sliding, yang2018robust}, \cite{merabet2018power}${}_{ex.}$, \cite{djilali2018real}${}_{ex.}$,  &  & * &  &  & * & * & * & * &  &  &  & DFIG \\
\cite{dahiya2017hybridized} &  &  &  &  &  &  &  &  &  & * &  & DFIG \\ \hline
\rowcolor{lightgray}\cite{tang2019non, hu2019sliding, jingfeng2015maximum, errami2013maximum, kusumawardana2019simple, suleimenov2020disturbance}, \cite{yang2018passivity}${}_{ex.}$,\cite{yin2015sliding}${}_{ex.}$ & * &  &  &  &  &  &  &  &  &  &  & PMSG \\
\cite{pan2020wind} & * &  &  & * &  &  &  &  &  &  &  & PMSG \\
\rowcolor{lightgray}\cite{benadja2018hardware}${}_{ex.}$, \cite{errouissi2017novel}${}_{ex.}$ & * &  &  &  &  &  &  & * &  &  &  & PMSG \\
\cite{ayadi2015sliding, xin2014sliding, lee2010sliding} & * &  &  &  &  &  &  &  & * &  &  & PMSG \\
\rowcolor{lightgray}\cite{valenciaga2003power} &  & * &  &  &  &  &  &  &  &  &  & PMSG \\
\cite{jena2017novel, gajewski2017analysis} &  & * &  &  & * & * &  &  &  &  &  & PMSG \\
\rowcolor{lightgray}\cite{mozayan2016sliding}${}_{ex.}$ &  &  &  & * & * & * &  &  &  &  &  & PMSG \\
\cite{thakur2018control}${}_{ex.}$ &  &  &  &  & * & * & * & * &  &  &  & PMSG \\
\rowcolor{lightgray}\cite{merzoug2012sliding}, \cite{huang2019dc}${}_{ex.}$,  &  &  &  &  &  &  & * &  &  &  &  & PMSG \\ \hline
\cite{soufi2016particle} & * &  &  &  &  &  &  &  &  &  &  & SCIG \\
\rowcolor{lightgray}\cite{mi2014sliding} & * &  &  &  &  &  &  &  & * &  &  & SCIG \\
\cite{de2000dynamical, pati2013sliding, pati2012performance} &  & * &  &  &  &  &  &  &  &  &  & SCIG \\ \hline
\rowcolor{lightgray}\cite{matas2008feedback} &  &  &  &  & * & * & * &  &  &  &  & DDSG \\ \hline
\cite{de2000sliding, puleston2000sliding} & * &  &  &  &  &  &  &  &  &  &  & DOIG \\ \hline
\rowcolor{lightgray}\cite{amimeur2012sliding} &  & * &  &  & * & * &  &  &  &  &  & DSIG \\ \hline
\cite{ur2019disturbance, faskhodi2019output, beltran2008sliding, saravanakumar2015validation} & * &  &  &  &  &  &  &  &  &  &  & SFG \\
\rowcolor{lightgray}\cite{sami2012fault} & * &  &  &  &  &  &  & * &  &  &  & SFG \\
\cite{cui2017observer} & * &  &  &  &  &  &  & * &  & * &  & SFG \\
\rowcolor{lightgray}\cite{gui2020complementary, larrea2018sliding, corradini2017sliding, yinzhu2016study, colombo2020pitch, yuan2018coordinated, qingmei2019novel} & * &  &  &  &  &  &  &  & * &  &  & SFG \\
\cite{zaafouri2018uncertain, torchani2016variable} & * &  &  &  &  &  &  &  &  &  & * & SFG \\
\rowcolor{lightgray}\cite{berrada2020new} & * &  &  & * &  &  &  &  &  &  & * & SFG \\
\cite{zhang2013sliding} &  & * &  &  &  &  &  &  &  &  &  & SFG \\ \hline
\rowcolor{lightgray}\cite{beltran2007sliding} & * &  &  &  &  &  &  &  &  &  &  & SSG \\
\cite{agarwala2019design} &  & * &  &  &  &  &  &  & * &  &  & SSG \\
\rowcolor{lightgray}\cite{prasad2019non} &  &  &  &  &  &  &  &  & * & * &  & SSG \\
[1ex]
\hline\hline
\end{tabular}}
\end{table}

\subsubsection{DFIG (SMC)}
\label{sec:3.2.1}
Authors in \cite{taher2018new, li2019sliding} developed SMC controllers to control the DC-link voltage, rotor-, and grid-side converters to enhance the fault ride-through performance of DFIG-based WT, while \cite{yang2018robust} utilized a nonlinear perturbation observer-based SMC to deal with the same problem. According to the results reported, the presented control schemes demonstrated superior performance compared to PI, conventional SMC, conventional vector control, and feedback linearization control. In another work \cite{villanueva2018grid}, the authors investigated SMC's application in FTC of DFIG-based WECS under low voltage grid faults. As the authors reported, the developed control scheme was able to effectively withstand the balanced and unbalanced voltage dips and tracks the torque and reactive power references.

An improved SMC controller with reduced chattering was proposed in \cite{valenciaga2009geometric} for wind power capture maximization of a grid-connected DFIG-based WECS under bounded uncertainties and disturbances. In order to mitigate the chattering effects, the authors developed a control component that delivered the minimum discontinuous action required for effective disturbance rejection. In contrast, \cite{liu2018dfig} proposed an exponential reaching law to deal with the chattering phenomenon. As reported, the proposed approaches effectively reduced the chattering problem and minimized the machine losses. In \cite{nayeh2020multivariable}, the APC and RPC of a DFIG-based WT in the presence of various uncertainties was investigated using the SMC approach. As the authors reported, although some minor chattering still exist in the developed SMC, they can be counted acceptable due to less tracking error and better performance of SMC in the transient time response in terms of overshoot and settling time compared to the $H_{\infty }$ robust control method.

\subsubsection{PMSG (SMC)}
\label{sec:3.2.2}
The MPE of a PMSG-based offshore WT was studied in \cite{benadja2018hardware}, taking advantage of an SMC controller to ensure the voltage source converter's stable operation in a high voltage direct current station during DC faults. Experimental investigations were carried out, and an improved dynamic response of the system in terms of signal quality, stability, and response time of the system was reported using the proposed strategy in comparison with conventional PI. In another study \cite{suleimenov2020disturbance}, the authors incorporated a generalized high-order disturbance observer with integral SMC (ISMC) to deal with the same problem. In the presented work, the fast-changing uncertainties were estimated using a disturbance observer, while the ISMC was responsible for the rotor speed control. Comparative simulation results were provided, and as reported, by guaranteeing less steady-state error and smaller response time, the proposed scheme was shown to have better performance over the linear quadratic regulator (LQR) approach. In order to recover the transient performance of a PMSG-based WT subjected to external disturbances, a combination of a PI-type ISMC and feedback linearization methods was proposed in \cite{errouissi2017novel}. In order to reduce the chattering, the authors used the $sgn\left(\cdot\right)\rightarrow sat\left(\cdot\right)$ substitution in the switching control law, where two design parameters were implemented to make a trade-off between desired control and chattering reduction objectives. As reported through the experimental investigations, the developed scheme successfully eliminated the steady-state error and retained the feedback linearization technique's nominal transient performance.

\subsubsection{Other Generators (SMC)}
\label{sec:3.2.3}
SMC-based PAC strategies were proposed in \cite{colombo2020pitch, yuan2018coordinated} to maximize power extraction of a WT with an SFG model. Authors in \cite{colombo2020pitch} enhanced the SMC's performance by applying the boundary layer method \cite{utkin1992scope} to the control law and reducing the chattering, while authors in \cite{yuan2018coordinated} utilized an adaptive disturbance observer to estimate the load disturbance. As reported, compared to the standard PI-based baseline controller, the developed control schemes demonstrated superior performance in terms of tracking error. A robust FTC approach based on SMC was proposed in \cite{sami2012fault} to optimize the wind energy captured by a WT in the low wind speed range of operation. According to the authors, the need for a state observer could be removed by using the proposed strategy. Besides, the SMC switching gain could be adapted so that the unmeasurable signals are compensated. Authors in \cite{saravanakumar2015validation} proposed an augmentation of Newton Raphson wind speed estimator and an ISMC to maximize the power capture in region II and power regulation in region III of a variable speed variable pitch WT. The authors dealt with the chattering problem using the tangent hyperbolic function $tanh\left(\cdot\right)$. According to the authors, in comparison with conventional SMC, the proposed control strategy demonstrated superior performance in terms of optimum power extraction with reduced transient load on the drivetrain. In another study \cite{cui2017observer}, an observer was utilized to estimate the wind power system states under perturbation of nonlinear load, and an ISMC-based load frequency controller was also developed to reduce the frequency deviations of the overall power system.

\section{Adaptive SMC for WECS}
\label{sec:3.3}
The parameters of the WECS are not always constant over time, and the system is not exempt from parameter variations, perturbations, and uncertainties \cite{yang2016survey,mousavi2020enhanced}. Since many of these uncertainty and perturbation bounds cannot be directly estimated, their negative impact on the SMC controller's robust performance is inevitable \cite{jing2019adaptive}. In such situations, dynamic gain adaptation laws can be utilized to deal with uncertain disturbances and avoid the chattering phenomenon's incrementation due to large switching control signals \cite{liu2020adaptive}.

\subsection{DFIG (ASMC)}
\label{sec:3.3.1}
Authors in \cite{ayyarao2019modified} developed an adaptive SMC (ASMC) controller for maximum power point tracking (MPPT) of DFIG-based WTs in the presence of external disturbances. The authors diminished the chattering effects and achieved finite-time convergence by selecting the control gains using the positive semi-definite barrier function. An ASMC scheme was developed to deal with the load mitigation problem of WT systems \cite{hu2017active} in the presence of system uncertainties and external disturbances. The authors used the $sgn\left(\cdot\right)\rightarrow sat\left(\cdot\right)$ substitution to suppress the chattering effects. As the authors reported, the reachability of the sliding surface and stability of the sliding motion was guaranteed. In \cite{yin2015adaptive}, a variable-displacement pump-controlled pitch system was developed to deal with flap-wise load fluctuations and generator power mitigation of WTs. An adaptive sliding mode pump displacement controller and a backstepping stroke piston controller were also proposed to control the pitch system. Later, an electro-hydraulic servo pitch system was proposed by employing a hydraulic pump-controlled hydraulic motor to enhance the pitch control efficiency \cite{yin2019adaptive}, where, an ASMC controller was developed to track the desired pitch angle trajectory. According to the authors, compared with the conventional pitch systems, the developed control scheme has the advantages of larger power/torque to volume ratio, better pitch angle tracking accuracy, and higher overall efficiency due to valve-less control.

\subsection{PMSG (ASMC)}
\label{sec:3.3.2}
In \cite{ameli2019adaptive, merabet2011adaptive, barambones2015wind}, ASMC controllers were developed to deal with the maximum power capture problem of variable speed WTs performing in region III. In order to reduce the chattering phenomenon and achieve a smooth pitching action, an adaption mechanism for the upper bounds of the norm of the uncertainties was proposed in \cite{ameli2019adaptive}, while \cite{merabet2011adaptive, barambones2015wind} introduced modified sliding surfaces. The proposed schemes demonstrated less mechanical stress and fluctuation of generator torque in comparison with the conventional SMC. Taking advantage of the super-twisting algorithm, a robust aerodynamic torque observer incorporated with an ASMC control scheme was presented in \cite{barambones2016adaptive} to deal with the WECS power control problem in the presence of system uncertainties and external disturbances.

\subsection{Other Generators (ASMC)}
\label{sec:3.3.3}
In \cite{azizi2019fault}, an active FTC strategy was presented to control the rotor speed and power of a WT in the presence of actuator faults and uncertainties. In this regard, the authors designed a full-order compensator for fault and disturbance attenuation, and an adaptive output feedback SMC with an integral surface to perform the FTC. Compared with PID and disturbance accommodation controller, it was shown that the proposed strategy demonstrated better fault-tolerant capability and more robust behavior with fewer fluctuations and less fatigue on the rotor speed, generator speed, output power, and drivetrain torsion angle. Authors in \cite{sami2012wind} developed a robust fault-tolerant ASMC controller to optimize the wind energy captured by a variable speed WT operating at low wind speeds. The proposed method involved a robust descriptor observer that provided a robust simultaneous estimation of states and the unknown sensor faults and noise. In another study \cite{lan2018fault}, authors proposed a strategy for compensating the pitch actuator faults to recover the nominal pitch dynamics of a WT by utilizing a PI controller as a baseline system to achieve nominal pitch performance, along with an adaptive step-by-step sliding mode observer (SMO) as a fault compensator to eliminate the actuator fault effects. According to the authors, the proposed control design was able to recover the nominal pitch actuation in the presence of low-pressure actuator faults.

\textcolor{black}{Table \ref{tab:3} summarizes the objectives, features, advantages, and disadvantages of the discussed studies on ASMC approaches for WECS control.}
\begin{table*}[th!]
\caption{\textcolor{black}{Summary of adaptive SMC approaches for WECS control.}}
\centering
\label{tab:3}
\resizebox{\textwidth}{!}{
\begin{tabular}{l l l l l l}
\hline\hline \\[-3mm]
\begin{minipage}[t]{0.1\columnwidth} Work \end{minipage} & \begin{minipage}[t]{0.15\columnwidth} Objectives \end{minipage} & \begin{minipage}[t]{0.15\columnwidth} Operating region \end{minipage} & \begin{minipage}[t]{0.15\columnwidth} Generator \end{minipage} & \begin{minipage}[t]{0.5\columnwidth} Advantages \end{minipage}	& \begin{minipage}[t]{0.5\columnwidth}Disadvantages \end{minipage}\\ \hline
\begin{minipage}[t]{0.1\columnwidth} \cite{ayyarao2019modified}  \end{minipage} & \begin{minipage}[t]{0.15\columnwidth} MPPT  \end{minipage} &\begin{minipage}[t]{0.15\columnwidth} Partial-load  \end{minipage} &\begin{minipage}[t]{0.15\columnwidth} DFIG  \end{minipage} & \begin{minipage}[t]{0.5\columnwidth} \begin{itemize}
  \item Chattering reduced using barrier functions
  \item Grid disturbances considered
\end{itemize} \end{minipage} & \begin{minipage}[t]{0.5\columnwidth} \begin{itemize}
\item Comparisons only with PI
\end{itemize} \end{minipage} \\ [12mm]
&&&&& \\
\rowcolor{lightgray}\begin{minipage}[t]{0.1\columnwidth} \cite{hu2017active}  \end{minipage} & \begin{minipage}[t]{0.15\columnwidth} Load mitigation  \end{minipage} &\begin{minipage}[t]{0.15\columnwidth} Full-load  \end{minipage} &\begin{minipage}[t]{0.15\columnwidth} DFIG  \end{minipage} & \begin{minipage}[t]{0.5\columnwidth} \begin{itemize}
  \item Chattering reduced by $sgn\left(\cdot\right)\rightarrow sat\left(\cdot\right)$ substitution
  \item External disturbances considered
\end{itemize} \end{minipage} & \begin{minipage}[t]{0.5\columnwidth} -- \end{minipage} \\ [12mm]
&&&&& \\
\begin{minipage}[t]{0.1\columnwidth} \cite{yin2015adaptive}  \end{minipage} & \begin{minipage}[t]{0.15\columnwidth} PAC  \end{minipage} &\begin{minipage}[t]{0.15\columnwidth} Full-load  \end{minipage} &\begin{minipage}[t]{0.15\columnwidth} DFIG  \end{minipage} & \begin{minipage}[t]{0.5\columnwidth} \begin{itemize}
  \item Load fluctuations considered
\end{itemize} \end{minipage} & \begin{minipage}[t]{0.5\columnwidth} \begin{itemize}
\item Chattering problem not tackled
\item No comparisons with other controllers
\end{itemize} \end{minipage} \\ [12mm]
\rowcolor{lightgray}\begin{minipage}[t]{0.1\columnwidth} \cite{yin2019adaptive}  \end{minipage} & \begin{minipage}[t]{0.15\columnwidth} PAC  \end{minipage} &\begin{minipage}[t]{0.15\columnwidth} Full-load  \end{minipage} &\begin{minipage}[t]{0.15\columnwidth} DFIG  \end{minipage} & \begin{minipage}[t]{0.5\columnwidth} \begin{itemize}
  \item External disturbances and model uncertainties considered
\end{itemize} \end{minipage} & \begin{minipage}[t]{0.5\columnwidth} \begin{itemize}
\item Chattering problem not tackled
\item Comparisons only with PI
\end{itemize} \end{minipage} \\  [12mm] \hline
\begin{minipage}[t]{0.1\columnwidth} \cite{ameli2019adaptive}  \end{minipage} & \begin{minipage}[t]{0.15\columnwidth} MPE, PAC  \end{minipage} &\begin{minipage}[t]{0.15\columnwidth} Full-load  \end{minipage} &\begin{minipage}[t]{0.15\columnwidth} PMSG  \end{minipage} & \begin{minipage}[t]{0.5\columnwidth} \begin{itemize}
  \item Chattering reduced using an adaption mechanism for the upper bounds of the uncertainties
  \item Uncertainties considered
  \item Simulation on FAST model
\end{itemize} \end{minipage} & \begin{minipage}[t]{0.5\columnwidth} -- \end{minipage} \\ [15mm]
&&&&& \\
\rowcolor{lightgray}\begin{minipage}[t]{0.1\columnwidth} \cite{merabet2011adaptive}  \end{minipage} & \begin{minipage}[t]{0.15\columnwidth} MPE  \end{minipage} &\begin{minipage}[t]{0.15\columnwidth} Full-load  \end{minipage} &\begin{minipage}[t]{0.15\columnwidth} PMSG  \end{minipage} & \begin{minipage}[t]{0.5\columnwidth} \begin{itemize}
  \item External disturbances and parametric uncertainties considered
\end{itemize} \end{minipage} & \begin{minipage}[t]{0.5\columnwidth} \begin{itemize}
\item No comparisons provided
\item Chattering problem not properly tackled
\end{itemize} \end{minipage} \\ [12mm]
\begin{minipage}[t]{0.1\columnwidth} \cite{barambones2015wind}  \end{minipage} & \begin{minipage}[t]{0.15\columnwidth} MPE  \end{minipage} &\begin{minipage}[t]{0.15\columnwidth} Full-load  \end{minipage} &\begin{minipage}[t]{0.1\columnwidth} PMSG  \end{minipage} & \begin{minipage}[t]{0.5\columnwidth} \begin{itemize}
  \item Wind speed estimation using an observer
  \item External disturbances and parametric uncertainties considered
\end{itemize} \end{minipage} & \begin{minipage}[t]{0.5\columnwidth} \begin{itemize}
\item No comparisons provided
\item Chattering problem not tackled
\end{itemize} \end{minipage} \\ [15mm]
\rowcolor{lightgray}\begin{minipage}[t]{0.1\columnwidth} \cite{barambones2016adaptive}  \end{minipage} & \begin{minipage}[t]{0.15\columnwidth} MPE  \end{minipage} &\begin{minipage}[t]{0.15\columnwidth} Partial-load  \end{minipage} &\begin{minipage}[t]{0.15\columnwidth} PMSG  \end{minipage} & \begin{minipage}[t]{0.5\columnwidth} \begin{itemize}
  \item Wind speed estimation using higher-order observer
  \item External disturbances and parametric uncertainties considered
  \end{itemize} \end{minipage} & \begin{minipage}[t]{0.5\columnwidth} \begin{itemize}
\item No comparisons provided
\item Chattering problem not tackled
\end{itemize} \end{minipage} \\  [12mm]
\hline
&&&&& \\
\begin{minipage}[t]{0.1\columnwidth} \cite{azizi2019fault}  \end{minipage} & \begin{minipage}[t]{0.15\columnwidth} MPE, FTC, MSM  \end{minipage} &\begin{minipage}[t]{0.15\columnwidth} Full-load  \end{minipage} &\begin{minipage}[t]{0.15\columnwidth} SSG  \end{minipage} & \begin{minipage}[t]{0.5\columnwidth} \begin{itemize}
  \item Actuator faults and model uncertainties considered
  \item Chattering reduced by $sgn\left(\cdot\right)\rightarrow sat\left(\cdot\right)$ substitution
\end{itemize} \end{minipage} & \begin{minipage}[t]{0.5\columnwidth} \begin{itemize}
\item No comparison with SMC-based approaches
\end{itemize} \end{minipage} \\ [12mm]
&&&&& \\
\rowcolor{lightgray}\begin{minipage}[t]{0.1\columnwidth} \cite{sami2012wind}  \end{minipage} & \begin{minipage}[t]{0.15\columnwidth} MPE, FTC  \end{minipage} &\begin{minipage}[t]{0.15\columnwidth} Partial-load  \end{minipage} &\begin{minipage}[t]{0.15\columnwidth} SSG  \end{minipage} & \begin{minipage}[t]{0.5\columnwidth} \begin{itemize}
  \item Sensor faults estimated using an observer
  \item Chattering reduced by approximating $sgn\left(\cdot\right)$
\end{itemize} \end{minipage} & \begin{minipage}[t]{0.5\columnwidth} \begin{itemize}
\item Chattering problem not properly tackled
\item No comparisons provided
\end{itemize} \end{minipage} \\ [12mm]
&&&&& \\
\begin{minipage}[t]{0.1\columnwidth} \cite{lan2018fault}  \end{minipage} & \begin{minipage}[t]{0.15\columnwidth} PAC, MPE, FTC  \end{minipage} &\begin{minipage}[t]{0.15\columnwidth} Full-load  \end{minipage} &\begin{minipage}[t]{0.15\columnwidth} SSG  \end{minipage} & \begin{minipage}[t]{0.5\columnwidth} \begin{itemize}
  \item Actuator faults estimated using SMO
\end{itemize} \end{minipage} & \begin{minipage}[t]{0.5\columnwidth} \begin{itemize}
\item Chattering problem not tackled
\item No external disturbances considered
\item No comparisons provided
\end{itemize} \end{minipage} \\
[20mm]
\hline\hline
\end{tabular}}
\end{table*}

\section{Fractional-order SMC for WECS}
\label{sec:3.4}
Fractional calculus-based SMC controllers have shown their superior control performance compared to their integer counterparts thanks to two main characteristics they add to the control system \cite{sun2017discrete, sun2018practical, yin2014fractional}; (a) more tunable parameters which lead to more degrees of freedom to the control algorithm, and (b) the memory effect introduced by the fractional-order (FO) differential operators \cite{mousavi2018fractional}. Due to the distinctive memory features of FO derivatives \cite{mousavi2015memetic}, the status on the FO sliding surface will reach the equilibrium with faster convergence \cite{mousavi2021fault}. These features have been demonstrated as reliable solutions to mitigate the chattering problem of conventional SMCs in many practical applications \cite{mousavi2021robustG,xie2021coupled,musarrat2021fractional,mousavi2021maximum,li2020mitigating}. \textcolor{black}{A general representation of an adaptive FO-SMC for typical systems is depicted in Figure \ref{fig:511}, where $0<\alpha<1$ corresponds to the FO differentiation/integration of the error signal and $c_i>0$, $i=1,2,…,5$. It should be noted that, the differentiation/integration terms may change as per the designer's preference.}

\begin{figure}
\centering
\includegraphics[width=5.2 in]{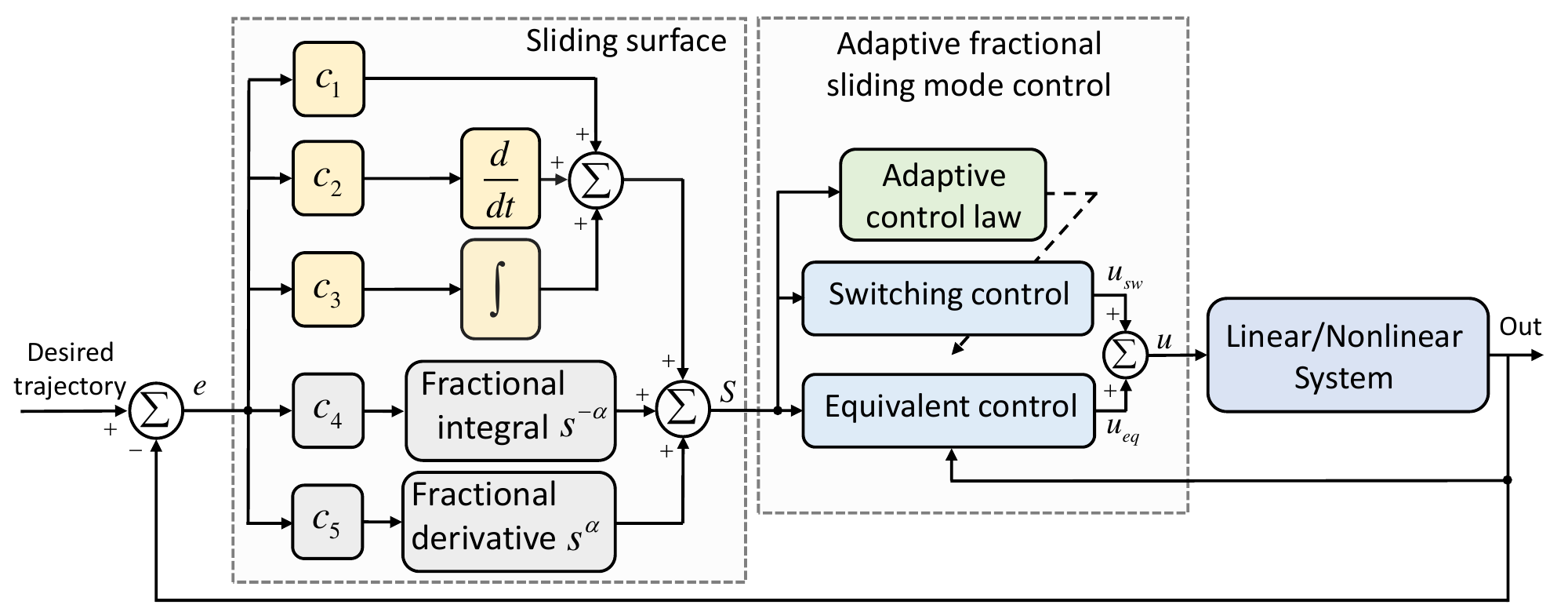}
\caption{\textcolor{black}{The general block diagram of an adaptive FO-SMC.}}
\label{fig:511}
\end{figure}

\subsection{DFIG (FO-SMC)}
\label{sec:3.4.1}
Authors in \cite{ullah2017adaptive} designed a fractional-order adaptive terminal SMC for both the RSC and GSC of the DFIG system. The authors developed a FO sliding surface to mitigate the chattering problem. The designed controller was compared with conventional SMC and PI, and it was deduced that it exhibited the fastest transient response in terms of settling time and convergence of tracking errors at all test conditions. In a similar study \cite{dash2018adaptive}, the authors proposed an adaptive fractional integral terminal SMC to control the $dq$ axis series injected voltage of the unified power flow controller in order to improve the transient stability of a DFIG wind farm penetrated multi-machine power system. Taking advantage of a fractional-order terminal sliding surface, less chattering and more robust performance of the developed control strategy was reported in comparison with the conventional PI control approach. Authors in \cite{mousavi2022active} proposed an active fault-tolerant nonlinear control scheme for the RSC control of a DFIG-driven WECS subjected to model uncertainties and rotor current sensor faults. They proposed two fractional-order nonsingular TSMC controllers for rotor current regulation and speed trajectory tracking. A control scheme was incorporated with a state observer to estimate the rotor current dynamics during sensors' faults. The chattering problem was successfully attenuated, and fast finite-time convergence of system states was guaranteed. As reported, the developed active FTC desirable outperformed the FO-SMC technique in terms of speed and power tracking performance under faulty situations.

Using robust FO-SMC controllers with continuous control laws and active state observers to estimate and compensate the uncertainties and disturbances, authors in \cite{ebrahimkhani2016robust} investigated the MPPT control problem of DFIG-based WECS. Later, a similar methodology was developed for DC-link voltage regulation and transient stability enhancement of DFIG-based WECS \cite{musarrat2021fractional}. The authors proposed a nonlinear smooth sliding surface to alleviate the chattering phenomenon, where the provided simulation investigations validated the effectiveness of the developed approach. In another study \cite{falehi2020innovative}, the authors used a robust perturbation observer to design an FO-SMC in order to extract the maximum power and improve the fault ride-through capability. They employed the $sgn\left(\cdot\right)\rightarrow sat\left(\cdot\right)$ substitution in the switching control law to minimize the chattering. In addition, the multi-objective grasshopper optimization algorithm was implemented to optimize the controller parameters in \cite{falehi2020innovative}. Comparative investigations were carried out and as reported, the proposed controller demonstrated better fault-tolerant performance compared to perturbation observer-based SO-SMC and simple SO-SMC. An FO-SMC based on feedback linearization technique was proposed in \cite{li2020fractional} to mitigate sub-synchronous control interaction in DFIG-based wind farms under various operating conditions. The authors proposed a fractional sliding surface along with the $sat\left(\cdot\right)$ function to alleviate the chattering. The proposed control strategy was compared with conventional vector control, feedback linearization SMC, and high-order SMC, where, faster subsynchronous control interaction damping performance was reported.

\subsection{PMSG (FO-SMC)}
\label{sec:3.4.2}
\sloppy Authors in \cite{xiong2020output} developed two gravitational search algorithm (GSA) -optimized FO-SMC methods for PMSG to enhance its output power quality. One was proposed to control the rotor-side $dq$ axis currents in the machine side converter, and the other one was proposed to regulate the output voltage and the DC-link voltage of the GSC. They proposed a fractional-calculus-based sliding surface to enhance the convergence speed and reduce the chattering. However, as the reported comparative investigations with conventional SMC and PI controllers have shown, despite the better tracking precision and stronger robustness against parametric disturbances, some slight chattering still remained with the proposed control scheme. A nonlinear FO-SMC scheme to improve the power production efficiency of a 5-phase PMSG-based WECS was proposed in \cite{rhaili2021enhanced}. The provided comparative investigations with SMC revealed that the proposed nonlinear FO sliding surface with fractional integration and differentiation components effectively reduced the chattering phenomenon, yielding superior power production over SMC and PI approaches.

\subsection{Other Generators (FO-SMC)}
\label{sec:3.4.3}
Some studies have investigated other types of generators and motors, such as PMSM. As an example, as shown in Figure \ref{fig:4} \cite{rui2019fractional}, an FO-SMC and a minimum order load torque observer was developed for the high-accuracy speed regulation of PMSM in WT based on a speed regulating differential mechanism. In this Figure, $i_{q}=i_{qs,1}+i_{qs,2}$ represents the control law, $\hat{T}_{L}$ is the observed load torque, $\omega_{s}$ is the reference rotor speed of PMSM, and $k_{t}$ is the torque constant. Comparative experimental results under operating conditions of changing wind speeds was carried out and as reported, the proposed FO-SMC with load torque observer could effectively eliminate undesirable chattering effect and provide better control performance in terms of state error minimization.

FO-SMC \cite{mousavi2021maximum, talebi2018fractional} and fractional-order backstepping SMC \cite{talebi2019fractional} were proposed to improve the variable speed WT's efficiency at a below-rated wind speed. Although the existence of the $sgn\left(\cdot\right)$ function in the switching control law could impose some chattering, the proposed FO sliding surfaces could effectively decrease the chattering and enhance the controllers' performance. The proposed controllers were compared with their integer-order modes, and better external disturbance rejection and power extraction were reported. Authors in \cite{kerrouche2019fractional} presented an FO-SMC controller with dual sequence decomposition technique to enhance the grid-connected wind farm's dynamic performance with distribution static synchronous compensator (D-STATCOM) by eliminating the oscillations of the active and reactive powers exchanged between the wind generator and the grid. The authors proposed an FO sliding surface to mitigate the chattering. As reported, the proposed control strategy reduced the torque oscillations, and the fatigue on the turbine shaft and the gearbox was diminished.

\begin{figure}
\centering
\includegraphics[width=4.5 in]{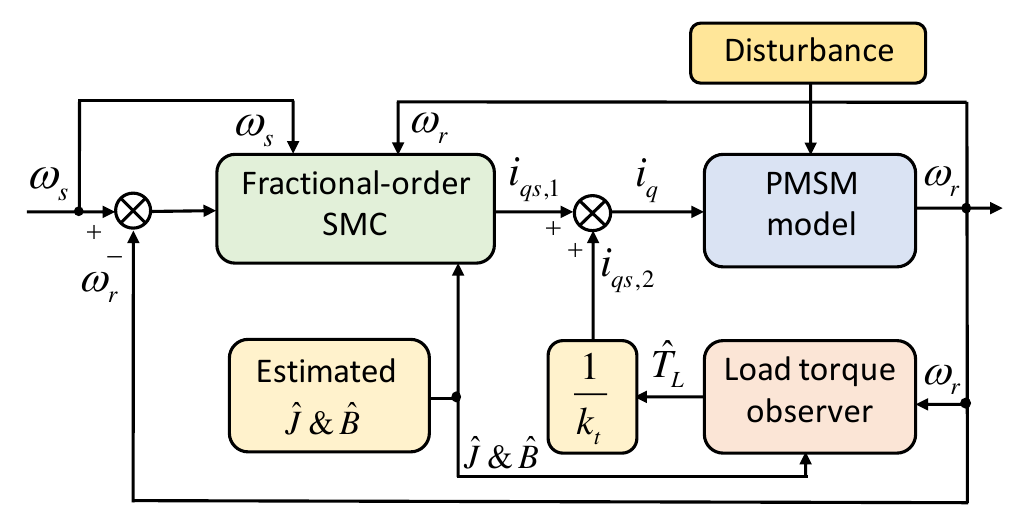}
\caption{The block diagram of the developed FO-SMC with load torque observer \cite{rui2019fractional}.}
\label{fig:4}
\end{figure}

\textcolor{black}{Table \ref{tab:4} summarizes the objectives, features, advantages, and disadvantages of the discussed studies on FO-SMC approaches for WECS control.}
\begin{table*}[h!]
\caption{\textcolor{black}{Summary of fractional-order SMC approaches for WECS control.}}
\centering
\label{tab:4}
\resizebox{\textwidth}{!}{
\begin{tabular}{l l l l l l}
\hline\hline \\[-3mm]
\begin{minipage}[t]{0.1\columnwidth} Work \end{minipage} & \begin{minipage}[t]{0.15\columnwidth} Objectives \end{minipage} & \begin{minipage}[t]{0.15\columnwidth} Operating region \end{minipage} & \begin{minipage}[t]{0.15\columnwidth} Generator \end{minipage} & \begin{minipage}[t]{0.6\columnwidth} Advantages \end{minipage}	& \begin{minipage}[t]{0.5\columnwidth}Disadvantages \end{minipage}\\ \hline
\begin{minipage}[t]{0.1\columnwidth} \cite{musarrat2021fractional}  \end{minipage} & \begin{minipage}[t]{0.15\columnwidth} DCVR, FTC  \end{minipage} &\begin{minipage}[t]{0.15\columnwidth} Full-load  \end{minipage} &\begin{minipage}[t]{0.12\columnwidth} DFIG  \end{minipage} & \begin{minipage}[t]{0.6\columnwidth} \begin{itemize}
\item Chattering reduced using FO sliding surface
\item Grid faults considered
\item Matched and mismatched disturbances estimated using an observer
\end{itemize} \end{minipage} & \begin{minipage}[t]{0.5\columnwidth} \begin{itemize}
\item No comparisons provided
\end{itemize} \end{minipage} \\ [12mm]
&&&&& \\
\rowcolor{lightgray}\begin{minipage}[t]{0.1\columnwidth} \cite{ullah2017adaptive}  \end{minipage} & \begin{minipage}[t]{0.15\columnwidth} RSC, GSC  \end{minipage} &\begin{minipage}[t]{0.15\columnwidth} Full-load  \end{minipage} &\begin{minipage}[t]{0.12\columnwidth} DFIG  \end{minipage} & \begin{minipage}[t]{0.6\columnwidth} \begin{itemize}
\item Chattering reduced using FO terminal sliding surface
\item Model uncertainties considered
\end{itemize} \end{minipage} & \begin{minipage}[t]{0.5\columnwidth} \begin{itemize}
\item Wind speed too smooth (not realistic)
\end{itemize} \end{minipage} \\ [12mm]
&&&&& \\
\begin{minipage}[t]{0.1\columnwidth} \cite{dash2018adaptive}  \end{minipage} & \begin{minipage}[t]{0.15\columnwidth} APC  \end{minipage} &\begin{minipage}[t]{0.15\columnwidth} Full-load  \end{minipage} &\begin{minipage}[t]{0.12\columnwidth} DFIG  \end{minipage} & \begin{minipage}[t]{0.6\columnwidth} \begin{itemize}
\item Chattering reduced using adaptive FO integral terminal sliding surface
\item Three-phase short-circuit fault considered as disturbance
\end{itemize} \end{minipage} & \begin{minipage}[t]{0.5\columnwidth} \begin{itemize}
\item Chattering reduction not completely investigated
\item Comparison only with PI
\end{itemize} \end{minipage} \\ [12mm]
&&&&& \\
\rowcolor{lightgray}\begin{minipage}[t]{0.1\columnwidth} \cite{mousavi2022active}  \end{minipage} & \begin{minipage}[t]{0.15\columnwidth} FTC, APC, RPC   \end{minipage} &\begin{minipage}[t]{0.15\columnwidth} Partial-load  \end{minipage} &\begin{minipage}[t]{0.12\columnwidth} DFIG  \end{minipage} & \begin{minipage}[t]{0.6\columnwidth} \begin{itemize}
\item Chattering reduced using adaptive FO integral terminal sliding surface (with full investigation)
\item Finite-time convergence ensured
\item Rotor current sensor faults and model uncertainties considered
\item Rotor dynamics estimated using an observer
\end{itemize} \end{minipage} & \begin{minipage}[t]{0.5\columnwidth} -- \end{minipage} \\ [12mm]
&&&&& \\
\begin{minipage}[t]{0.1\columnwidth} \cite{ebrahimkhani2016robust}  \end{minipage} & \begin{minipage}[t]{0.15\columnwidth} MPPT, APC  \end{minipage} &\begin{minipage}[t]{0.15\columnwidth} Full-load  \end{minipage} &\begin{minipage}[t]{0.12\columnwidth} DFIG  \end{minipage} & \begin{minipage}[t]{0.6\columnwidth} \begin{itemize}
\item Chattering reduced using adaptive FO sliding surface (with full investigation)
\item Finite-time convergence ensured
\item Rotor dynamics estimated using an observer
\end{itemize} \end{minipage} & \begin{minipage}[t]{0.5\columnwidth} \begin{itemize}
\item Comparison only with PI
\end{itemize} \end{minipage} \\ [12mm]
&&&&& \\
\rowcolor{lightgray}\begin{minipage}[t]{0.1\columnwidth} \cite{falehi2020innovative}  \end{minipage} & \begin{minipage}[t]{0.15\columnwidth} MPE, FTC  \end{minipage} &\begin{minipage}[t]{0.15\columnwidth} Full-load  \end{minipage} &\begin{minipage}[t]{0.12\columnwidth} DFIG  \end{minipage} & \begin{minipage}[t]{0.6\columnwidth} \begin{itemize}
\item Chattering reduced by $sgn\left(\cdot\right)\rightarrow sat\left(\cdot\right)$ substitution
\item Perturbation observer estimated model uncertainties
\item FO-SMC parameters tuned using Grasshopper algorithm
\end{itemize} \end{minipage} & \begin{minipage}[t]{0.5\columnwidth} \begin{itemize}
\item Chattering reduction was not achieved using FO surface, but by $sgn\left(\cdot\right)\rightarrow sat\left(\cdot\right)$ replacement
\end{itemize} \end{minipage} \\ [12mm] \hline
&&&&& \\
\begin{minipage}[t]{0.1\columnwidth} \cite{xiong2020output}  \end{minipage} & \begin{minipage}[t]{0.15\columnwidth} MPE, DCVR  \end{minipage} &\begin{minipage}[t]{0.15\columnwidth} Full-load  \end{minipage} &\begin{minipage}[t]{0.12\columnwidth} PMSG  \end{minipage} & \begin{minipage}[t]{0.6\columnwidth} \begin{itemize}
\item Chattering reduced using adaptive FO sliding surface
\item External disturbances considered
\end{itemize} \end{minipage} & \begin{minipage}[t]{0.5\columnwidth} \begin{itemize}
\item Chattering problem not properly tackled compared to SMC
\end{itemize} \end{minipage} \\ [12mm]
&&&&& \\
\rowcolor{lightgray}\begin{minipage}[t]{0.1\columnwidth} \cite{rhaili2021enhanced}  \end{minipage} & \begin{minipage}[t]{0.15\columnwidth} MPE  \end{minipage} &\begin{minipage}[t]{0.15\columnwidth} Full-load  \end{minipage} &\begin{minipage}[t]{0.12\columnwidth} PMSG  \end{minipage} & \begin{minipage}[t]{0.6\columnwidth} \begin{itemize}
\item Chattering reduced using FO sliding surface
\end{itemize} \end{minipage} & \begin{minipage}[t]{0.5\columnwidth} \begin{itemize}
\item External disturbances not considered
\item Chattering problem not properly tackled
\end{itemize} \end{minipage} \\ [12mm] \hline
&&&&& \\
\begin{minipage}[t]{0.1\columnwidth} \cite{mousavi2021maximum}  \end{minipage} & \begin{minipage}[t]{0.15\columnwidth} MPE, FTC  \end{minipage} &\begin{minipage}[t]{0.15\columnwidth} Partial-load  \end{minipage} &\begin{minipage}[t]{0.12\columnwidth} SSG  \end{minipage} & \begin{minipage}[t]{0.6\columnwidth} \begin{itemize}
\item Chattering mitigated using FO terminal sliding surface
\item Finite-time convergence ensured
\item Actuator faults considered
\end{itemize} \end{minipage} & \begin{minipage}[t]{0.5\columnwidth} -- \end{minipage} \\ [12mm]
&&&&& \\
\rowcolor{lightgray}\begin{minipage}[t]{0.1\columnwidth} \cite{rui2019fractional}  \end{minipage} & \begin{minipage}[t]{0.15\columnwidth} MPE  \end{minipage} &\begin{minipage}[t]{0.15\columnwidth} Partial \& Full-load  \end{minipage} &\begin{minipage}[t]{0.12\columnwidth} PMSM  \end{minipage} & \begin{minipage}[t]{0.6\columnwidth} \begin{itemize}
\item Chattering reduced using FO sliding surface
\item Grid load fluctuations considered
\item Load torque estimated using an observer
\end{itemize} \end{minipage} & \begin{minipage}[t]{0.5\columnwidth} -- \end{minipage} \\ [12mm]
&&&&& \\
\begin{minipage}[t]{0.1\columnwidth} \cite{talebi2018fractional}  \end{minipage} & \begin{minipage}[t]{0.15\columnwidth} MPE  \end{minipage} &\begin{minipage}[t]{0.15\columnwidth} Partial-load  \end{minipage} &\begin{minipage}[t]{0.12\columnwidth} SSG  \end{minipage} & \begin{minipage}[t]{0.6\columnwidth} \begin{itemize}
\item Chattering reduced using FO sliding surface
\item Finite-time convergence ensured
\item External disturbances and unmodelled dynamics considered
\end{itemize} \end{minipage} & \begin{minipage}[t]{0.5\columnwidth} \begin{itemize}
\item Chattering problem not properly tackled
\end{itemize} \end{minipage} \\ [12mm]
&&&&& \\
\rowcolor{lightgray}\begin{minipage}[t]{0.1\columnwidth} \cite{kerrouche2019fractional}  \end{minipage} & \begin{minipage}[t]{0.15\columnwidth} APC, RPC  \end{minipage} &\begin{minipage}[t]{0.15\columnwidth} Full-load  \end{minipage} &\begin{minipage}[t]{0.12\columnwidth} SSG  \end{minipage} & \begin{minipage}[t]{0.6\columnwidth} \begin{itemize}
\item Torque oscillations reduced and the fatigue on the turbine shaft and the gearbox was diminished
\end{itemize} \end{minipage} & \begin{minipage}[t]{0.5\columnwidth} \begin{itemize}
\item The chattering problem not investigated
\item Finite-time convergence not investigated
\item No comparisons provided
\end{itemize} \end{minipage} \\
[20mm]
\hline\hline
\end{tabular}}
\end{table*}

\section{Higher-order SMC for WECS}
\label{sec:3.5}
Despite the conventional SMCs that only consider the first-order derivative of the systems, higher-order SMCs (HO-SMCs) consider higher-order derivatives with respect to time. As a result, the sliding variable and its consecutive derivatives tend to zero in finite time with higher stabilization accuracy, the undesired effects of the chattering phenomenon are mitigated, and the control system is more robust against uncertainties and disturbances \cite{utkin2015discussion, mathiyalagan2020second, liu2014finite}. In the following subsections, HO-SMC control of WECS are categorised into three groups; (a) the SO-SMC, including the general SO-SMC and the super-twisting SMC (ST-SMC), which is a type of SO-SMC developed for systems with relative degree one to avoid the chattering in the main control loop \cite{gonzalez2011variable}, (b) the terminal SMC (TSMC) with remarkable merits including fast and finite-time convergence \cite{guo2020terminal}, well-proven chattering reduction property \cite{hou2019discrete,yi2019adaptive}, and strong robustness against system uncertainties and unmodelled dynamics \cite{wang2019discrete}, and (c) some other HO-SMC  methods. Figure \ref{fig:5} demonstrates the year-based distribution of studies devoted to SO-SMC and TSMC based control schemes for DFIG and PMSG based WECS during the past two decades.
\begin{figure}
\centering
\includegraphics[width=4.2 in]{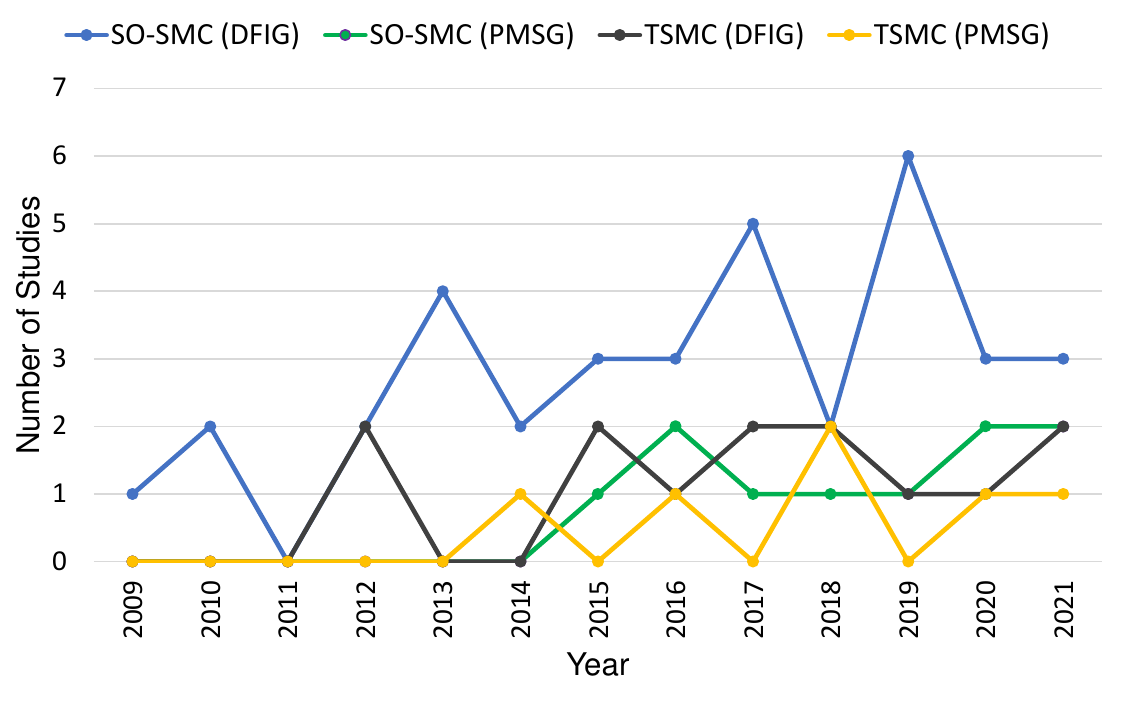}
\caption{The distribution of studies on SO-SMC and TSMC control of DFIG and PMSG based WECS over the past two decades.}
\label{fig:5}
\end{figure}

\textcolor{black}{A general representation of an adaptive ST-SMC for typical systems is illustrated in Figure \ref{fig:522}. It is worth mentioning that the sliding surface's design can be changed as per the designer's preference. Hence, Figure \ref{fig:522} shows an integer-order conventional sliding surface, while other types of sliding surfaces (such as fractional-order, fuzzy, etc.) can be embedded.}

\begin{figure}
\centering
\includegraphics[width=5 in]{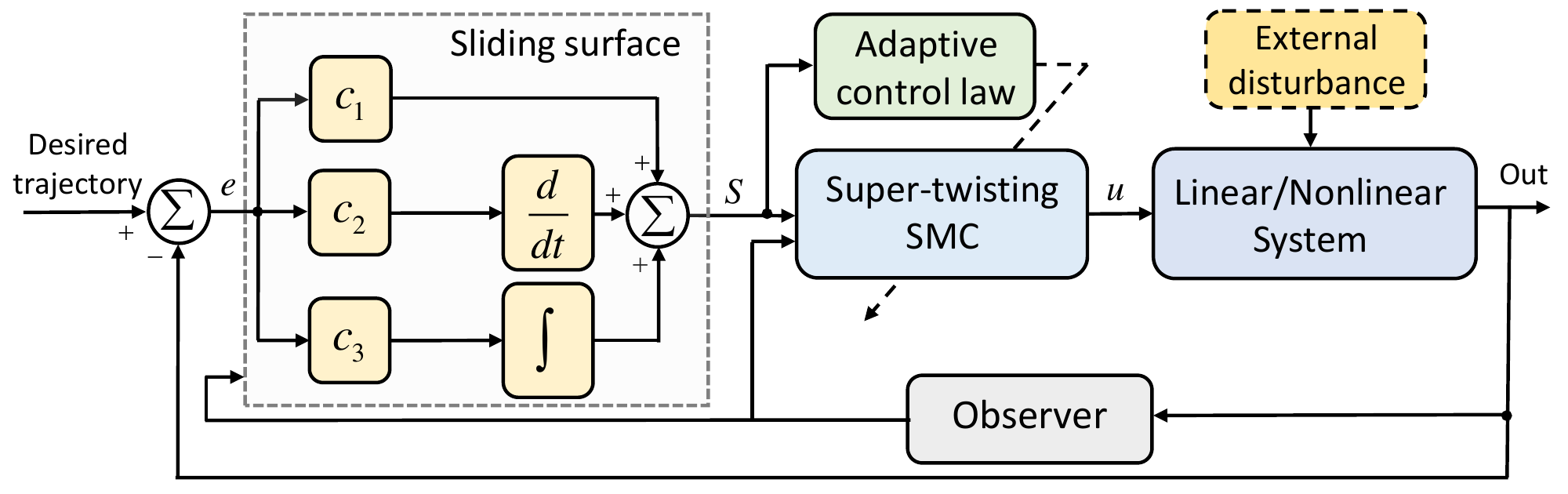}
\caption{\textcolor{black}{The general block diagram of an adaptive ST-SMC.}}
\label{fig:522}
\end{figure}

\subsection{Second-order SMC}
\label{sec:3.5.1}
In this section, the state-or-the-art literature of SO-SMC controllers used for different control problems of WECS is presented. The ST-SMC which is a type of SO-SMC \cite{evangelista2010wind, zhang2021individual} is also considered along with the general SO-SMC.

\subsubsection{DFIG (SO-SMC)}
\label{sec:3.5.1.1}
Authors in \cite{beltran2009high} designed a fault-tolerant SO-SMC for MPPT control of a DFIG-based 1.5 MW three-blade WT subjected to external disturbances and unmodelled dynamics, while later in \cite{benbouzid2014second}, the authors developed another SO-SMC controller to deal with the same problem in the presence of unbalanced voltage sags and grid frequency variations. Owing to utilizing higher-order sliding modes in both studies, the chattering were efficaciously mitigated. The comparative performance verifications in both works were provided, and superior fault ride-through performance of the proposed schemes were reported. In a similar study, an adaptive SO-SMC strategy was developed to maximize the captured energy and reduce the mechanical stress on the shaft of a WECS in the presence of model uncertainties, parameter variations, and external perturbations \cite{evangelista2012lyapunov}. The authors added a tuning parameter to the control law, allowing better behavior of the controlled system in terms of chattering elimination. Another adaptive SO-SMC control scheme was also investigated based on appropriate receding horizon adaptation time windows for DFIG-based WECS \cite{evangelista2016receding}. The authors added adaptive parameters to the control law to alleviate the chattering, where, as reported, the proposed method enhanced the adaptation strategy's reactivity against fast varying uncertainties with significantly reduced chattering.

Authors in \cite{martinez2017second, susperregui2013second} developed SO-SMC controllers for the grid synchronization and power control of 7MW DFIG-based WT in order to allow the WT to operate under distorted and unbalanced grid voltages. Both studies used the super-twisting (ST) algorithm to mitigate the chattering effect and achieve the desired dynamic performance. As the authors reported, the proposed control strategies demonstrated satisfactory robustness against parameter deviations and disturbances. Aiming at the active power maximization of a DFIG-based WT in region II, as well as applying power limitations in region III, SO-SMC strategies were presented in \cite{beltran2012second,evangelista2013active}. In order to guarantee the minimum chattering behavior of the control action, authors in \cite{ beltran2012second} presented an augmentation of the classic ST algorithm with the conventional SMC. In contrast, authors in \cite{ evangelista2013active} applied a modified ST algorithm with variable gains, which allows a wider range of operation with a smooth control action. According to the reported simulation and experimental results, both methods maintained desirable performances in terms of power maximization and minimum induced mechanical extra stress on the drivetrain.

A coordinated SO-SMC was developed in \cite{xiong2019coordinated} for grid synchronization and power optimization of a DFIG-based WECS. To maximize the captured wind energy, voltage regulation according to the grid requirements, and optimal tracking of the rotor speed, the authors utilized the ST algorithm. A reaching law was also proposed to accelerate the reaching speed, and the WT stator voltage was controlled to synchronize with the grid voltage without employing the current control loop. Figure \ref{fig:6} depicts the control structure investigated in the study, where $i_{r,dq}$ is the rotor current in $dq$ frame, $U_{g,dq}$ and $U_{r,dq}$ represent grid and rotor voltages in $dq$ frame, and $U_{s,abc}$ and $U_{s,dq}$ denote the stator voltages in $abc$ and $dq$ frames, respectively. As the authors reported, the proposed control scheme outperformed the conventional SMC in the presence of parameter perturbations.
\begin{figure}
\centering
\includegraphics[width=4.5 in]{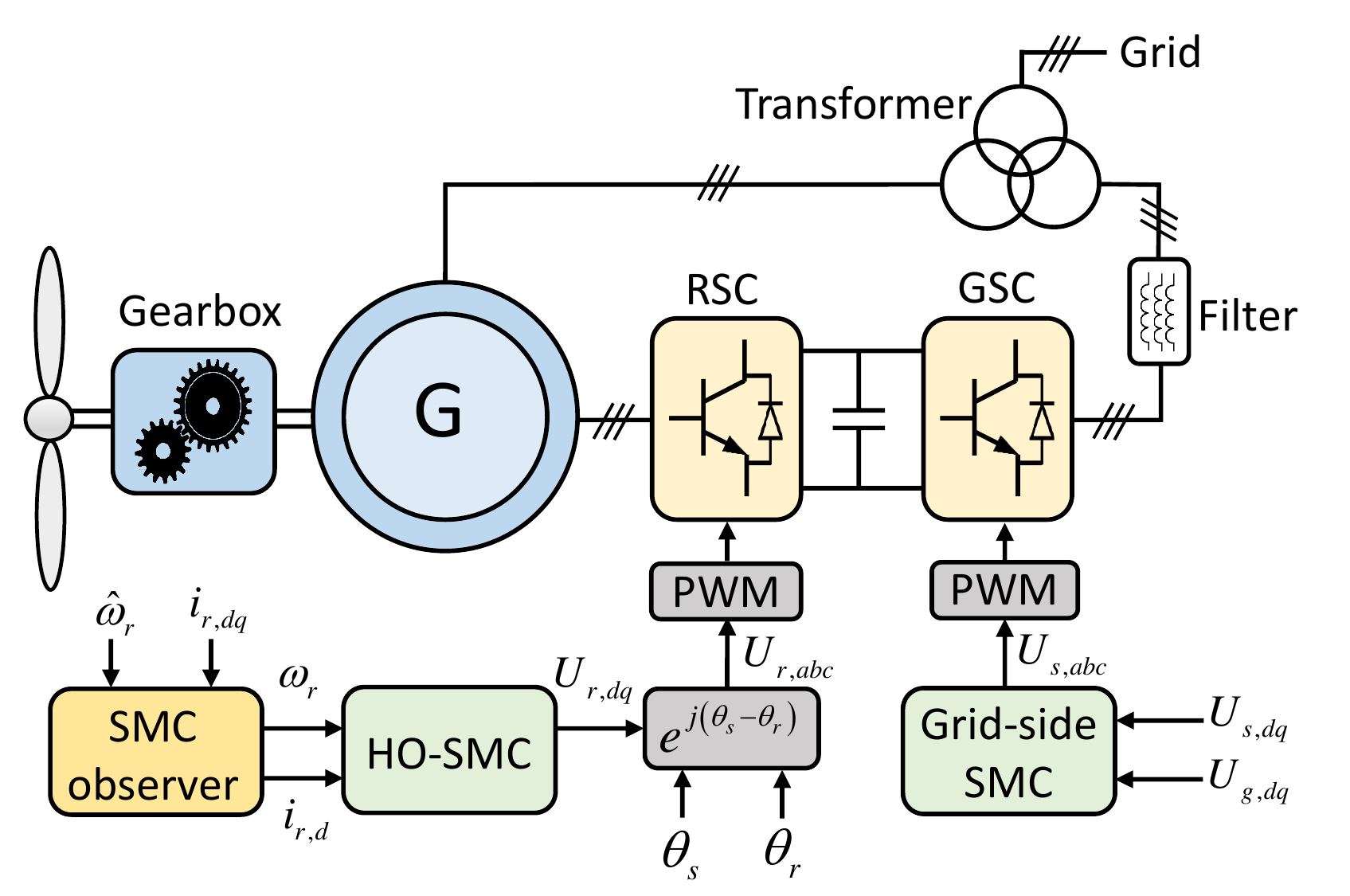}
\caption{The block diagram of the developed HO-SMC based control scheme with the SMC-based observer \cite{xiong2019coordinated}.}
\label{fig:6}
\end{figure}

In \cite{krim2018power}, a control strategy based on the augmentation of a power management supervisor for MPE and ST-SO-SMC for active and reactive load power control of standalone hybrid wind energy with battery energy storage system was developed. As the authors reported, the chattering problem was effectively handled using the ST algorithm. Furthermore, in comparison with conventional PI, the proposed control strategy demonstrated enhanced power quality of the installed load in the presence of sudden load uncertainties and external perturbations. In \cite{karabacak2019new}, by considering the energy stored in the turbine inertia, an inertial power-based perturb-and-observe third-order ST-SMC was proposed for MPPT control of WTs. Simulation and experimental comparisons were carried out, where as reported, the proposed scheme increased the MPPT efficiency for the step and ramp variations of the wind speed compared to the conventional SMC. In another study \cite{moussa2019super}, the authors implemented the strategy of oriented grid flux vector control and developed an ST-SMC controller for APC and RPC of DFIG-based WECS, while in contrast, authors in \cite{belabbas2019comparative} investigated the performance comparison of the backstepping SMC method and the ST-SMC for the same problem in hand. Both studies used the ST algorithm to reduce the chattering effect caused by the control switching; however, as the authors in \cite{belabbas2019comparative} reported, the backstepping control method demonstrated better robustness against parametric uncertainties in comparison with the developed ST- SMC.

Authors in \cite{evangelista2012multivariable} developed a multiple-input/multiple-output (MIMO) ST-SO-SMC controller for stator reactive power regulation and MPE of grid-connected DFIG-based WECS operating in region II. As reported, taking advantage of the ST algorithm, the proposed control scheme delivered a desirable MPE performance with minimum chattering. Later in \cite{liu2016second, benamor2019novel}, the authors dealt with the same problem by proposing ST-SO-SMC schemes and utilizing the quadratic form Lyapunov function method to guarantee the finite-time stabilization and determine the ST-SO-SMC parameters. Advantaging from the ST algorithm, reduced chattering phenomenon and higher sliding precision were reported. Both control schemes were compared with previously presented conventional methods, and their superiorities and robustness against uncertainties and external disturbances were validated. Authors in \cite{barambones2021real} proposed an SMO-based ST-SMC control scheme for MPE of DFIG-based WTs subjected to uncertainties. The ST-based torque observer was implemented to estimate the wind speed variations. They used the ST algorithm to alleviate the chattering phenomenon. Simulations and experimental investigations were conducted, where as reported, the proposed scheme maintained the MPE objective under system uncertainties and wind speed variations.

By employing the ST algorithm in the controller design to cope with the chattering phenomenon, authors in \cite{xiong2020high} proposed a SO-SMC-based direct power control strategy for DFIG-based WT operating under unbalanced grid voltage conditions. In similar studies, authors in \cite{morshed2018sliding,djilali2020first} proposed augmentations of ST sliding mode disturbance observer with SO-SMC. Comparisons with conventional methods were also provided, and the proposed strategies' better performance in terms of power quality improvement and maintaining constant power output under non-ideal grid conditions were reported. More studies can be found in \cite{tria2017integral, mazen2020modeling}.

\subsubsection{PMSG (SO-SMC)}
\label{sec:3.5.1.2}
Authors in \cite{meghni2017second} proposed an augmentation of FLC and SO-SMC to extract the maximum wind power in a hybrid system consisting of a PMSG-based variable speed WT and batteries as energy storage system. To attenuate the grid current harmonics and achieve a smooth regulation of grid active and reactive powers quantities, another SO-SMC was developed, and its performance in terms of grid power disturbances mitigation and maximum power extraction was reported in comparison with conventional SMC. In an experimental investigation \cite{merabet2016implementation}, a control scheme based on a SO-SMC and the disturbed single input-single output error model to control the generator and the grid-sides of a variable speed experimental WECS driven by the OPAL-RT real-time simulator was proposed. According to the authors, the proposed second-order sliding surface efficiently mitigated the chattering, and hence, the developed control scheme effectively controlled the generator speed and DC-link voltage in the presence of unknown disturbances, parametric variations, and uncertainties.

A two-stage cascade structured control strategy for a grid-connected 2 MW gearless PMSG-based WECS was proposed in \cite{valenciaga2015multiple}. The authors designed a second-order sliding surface to mitigate the chattering, resulting in a SO-SMC control scheme that simultaneously minimized the resistive losses into the generator and regulated the active and reactive powers delivered to the grid. The proposed scheme's robustness against unmodelled dynamics, external disturbances, and three-phase voltage dips was studied. As reported, the proposed control scheme guaranteed the finite-time convergence and alleviated the chattering problem. A comparative study of SO-SMC and conventional SMC through robust control of PMSG-based WECS in the presence of external disturbances was also presented in \cite{krim2018classical}. The system was connected to the grid through a resistor-inductor filter, and the chattering problem of SMC was reported to be attenuated in the SO-SMC by utilizing the ST algorithm. More studies can be found in \cite{krim2019second, zhang2021adaptive,li2021variable,hou2020composite,ma2021sensorless, ullah2020variable}.

\textcolor{black}{Table \ref{tab:5} summarizes the objectives, features, advantages, and disadvantages of the above-discussed SO-SMC designs for WECS control.}
\begin{table*}[h!]
\caption{\textcolor{black}{Summary of SO-SMC approaches for WECS control.}}
\centering
\label{tab:5}
\resizebox{\textwidth}{!}{
\begin{tabular}{l l l l l l}
\hline\hline \\[-3mm]
\begin{minipage}[t]{0.1\columnwidth} Work \end{minipage} & \begin{minipage}[t]{0.12\columnwidth} Objectives \end{minipage} & \begin{minipage}[t]{0.12\columnwidth} Operating region \end{minipage} & \begin{minipage}[t]{0.12\columnwidth} Generator \end{minipage} & \begin{minipage}[t]{0.6\columnwidth} Advantages \end{minipage}	& \begin{minipage}[t]{0.5\columnwidth}Disadvantages \end{minipage}\\ \hline
\begin{minipage}[t]{0.1\columnwidth} \cite{benbouzid2014second}  \end{minipage} & \begin{minipage}[t]{0.12\columnwidth} MPE  \end{minipage} &\begin{minipage}[t]{0.12\columnwidth} Full-load  \end{minipage} &\begin{minipage}[t]{0.12\columnwidth} DFIG  \end{minipage} & \begin{minipage}[t]{0.6\columnwidth} \begin{itemize}
\item Unbalanced voltage sags and grid frequency variations considered
\end{itemize} \end{minipage} & \begin{minipage}[t]{0.5\columnwidth} \begin{itemize}
\item Chattering reduction not investigated
\item No comparisons provided
\end{itemize} \end{minipage} \\ [12mm]
&&&&& \\
\rowcolor{lightgray}\begin{minipage}[t]{0.1\columnwidth} \cite{evangelista2012lyapunov}  \end{minipage} & \begin{minipage}[t]{0.12\columnwidth} MPE, MSM  \end{minipage} &\begin{minipage}[t]{0.12\columnwidth} Full-load  \end{minipage} &\begin{minipage}[t]{0.12\columnwidth} DFIG  \end{minipage} & \begin{minipage}[t]{0.6\columnwidth} \begin{itemize}
\item Chattering reduced using adaptive ST-SMC with variable gain
\item External disturbances and parameter uncertainties considered
\end{itemize} \end{minipage} & \begin{minipage}[t]{0.5\columnwidth} \begin{itemize}
\item Chattering reduction not completely investigated
\item No comparisons provided
\end{itemize} \end{minipage} \\ [12mm]
&&&&& \\
\begin{minipage}[t]{0.1\columnwidth} \cite{evangelista2016receding}  \end{minipage} & \begin{minipage}[t]{0.12\columnwidth} MPE  \end{minipage} &\begin{minipage}[t]{0.12\columnwidth} Partial-load  \end{minipage} &\begin{minipage}[t]{0.12\columnwidth} DFIG  \end{minipage} & \begin{minipage}[t]{0.6\columnwidth} \begin{itemize}
\item Chattering reduced using adaptive SO-SMC with variable gain
\item Fast varying disturbances considered
\end{itemize} \end{minipage} & \begin{minipage}[t]{0.5\columnwidth} \begin{itemize}
\item No comparisons provided
\end{itemize} \end{minipage} \\ [12mm]
&&&&& \\
\rowcolor{lightgray}\begin{minipage}[t]{0.1\columnwidth} \cite{martinez2017second, susperregui2013second}  \end{minipage} & \begin{minipage}[t]{0.12\columnwidth} APC, Grid synchronization  \end{minipage} &\begin{minipage}[t]{0.12\columnwidth} Full-load  \end{minipage} &\begin{minipage}[t]{0.12\columnwidth} DFIG  \end{minipage} & \begin{minipage}[t]{0.6\columnwidth} \begin{itemize}
\item Chattering reduced using adaptive ST-SMC
\item Unbalanced grid voltages considered
\item Experimental investigations provided
\end{itemize} \end{minipage} & \begin{minipage}[t]{0.5\columnwidth} \begin{itemize}
\item Chattering reduction not completely investigated
\item No comparisons provided
\end{itemize} \end{minipage} \\ [12mm]
&&&&& \\
\begin{minipage}[t]{0.1\columnwidth} \cite{beltran2012second}  \end{minipage} & \begin{minipage}[t]{0.12\columnwidth} MPE  \end{minipage} &\begin{minipage}[t]{0.12\columnwidth} Full-load  \end{minipage} &\begin{minipage}[t]{0.12\columnwidth} DFIG  \end{minipage} & \begin{minipage}[t]{0.6\columnwidth} \begin{itemize}
\item Chattering reduced using ST-SMC
\item Experimental investigations provided
\item External disturbances and unmodelled dynamics considered
\end{itemize} \end{minipage} & \begin{minipage}[t]{0.5\columnwidth} \begin{itemize}
\item Chattering reduction not completely investigated
\item No comparisons provided
\end{itemize} \end{minipage} \\ [12mm]
&&&&& \\
\rowcolor{lightgray}\begin{minipage}[t]{0.1\columnwidth} \cite{evangelista2013active}  \end{minipage} & \begin{minipage}[t]{0.12\columnwidth} MPE, MSM  \end{minipage} &\begin{minipage}[t]{0.12\columnwidth} Full-load  \end{minipage} &\begin{minipage}[t]{0.12\columnwidth} DFIG  \end{minipage} & \begin{minipage}[t]{0.6\columnwidth} \begin{itemize}
\item Chattering reduced using adaptive ST-SMC with variable gain
\item External disturbances considered
\end{itemize} \end{minipage} & \begin{minipage}[t]{0.5\columnwidth} \begin{itemize}
\item Chattering reduction not completely investigated
\item No comparisons provided
\end{itemize} \end{minipage} \\ [12mm]
&&&&& \\
\begin{minipage}[t]{0.1\columnwidth} \cite{xiong2019coordinated}  \end{minipage} & \begin{minipage}[t]{0.12\columnwidth} MPE, Grid synchronization  \end{minipage} &\begin{minipage}[t]{0.12\columnwidth} Partial-load  \end{minipage} &\begin{minipage}[t]{0.12\columnwidth} DFIG  \end{minipage} & \begin{minipage}[t]{0.6\columnwidth} \begin{itemize}
\item Chattering reduced using adaptive ST-SMC
\item Parameter perturbations estimated using an observer
\end{itemize} \end{minipage} & \begin{minipage}[t]{0.5\columnwidth} \begin{itemize}
\item Wind speed too smooth (not realistic)
\end{itemize} \end{minipage} \\ [12mm]
&&&&& \\
\rowcolor{lightgray}\begin{minipage}[t]{0.1\columnwidth} \cite{karabacak2019new}  \end{minipage} & \begin{minipage}[t]{0.12\columnwidth} MPE  \end{minipage} &\begin{minipage}[t]{0.12\columnwidth} Full-load  \end{minipage} &\begin{minipage}[t]{0.12\columnwidth} DFIG  \end{minipage} & \begin{minipage}[t]{0.6\columnwidth} \begin{itemize}
\item Chattering reduced using third-order ST-SMC
\item External perturbations estimated using an observer
\item Experimental investigations provided
\end{itemize} \end{minipage} & \begin{minipage}[t]{0.5\columnwidth} \begin{itemize}
\item Wind speed too smooth (not realistic)
\end{itemize} \end{minipage} \\ [12mm]
&&&&& \\
\begin{minipage}[t]{0.1\columnwidth} \cite{liu2016second,benamor2019novel}  \end{minipage} & \begin{minipage}[t]{0.12\columnwidth} MPE, RPC  \end{minipage} &\begin{minipage}[t]{0.12\columnwidth} Partial-load  \end{minipage} &\begin{minipage}[t]{0.12\columnwidth} DFIG  \end{minipage} & \begin{minipage}[t]{0.6\columnwidth} \begin{itemize}
\item Chattering reduced using adaptive ST-SMC
\item External disturbances considered
\end{itemize} \end{minipage} & \begin{minipage}[t]{0.5\columnwidth} \begin{itemize}
\item Wind speed too smooth (not realistic)
\item Chattering reduction not completely investigated
\end{itemize} \end{minipage} \\ [12mm]
&&&&& \\
\rowcolor{lightgray}\begin{minipage}[t]{0.1\columnwidth} \cite{barambones2021real}  \end{minipage} & \begin{minipage}[t]{0.12\columnwidth} MPE  \end{minipage} &\begin{minipage}[t]{0.12\columnwidth} Full-load  \end{minipage} &\begin{minipage}[t]{0.12\columnwidth} DFIG  \end{minipage} & \begin{minipage}[t]{0.6\columnwidth} \begin{itemize}
\item Chattering reduced using adaptive ST-SMC
\item SMO estimated the wind speed variations
\item Experimental investigations provided
\end{itemize} \end{minipage} & \begin{minipage}[t]{0.5\columnwidth} \begin{itemize}
\item Wind speed too smooth (not realistic)
\item Comparison only with PID (no advanced method)
\end{itemize} \end{minipage} \\ [12mm]
&&&&& \\
\begin{minipage}[t]{0.1\columnwidth} \cite{xiong2020high}  \end{minipage} & \begin{minipage}[t]{0.12\columnwidth} APC, RPC  \end{minipage} &\begin{minipage}[t]{0.12\columnwidth} Full-load  \end{minipage} &\begin{minipage}[t]{0.12\columnwidth} DFIG  \end{minipage} & \begin{minipage}[t]{0.6\columnwidth} \begin{itemize}
\item Chattering reduced using ST-SMC
\item Unbalanced grid voltages considered
\item Experimental investigations provided
\end{itemize} \end{minipage} & \begin{minipage}[t]{0.5\columnwidth} -- \end{minipage} \\ [12mm]
&&&&& \\ \hline
\rowcolor{lightgray}\begin{minipage}[t]{0.1\columnwidth} \cite{meghni2017second}  \end{minipage} & \begin{minipage}[t]{0.12\columnwidth} MPE, APC, RPC  \end{minipage} &\begin{minipage}[t]{0.12\columnwidth} Full-load  \end{minipage} &\begin{minipage}[t]{0.12\columnwidth} PMSG  \end{minipage} & \begin{minipage}[t]{0.6\columnwidth} \begin{itemize}
\item FLC augmented with ST-SMC to reduce chattering
\item Grid disturbances considered
\end{itemize} \end{minipage} & \begin{minipage}[t]{0.5\columnwidth} -- \end{minipage} \\ [12mm]
&&&&& \\
\begin{minipage}[t]{0.1\columnwidth} \cite{merabet2016implementation}  \end{minipage} & \begin{minipage}[t]{0.12\columnwidth} GSC, RSC, DCVR  \end{minipage} &\begin{minipage}[t]{0.12\columnwidth} Full-load  \end{minipage} &\begin{minipage}[t]{0.12\columnwidth} PMSG  \end{minipage} & \begin{minipage}[t]{0.6\columnwidth} \begin{itemize}
\item Chattering reduced using ST-SMC
\item External disturbances and parametric uncertainties considered
\item Experimental investigations provided
\end{itemize} \end{minipage} & \begin{minipage}[t]{0.5\columnwidth} \begin{itemize}
\item Chattering reduction not completely investigated
\item No comparisons provided
\end{itemize} \end{minipage} \\ [12mm]
&&&&& \\
\rowcolor{lightgray}\begin{minipage}[t]{0.1\columnwidth} \cite{valenciaga2015multiple}  \end{minipage} & \begin{minipage}[t]{0.12\columnwidth} APC, RPC, FTC  \end{minipage} &\begin{minipage}[t]{0.12\columnwidth} Full-load  \end{minipage} &\begin{minipage}[t]{0.12\columnwidth} PMSG  \end{minipage} & \begin{minipage}[t]{0.6\columnwidth} \begin{itemize}
\item Chattering reduced using SO-SMC
\item External disturbances and unmodelled dynamics considered
\end{itemize} \end{minipage} & \begin{minipage}[t]{0.5\columnwidth} \begin{itemize}
\item Chattering reduction not completely investigated
\item No comparisons provided
\end{itemize} \end{minipage} \\ [12mm] \hline
&&&&& \\
\begin{minipage}[t]{0.1\columnwidth} \cite{hou2020composite}  \end{minipage} & \begin{minipage}[t]{0.12\columnwidth} MPE  \end{minipage} &\begin{minipage}[t]{0.12\columnwidth} Partial-load  \end{minipage} &\begin{minipage}[t]{0.12\columnwidth} PMSM  \end{minipage} & \begin{minipage}[t]{0.6\columnwidth} \begin{itemize}
\item A disturbance observer estimated the lumped uncertainties
\item Experimental investigations provided
\end{itemize} \end{minipage} & \begin{minipage}[t]{0.5\columnwidth} \begin{itemize}
\item Wind speed too smooth (not realistic)
\item Chattering reduction not investigated
\end{itemize} \end{minipage} \\
[10mm]
\hline\hline
\end{tabular}}
\end{table*}

\subsection{Terminal SMC}
\label{sec:3.5.2}
Remarkable merits of TSMC controllers including fast and finite-time convergence \cite{guo2020terminal}, reduced chattering \cite{hou2019discrete, panchade2013survey,mousavi2022active}, and strong robustness against system uncertainties and unmodelled dynamics \cite{wang2019discrete,zheng2017full} have encouraged many scholars to investigate its performance in the control problem designs for WECS. Accordingly, this section provides an overview of the published literature in this area.

\subsubsection{DFIG (TSMC)}
\label{sec:3.5.2.1}
Authors in \cite{patnaik2015fast} dealt with the control problem of DFIG-based WECS subjected to diverse and challenging situations by developing a higher-order SMC. They utilized a fast adaptive TSMC with a nonlinear sliding surface to mitigate the chattering phenomenon. According to the authors, the proposed controller illustrated more robustness toward external disturbances and wind speed variations than the conventional PI and feedback linearization methods. In other studies, the authors developed incorporations of a TSMC and higher-order SMO \cite{benbouzid2012high}, fuzzy TSMC controller with a fuzzy adaptive observer \cite{wu2018maximal}, and FO-TSMC controller \cite{soomro2021wind} to control the active power of DFIG-based WT and mitigate the chattering phenomenon. As reported, in terms of steady-state tracking error and high control precision, the overall control schemes demonstrated superior performance over the conventional methods in the presence of external disturbances and unmodelled generator and turbine dynamics.

Authors in \cite{morshed2015comparison} proposed an integral TSMC approach to enhance the power quality of WTs in the presence of uncertainties, parameter variations, and severe voltage sag conditions. However, although the chattering was reduced in the developed control scheme compared to the conventional SMC, the existence of a $sgn\left(\cdot\right)$ function in the control law had led to an undesirable chattering problem. Accordingly, later, the same authors proposed an FLC-based auto-tuned integral TSMC \cite{morshed2019sliding} to simultaneously deal with the power quality enhancement of WTs and chattering elimination. As reported, the chattering was effectively mitigated. Compared to the conventional SMC, the superiorities of the proposed control scheme in terms of active power, currents, DC-link voltage, and electromagnetic torque quality were revealed. Active power enhancement and stability improvement of a DFIG-based wind farm was investigated by developing a third-order TSMC control scheme \cite{patnaik2020adaptive}. According to the authors, taking advantage of the terminal sliding surface, the chattering problem was mitigated; and hence, the proposed scheme maintained better synchronism between all generators in the multi-machine system for both fixed and chaotic wind profiles compared with the conventional PI.

Authors in \cite{patnaik2016adaptive} proposed an adaptive TSMC for APC and RPC of the rotor-side and the grid-side converters in a DFIG-based WT. The proposed strategy utilized the \textit{abc} frame of reference, and its gain parameters were defined dynamically, depending on the absolute value of the respective sliding surfaces. The authors proposed a nonlinear terminal sliding surface to reduce chattering. However, some slight chattering remained due to the existence of the discontinuous $sgn\left(\cdot\right)$ function in the switching law. Alternatively, they used the $sgn\left(\cdot\right)\rightarrow tanh\left(\cdot\right)$ replacement to overcome the issue and attenuate the chattering. The effectiveness of the proposed scheme for the WT control was investigated in a multi-machine environment in the presence of switching faults, reference tracking, and wind speed variations.

\subsubsection{PMSG (TSMC)}
\label{sec:3.5.2.2}
The three-phase GSC control problem of variable speed PMSG-based WECS was investigated in \cite{zheng2018integral}. The authors proposed two integral-type TSMC controllers with attenuated chattering to control the active and reactive powers exchanged between the converter and the grid. In order to reduce the chattering, they used integrators to soften the switching signals and generate continuous control signals. Experimental investigations were carried out and the performance of the proposed control strategy in terms of control energy wastage reduction and robustness against matched and unmatched parametric uncertainties was reported. In \cite{pradhan2018composite}, an augmentation of NTSMC with a soft-switching SMO was proposed to deal with the control problem of remotely located PMSG-based photovoltaic-wind hybrid systems. According to the experimental results reported by the authors, the chattering problem was effectively alleviated, and the incorporation of the NTSMC-based speed controller with the proposed observer-based disturbance rejection method guaranteed the robustness against model uncertainties and external disturbances.

\subsubsection{Other Generators (TSMC)}
\label{sec:3.5.2.3}
A fault-tolerant FO nonsingular TSMC strategy was proposed in \cite{mousavi2021maximum} for MPE of WTs subjected to actuator faults. The design was based on an FO nonsingular terminal sliding surface to guarantee the fast convergence speed of system states and simultaneously suppresses chattering. According to the authors, the proposed scheme demonstrated superior power production performance over the SO-SMC and conventional SMC approaches in fault-free and faulty situations. Authors in \cite{abolvafaei2019maximum} proposed a composite control strategy for the MPE of WT systems operating in region II; the incorporation of a second-order fast TSMC and SMC with PID-based sliding surface. According to the authors, the chattering problem and mechanical loads were reduced utilizing the PID-based SMC, whilst the TSMC extracted the maximum wind power and guaranteed the fast finite-time convergence. As reported, the proposed control scheme demonstrated superior performance compared to indirect speed control, the nonlinear static state feedback control, the nonlinear dynamic state feedback control, and the first-order SMC in the presence of parametric uncertainties, external disturbances, and unmodelled dynamics.

Maximum power capture \cite{rajendran2015adaptive} and pitch angle control \cite{aghaeinezhad2021individual} problems of variable speed WTs were investigated by developing adaptive nonsingular TSMC controllers. Authors in \cite{rajendran2015adaptive} used $tanh\left(\cdot\right)$ to reduce the chattering, while authors in \cite{aghaeinezhad2021individual} proposed a fractional-based sliding surface to deal with this phenomenon. As reported, the proposed schemes outperformed conventional methods in terms of robustness against input disturbances and parametric uncertainties.

\textcolor{black}{Table \ref{tab:6} summarizes the objectives, features, advantages, and disadvantages of the above-discussed TSMC approaches for WECS control.}
\begin{table*}[ht!]
\caption{\textcolor{black}{Summary of TSMC approaches for WECS control.}}
\centering
\label{tab:6}
\resizebox{\textwidth}{!}{
\begin{tabular}{l l l l l l}
\hline\hline \\[-3mm]
\begin{minipage}[t]{0.1\columnwidth} Work \end{minipage} & \begin{minipage}[t]{0.12\columnwidth} Objectives \end{minipage} & \begin{minipage}[t]{0.12\columnwidth} Operating region \end{minipage} & \begin{minipage}[t]{0.12\columnwidth} Generator \end{minipage} & \begin{minipage}[t]{0.6\columnwidth} Advantages \end{minipage}	& \begin{minipage}[t]{0.5\columnwidth}Disadvantages \end{minipage}\\ \hline
\begin{minipage}[t]{0.1\columnwidth} \cite{patnaik2015fast}  \end{minipage} & \begin{minipage}[t]{0.12\columnwidth} APC, RPC  \end{minipage} &\begin{minipage}[t]{0.12\columnwidth} Partial \& Full-load  \end{minipage} &\begin{minipage}[t]{0.12\columnwidth} DFIG  \end{minipage} & \begin{minipage}[t]{0.6\columnwidth} \begin{itemize}
\item Chattering reduced using fast adaptive TSMC
\item External disturbances and wind variations considered
\end{itemize} \end{minipage} & \begin{minipage}[t]{0.5\columnwidth} \begin{itemize}
\item Comparison only with PI (no advanced method)
\end{itemize} \end{minipage} \\ [12mm]
&&&&& \\
\rowcolor{lightgray}\begin{minipage}[t]{0.1\columnwidth} \cite{soomro2021wind}  \end{minipage} & \begin{minipage}[t]{0.12\columnwidth} MPE  \end{minipage} &\begin{minipage}[t]{0.12\columnwidth} Partial-load  \end{minipage} &\begin{minipage}[t]{0.12\columnwidth} DFIG  \end{minipage} & \begin{minipage}[t]{0.6\columnwidth} \begin{itemize}
\item Chattering reduced using ST-FO-TSMC
\item External disturbances and unmodelled dynamics considered
\end{itemize} \end{minipage} & \begin{minipage}[t]{0.5\columnwidth} \begin{itemize}
\item Comparison only with PI (no advanced method)
\item Not good chattering mitigation performance
\end{itemize} \end{minipage} \\ [12mm]
&&&&& \\
\begin{minipage}[t]{0.1\columnwidth} \cite{morshed2015comparison}  \end{minipage} & \begin{minipage}[t]{0.12\columnwidth} APC, FTC  \end{minipage} &\begin{minipage}[t]{0.12\columnwidth} Full-load  \end{minipage} &\begin{minipage}[t]{0.12\columnwidth} DFIG  \end{minipage} & \begin{minipage}[t]{0.6\columnwidth} \begin{itemize}
\item Chattering reduced using integral TSMC
\item Parameter variations and grid voltage sag considered
\end{itemize} \end{minipage} & \begin{minipage}[t]{0.5\columnwidth} \begin{itemize}
\item Not good chattering mitigation performance
\end{itemize} \end{minipage} \\ [12mm]
&&&&& \\
\rowcolor{lightgray}\begin{minipage}[t]{0.1\columnwidth} \cite{patnaik2020adaptive}  \end{minipage} & \begin{minipage}[t]{0.12\columnwidth} APC, FTC  \end{minipage} &\begin{minipage}[t]{0.12\columnwidth} Full-load  \end{minipage} &\begin{minipage}[t]{0.12\columnwidth} DFIG  \end{minipage} & \begin{minipage}[t]{0.6\columnwidth} \begin{itemize}
\item Chattering reduced using TSMC
\item Short circuit fault considered
\end{itemize} \end{minipage} & \begin{minipage}[t]{0.5\columnwidth} \begin{itemize}
\item Comparison only with PI (no advanced method)
\end{itemize} \end{minipage} \\ [12mm]
&&&&& \\
\begin{minipage}[t]{0.1\columnwidth} \cite{morshed2019sliding}  \end{minipage} & \begin{minipage}[t]{0.12\columnwidth} APC, FTC  \end{minipage} &\begin{minipage}[t]{0.12\columnwidth} Full-load  \end{minipage} &\begin{minipage}[t]{0.12\columnwidth} DFIG  \end{minipage} & \begin{minipage}[t]{0.6\columnwidth} \begin{itemize}
\item Chattering reduced using FLC-based integral TSMC
\item External disturbances and unmodelled dynamics estimated using SMO
\end{itemize} \end{minipage} & \begin{minipage}[t]{0.5\columnwidth} -- \end{minipage} \\ [12mm]
&&&&& \\ \hline
\rowcolor{lightgray}\begin{minipage}[t]{0.1\columnwidth} \cite{zheng2018integral}  \end{minipage} & \begin{minipage}[t]{0.12\columnwidth} APC, RPC  \end{minipage} &\begin{minipage}[t]{0.12\columnwidth} Full-load  \end{minipage} &\begin{minipage}[t]{0.12\columnwidth} PMSG  \end{minipage} & \begin{minipage}[t]{0.6\columnwidth} \begin{itemize}
\item Chattering reduced using integral TSMC
\item Matched and unmatched parameter uncertainties considered
\item Experimental investigations provided
\end{itemize} \end{minipage} & \begin{minipage}[t]{0.5\columnwidth} \begin{itemize}
\item No comparisons provided
\end{itemize} \end{minipage} \\ [12mm]
&&&&& \\
\begin{minipage}[t]{0.1\columnwidth} \cite{pradhan2018composite}  \end{minipage} & \begin{minipage}[t]{0.12\columnwidth} MPE, APC, FTC  \end{minipage} &\begin{minipage}[t]{0.12\columnwidth} Partial \& Full-load  \end{minipage} &\begin{minipage}[t]{0.12\columnwidth} PMSG  \end{minipage} & \begin{minipage}[t]{0.6\columnwidth} \begin{itemize}
\item Chattering reduced using TSMC
\item External disturbances estimated using SMO
\item Experimental investigations provided
\end{itemize} \end{minipage} & \begin{minipage}[t]{0.5\columnwidth} \begin{itemize}
\item No comparisons provided
\end{itemize} \end{minipage} \\ [12mm]
&&&&& \\ \hline
\rowcolor{lightgray}\begin{minipage}[t]{0.1\columnwidth} \cite{abolvafaei2019maximum}  \end{minipage} & \begin{minipage}[t]{0.12\columnwidth} MPE, MSM  \end{minipage} &\begin{minipage}[t]{0.12\columnwidth} Partial-load  \end{minipage} &\begin{minipage}[t]{0.12\columnwidth} SSG  \end{minipage} & \begin{minipage}[t]{0.6\columnwidth} \begin{itemize}
\item Chattering reduced using SO-TSMC
\item External disturbances and parametric uncertainties considered
\end{itemize} \end{minipage} & \begin{minipage}[t]{0.5\columnwidth} \begin{itemize}
\item Chattering reduction not completely investigated
\end{itemize} \end{minipage} \\ [12mm]
&&&&& \\
\begin{minipage}[t]{0.1\columnwidth} \cite{aghaeinezhad2021individual}  \end{minipage} & \begin{minipage}[t]{0.12\columnwidth} PAC  \end{minipage} &\begin{minipage}[t]{0.12\columnwidth} Full-load  \end{minipage} &\begin{minipage}[t]{0.12\columnwidth} SSG  \end{minipage} & \begin{minipage}[t]{0.6\columnwidth} \begin{itemize}
\item Chattering reduced using adaptive FO-TSMC
\item External disturbances and parametric uncertainties considered
\end{itemize} \end{minipage} & \begin{minipage}[t]{0.5\columnwidth} \begin{itemize}
\item Chattering reduction not completely investigated
\item No comparisons provided
\end{itemize} \end{minipage} \\
[20mm]
\hline\hline
\end{tabular}}
\end{table*}

\subsection{Other HO-SMC}
\label{sec:3.5.3}
Apart from WECS applications of SO-SMC, ST-SO-SMC, and TSMC, other HO-SMC approaches have also been investigated in the literature, which will be reviewed in this section.

\subsubsection{DFIG (other HO-SMC)}
\label{sec:3.5.3.1}
In \cite{golnary2018design, golnary2019dynamic}, the maximum wind energy capture of a 1.5MW WT was investigated. The authors proposed adaptive neuro-fuzzy inference system (ANFIS) -based wind speed estimators, and augmentations of the HO-SMC based observer and feedback linearization \cite{golnary2018design} and SO-SMC \cite{golnary2019dynamic} were developed to track the rotor speed. Both studies considered the derivation of generator torque as a virtual control function to avoid the chattering phenomenon. Comparative investigations were conducted and the superiorities of the proposed methods in terms of power extraction and load mitigation in comparison with the indirect speed control and quasi-continuous SMC methods were reported.

\subsubsection{Other Generators (other HO-SMC)}
\label{sec:3.5.3.2}
Authors in \cite{merida2014analysis} proposed non-homogeneous quasi-continuous HO-SMC to deal with simultaneous MPE and mechanical load reduction problem for variable speed WTs. They also incorporated a wind speed estimator with a sliding mode output feedback torque controller to ensure the achievement of optimal rotor speed. On the other hand, authors in \cite{pratap2018robust} dealt with the same problem by developing a quasi-sliding mode control approach. They introduced an auxiliary sliding variable with a positive tunable constant to attenuate the chattering. As reported, compared with conventional SMC, the proposed control scheme alleviated the chattering phenomenon while guaranteeing robustness against parametric uncertainties, external disturbances, and unmodelled dynamics. In another study \cite{zhang2009new}, the authors developed a quasi-continuous SMC to deal with the pitch control problem of WTs. However, although a superior performance of the developed control scheme than that of PI was reported, the chattering problem still existed with the controller. Hence, they mitigated the chattering effects by $sgn\left(\cdot\right)\rightarrow sat\left(\cdot\right)$ substitution in the control law. The incorporation of aerodynamic torque observer and HO-SMC was developed in \cite{beltran2008high} to deal with the control problem of variable speed WT working in regions II and III. Due to the proposed second-order SMO with the controller, no chattering in the rotor speed was reported. The proposed control strategy was reported effective in terms of power regulation and robustness against parametric uncertainties.

\section{Fuzzy SMC for WECS}
\label{sec:3.6}
The estimation of the system's parameter uncertainties increases the adjustability of the switching gain of SMC, leading to chattering attenuation and controller's robustness attainment \cite{xu2019event}. Since fuzzy inference systems (FISs) can approximate unknown continuous functions without any requirement for prior knowledge of parametric uncertainties and external disturbances bounds, they have been widely used together with SMCs in practical applications \cite{ma2018cooperative, sami2020sensorless, hwang2013adaptive}. Also, FIS have been utilized as the reaching control to replace the discontinuous switching control law and smooth the SMC's control signal; however, they cannot precisely drive the sliding surface to zero and thus cannot ensure an ideal sliding motion \cite{wu2018adaptive}. \textcolor{black}{A general representation of an adaptive Fuzzy-SMC (F-SMC) for typical systems is depicted in Figure \ref{fig:533}. The fuzzifier block converts the signal values into fuzzy linguistic values, and the fuzzy interface mechanism adaptively determines the control action based on the fuzzy rules. The control signal is then transferred to the system using the defuzzifier block. It is worth mentioning that the sliding surface's design is as per the designer's preference and the system's dynamics.}

\begin{figure}
\centering
\includegraphics[width=4.5 in]{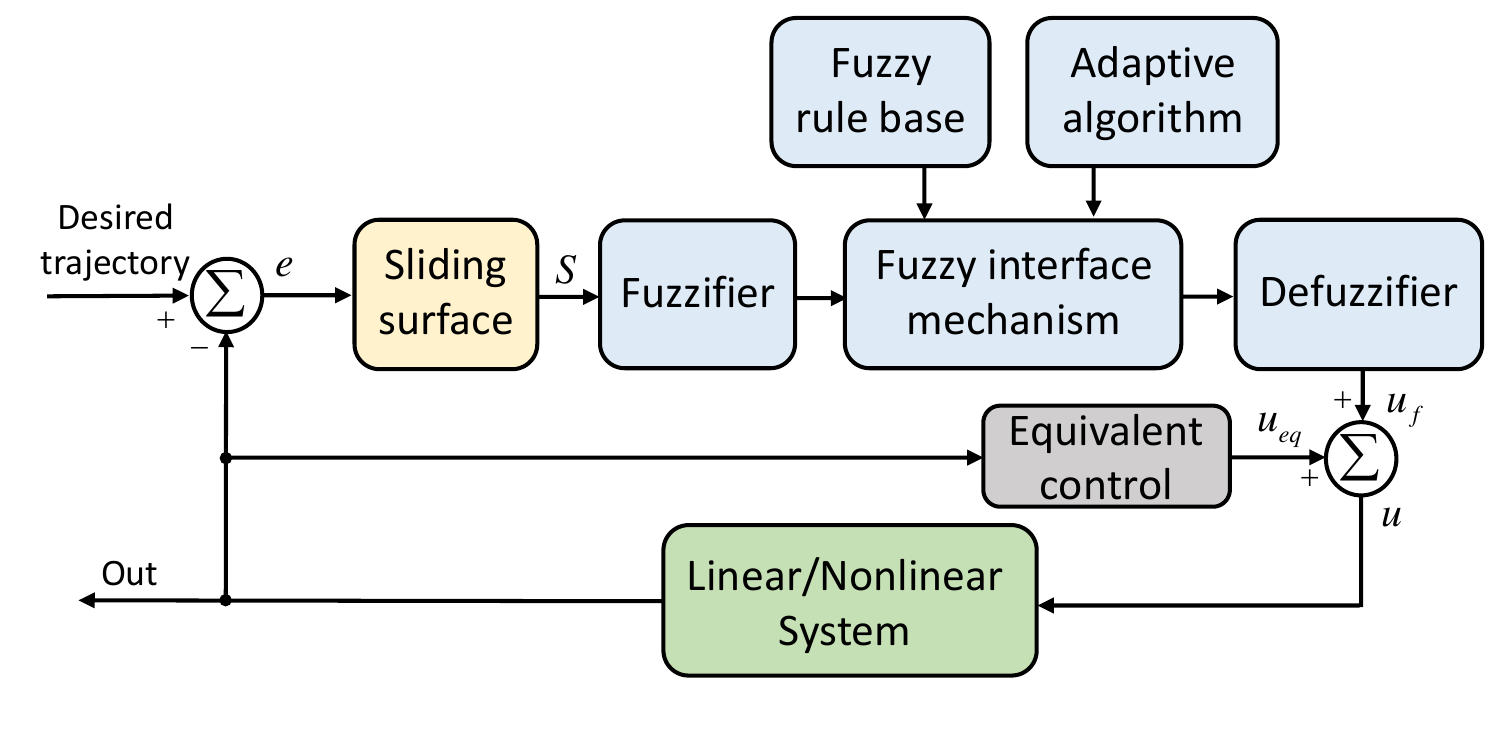}
\caption{\textcolor{black}{The general block diagram of an adaptive fuzzy SMC.}}
\label{fig:533}
\end{figure}

\subsection{DFIG (F-SMC)}
\label{sec:3.6.1}
An optimal F-SMC for DFIG-based WT was proposed \cite{bounar2019pso} to achieve the MPE with zero stator reactive power regulation. To this end, the FIS was deployed to avoid the undesirable chattering phenomenon in SMC, and a combination of particle swarm optimization (PSO) and GSA was proposed to tune the control parameters optimally. An adaptive F-SMC was designed for MPE of a variable speed adjustable pitch DFIG-based WT \cite{yao2007adaptive}. The authors incorporated an adaptive fuzzy-based weight with the SMC to reduce the chattering effects. Compared to the conventional SMC and PID controllers, the authors reported that their controller performs better with respect to parameter variations and load disturbances. Later, authors in \cite{rajendran2014variable, subramaniyam2021memory} developed adaptive fuzzy ISMC controllers to deal with the same problem as \cite{yao2007adaptive}. Similar to \cite{yao2007adaptive}, the authors used fuzzy gain adjustment procedures to adaptively tune the weighting parameters in the switching control law, and reduce the chattering. In addition, authors in \cite{rajendran2014variable} proposed a modified Newton-Raphson estimator for precise estimation of effective wind speed, and as a result, more robustness of the proposed strategy was reported in the presence of input torque disturbance, compared to conventional SMC and ISMC. In a similar study \cite{liu2014sliding}, an SMC with fuzzy switching gain adjustment was developed to tackle the MPE control problem. The authors used the $sgn\left(\cdot\right)\rightarrow tanh\left(\cdot\right)$ change to deal with the chattering problem. However, although the designed control system demonstrated exemplary performance in the presence of disturbances and uncertainties, they were not as satisfactory as \cite{rajendran2014variable}.

\subsection{PMSG (F-SMC)}
\label{sec:3.6.2}
In \cite{yin2015novel}, a fuzzy integral sliding mode current control strategy was proposed to extract the maximum wind power and high-order voltage harmonics mitigation for a direct-driven WECS. The fuzzy logic generator was employed to derive the peak power points based on the variations of measured DC-side voltage and current, while the integral SMC was to track the peak power points. The proposed MPPT control strategy is shown in Figure \ref{fig:7}, where the changes of DC-side current and voltage over a sampling period and the changes of optimum DC-side current are chosen as the input and output variables, respectively. Here, $\Delta{I_{dc}}$ and $\Delta{V_{dc}}$ represent the DC-side current and voltage variations, respectively, $V_{dc}\left[k\right]$, $I_{dc}\left[k\right]$, $V_{dc}\left[k-1\right]$, and $I_{dc}\left[k-1\right]$ denote the DC-side voltage and current at the sampling instants $k$ and $k-1$, respectively.
\begin{figure}
\centering
\includegraphics[width=4.2 in]{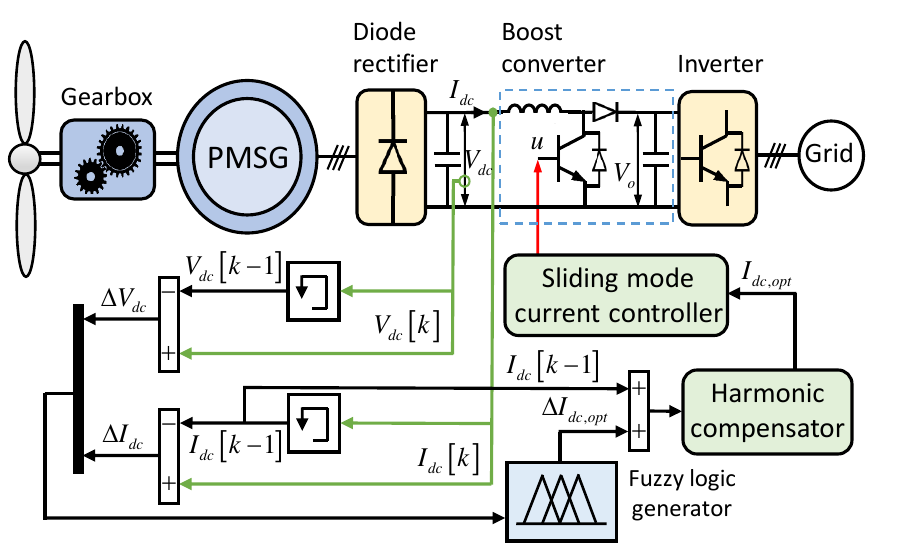}
\caption{\textcolor{black}{The block diagram of the developed F-SMC-based MPPT control of PMSG investigated in \cite{yin2015novel}.}}
\label{fig:7}
\end{figure}

Taking advantage of the Takagi--Sugeno (TS) fuzzy model, a disturbance observer-based integral F-SMC with diminished chattering was designed in \cite{hwang2019disturbance} for MPE of a PMSG-based WTs. The asymptotic stability of the control system with $H_{\infty }$ was derived in terms of the linear matrix inequality format, and the reachability of the sliding surface was derived based on the Lyapunov function. In a similar study \cite{do2016disturbance}, the authors developed an F-SMC for WECS control, combined with a nonlinear disturbance observer for estimation of aerodynamic torque and angular shaft speed reference. They embedded a fuzzy-based variable switching gain scheme with the controller to mitigate the chattering. In \cite{subramaniam2019passivity}, a passivity-based fuzzy ISMC for PMSG-based WECS was investigated. The authors developed a fuzzy-based sliding surface to alleviate the undesirable chattering effects. They used the double orthogonal complement and Frobenius theorem to present the fuzzy sliding surface's existence condition. Also, by constructing the proper Lyapunov function, the system's asymptotic stability was achieved.

\subsection{Other Generators (F-SMC)}
\label{sec:3.6.3}
An F-SMC control strategy was developed for MPPT control of a PMSM motor in \cite{benchabane2012improved}, while later, authors in \cite{lahlou2019sliding} designed an SMC combined with a type-2 fuzzy PID for MPPT of a variable speed WT. {They replaced the sliding surface with a type-2 fuzzy PID surface to reduce the chattering phenomenon.} Compared with the conventional SMC, better performance of the proposed control strategy in terms of better chattering reduction and robustness against large parametric uncertainties was reported. Authors in \cite{tahir2018new} proposed an F-SMC controller for wound-field synchronous generator-based variable-speed WECS in order to maximize the power extracted from the WT. The FIS was used for on-line adjustment of the switching gains, while a fast sigmoid function with a variable boundary layer was proposed to reduce the chattering phenomenon. An actuator faults diagnosis and FTC approach for WTs with a hydrostatic transmission was presented in \cite{schulte2015fault} by incorporating a TS fuzzy system and an SMO. According to the authors, the TS observer maintained the sliding motion on the surface with reduced chattering. As reported, in the presence of actuator faults, the proposed FTC yielded similar results to that of the fault-free case.

\textcolor{black}{Table \ref{tab:7} summarizes the objectives, features, advantages, and disadvantages of the discussed F-SMC designs for WECS control.}
\begin{table*}[ht!]
\caption{\textcolor{black}{Summary of F-SMC approaches for WECS control.}}
\centering
\label{tab:7}
\resizebox{\textwidth}{!}{
\begin{tabular}{l l l l l l}
\hline\hline \\[-3mm]
\begin{minipage}[t]{0.1\columnwidth} Work \end{minipage} & \begin{minipage}[t]{0.12\columnwidth} Objectives \end{minipage} & \begin{minipage}[t]{0.12\columnwidth} Operating region \end{minipage} & \begin{minipage}[t]{0.12\columnwidth} Generator \end{minipage} & \begin{minipage}[t]{0.6\columnwidth} Advantages \end{minipage}	& \begin{minipage}[t]{0.5\columnwidth}Disadvantages \end{minipage}\\ \hline
\begin{minipage}[t]{0.1\columnwidth} \cite{bounar2019pso}  \end{minipage} & \begin{minipage}[t]{0.12\columnwidth} MPE \end{minipage} &\begin{minipage}[t]{0.12\columnwidth} Partial-load  \end{minipage} &\begin{minipage}[t]{0.12\columnwidth} DFIG  \end{minipage} & \begin{minipage}[t]{0.6\columnwidth} \begin{itemize}
\item Chattering reduced using F-SMC
\item Controller parameters tuned by PSO and GSA
\item External disturbances and parametric uncertainties considered
\end{itemize} \end{minipage} & \begin{minipage}[t]{0.5\columnwidth} \begin{itemize}
\item Chattering reduction not completely investigated
\item No comparisons provided
\end{itemize} \end{minipage} \\ [12mm]
&&&&& \\
\rowcolor{lightgray}\begin{minipage}[t]{0.1\columnwidth} \cite{rajendran2014variable}  \end{minipage} & \begin{minipage}[t]{0.12\columnwidth} MPE \end{minipage} &\begin{minipage}[t]{0.12\columnwidth} Partial-load  \end{minipage} &\begin{minipage}[t]{0.12\columnwidth} DFIG  \end{minipage} & \begin{minipage}[t]{0.6\columnwidth} \begin{itemize}
\item Chattering reduced using adaptive integral F-SMC
\item Wind speed estimated using Newton-Raphson estimator
\item Input disturbances considered
\end{itemize} \end{minipage} & \begin{minipage}[t]{0.5\columnwidth} \begin{itemize}
\item Chattering reduction not completely investigated
\item No comparisons provided
\end{itemize} \end{minipage} \\ [12mm]
&&&&& \\
\begin{minipage}[t]{0.1\columnwidth} \cite{subramaniyam2021memory}  \end{minipage} & \begin{minipage}[t]{0.12\columnwidth} MPE \end{minipage} &\begin{minipage}[t]{0.12\columnwidth} Partial-load  \end{minipage} &\begin{minipage}[t]{0.12\columnwidth} DFIG  \end{minipage} & \begin{minipage}[t]{0.6\columnwidth} \begin{itemize}
\item Chattering reduced using adaptive integral F-SMC
\end{itemize} \end{minipage} & \begin{minipage}[t]{0.5\columnwidth} \begin{itemize}
\item Chattering reduction not completely investigated
\item No disturbances or uncertainties considered
\item No comparisons provided
\end{itemize} \end{minipage} \\ [12mm]
&&&&& \\
\rowcolor{lightgray}\begin{minipage}[t]{0.1\columnwidth} \cite{liu2014sliding}  \end{minipage} & \begin{minipage}[t]{0.12\columnwidth} MPE \end{minipage} &\begin{minipage}[t]{0.12\columnwidth} Partial-load  \end{minipage} &\begin{minipage}[t]{0.12\columnwidth} DFIG  \end{minipage} & \begin{minipage}[t]{0.6\columnwidth} \begin{itemize}
\item External disturbances and parametric uncertainties considered
\end{itemize} \end{minipage} & \begin{minipage}[t]{0.5\columnwidth} \begin{itemize}
\item Implementation of F-SMC failed to mitigate the chattering problem. Authors used the $sgn\left(\cdot\right)\rightarrow sat\left(\cdot\right)$ replacement to deal with the chattering
\end{itemize} \end{minipage} \\ [12mm] \hline
&&&&& \\
\begin{minipage}[t]{0.1\columnwidth} \cite{yin2015novel}  \end{minipage} & \begin{minipage}[t]{0.12\columnwidth} MPE, DCVR \end{minipage} &\begin{minipage}[t]{0.12\columnwidth} Partial-load  \end{minipage} &\begin{minipage}[t]{0.12\columnwidth} PMSG  \end{minipage} & \begin{minipage}[t]{0.6\columnwidth} \begin{itemize}
\item Chattering reduced using F-SMC
\item Voltage harmonics at the generator side eliminated
\end{itemize} \end{minipage} & \begin{minipage}[t]{0.5\columnwidth} \begin{itemize}
\item No disturbances or uncertainties considered
\item Comparison only with PI (no advanced method)
\end{itemize} \end{minipage} \\ [12mm]
&&&&& \\
\rowcolor{lightgray}\begin{minipage}[t]{0.1\columnwidth} \cite{hwang2019disturbance}  \end{minipage} & \begin{minipage}[t]{0.12\columnwidth} MPE \end{minipage} &\begin{minipage}[t]{0.12\columnwidth} Partial-load  \end{minipage} &\begin{minipage}[t]{0.12\columnwidth} PMSG  \end{minipage} & \begin{minipage}[t]{0.6\columnwidth} \begin{itemize}
\item MPE achieved using a F-SMC
\item External disturbances estimated using a fuzzy-based disturbance observer
\end{itemize} \end{minipage} & \begin{minipage}[t]{0.5\columnwidth} \begin{itemize}
\item No investigation on chattering mitigation provided
\end{itemize} \end{minipage} \\ [12mm]
&&&&& \\
\begin{minipage}[t]{0.1\columnwidth} \cite{do2016disturbance}  \end{minipage} & \begin{minipage}[t]{0.12\columnwidth} MPE \end{minipage} &\begin{minipage}[t]{0.12\columnwidth} Partial-load  \end{minipage} &\begin{minipage}[t]{0.12\columnwidth} PMSG  \end{minipage} & \begin{minipage}[t]{0.6\columnwidth} \begin{itemize}
\item Chattering reduced using F-SMC
\item Aerodynamic torque estimated using a nonlinear disturbance observer
\end{itemize} \end{minipage} & \begin{minipage}[t]{0.5\columnwidth} \begin{itemize}
\item No comparisons provided
\end{itemize} \end{minipage} \\ [12mm]
&&&&& \\
\rowcolor{lightgray}\begin{minipage}[t]{0.1\columnwidth} \cite{subramaniam2019passivity}  \end{minipage} & \begin{minipage}[t]{0.12\columnwidth} MPE \end{minipage} &\begin{minipage}[t]{0.12\columnwidth} Partial-load  \end{minipage} &\begin{minipage}[t]{0.12\columnwidth} PMSG  \end{minipage} & \begin{minipage}[t]{0.6\columnwidth} \begin{itemize}
\item Chattering reduced using integral F-SMC
\item External disturbances considered
\end{itemize} \end{minipage} & \begin{minipage}[t]{0.5\columnwidth} \begin{itemize}
\item No comparisons provided
\end{itemize} \end{minipage} \\ [12mm] \hline
&&&&& \\
\begin{minipage}[t]{0.1\columnwidth} \cite{schulte2015fault}  \end{minipage} & \begin{minipage}[t]{0.12\columnwidth} FTC \end{minipage} &\begin{minipage}[t]{0.12\columnwidth} Partial-load  \end{minipage} &\begin{minipage}[t]{0.12\columnwidth} SSG  \end{minipage} & \begin{minipage}[t]{0.6\columnwidth} \begin{itemize}
\item Chattering reduced using F-SMC
\item SMO estimated the actuator faults
\end{itemize} \end{minipage} & \begin{minipage}[t]{0.5\columnwidth} \begin{itemize}
\item Chattering not investigated
\item No comparisons provided
\end{itemize} \end{minipage} \\
[12mm]
\hline\hline
\end{tabular}}
\end{table*}

\section{Neural Network-based SMC for WECS}
\label{sec:3.7}
One of the main drawbacks of conventional SMC and HO-SMC controllers in practical applications is the requirement to know the parameter variations and external disturbances; information that is frequently not known in practice. Thus, designers often choose it as a conservatively large value, which results in considerably large control inputs, chattering occurrence, and also higher power consumption of the system. As an alternative, neural network (NN) -based control strategies have been used to approximate the uncertainties and disturbances along with their bounds \cite{ling150robust, qu2018neural, van2019adaptive}.

\subsection{DFIG (NN-SMC)}
\label{sec:3.7.1}
Authors in \cite{djilali2019real, djilali2017neural} presented real-time discrete sliding mode field-oriented controllers for the MPE problem of grid-connected DFIG-based WT systems in both balanced and unbalanced grid conditions. The proposed schemes were augmented with a recurrent high-order NN estimator for DC-link and DFIG mathematical models approximation. Simulation \cite{djilali2017neural} and experimental \cite{djilali2019real} investigations on a $1/4$ HP DFIG prototype in the presence of external disturbances and unknown dynamics were carried out. In comparison with the conventional PI and SMC methods, the merits of the proposed studies with regards to reference tracking, robustness to parameter variations, and sensitivity to speed changes were reported. Authors in \cite{djilali2018neural} developed a robust predictor neural SMC scheme for DFIG-based WT systems in the presence of parameter variations and external disturbances. In order to suppress the chattering, the authors used $tanh\left(\cdot\right)$ within the sliding surface. As reported, the higher-order NN identifier, which was trained on-line using an extended Kalman filter, was effectively augmented with the SMC strategy and successfully compensated the measurement delay in stator and rotor current.

\subsection{PMSG (NN-SMC)}
\label{sec:3.7.2}
In \cite{yin2020recurrent, boulouma2016rbf}, the MPE problem of PMSG-based WECS was investigated using a radial basis function neural network (RBFNN)-based SMC approach. The perturbation boundaries and full system states were assumed to be unknown, where, according to the authors, due to synthesizing a real-time dynamic learning law, the presented control scheme demonstrated superior performance in terms of chattering reduction and optimum generated power in comparison with SO-SMC. The block diagram of the control scheme investigated in \cite{yin2020recurrent} is illustrated in Figure \ref{fig:8}, where the RBFNN is used to identify the uncertain WT dynamics, and an online update algorithm is derived to update the weights of the RBFNN.
\begin{figure}
\centering
\includegraphics[width=3.2 in]{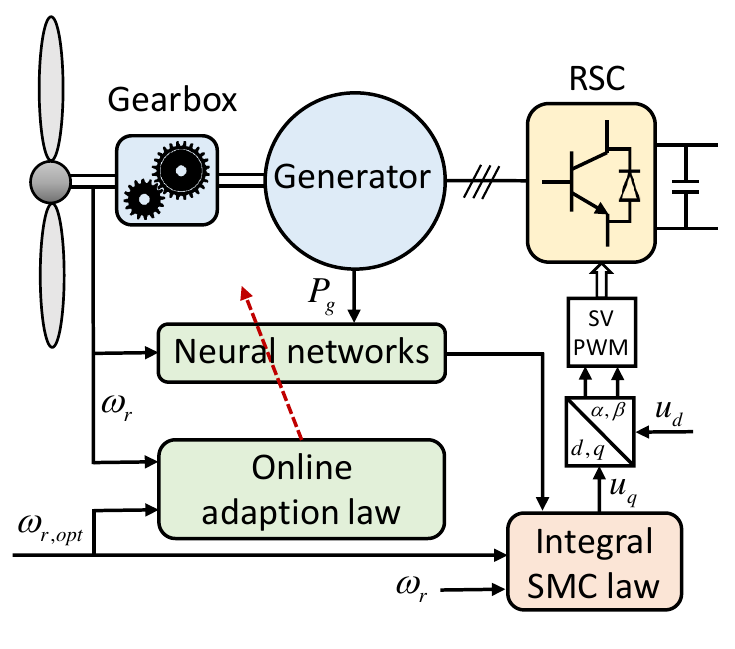}
\caption{Block diagram of the developed RBFNN-based SMC approach for maximum power capture in \cite{yin2020recurrent}.}
\label{fig:8}
\end{figure}

Generalized global SMC controllers incorporated with a feed-forward NN to estimate the nonlinear drift terms and input channels \cite{haq2020maximum} and support vector machine NN to mitigate the chattering \cite{xin2014chattering} were investigated to deal with the MPPT problem of PMSG-based WT systems. Authors in \cite{haq2020maximum} compared the proposed strategy with the feedback linearized control approach, where better performance was exhibited in the presence of wind speed and parametric variations.

\subsection{Other Generators (NN-SMC)}
\label{sec:3.7.3}
In \cite{hong2019enhanced}, a radial fuzzy wavelet NN was proposed and incorporated with SMC for a switched reluctance generator of variable speed WECS. The proposed strategy utilized the hill-climb searching, and as the authors reported, compared to the conventional PI and fuzzy control methods, it achieved faster convergence to MPPT. By assuming unknown system states and perturbations boundaries, incorporations of RBFNN with ISMC were investigated for MPE of SFG-based \cite{lamzouri2018robust} and self-excited induction generator (SEIG) \cite{kenne2017new} WT systems. In order to mitigate the chattering effects, authors in \cite{lamzouri2018robust} used the $sgn\left(\cdot\right)\rightarrow sat\left(\cdot\right)$ substitution, while authors in \cite{kenne2017new} used the RBFNN to adaptively tune the switching gain. Optimal NN-SMC control schemes were also investigated for variable speed WECS using adaptive PSO \cite{boufounas2014optimal} and genetic algorithm \cite{berrada2015optimal}. Although both NN-SMCs were reported with reduced chattering, authors in \cite{boufounas2014optimal} replaced advantaged from the $sgn\left(\cdot\right)\rightarrow sat\left(\cdot\right)$ replacement in the control law to achieve better results. According to the authors, utilization of evolutionary optimization algorithms have effectively enhanced the trajectory tracking performance of NN-SMC controllers.

\textcolor{black}{Table \ref{tab:8} summarizes the objectives, features, advantages, and disadvantages of the discussed studies on NN-SMC approaches for WECS control.}
\begin{table*}[t!]
\caption{\textcolor{black}{Summary of NN-SMC approaches for WECS control.}}
\centering
\label{tab:8}
\resizebox{\textwidth}{!}{
\begin{tabular}{l l l l l l}
\hline\hline \\[-3mm]
\begin{minipage}[t]{0.1\columnwidth} Work \end{minipage} & \begin{minipage}[t]{0.12\columnwidth} Objectives \end{minipage} & \begin{minipage}[t]{0.12\columnwidth} Operating region \end{minipage} & \begin{minipage}[t]{0.12\columnwidth} Generator \end{minipage} & \begin{minipage}[t]{0.6\columnwidth} Advantages \end{minipage}	& \begin{minipage}[t]{0.5\columnwidth}Disadvantages \end{minipage}\\ \hline
\begin{minipage}[t]{0.1\columnwidth} \cite{djilali2019real}  \end{minipage} & \begin{minipage}[t]{0.12\columnwidth} APC, RPC \end{minipage} &\begin{minipage}[t]{0.12\columnwidth} Full-load  \end{minipage} &\begin{minipage}[t]{0.12\columnwidth} DFIG  \end{minipage} & \begin{minipage}[t]{0.6\columnwidth} \begin{itemize}
\item Balanced and unbalanced grid conditions considered
\item Recurrent high-order NN estimator used for DC-link voltage approximation
\item Experimental investigations provided
\item External disturbances and unknown dynamics considered
\end{itemize} \end{minipage} & \begin{minipage}[t]{0.5\columnwidth} \begin{itemize}
\item Chattering problem not investigated
\end{itemize} \end{minipage} \\ [12mm]
&&&&& \\
\rowcolor{lightgray}\begin{minipage}[t]{0.1\columnwidth} \cite{djilali2018neural}  \end{minipage} & \begin{minipage}[t]{0.12\columnwidth} APC, RPC \end{minipage} &\begin{minipage}[t]{0.12\columnwidth} Full-load  \end{minipage} &\begin{minipage}[t]{0.12\columnwidth} DFIG  \end{minipage} & \begin{minipage}[t]{0.6\columnwidth} \begin{itemize}
\item Chattering reduced by the $sgn\left(\cdot\right)\rightarrow sat\left(\cdot\right)$ replacement
\item NN-SMC predictor compensated the current measurement delays
\item Parameter variations and external disturbances considered
\end{itemize} \end{minipage} & \begin{minipage}[t]{0.5\columnwidth} \begin{itemize}
\item No comparisons provided
\end{itemize} \end{minipage} \\ [12mm] \hline
&&&&& \\
\begin{minipage}[t]{0.1\columnwidth} \cite{yin2020recurrent}  \end{minipage} & \begin{minipage}[t]{0.12\columnwidth} MPE \end{minipage} &\begin{minipage}[t]{0.12\columnwidth} Partial-load  \end{minipage} &\begin{minipage}[t]{0.12\columnwidth} PMSG  \end{minipage} & \begin{minipage}[t]{0.6\columnwidth} \begin{itemize}
\item RBFNN compensated the external disturbances
\item RBFNN augmented with integral SMC
\end{itemize} \end{minipage} & \begin{minipage}[t]{0.5\columnwidth} \begin{itemize}
\item Comparison with only PI (no advance method)
\end{itemize} \end{minipage} \\ [12mm]
&&&&& \\
\rowcolor{lightgray}\begin{minipage}[t]{0.1\columnwidth} \cite{haq2020maximum}  \end{minipage} & \begin{minipage}[t]{0.12\columnwidth} MPE \end{minipage} &\begin{minipage}[t]{0.12\columnwidth} Partial-load  \end{minipage} &\begin{minipage}[t]{0.12\columnwidth} PMSG  \end{minipage} & \begin{minipage}[t]{0.6\columnwidth} \begin{itemize}
\item Chattering reduced using integral-type sliding surface
\item Feed-forward NN used to estimate the nonlinear drift terms
\end{itemize} \end{minipage} & \begin{minipage}[t]{0.5\columnwidth} \begin{itemize}
\item Not good chattering mitigation
\item No comparison with advanced methods
\end{itemize} \end{minipage} \\ [12mm] \hline
&&&&& \\
\begin{minipage}[t]{0.1\columnwidth} \cite{kenne2017new}  \end{minipage} & \begin{minipage}[t]{0.12\columnwidth} MPE \end{minipage} &\begin{minipage}[t]{0.12\columnwidth} Partial-load  \end{minipage} &\begin{minipage}[t]{0.12\columnwidth} SCIG  \end{minipage} & \begin{minipage}[t]{0.6\columnwidth} \begin{itemize}
\item Chattering reduced by augmentation of RBFNN with integral SMC
\item RBFNN compensated the external disturbances
\end{itemize} \end{minipage} & \begin{minipage}[t]{0.5\columnwidth} \begin{itemize}
\item Chattering problem not investigated
\end{itemize} \end{minipage} \\
[15mm]
\hline\hline
\end{tabular}}
\end{table*}

\section{Conclusions}
\label{sec:3.8}

This chapter provided a comprehensive review of the state-of-the-art studies on SMC-based control strategies for WECS control are provided. According to the literature, regardless of the well-known performance of conventional SMC in practical applications, it still exhibits some weaknesses in delivering the desired performance when it comes to WECS control, especially when the WT deals with various faults, parameter uncertainties, and external disturbances. In this regard, different modifications have been developed in the literature to mitigate the chattering problem, increase the convergence speed, achieve better tracking precision, and prevent unnecessarily large control signals from being produced in order to overcome the parametric uncertainties. Considering the reviewed studies, the advantages and drawbacks of SMC-based control strategies and their performance in dealing with various WECS control objectives have been investigated. Accordingly, it was revealed that advanced SMC-based controllers could be counted as highly reliable strategies capable of improving the power generation of WECS by mitigating the undesirable effects of faults, disturbances, and parameter variations.

\begin{table*}[th!]
\caption{Comparative study of the most-investigated modified SMCs for WECSs control.}
\centering
\label{tab:9}
\resizebox{\textwidth}{!}{
\begin{tabular}{l l l}
\hline\hline \\[-3mm]
\begin{minipage}[t]{0.4\columnwidth} Method \end{minipage} & \begin{minipage}[t]{0.8\columnwidth} Advantages (over approaches in the parenthesis)\end{minipage}	& \begin{minipage}[t]{0.7\columnwidth}Disadvantages \end{minipage}\\ \hline
\begin{minipage}[t]{0.4\columnwidth} Simple modified SMC (replacing $sgn\left(\cdot\right)$ with $sat\left(\cdot\right)$ or $tanh\left(\cdot\right)$, using exponential reaching law), Section \ref{sec:3.2}  \end{minipage} & \begin{minipage}[t]{0.8\columnwidth} \begin{itemize}
  \item Reduced chattering (SMC)
  \item Better control performance (SMC)
  \item Less steady-state error (SMC)
\end{itemize}  \end{minipage} & \begin{minipage}[t]{0.7\columnwidth} \begin{itemize}
  \item The chattering is reduced but not much
  \item Relatively large control signal
  \item Singularity problem in practical implementations
\end{itemize} \end{minipage} \\
&& \\ \hline
\rowcolor{lightgray}\begin{minipage}[t]{0.4\columnwidth} Adaptive SMC, Section \ref{sec:3.3} \end{minipage} & \begin{minipage}[t]{0.8\columnwidth} \begin{itemize}
  \item Reduced chattering (SMC)
  \item Better disturbance rejection (SMC)
  \item Less steady-state error (SMC)
\end{itemize} \end{minipage} & \begin{minipage}[t]{0.7\columnwidth} \begin{itemize}
  \item The chattering is reduced but not much
  \item Relatively large control signal (less than SMC)
\end{itemize} \end{minipage} \\ \hline
&& \\
\begin{minipage}[t]{0.4\columnwidth} Fractional SMC, Section \ref{sec:3.4}  \end{minipage} & \begin{minipage}[t]{0.8\columnwidth}\begin{itemize}
  \item Mitigated chattering (SMC, ASMC, SO-SMC, Terminal-SMC, Fuzzy-SMC, NN-SMC)
  \item Ensured finite-time convergence of the states
  \item More robustness against external disturbances, unmodelled dynamics, and parametric uncertainties (SMC, ASMC, SO-SMC)
  \item Faster convergence speed (SMC, ASMC, SO-SMC, Fuzzy-SMC, NN-SMC)
  \item More tunable parameters, leading to more precision
  \item Less steady-state error (SMC, ASMC, SO-SMC, Fuzzy-SMC, NN-SMC)
\end{itemize}  \end{minipage} & \begin{minipage}[t]{0.7\columnwidth} \begin{itemize}
\item Complex mathematical computation
\end{itemize} \end{minipage} \\
&& \\ \hline
\rowcolor{lightgray}\begin{minipage}[t]{0.4\columnwidth} Second-order SMC and Super-twisting SMC, Section \ref{sec:3.5.1} \end{minipage} & \begin{minipage}[t]{0.8\columnwidth} \begin{itemize}
  \item Reduced chattering (SMC, ASMC)
  \item Ensured finite-time convergence of the states
  \item More robustness against external disturbances and uncertainties (SMC, ASMC)
  \item Faster convergence speed (SMC, ASMC, Fuzzy-SMC, NN-SMC)
  \item Less steady-state error (SMC, ASMC)
\end{itemize} \end{minipage} & \begin{minipage}[t]{0.7\columnwidth} \begin{itemize}
  \item Complex mathematical computation (less than FO-SMC)
\end{itemize} \end{minipage} \\ \hline
&& \\
\begin{minipage}[t]{0.4\columnwidth} Terminal SMC and Nonsingular TSMC, Section \ref{sec:3.5.2} \end{minipage} & \begin{minipage}[t]{0.8\columnwidth} \begin{itemize}
  \item Mitigated chattering (SMC, ASMC, SO-SMC, Fuzzy-SMC, NN-SMC)
  \item Ensured finite-time convergence of the states
  \item More robustness against external disturbances, unmodelled dynamics, and parametric uncertainties (SMC, ASMC, SO-SMC)
  \item Faster convergence speed (SMC, ASMC, SO-SMC, Fractional-SMC, Fuzzy-SMC, NN-SMC)
  \item Less steady-state error (SMC, ASMC, SO-SMC, Fuzzy-SMC, NN-SMC)
  \item NTSMC can resolve the singularity problem (SMC, ASMC)
\end{itemize} \end{minipage} & \begin{minipage}[t]{0.7\columnwidth} \begin{itemize}
  \item Complex mathematical computation (less than FO-SMC)
\end{itemize} \end{minipage} \\
&& \\ \hline
\rowcolor{lightgray}\begin{minipage}[t]{0.4\columnwidth} Fuzzy SMC, Section \ref{sec:3.6}  \end{minipage} & \begin{minipage}[t]{0.8\columnwidth} \begin{itemize}
  \item Reduced chattering (SMC, ASMC, SO-SMC)
  \item More robustness against external disturbances, unmodelled dynamics, and parametric uncertainties (SMC, ASMC, SO-SMC)
  \item Adaptive control performance
  \item Faster convergence speed (SMC, ASMC)
  \item Less steady-state error (SMC, ASMC)
\end{itemize} \end{minipage} & \begin{minipage}[t]{0.7\columnwidth} \begin{itemize}
  \item Cannot ensure an ideal sliding motion
  \item The convergence speed is less than HO-SMC and FO-SMC approaches
  \item Requires more data
\end{itemize} \end{minipage} \\ \hline
&& \\
\begin{minipage}[t]{0.4\columnwidth} Neural Network SMC, Section \ref{sec:3.7} \end{minipage} & \begin{minipage}[t]{0.8\columnwidth} \begin{itemize}
  \item Reduced chattering (SMC, ASMC, SO-SMC)
  \item More robustness against external disturbances, unmodelled dynamics, and parametric uncertainties (SMC, ASMC, SO-SMC, HO-SMC)
  \item Adaptive control performance
  \item Faster convergence speed (SMC, ASMC)
  \item Less steady-state error (SMC, ASMC)
\end{itemize} \end{minipage} & \begin{minipage}[t]{0.7\columnwidth} \begin{itemize}
  \item The convergence speed is less than HO-SMC and FO-SMC approaches
  \item Requires more training data
\end{itemize}  \end{minipage} \\
[1ex]
\hline\hline
\end{tabular}}
\end{table*}

A comparative study of the advantages and disadvantages of the discussed modified SMCs for WECS control is illustrated in Table \ref{tab:9}. As the simplest common modification on SMC, some studies have substituted the $sgn\left(\cdot\right)$ function in the conventional SMC's sliding surface with $sat\left(\cdot\right)$ or $tanh\left(\cdot\right)$ to reduce the undesirable chattering effects. Few studies have employed reaching laws to achieve a faster convergence rate in finite time. Since the amplitude of chattering depends on the magnitude of control, decreasing the amplitude of the discontinuous control leads to chattering reduction. However, it can result in the undesirable system's transient response degradation. Hence, the reaching law approach provides a trade-off by decreasing the amplitude of the discontinuous control when the system states are near the sliding surface (to reduce the chattering) and increasing the amplitude when the system states are far from the sliding surface. On the other hand, fuzzy-based and neural networks-based SMCs have successfully overcome the conventional and simple modified SMCs with reduced chattering, effective disturbance rejection, and superior control performance. However, despite their chattering reduction and superior control performance compared to conventional SMCs, they cannot ensure an ideal sliding motion and require more data to deliver the desired performance, demanding further improvements in the control structure. In this respect, higher-order and fractional-order SMCs have successfully established themselves as desirable alternatives with outstanding performance and superiorities over other SMC-based approaches in terms of ensured fast finite-time convergence, mitigated chattering, robustness against external disturbances and model uncertainties, and higher control precision. However, despite the novelties presented in each method, more developments are yet to be done to achieve better control performances.

Though achieving the desired WECS control performance does not entirely depend on the developed controller structure. Proper tuning of the controller parameters and appropriate and realistic WECS modelling also plays a critical role in the performance validation of the developed control schemes. Furthermore, the controller parameters are often precomputed offline by trial-and-error, thus preventing them from consistently deliver the best performance, especially for WECS with varying operating conditions. Hence, optimal tuning procedures using optimization algorithms and adaptive soft computing-based methods such as fuzzy, neural networks, and learning approaches can result in more precise parameters, and lead to better control performances. Furthermore, barrier function (BF) -based SMC approaches have proven their remarkable performance in forcing the state trajectories to converge to a predefined neighborhood of zero in finite time without knowing the upper bound of disturbances \cite{mobayen2022barrier,laghrouche2021barrier}. The BF-based SMC approaches have been found to deliver effective approximations of the external disturbances, yielding a more stable closed-loop system. Accordingly, considering the WECS exposure to various disturbances, incorporating BFs with other modified SMCs can yield even better performances; an open research area that is still to be investigated through WECS control.

\chapter{Fault-Tolerant Optimal Pitch Control of Wind Turbines} 

\label{Chapter4} 

\section{Introduction}
\label{sec:4.1}

Renewable energy, especially WT systems, have gained considerable attention during the past decade due to the energy shortage and environmental issues \cite{ma2015optimal, li2017adaptive, madsen2020experimental}. Since WTs have contributed a considerable portion of the world's power production, demands on the development of reliable control approaches that guarantee the power generation and reduce the operational and maintenance costs have increased substantially. As mentioned in Chapter \ref{Chapter1}, in region III (of the four WT operational regions), the wind speed exceeds the rated value. In this region, pitch actuation is critical for limiting the power capture in high wind speed situations. Hence, pitch angle control strategies are used to control the pitch angle and keep the WT operating at its rated power. Numerous studies in the literature have dealt with the PAC problem in order to limit the aerodynamic power captured by the WT. For instance, in \cite{ren2016nonlinear}, the authors investigated the PAC based on a nonlinear PI controller together with a state and perturbation observer. In \cite{zhang2015load}, a PAC scheme consisting of the conventional PI and two resonant compensators was developed. Authors in \cite{van2015advanced} proposed an advanced PAC strategy based on FLC, while in \cite{venkaiah2020hydraulically,asgharnia2018performance} fuzzy PID and fractional-order fuzzy PID controllers were investigated to improve PAC performance.

The aforementioned studies consider the ideal situation, where the variations in the actuator dynamics and sensor faults are assumed to be negligible. However, in real operations, WTs are prone to different sets of sensor, actuator, and system faults, which degrade the WT stability and power production performance and impose maintenance costs. Accordingly, various fault-tolerant pitch control (FTPC) approaches have been investigated to compensate the fault effects in WT systems and achieve a robust system performance. To this end, in \cite{badihi2020fault}, a FTPC scheme is investigated based on an adaptive PI controller augmented with a fault detection strategy, \textcolor{black}{while authors in \cite{habibi2018adaptive} developed an adaptive PID. Although compared to other classical approaches, the controller demonstrated more acceptable performance in terms of handling non-linear dynamics, further improvements are required to mitigate the fault effects when unexpected actuator faults and wind speed fluctuations happen.} Authors in \cite{lan2018fault} incorporated a conventional PI along with a sliding mode observer to compensate for faults. According to the authors, the proposed control strategy is capable of recovering the nominal pitch actuation; however, only the case of low pressure actuator fault occurrence is considered in the study, where the controller's performance in more harsh situations is yet to be investigated. A Kalman filter was used to assess the blade pitch angle of the WT, together with a PI to deal with the FTPC problem in \cite{cho2018model}. 

During the past decade, the concept and applications of fractional calculus have attracted growing interests of scholars in various engineering fields \cite{mousavi2021robust,azarmi2015analytical,mousavi2020enhanced}. FO derivatives induce an infinite series, presenting a long memory of the past \cite{mousavi2018fractional}, whereas integer-order derivatives are local operators that imply a finite number of terms. Since wind energy and direction have chaotic behavior, the pitch actuation system needs to provide an immediate precise response, which in practice, leads to some slight errors. Thus, it is desirable to preserve all the past effective pitch angles, representing the memory of the pitch system characteristics. Therefore, WT systems are quite suitable processes to be used with FO controllers. Similar to PID controllers, fractional-order PID (FOPID) controllers have been extensively implemented in many applications \cite{angel2018fractional,ren2018optimal,naidu2020power}. FOPID controllers not only inherit the advantages of conventional PIDs such as simple structure and strong robustness but also expand the control range by adding more flexibility to the control system \cite{azarmi2015analytical, angel2018fractional}; however, the existence of two more tunable parameters has made the design problem more complicated. A variety of tuning rules and design methods have been investigated in the literature \cite{azarmi2015analytical,amoura2016closed}, while most of them suffer from the unavailability of the exact dynamic model in Laplace domain representation, especially for complex nonlinear systems \cite{angel2018fractional}. As an alternative solution, evolutionary algorithms (EAs) have been playing a crucial role in determining FOPID parameters \cite{mousavi2015memetic,lee2010fractional}. 
In this work, a fractional-calculus based extended memory of pitch angles is augmented with the controller to enhance its performance for adjusting the desired pitch angle of WT blades and improving the power generation of the WT in the presence of faults.

Metaheuristic optimization algorithms have been extensively employed to optimally tune the controllers' parameters for WT control systems \cite{xiong2020output,bounar2019pso,kumar2021power,benamor2019new}. 
\textcolor{black}{Firefly algorithm (FA), as one of the recently introduced EAs \cite{yang2009firefly}, has been effectively solved many optimization problems in recent years \cite{lv2018firefly,nayak2021hyper,fister2013comprehensive}.} Authors in \cite{shan2021distributed} developed a distributed parallel FA for parameter tuning of a variable pitch WT, where according to the authors, the proposed control scheme reduced the power fluctuation and improved the safety and reliability of WT. \textcolor{black}{FA has certain superiorities over some of the most used EAs. To name a few, a) FA is able to tune its scaling parameter and hence adapt to problem landscape,  b) FA can be counted as a generalization of PSO, differential evolution (DE), and simulated annealing (SA) \cite{yang2009firefly}}, which takes all the three algorithms' advantages, c) unlike PSO, FA does not use velocities, and thus, can avoid the drawbacks associated with the velocity initialization \cite{fister2013comprehensive}, d) since the fireflies aggregate more closely around each optimum, it has shown superior performance over genetic algorithm (GA) that jumps around randomly, and e) since local attraction is more substantial than long-distance attraction, FA can automatically subdivide its population into subgroups, which makes it a suitable method to efficiently \textcolor{black}{tackle nonlinear and multimodal problems \cite{yang2009firefly}}. However, the success of the search procedure in FA depends on a suitable trade-off between global search (exploration) and local search (exploitation) abilities, which corresponds to the attractiveness formulation and variation of light intensity. Both factors allow significant scope for the algorithm's improvements. Thus, investigations have been performed to enhance its performance taking advantage of other search methodologies in order to achieve even better performance \cite{wang2019novel,pazhoohesh2017optimal,altabeeb2019improved}. In this study, an enhanced FA is developed that explores the search space with a well-connected weighted parallel strategy that enriches the population diversity and increases the information exchange between the fireflies. The proposed strategy not only \textcolor{black}{expedites} the convergence speed of FA, but also reduces the possibility of getting trapped in local optima. A dynamic switching coefficient is also implemented that lets the algorithm perform more accurate exploitations. The switching coefficient makes the algorithm work more precisely and finds more reasonable solutions.

Even though various advanced control strategies have been developed for WTs such as sliding mode control, model predictive control, FLC-based control, etc. \cite{jain2018fault,xu2019event,mousavi2021maximum}, \textcolor{black}{PI/PID method is still the preferred approach} in real-world applications with some improvements \cite{lan2018fault,aissaoui2013fuzzy} due to its simplicity. In this regard, many researchers have utilized a simple PI controller through the pitch angle regulation process in region III \cite{habibi2018adaptive,lan2018fault,cho2018model}, and some studies have focused on applying rotor speed limitations \cite{tang2018active,luo2007strategies}. However, due to the existence of only two tuning parameters in PI controllers, the strategy of utilizing a simple PI does not guarantee the minimum steady-state error, especially when faults occur in the system  \cite{badihi2020fault,badihi2014wind}. Besides, although PI/PID control has attracted a wide range of attention in WT control systems, a significant limitation still remains; how to determine the controller's gains. Accordingly, despite the existence of various methods for tuning PID gains \cite{aissaoui2013fuzzy, bianchi2012gain, habibi2018adaptive}, there is no specific way to determine such gains for WT control, as they need be chosen by the designer which is neither a straightforward task nor optimal. This has motivated the attempt to construct an optimal controller for WT control. Hence, to effectively maintain constant power generation, an optimized FOPID controller augmented with extended memory of the pitch angles is developed to regulate the pitch angle and prevent the WT from over-speeding.

In this work, several performance evaluations of the proposed dynamic weighted parallel FA (DWPFA) in comparison to other conventional and modified EAs are investigated through solving well-defined 2017 IEEE congress on evolutionary computation (CEC2017) mathematical benchmark functions \cite{awad2017ensemble}. Non-parametric Friedman and Friedman Aligned statistical tests are also provided to statistically analyze the quality of the solution \cite{gui2019multi}. The proposed memory extension of pitch angles is incorporated in the FOPID controller (called EM-FOPID) to generate the desired WT pitch angle reference in the presence of sensor, actuator, and system faults, where the controller parameters are tuned using the proposed DWPFA algorithm. \textcolor{black}{This study contributes the literature as follows:}
\begin{enumerate}
\item Using the concept of fractional calculus, a fault-tolerant pitch control strategy with extended memory of pitch angles (EM-FOPID) is developed that improves the power generation of the WT, where the controller parameters are tuned using the proposed DWPFA.
\item A modified FA (DWPFA) is proposed that increases the convergence speed of the conventional FA, reduces the possibility of getting trapped in local optima, and increases the exploitation accuracy.
\item Comparative simulations are provided that reveal the remarkable performance of proposed optimal EM-FOPID with respect to optimal FOPID and conventional PI.
\end{enumerate}

The chapter is organized as follows. The problem statement, including the WT modelling, control objective, and fault scenarios, are described in Section \ref{sec:4.2}. The proposed DWPFA algorithm is explored in Section \ref{sec:4.3}. Section \ref{sec:4.4} presents the proposed EM-FOPID control strategy. The performance of DWPFA is evaluated, and the EM-FOPID design is verified in Section \ref{sec:4.5}. Finally, Section \ref{sec:4.6} concludes the chapter.

\section{Problem Statement}
\label{sec:4.2}
In this section, first, the model of a three-bladed variable speed WT is investigated, and then, the control objective of the study is described. The last subsection is devoted to introducing the different sensor, actuator, and system fault scenarios with various levels of severity, which are considered to occur to the WT.

\subsection{Wind Turbine Modelling}
\label{sec:4.2.1}
In this work, a 4.8 MW three-bladed variable speed horizontal axis wind turbine (HAWT) is considered \cite{odgaard2013fault}. The system consists of three main units: \textcolor{black}{the generator-converter, drivetrain, and the blade and pitch model}. The aerodynamic and pitch system models are combined to form the blade and pitch model, where the former denotes the \textcolor{black}{transformation} of wind power to rotational energy, and the latter rotates the blades around their longitude axis. The drivetrain provides the required rotational speed of the generator. To convert the mechanical wind energy to electrical energy, the coupled converter-generator system is utilized. Owing to the controlled pitching of the blades changing the aerodynamic efficiency of the WT, aerodynamic wind energy is transformed into effective mechanical energy. Thus, the captured power significantly depends on the available wind energy and the geometry of the blade aerofoils and their pitch, which thus affect the responding capability of the machine to wind fluctuations. Considering $\omega _{r}=\eta _{g}\omega _{g} $, where $\eta _{g} $ represents the generator's efficiency, and $\omega _{g} $ and $\omega _{r} $ define the generator and rotor rotational speed, respectively, the optimal rotor speed $\omega _{r,opt} $ can be achieved as $\omega _{r,opt} ={\lambda _{opt} \upsilon _{w} }/{R}$.

The pitch actuator model consists of a hydraulic and a mechanical machinery, and \textcolor{black}{can be expressed as the following second-order system }\cite{odgaard2013fault}:
\begin{equation}
\frac{\beta \left(s\right)}{\beta _{r} \left(s\right)} =\frac{\omega _{n}^{2} }{s^{2} +2\xi \omega _{n} s+\omega _{n}^{2} },
\label{GrindEQ__4-6_}
\end{equation}
\textcolor{black}{where $\beta _{r} $ denotes the command signal for the pitch angle being produced by the WT controller, $\beta $ stands for the actual pitch angle produced by the actuator, $\xi $ denotes the damping factor, and $\omega _{n} $ represents the natural frequency in [\si{rad/s}].}
\begin{rem}
\label{rem:4-2}
\textcolor{black}{In a pitch actuator, the pitch actuator constraints play a critical role. In this regard, the pitch rate constraints are considered between $-8^{\circ/s} $ and $8^{\circ/s} $ so that they are operationally possible and at which the rotor will not stall, while the operational range of the pitch angle is considered as $-3^{\circ} \le \beta \le 90^{\circ} $ \cite{azizi2019fault}.}
\end{rem}

The mechanical part of WT, namely the drivetrain, is a rather complex system which consists of a gearbox, low-speed shaft, and high-speed shaft that converts the low-speed torque of the rotor-side shaft to a high-speed torque of the generator-side shaft. Since the WT is coupled to the generator through a gearbox, the generator torque $T_{G} $ can regulate the rotor speed. The rotor inertia $J_{R} $ [\si{kg.m^{2}}] is driven at speed $\omega _{r} $ by the aerodynamic torque $T_{a} $ [\si{N.m}], and the generator inertia $J_{G} $ [\si{kg.m^{2}}] is driven by high-speed torque of the generator-side shaft at speed $\omega _{g} $ and is braked by the generator torque $T_{G} $. \textcolor{black}{This study considers the following dynamic model of a two-mass drivetrain model \cite{tang2018active}:}
\begin{equation}
\left[\begin{array}{c} {\dot{\omega }_{r} } \\ {\dot{\omega }_{g} } \\ {\dot{\theta }_{\Delta } } \end{array}\right]=\left[\begin{array}{ccc} {\vartheta _{11} } & {\vartheta _{12} } & {\vartheta _{13} } \\ {\vartheta _{21} } & {\vartheta _{22} } & {\vartheta _{23} } \\ {\vartheta _{31} } & {\vartheta _{32} } & {\vartheta _{33} } \end{array}\right]\left[\begin{array}{c} {\omega _{r} } \\ {\omega _{g} } \\ {\theta _{\Delta } } \end{array}\right]+\left[\begin{array}{c} {\psi _{1} } \\ {0} \\ {0} \end{array}\right]T_{a} +\left[\begin{array}{c} {0} \\ {\psi _{2} } \\ {0} \end{array}\right]T_{G},
\label{GrindEQ__4-7_}
\end{equation}
where $\vartheta _{11} =-{D_{LS} +B_{r} }/{J_{R} }$, $\vartheta _{12} ={D_{LS} }/{J_{R} N_{GB} }$, $\vartheta _{13} =-{K_{LS} }/{J_{R} }$, $\vartheta _{21} ={\eta _{dt} D_{LS} }/{J_{G} N_{GB} }$, $\vartheta _{22} =-{\eta _{dt} D_{LS} }/{J_{G} N_{GB}^{2} } -{B_{G} }/{J_{G} }$, $\vartheta _{23} ={\eta _{dt} K_{LS} }/{J_{G} N_{GB} }$, $\vartheta _{31} =1$, $\vartheta _{32} =-{1}/{N_{GB}} $, $\vartheta _{33} =0$, $\psi _{1} ={1}/{J_{R}}$, $\psi _{2} =-{1}/{J_{G} }$. \textcolor{black}{$K_{LS} $ [\si{Nm/rad}] and $D_{LS}$ [\si{Nms/rad}] denote the low-speed shaft stiffness and damping coefficient, respectively. $B_{G}$ stands for the viscous friction of the high-speed shaft in [\si{Nms/rad}], $N_{GB} $ denotes the gearbox ratio, and $\eta _{dt} $ and $\theta _{\Delta } $ represent the drivetrain efficiency and torsion angle, respectively.}

The generator is responsible for converting the shaft kinetic energy into electrical energy. Since compared to the WT dynamics, the dynamics of the electrical system are noticeably faster, the following first-order model with fast dynamics can be used to model the generator and converter dynamics \cite{odgaard2013fault}.
\begin{equation}
\frac{T_{G} \left(s\right)}{T_{G,ref} \left(s\right)} =\frac{\alpha _{gc} }{s+\alpha _{gc} },
\label{GrindEQ__4-9_}
\end{equation}
where $T_{G,ref} $ stands for the torque reference to the generator, and $\alpha _{gc}={1}/{\tau_{gc}}$ denotes the generator and converter unit coefficient, with $\tau_{gc}$ representing the time constant. The generated power by the generator can be achieved by $P_{g} =\eta _{g} T_{G} \omega _{g}$.

\subsection{Control Objectives}
\label{sec:4.2.2}
The baseline control system in WTs consists of two individual control sections to regulate the generator torque and the pitch angle of the blades, where the overall performance of the WT directly depends on the performance of both controllers.

The wind energy is not always constant, and it holds different flow profiles (laminar, turbulent, etc.), and gusts, which can result in deceleration of the rotor speed to a critical speed, which brings instability and damage to the WT. Besides, as the power generated depends on the generator torque, it is evident from \eqref{GrindEQ__4-9_} that any change (due to faults or uncertainties) in the generator would directly affect the rated power. Thus, generator faults can impose severe problems in tracking the maximum power point and rated power in regions II and III, respectively. In this regard, in region II, the power reference $P_{ref} $ tracking is switched to maximum power point tracking (MPPT) to stabilize the WT while maximizing the power capture \cite{tang2018active}. The switch works with respect to the wind speed, as in Figure \ref{fig:3-4}, the dotted arrow from the Wind Profile block to the switch block shows its dependency on wind speed. Accordingly, reference torque to the converter can be represented as $T_{G,ref} =K_{opt} \omega _{r}^{2}$, where $K_{opt} ={1}/{2} (\rho \pi R^{5} {\mathcal C_{P,\max }}/{\lambda _{opt}^{3} })$.
Figure \ref{fig:3-4} depicts the block diagram of the proposed WT control scheme, which comprises three main blocks: (a) the WT model which is prone to actuator, system, and sensor faults, (b) the proposed DWPFA-based pitch control block, and (c) the generator torque assignment block, which includes the power reference tracking (${P_{ref}/\omega _{r} }$) and MPPT ($K_{opt} \omega _{r}^{2} $).
\begin{figure}
\centering
\includegraphics[width=4.9 in]{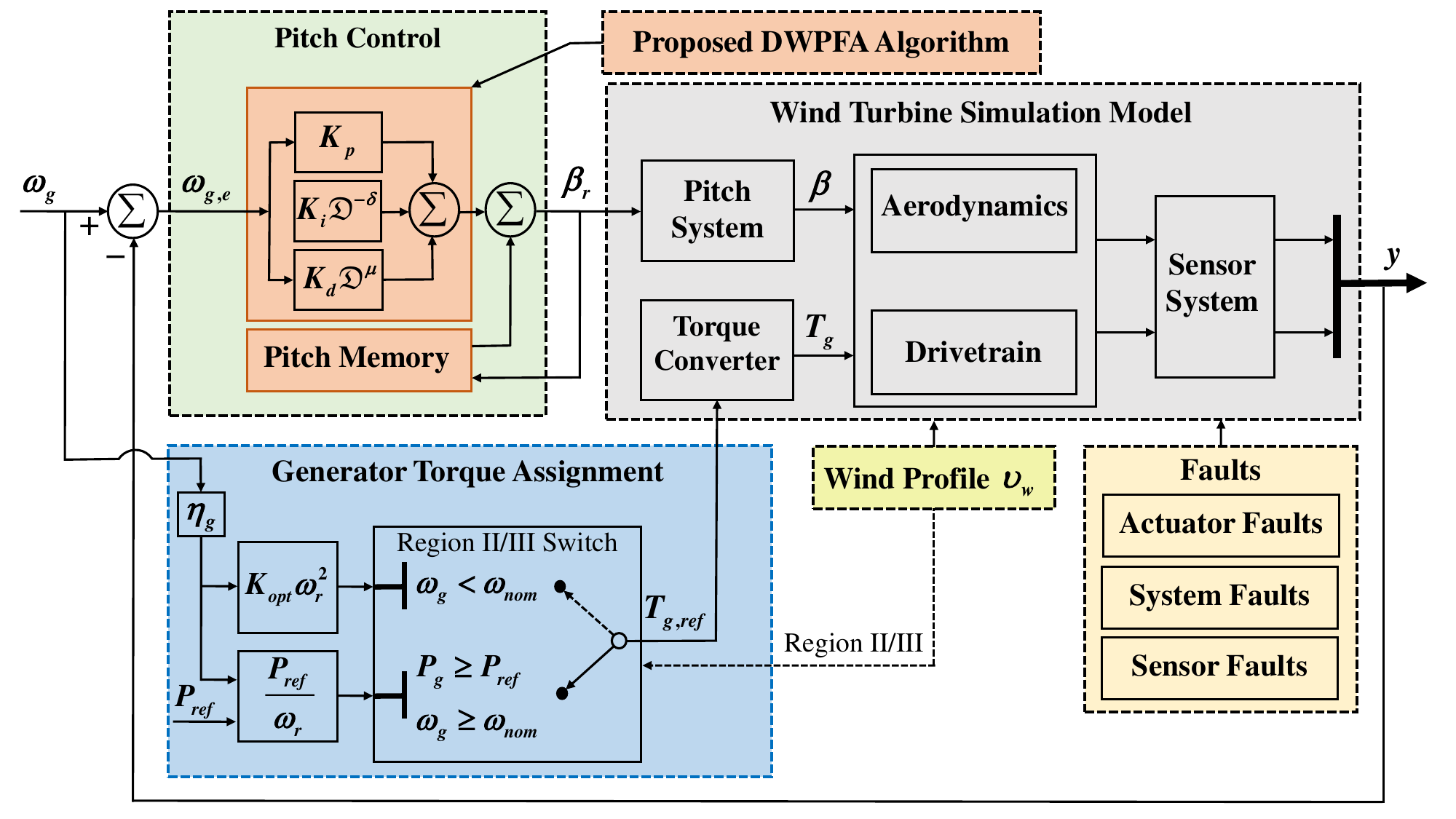}
\caption{Proposed control scheme diagram for a wind turbine system.}
\label{fig:3-4}
\end{figure}

\subsection{Fault Scenarios}
\label{sec:4.2.3}
Faults occurring in a WT can affect the system characteristics or lead to inoperable conditions. Wind turbine faults may be classified in terms of those that are highly serious, where the WT needs to be shut-down in order to prevent irreparable damage, and those faults that can be accommodated by suitable controllers, leading the WT to stay operational with some possible performance detriment. The faults modelled in this work include sensor (F1-F4), actuator (F5, F6), and system (F7) faults with various levels of severity, as summarized in Table \ref{tab:4-1}; where each one can cause performance degradation or slight damage to the WT. \textcolor{black}{The development time for F1-F5 is considered medium (operational malfunctioning that can happen during a few hours), while it is slow and very slow for F6 and F7 (malfunctions that can happen during months and/or years of operation), respectively.}

\begin{table}[!t]
\caption{Sensor, actuator, and system fault scenarios considered.}
\centering
\label{tab:4-1}
\scalebox{0.8}{
\begin{tabular}{l l l l l}
\hline\hline \\[-3mm]
\multicolumn{1}{c}{Fault} & \multicolumn{1}{c}{Faulted (Occurrence time (s)) } & \multicolumn{1}{c}{Severity}  \\[1ex] \hline
F1 & $\beta _{1,m1} =5^\circ$ (2000-2100), $\beta _{1,m1} =5^\circ$ (2700-2900)& low \\
F2  & $\beta _{2,m2} =1.2^\circ$ (2400-2500) & low  \\
F3  & $\beta _{3,m2} =5^\circ$ (2600-2700) & low  \\
F4   & $\omega _{r,m2} =1.1$ (3805-4400), $\omega _{g,m1} =0.9$ (3805-4400) & low \\
F5  & Hydraulic pressure drop (2900-3000) & high \\
F6  & Air content increment in the oil (3500-3600) & medium \\
F7  & Friction changes in the drivetrain (4100-4300) & medium \\ [1ex]
\hline\hline
\end{tabular}}
\end{table}

Sensor faults mainly originate from mechanical or electrical faults in the sensors, due to drift, noise, and external factors such as lightning, heavy rain, moisture, storms, and corrosion; and also misalignment of one or more blades at the installation step or blade imbalancement during operation \cite{habibi2019reliability}. Considering the fact that the pitch position measurements act as a reference for the internal pitch system controller, sensor faults can negatively affect the pitch positions if the control system fails to handle them properly, which leads to performance degradation of WT \cite{cho2018model}. Additionally, since the generator and rotor speed measurements are carried out utilizing encoders, and due to possible malfunctions of the electrical components of the encoders, they can be faulty as well. The faults can be in the form of a fixed value that prevents the encoder from being updated with new values, or a changed gain factor on the measurements which causes the encoder to read more marks on the rotating part than are actually present \cite{habibi2019reliability}.

At a basic level, in a WT system, \textcolor{black}{faults can occur on the converter and the pitch actuator system}. Faults in the pitch actuator cause changes in the dynamics due to three factors; a hydraulic leakage, a drop pressure in the pump wear, or a high air content in the hydraulic oil, where the latter may happen in various levels due to compressible nature of the air \cite{ruiz2018wind}. The source of converter faults is either in changed dynamics of the converter arising from an internal fault in the converter's electronic components, or an offset in the converter torque estimation, which is more severe. The converter controller can deal with the faults in the electronic components, and since the torque balance in the WT power train is changed by torque offset, it is possible to detect and accommodate them \cite{odgaard2013fault}. System faults result in changes in the dynamic of parts of the system, which mainly happen in the drivetrain. Although compared to the system dynamics, the drivetrain friction coefficient changes more slowly with respect to time, it could be detected by observing the changes in the frequency spectrum of the vibration measurements. In this work, this fault is considered as a small change of the friction coefficient.

\subsection{Fault Injection}
\label{sec:4.2.4}
This section briefly explains the type of changes happening in the fault-free model as each fault occurs. According to Table \ref{tab:4-1}, faults F1, F2, and F3 correspond to fixed values on $\beta _{1,m1}$, $\beta _{2,m2}$, and $\beta _{3,m2}$, respectively. Each fault occurs in a certain time interval, as shown in Table \ref{tab:4-1}. The fault F4 happens in the time interval of 3805-4400 seconds by changing the gain factors on $\omega _{r,m2}$ and $\omega _{g,m1}$ to 1.1 and 0.9, respectively. As two of the main pitch actuator faults, the hydraulic pressure drop (F5) and high air content in the oil (F6) are considered in this work. The effects of these faults are reflected in the damping ratio and natural frequency of the pitch system, where each one influences the system dynamics differently. \textcolor{black}{A drop in the hydraulic pressure changes $\omega _{n}$ and $\xi$ from their nominal values $\omega _{n,0}$ and $\xi_0$ to their low pressure values $\omega _{n,f}$ and $\xi_f$, which influences the pitch system dynamics}. Under this gradual low-pressure fault, $\omega _{n}^{2}$ and $\xi\omega _{n}$ in (\ref{GrindEQ__4-6_}) can be modelled as follows \cite{habibi2018adaptive}:

\begin{subequations}
\label{GrindEQ__4-FF1_}
\begin{align}
\omega _{n}^{2}&=\omega _{n,0}^{2}+f. \left(\omega _{n,f}^{2}-\omega _{n,0}^{2}\right), \\
\xi\omega _{n}&=\xi_0\omega_{n,0}+f.\left(\xi_f\omega_{n,f}-\xi_0\omega_{n,0}\right),
\end{align}
\end{subequations}
\textcolor{black}{where $f\in\left[0,1\right]$ represents the fault indicator at which $f=0$ and $f=1$ correspond to the normal pressure and low pressure up to 50\% pressure drop, respectively}.

Changing the fault indicator corresponds to changes in the natural frequency and the damping ratio, where Table \ref{tab:4-FF2} presents the effects of their changes on the hydraulic pressure drop (F5) and the air content in the oil (F6). For a better demonstration of hydraulic pressure drop, the fault indicator is changed gradually, which corresponds to different values for $\omega_n$ and $\xi$, where in each step, the percentage of change in hydraulic pressure drop is given. Besides, $\omega_n=5.73$ and $\xi=0.45$ correspond to the maximum percentage of 15\% change for the air content in the oil occurring during the time period of 3500-3600 seconds. Another considered fault is the friction changes in the drivetrain (F7) which will be investigated in Section \ref{sec:4.5} with different levels of severity with 5\%, 10\%, 50\%, and 100\% increase in the coefficient.

\begin{table}[!ht]
\caption{Different faults effect on the pitch system dynamics.}
\centering
\label{tab:4-FF2}
\scalebox{0.8}{
\begin{tabular}{l c c c c}
\hline\hline \\[-3mm]
\multicolumn{1}{c}{Faults} & \multicolumn{1}{c}{Fault indicator} & \multicolumn{1}{c}{$\omega_n$ (change \%)} & \multicolumn{1}{c}{$\xi$ (change \%)} & \multicolumn{1}{c}{Fault (\%)} \\[1ex] \hline
Fault free & 0.0 & 11.1100 & 0.600 & 0.0 \% \\ \hline
Hydraulic pressure drop (HPD) & 0.1 & 10.5952 (-4.63 \%) & 0.5953 (-0.78 \%) & 5 \% \\
 & 0.2 & 10.0541 (-9.50 \%) & 0.5916 (-1.40 \%) & 10 \% \\
 & 0.3 & 9.4822 (-14.65 \%) & 0.5895 (-1.75 \%) & 15 \% \\
 & 0.4 & 8.8734 (-20.13 \%) & 0.5895 (-1.75 \%) & 20 \% \\
 & 0.5 & 8.2197 (-26.01 \%) & 0.5927 (-1.21 \%) & 25 \% \\
 & 0.6 & 7.5094 (-32.40 \%) & 0.6010 (+0.16 \%) & 30 \% \\
 & 0.7 & 6.7244 (-39.47 \%) & 0.6178 (+2.96 \%) & 35 \% \\
 & 0.8 & 5.8347 (-47.48 \%) & 0.6505 (+8.41 \%) & 40 \% \\
 & 0.9 & 4.7823 (-56.95 \%) & 0.7187 (+19.78 \%) & 45 \% \\
 & 1.0 & 3.4200 (-69.21 \%) & 0.9000 (+50.00 \%) & 50 \% \\ \hline
High air content in the oil (HAC) & -- & 5.7300 (-48.42 \%) & 0.4500 (-25.00 \%) & 15 \%  \\ [1ex]
\hline\hline
\end{tabular}}
\end{table}

Figure \ref{fig:FF3-4} depicts the step responses of the pitch system to different fault situations. Accordingly, as the hydraulic pressure drops, it slows the pitch actuator dynamics, resulting in the degradation of pitching performance.

\begin{figure}
\centering
\includegraphics[width=3.5 in]{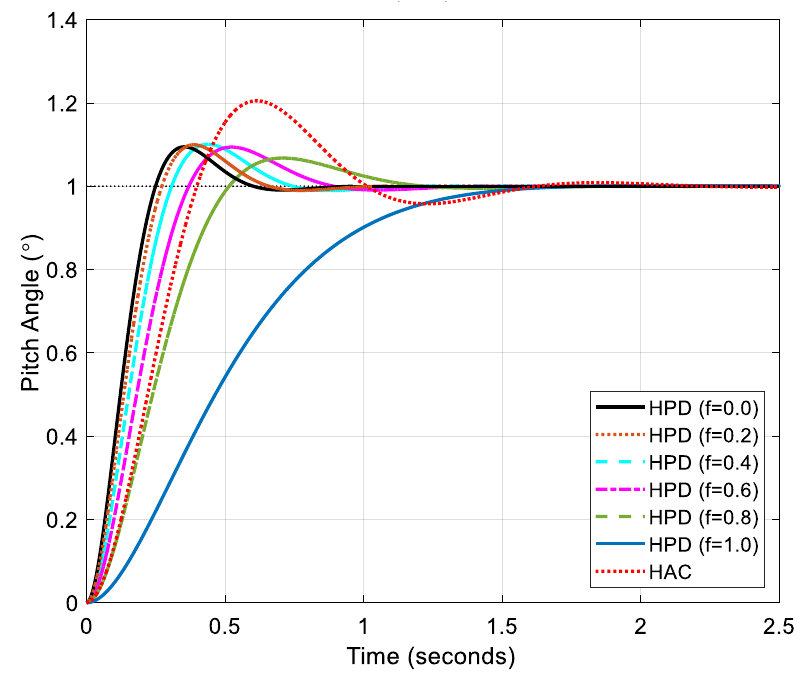}
\caption{Step response of the pitch system under various fault conditions.}
\label{fig:FF3-4}
\end{figure}

\begin{rem}
\label{rem:4-281}
The effects of pitch actuator faults on the pitch system are reflected in $\omega_n$ and $\xi$ from their nominal values to faulty values by changing the fault indicator $f$, as expressed in \eqref{GrindEQ__4-FF1_}. Accordingly, fault-free and faulty situations are being considered in the design process of the developed controller {\eqref{GrindEQ__4-25_}}.
\end{rem}

\section{Proposed Dynamic Weighted Parallel Firefly Algorithm}
\label{sec:4.3}
In this section, the conventional FA is first introduced, and then, the proposed dynamic weighted parallel FA will be investigated in detail.
\subsection{Basic Principles of FA}
\label{sec:4.3.1}
FA is an optimization algorithm that mimics the social behavior of fireflies and their flashing light patterns \cite{yang2009firefly}. The swarm of fireflies is randomly located in the search space, where each one represents a possible solution to the problem. Fireflies with better solutions acquire more light intensity, while other swarm members update their positions by moving toward brighter and more attractive fireflies. For simplicity of development, the FA utilizes the three rules: (a) the fireflies are unisex; thus, they only get drawn to brighter ones, (b) attractiveness corresponds to brightness. The less bright ones always move toward the brighter fireflies, and if no brighter one is left, it moves randomly, and (c) the analytical form of the problem affects the brightness of a firefly, where, brightness is proportional to the value of the objective function.

In the firefly algorithm, attractiveness is proportional to the light intensity seen by adjoining fireflies. Accordingly, since decreasing the distance from the source leads to increment of light intensity, attractiveness increases as the distance between any two fireflies decreases. \textcolor{black}{Two critical issues to be considered in FA are the attractiveness formulization and the light intensity variation. The light intensity $\mathfrak I=\mathfrak I_{0} e^{-\gamma_f \mathfrak r^{2} }$ alters with the distance $\mathfrak r$}, where $\gamma_f $ and $\mathfrak I_{0} $ represent the light absorption coefficient and the initial light intensity, respectively. The firefly's attractiveness $\chi_f$ is defined as $\chi_{f,\mathfrak r} =\chi_{f,0} e^{-\gamma_f \mathfrak r^{2} }$, where $\chi _{f,0} $ denotes the attractiveness at $\mathfrak r=0$. \textcolor{black}{The Euclidian distance $\mathfrak r_{ij} =\left\|\mathfrak X_{i} -\mathfrak X_{j} \right\| _{2} $ or $\ell _{2} $ norm can express the distance $r_{ij} $ between any two fireflies $i$ and $j$ at $\mathfrak X_{i} $ and $\mathfrak X_{j} $}.

\begin{rem}
\label{rem:4-233}
Considering the attractiveness $\chi_{f,\mathfrak r} =\chi_{f,0} e^{-\gamma_f \mathfrak r^{2} }$, it can be seen that there are two limiting cases with firefly algorithm related to small and large values of $\gamma_f$ (i.e. $\gamma_f\rightarrow0$ and $\gamma_f\rightarrow \infty$). \textcolor{black}{When $\gamma_f$ tends to zero, the brightness and attractiveness become constant; in other words, a firefly is visible to all other fireflies.} In contrast, when $\gamma_f$ is very large, the attractiveness considerably decreases, and the fireflies are short-sighted or equivalently fly in a dense foggy environment. \textcolor{black}{Large values of $\gamma_f$ imply an almost randomly movement of fireflies, that refers to a random search procedure.} As a result, the FA usually performs between these two cases, where the attractiveness coefficient plays a critical role in fireflies' movements.
\end{rem}

According to Remark \ref{rem:4-233}, firefly $i$ moves toward a more attractive (brighter) one as follows \cite{yang2009firefly},
\begin{equation}
\mathfrak X_{i} \left(t+1\right)=\mathfrak X_{i} \left(t\right)+\chi_{f,\mathfrak r} \left(\mathfrak X_{j}\left(t\right) -\mathfrak X_{i}\left(t\right) \right)+\eta_f \cdot (\psi_f -0.5),
\label{GrindEQ__4-15_}
\end{equation}
\textcolor{black}{where randomization is performed with $\eta_f $ and $\psi_f $ being random numbers within the interval [0,1].}

It is worth noting that, the case $\gamma_f\rightarrow0$ corresponds to a special case of PSO with $\chi_{f,0}\approx2$ \cite{yang2009firefly}. However, although according to \cite{yang2009firefly}, $\chi_{f,0}=1$ is considered in the standard FA for most cases, other ranges of $\chi_{f,0}$ are reported and used in the literature, where $0<\chi_{f,0}<2$ has found to deliver the best performance \cite{mousavi2018fractional,lv2018firefly,nayak2021hyper,fister2013comprehensive,wang2019novel,shan2021distributed,trachanatzi2020firefly}. Hence, initializing $\chi_{f,0}$ within the interval $(0,2)$ (corresponding to $\left|1-\chi _{f,\mathfrak r} \right|<1$) is a reasonable choice to achieve better performances from the firefly algorithm. However, other values of $\chi_{f,0}\ge 2$ can also be set for the algorithm.

\subsection{Proposed Dynamic Weighted Parallel FA}
\label{sec:4.3.2}
Although the conventional FA has its advantages, it also has some shortcomings, such as premature convergence leading to being trapped in local minima and lack of a suitable trade-off between exploitation and exploration abilities \cite{fister2013comprehensive}. Besides, in FA, the brightest member always moves randomly in the search area which tends to decrease its intensity, especially at high dimensions. In this regard, many studies have incorporated external global/local search procedures (algorithms) into the FA, in order to enhance its search abilities and performance \cite{niknam2012reserve,pazhoohesh2017optimal,altabeeb2019improved}.

\textcolor{black}{This study proposes a modified version of FA} that explores the search space with a well-connected weighted parallel strategy, which effectively accelerates the convergence speed of the conventional FA while reducing the possibility of becoming trapped in local optima. In addition, taking advantage of a dynamic switching coefficient, as the damping coefficient decreases, the switching coefficient increases to let the algorithm to perform more accurate exploitations.

\sloppy A population of randomly generated fireflies is firstly initialized, where each individual stands for a possible solution. The damping coefficient $R_f=2\left(1-n_{it} /n_{it,\max } \right)$ is incorporated into \eqref{GrindEQ__4-15_} to improve the movement pattern of individuals in the exploration process as they move towards the brightest one, as follows:
\begin{equation}
\mathfrak X_{i} \left(t+1\right)=\mathfrak X_{i} \left(t\right)+R_f\chi _{f,\mathfrak r} \left(\mathfrak X_{j}\left(t\right) -\mathfrak X_{i}\left(t\right) \right)+\eta_f \cdot (\psi_f -0.5),
\label{GrindEQ__4-16_}
\end{equation}
where $n_{it} $ and $n_{it,\max } $ denote the number of the current iteration and max iteration, respectively, and $R_f$ linearly decreases from two to zero over the number of iterations.

The objective value of each individual is evaluated to determine the best solution. In the exploitation process, the population can be divided into $n$ number of semi-independent subgroups with \textcolor{black}{equal} number of members, where within each subgroup, the individuals are updated in parallel aiming at finding better solutions. The weighting coefficient associated with each subgroup can be calculated by $w_{sg}=\frac{2.5\times n_{sg}}{n_{sg}+1}$, where $n_{sg}$ denotes the number of predefined groups.

\begin{rem}
\label{rem:4-344}
It should be noted that for each individual, the weighting coefficient associated with its current group should be twice the weighting coefficient associated with other groups. In addition, the total number of members in the populations ($n_{p}$) should be divisible by the defined number of subgroups ($n_{sg}$); thereby, all subgroups would have the same number of fireflies through the individual assignment procedure.
\end{rem}

\begin{rem}
\label{rem:4-345}
In this work, the subgroups members assignment procedure presented by \eqref{GrindEQ__4-17_} and Figure \ref{fig:4-4}, as well as the position update \eqref{GrindEQ__4-19_} are presented for the case $n_{sg}=4$. The procedure for other values of $n_{sg}$ can be carried out analogously.
\end{rem}

The population is divided into four semi-independent subgroups $\{A,B,C,D\}$, where the best four individuals of the exploration process are assigned as subgroup leaders (see Figure \ref{fig:4-4}), and other members join the subgroups as follows:

\begin{subequations}
\label{GrindEQ__4-17_}
\begin{align}
A &=\left\{A_{1} ,A_{2} ,A_{3} ,...,A_{n} \right\}=\left\{1,5,9,...,A_{n} \right\}, \\
B &=\left\{B_{1} ,B_{2} ,B_{3} ,...,B_{n} \right\}=\left\{2,6,10,...,B_{n} \right\}, \\
C &=\left\{C_{1} ,C_{2} ,C_{3} ,...,C_{n} \right\}=\left\{3,7,11,...,C_{n} \right\}, \\
D &=\left\{D_{1} ,D_{2} ,D_{3} ,...,D_{n} \right\}=\left\{4,8,12,...,D_{n} \right\},
\end{align}
\end{subequations}
where $n=n_{p} /n_{sg}=4$.

The weighting coefficients $w_{A} $, $w_{B} $, $w_{C} $, and $w_{D} $ are defined for subgroups in order to establish a connection between the whole population and subgroups leaders to exchange their location information. The weighting coefficients are chosen in such a way that $w_{total}=w_{A}+w_{B}+w_{C}+w_{D}=10$. Accordingly, in every step of the exploitation process, the individuals update their positions, taking into account the position of the leaders with different weights.
\begin{figure}
\centering
\includegraphics[width=4 in]{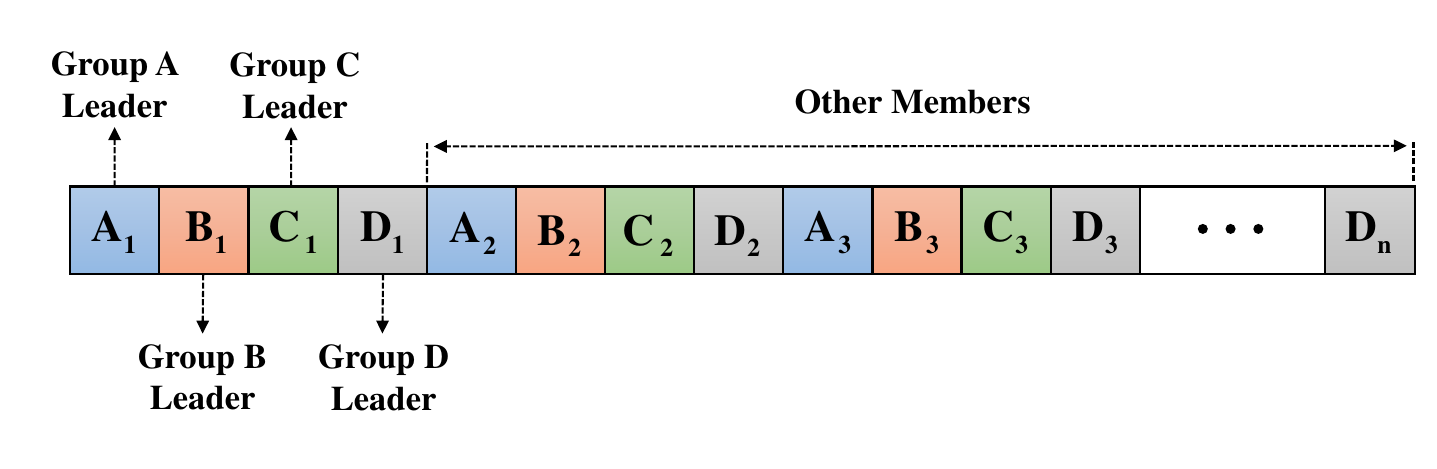}
\caption{The subgroups members assignment procedure.}
\label{fig:4-4}
\end{figure}
The switching coefficient $\Theta =\si{round}\left(\kappa_f /e^{R_f} \right)$ represents the number of exploitation steps and dynamically makes a trade-off between the exploitation and exploration functions of the algorithm, where $\kappa_f $ is the switching coefficient sensitivity parameter, an arbitrary positive number that determines the depth of the exploitation process. Members of each subgroup are updated according to \eqref{GrindEQ__4-19_}, taking into consideration their subgroup leader and other leaders as follows:
\begin{align}
\label{GrindEQ__4-19_}
\mathfrak X\left(t+1\right)&=\mathfrak X+R_f\chi _{f,\mathfrak r} \left(\mathfrak X_{GL} -\mathfrak X\right)\\ \nonumber
&+\left(\begin{array}{l} {w_{A} r_{1} \left(\mathfrak X_{A} -\mathfrak X\right)+w_{B} r_{2} \left(\mathfrak X_{B} -\mathfrak X\right)} \\ {+w_{C} r_{3} \left(\mathfrak X_{C} -\mathfrak X\right)+w_{D} r_{4} \left(\mathfrak X_{D} -\mathfrak X\right)} \end{array}\right)/w_{total}+\eta_f \cdot \left(\psi_f -0.5\right),
\end{align}
\textcolor{black}{where $\mathfrak X_{GL} $ represents the subgroup leader's position} and $r_{i} \in \left(0,1\right),i=1,2,3,4$ is a random number that directs the movement.

\begin{rem}
\label{rem:4-346}
Considering \eqref{GrindEQ__4-19_}, suppose that the position of a firefly in group $B$ is to be updated. Since the firefly belongs to group $B$, $\mathfrak X_{GL} \equiv \mathfrak X_{B}$ denotes the position of its subgroup leader. The damping coefficient $R_f$ in the second term of \eqref{GrindEQ__4-19_} linearly decreases from two to zero over the number of iterations; hence, $R_f$ increases the exploration performance at the beginning of the algorithm, and as the algorithm proceeds and the fireflies approach the subgroup leader, it increases the exploitation performance around the leader. This helps the individuals perform a semi-local search behavior to better search space around the optimum solution. The third term in \eqref{GrindEQ__4-19_} benefits from the position of other subgroups' leaders and their associated weighting coefficients ($w_{sg}$), giving more weight to each firefly's current leader as mentioned in Remark \ref{rem:4-344}. This helps the fireflies consider other leaders' positions as they move towards their own group's leader, resulting in increased diversity of movements.
\end{rem}

After each exploitation step, the individuals are evaluated, and the best solution becomes the leader of the subgroup. By completing the exploitation process, the four subgroups are merged into one big group so that agents can share location information amongst the search space. According to \eqref{GrindEQ__4-19_}, it is observed that utilizing the subgroup method increases the information exchange between individuals, and effectively \textcolor{black}{increases the algorithm's convergence speed}. Besides, despite the conventional FA, the brightest member's movement is not entirely random, and it performs a more sophisticated search through the exploitation process. Algorithm 1 presents the pseudo-code of the DWPFA, where NFE represents the number of function evaluations.
\begin{algorithm}[t!]
\caption{Pseudo-code of DWPFA.}
  \footnotesize
\begin{algorithmic}[1]
    \State Objective function $f({\bf \mathfrak X}),\quad {\bf \mathfrak X}=(\mathfrak X_{1} ,\mathfrak X_{2} ,...,\mathfrak X_{d} )$
    \State NFE=0;
    \State Initialize the population randomly;
    \State Define the light absorption coefficient $\gamma_f $;
    \While{not stopping criterion}
        \For{$i=1:n_{p}$}
            \For{$j=1:n_{p}$}
                \If{$\mathfrak I_{j} >\mathfrak I_{i} $}
                    \State Move individual $i$ towards $j$ in all dimensions;
                \EndIf
                \State \textbf{end if}
                \State Update $\mathfrak X_{i} $ using \eqref{GrindEQ__4-15_};
                \State Update light intensity $\mathfrak I_{i} $;
                \State NFE=0;
            \EndFor
            \State \textbf{end for}
        \EndFor
        \State \textbf{end for}
        \State Rank the individuals and find the current best;
        \State Define the subgroups and assign the leaders and members;
        \For {$k=1:\Theta $}
            \For {$i=2:n_{p} $}
                \If{$\mathfrak I_{GL} >\mathfrak I_{i} $}
                    \State Move individual $i$ towards the subgroup leader in all dimensions;
                \EndIf
                \State \textbf{end if}
                \State Update $\mathfrak X_{i} $ using \eqref{GrindEQ__4-19_};
                \State Update light intensity $\mathfrak I_{i} $;
                \State NFE=0;
                \State Rank the individuals and define the subgroup best as leader;
            \EndFor
            \State \textbf{end for}
        \EndFor
        \State \textbf{end for}
        \State Merge the subgroups into one group;
        \State Rank the individuals and find the current best;
    \EndWhile
    \State \textbf{end while}
\end{algorithmic}
\end{algorithm}

\subsection{On the computational complexity of the proposed DWPFA}
\label{sec:4.3.3}
The computational complexity of an EA is an indicator of its execution time and is controlled by its structure. Let $O\left(F\right)$ denote the computational complexity of the fitness evaluation function $F\left(\cdot\right)$. The conventional FA has a computational complexity of $O\left(It_{max} \times n^2_{p} \times F\right)$ \cite{wang2017firefly}, where $It_{max}$ represents the maximum number of iterations and $n_{p}$ denotes the population size. However, compared to the relatively small $n_{p}$, the study of the number of attractions and movements during $It_{max}$ iterations could be more important. It is noteworthy that although larger $n_{p}$ can result in significant benefits in terms of the algorithm's performance, its negative consequence is a substantial increase in calculation time.

In the conventional FA, each firefly is compared with all other members of the population, and at each comparison step, one of the agents is moved. Hence, it can be concluded that each agent is moved with an average of $\left(n_{p}-1\right)/2$ times per iteration \cite{wang2017firefly}. Consequently, at each iteration of the conventional FA, $n_{p} \times \left(n_{p}-1\right)/2$ attractions are performed. It should be noted that, although the attraction enables the agents to find new optimal solutions, if a high number of attractions does not come along with better exploitation performance (leading to higher convergence speed of the algorithm), \textcolor{black}{excessive attractions can induce oscillations during the search process}, and simultaneously impose a high computational burden with less optimal solutions. Accordingly, an effective EA should provide a trade-off between the accuracy and computational cost. \textcolor{black}{In this regard}, efforts have been made to enhance the FAs performance considering the abovementioned objectives (\textit{i.e.} better convergence and keeping the computational complexity as low as possible). Authors in \cite{lv2018firefly} proposed a modified FA with Gaussian disturbance and local search (GDLSFA). \textcolor{black}{As reported, the solutions' accuracy and convergence speed has increased}; however, compared to the conventional FA, a higher number of attractions $3/2\times n_{p} \times \left(n_{p}-1\right) $ is achieved. Authors in \cite{nayak2021hyper} developed a memetic FA (MFA) to enhance the solutions' accuracy of FA; however, the optimization objective is achieved with an increased number of attractions as $n_{p} \times \left(n_{p}-1\right)$ due to the embedded exploitation process.

In the proposed DWPFA algorithm, the exploration process has the same number of attractions as the conventional FA has. During the exploitation process, each firefly moves with an average of $\Theta$ times per iteration. Consequently, at each iteration, $\Theta \times n_{sg} \times \left(n_{p}-n_{sg}\right)$ attractions are performed during the exploitation process. That is to say, the total number of each agent’s movement per iteration is $\left(n_{p}-1\right)/2+\Theta$, with an attraction number of $ n_{p} \times \left(n_{p}-1\right)/2+\left(\Theta \times n_{sg} \times \left(n_{p}-n_{sg}\right)\right)$. As a result, the total number of attractions per iteration is within the range of $\left[n_{p} \times \left(n_{p}-1\right)/2+\left(n_{p}-n_{sg}\right), n_{p} \times \left(n_{p}-1\right)/2+\left(\kappa_f \times n_{sg} \times \left(n_{p}-n_{sg}\right)\right)\right]$, which is slightly more than that of FA with the same computational complexity. The experiments in support of this work have shown $\kappa_f=4$ is a good selection for DWPFA. Figure \ref{fig:334-4} illustrates a comparative study on the number of attractions associated with FA, GDLSFA, MFA, and the proposed DWPFA with different numbers of populations. According to the foregoing analysis, it can be observed that the number of attractions of DWPFA under the full attraction model is much lower than the abovementioned studies, showing much less imposed computational burden. \textcolor{black}{In addition, lower number of attractions demonstrate faster performance with less computational complexity. In this respect, as shown in Figure \ref{fig:334-4}, the proposed DWPFA at its low ranges of attraction illustrates a similar complexity to that of conventional FA, and still outperforms other approaches at its high ranges, which shows its low computational complexity while delivering superior performance.} In addition, considering the DWPFA's superior exploitation performance and solutions accuracy compared to FA, the slight increase in its computational burden is negligible.

\begin{figure}
\centering
\includegraphics[width=4.8 in]{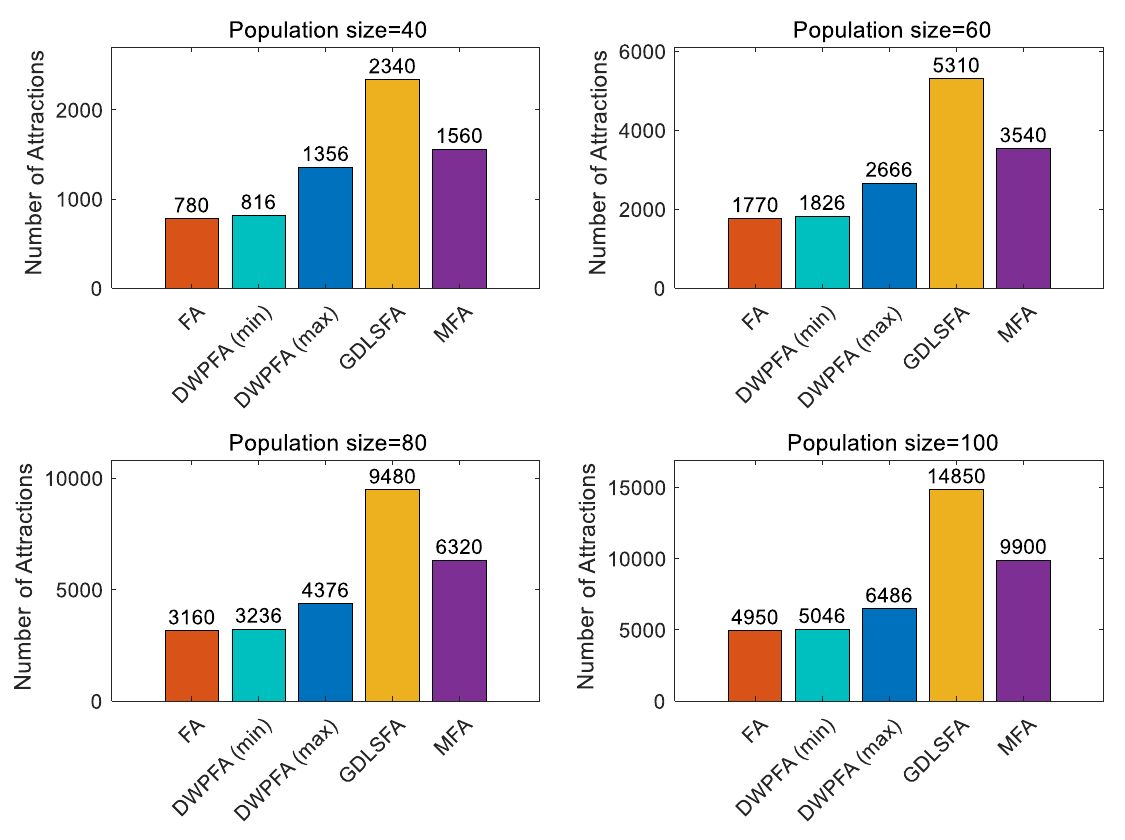}
\caption{\textcolor{black}{Comparative illustration on changes in the number of attractions associated with FA, GDLSFA \cite{lv2018firefly}, MFA \cite{nayak2021hyper}, and DWPFA.}}
\label{fig:334-4}
\end{figure}

\section{Proposed Extended Memory Pitch Control Strategy}
\label{sec:4.4}
PID controllers have been extensively applied in industrial applications owing to their design and implementation simplicity, low computational complexity, and robustness in the presence of external disturbances. FOPID controllers have demonstrated more flexibility to controller design, and more robustness in comparison with conventional PIDs \cite{mousavi2015memetic,viola2017design}. FOPIDs involve two additional degrees of freedom to the conventional PID; namely, the non-integer integral $\delta $ and derivative $\mu $ orders, leading to a more promising performance with five adjustable parameters \cite{mousavi2015memetic}. Since the wind energy level and direction changes continuously, and the pitch actuation system cannot provide immediate precise responses, considering the slight time delay between these changes can play an effective role in enhancing the control performance. Accordingly, it may be desirable to keep track of past effective pitch angles, as they serve as memory of the pitch system characteristics. Thus, extending the memory of pitch angles results in acquiring more data and a more comprehensive perspective of the system behavior. In this regard, in this work, the incorporation of fractional-calculus-based extended memory of pitch angles with an optimal FOPID controller is proposed in order to generate the desired pitch angle reference for the WT in region III.

Fractional calculus generalizes the integration and differentiation of a function to non-integer order, represented by ${\rm {\mathfrak D}}^{\gamma } $ operator, where $\gamma $ denotes the fractional order. Fractional-order derivatives and integrals can be derived through various definitions \cite{sabatier2007advances}. However, the Grunwald--Letnikov approximation is the most prominent definition in fractional-order calculus \cite{mousavi2018fractional,angel2018fractional,ren2018optimal,naidu2020power}. The G-L fractional derivative of a function $x\left(t\right)$ in discrete-time is expressed as follows.
\begin{equation}
{\rm {\mathfrak D}}^{\gamma } \left[x\left(t\right)\right]=\frac{1}{T^{\gamma } } \sum _{k=0}^{\Upsilon}\left(-1\right)^{k} \frac{\Gamma \left(\gamma +1\right)x\left(t-kT\right)}{\Gamma \left(k+1\right)\Gamma \left(\gamma -k+1\right)},
\label{GrindEQ__4-20_}
\end{equation}
\textcolor{black}{where $T$ and $\Upsilon$ denote the sampling period in [s] and the truncation order, respectively.} $\Gamma \left(\cdot\right)$ is Euler's gamma function, where $\Gamma \left(z\right)=\int_{0}^{\infty} \mathfrak n^{z-1} e^{-\mathfrak n} \si{d \mathfrak n},\, \, \si{Re}\left(z\right)>0 $.

The main control actions in region III are carried out by the pitch system, through designing a controller to minimize the error $e\left(t\right)=\omega _{nom} -\omega _{g} \left(t\right)$. The proposed EM-FOPID is implemented in the time domain as follows,
\begin{equation}
\beta _{r} \left(t\right)=\beta _{r} \left(t-1\right)+K_{p} e\left(t\right)+K_{i} {\rm {\mathfrak D}}^{-\delta } e\left(t\right)+K_{d} {\rm {\mathfrak D}}^{\mu } e\left(t\right).
\label{GrindEQ__4-21_}
\end{equation}
\begin{rem}
\label{rem:4-3}
According to fractional calculus concepts, despite the integer-order derivative that represents a finite series, the fractional-order derivative involves an infinite number of terms \cite{sabatier2007advances}. This characteristic leads to acquiring a memory of all past pitch angles and can be controlled by the fractional order $0\le \gamma \le 1$. \textcolor{black}{It is noteworthy that} the controller is implemented in discrete time to benefit from the memory preservation characteristics of the discrete-time G-L fractional derivative. Hence, all considered time-dependent variables discrete-time.
\end{rem}

In this perspective, \eqref{GrindEQ__4-21_} can be rearranged as
\begin{equation}
\beta _{r} \left(t\right)-\beta _{r} \left(t-1\right)=K_{p} e\left(t\right)+K_{i} {\rm {\mathfrak D}}^{-\delta } e\left(t\right)+K_{d} {\rm {\mathfrak D}}^{\mu } e\left(t\right).
\label{GrindEQ__4-22_}
\end{equation}
\begin{rem}
\label{rem:4-11}
Assuming $T=1$, the left side of \eqref{GrindEQ__4-22_} represents the G-L fractional derivative with order $\gamma =1$, which leads to:
\begin{equation}
{\rm {\mathfrak D}}^{\gamma } \left[\beta _{r} \left(t\right)\right]=K_{p} e\left(t\right)+K_{i} {\rm {\mathfrak D}}^{-\delta } e\left(t\right)+K_{d} {\rm {\mathfrak D}}^{\mu } e\left(t\right).
\label{GrindEQ__4-23_}
\end{equation}
\end{rem}

Thus, considering the first $\Upsilon =4$ terms of differential derivative \eqref{GrindEQ__4-20_}, \eqref{GrindEQ__4-23_} can be rewritten as follows:
\begin{align}
\label{GrindEQ__4-25_}
\beta _{r} \left(t\right) & =\gamma \beta _{r} \left(t-1\right) \\ \nonumber
&+\frac{1}{2{\rm !}} \gamma \left(1-\gamma \right)\beta _{r} \left(t-2\right)+\frac{1}{3{\rm !}} \gamma \left(1-\gamma \right)\left(2-\gamma \right)\beta _{r} \left(t-3\right) \\ \nonumber
& +\frac{1}{4{\rm !}} \gamma \left(1-\gamma \right)\left(2-\gamma \right)\left(3-\gamma \right)\beta _{r} \left(t-4\right) \\ \nonumber
&+K_{p} e\left(t\right)+K_{i} {\rm {\mathfrak D}}^{-\delta } e\left(t\right)+K_{d} {\rm {\mathfrak D}}^{\mu } e\left(t\right).
\end{align}

\begin{rem}
\label{rem:4-114}
Experimental tests have shown that larger values of $\Upsilon>4$ lead to similar results \cite{mousavi2015memetic}.
\end{rem}

In this work, the proposed DWPFA is utilized to achieve the optimum parameters $\left\{K_{p} ,K_{i} ,\delta ,K_{d} ,\mu \right\}$ for the proposed EM-FOPID controller. To simultaneously improve the transient and steady-state error, the integral of time multiplied squared error (ITSE) index is \textcolor{black}{incorporated} in the objective function. Also, to avoid large control signals and simultaneously reduce its deviations, which may lead to actuator saturation, the integral of squared control signal (ISCO) is embedded alongside the ITSE. The objective function evaluated to determine the controller parameters \textcolor{black}{is defined} as $J=\int _{0}^{\infty }\left[te^{2} \left(t\right)+u^{2} \left(t\right)\right] \si{dt}+V_c$ subjected to the pitch actuator constraints stated in Remark \ref{rem:4-2}, where $u\left(t\right)$ denotes the controller's output, and $V_c$ is the constraints violation coefficient such that $V_c$ is a very small value when the constraints are respected during the optimal design procedure, and $V_c=\infty$ in case of any violations. Figure \ref{fig:44-4} depicts the procedure of tuning the proposed EM-FOPID using DWPFA algorithm.

\begin{rem}
\label{rem:4-456}
In this study, the pitch actuator \eqref{GrindEQ__4-6_} is modelled as a second-order transfer function with constraints to ensure the feasible operational range of the WT \cite{odgaard2013fault}. A common drawback associated with some studies, such as \cite{venkaiah2020hydraulically}, is that the pitch actuator constraints are not considered in the controller's design procedure, which may cause the wind-up phenomenon and consequently degrade the WT performance if the control input reaches the saturation limits. In this regard, in many studies magnitude and rate limiters are implemented to deal with the constraints \cite{odgaard2013fault,lan2018fault,shan2021distributed}. In this work, constraints are explicitly checked at each iteration, with high penalties added to the objective function in case of any violations, to ensure adherence to constraints in future iterations.
\end{rem}

\begin{figure}[t!]
\centering
\includegraphics[width=3 in]{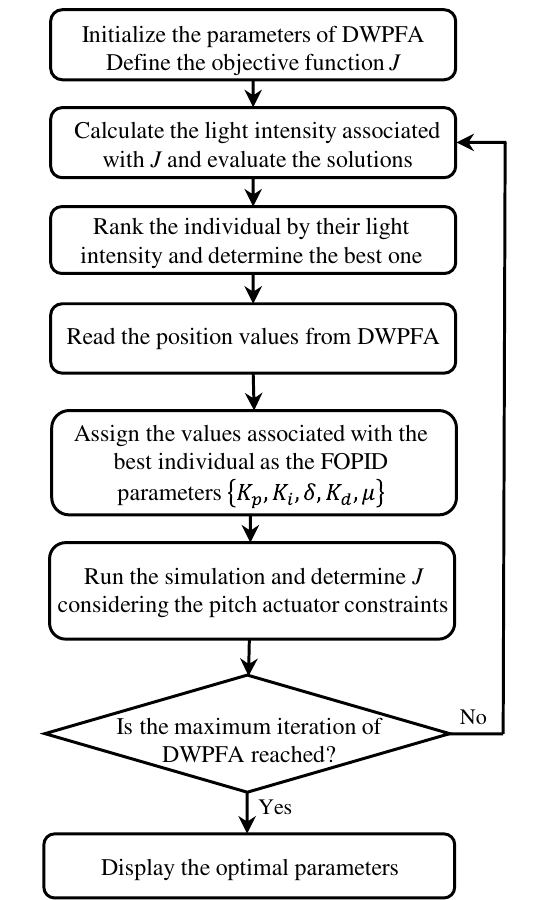}
\caption{The flowchart of the proposed EM-FOPID tuning using DWPFA.}
\label{fig:44-4}
\end{figure}

\section{Simulation Results and Discussions}
\label{sec:4.5}
The performance of the proposed DWPFA algorithm is first evaluated in comparison with other EAs through solving the CEC2017 mathematical benchmark functions. Then, the proposed DWPFA-optimized EM-FOPID controller is utilized to adjust the pitch angle of WT blades, where its performance is compared to PI and DWPFA-optimized conventional FOPID approaches under sensor, actuator, and system faults. Simulations are carried out using MATLAB R2020a (9.8.0.1417392) 64-bit, on a ASUS laptop with 64-bit win10 operating system, processor: Intel{\tiny{\textregistered}} core${}^{\si{TM}}$ i7-8550U CPU 2.50 GHz, installed memory: 8.00 GB, and VGA: GeForce NVidia 620M-4GB.

\subsection{Performance Evaluation of DWPFA}
\label{sec:4.5.1}
In this section, well-defined CEC2017 special session mathematical benchmark functions \cite{awad2017ensemble} are used as objective functions, to \textcolor{black}{assess} the performance of the proposed DWPFA \textcolor{black}{compared to} other EAs. In this regard, 30 test functions are used and are categorized as follows: unimodal functions $\left(f_{1} -f_{3} \right)$, simple multimodal functions $\left(f_{4} -f_{10} \right)$, hybrid functions $\left(f_{11} -f_{20} \right)$, and composition functions $\left(f_{21} -f_{30} \right)$. To testify the performance of the proposed DWPFA algorithm, it is compared with PSO, fractional PSO-based memetic algorithm (FPSOMA) \cite{mousavi2015memetic}, grey wolf optimizer (GWO), enhanced GWO (EGWO) \cite{luo2019enhanced}, enhanced bacterial foraging optimization (CCGBFO) \cite{chen2020enhanced}, FA, and fractional order FA (FOFA) \cite{mousavi2018fractional}. In the proposed algorithm, $\gamma_f=1$ and $\chi_{f,0}=1.2$ are considered. \textcolor{black}{In the performed experiments}, the benchmark functions' dimension is set to $D=50$, and all EAs have a population size of 40. Each algorithm is run 200 times independently for each test instance, and the allowed number of maximum function evaluation (NFE) is set to $10000\times D$. Since achieving zero error on CEC2017 functions is a demanding task for the algorithms, the constant $\varepsilon $ is defined as an acceptable threshold value of a satisfactory solution near the optimal solution for each function. In this work, $\varepsilon =50$ is set for $\left(f_{1} -f_{20} \right)$ and $\varepsilon =500$ is set for $\left(f_{21} -f_{30} \right)$. In Table \ref{tab:4-2}, the first column illustrates the sequence of 30 CEC2017 benchmark functions, and the next columns show the mean results achieved, while the minimum value obtained for each function is emphasized in bold. In addition, a comparison is carried out by reporting the experimental results of CEC2017 test suit as a logarithmic radar graph (spider plot) in Figure \ref{fig:5-4}.

\begin{table*}
\caption{\textcolor{black}{Obtained results for CEC2017 test functions with $D=50$.}}
\centering
\label{tab:4-2}
\resizebox{\textwidth}{!}{
\begin{tabular}{l c c c c c c c c c}
\hline\hline \\[-3mm]
\multicolumn{1}{c}{Fun.} & \multicolumn{1}{c}{PSO} & \multicolumn{1}{c}{FPSOMA} & \multicolumn{1}{c}{GWO} & \multicolumn{1}{c}{EGWO}  & \multicolumn{1}{c}{BFO} & \multicolumn{1}{c}{CCGBFO} & \multicolumn{1}{c}{FA} & \multicolumn{1}{c}{FOFA} & \multicolumn{1}{c}{DWPFA} \\[1.2ex] \hline
$f_{1} $ & $1.20\times 10^4$ & $1.06\times 10^2$ & $2.95\times 10^3$ & $2.17\times 10^2$ & $3.94\times 10^2$ & $3.41\times 10^2$ & $1.88\times 10^2$ & $1.47\times 10^1$ & \textbf{$1.05\times 10^1$} \\
$f_{2} $ & $5.08\times 10^7$ & $1.22\times 10^5$ & $2.61\times 10^7$ & $4.70\times 10^5$ & $4.90\times 10^5$ & $3.22\times 10^5$ & $2.93\times 10^6$ & $1.49\times 10^5$ & \textbf{$1.29\times 10^4$} \\
$f_{3} $ & $3.75\times 10^5$ & $2.25\times 10^0$ & $3.64\times 10^4$ & $2.13\times 10^2$ & $2.23\times 10^4$ & $2.14\times 10^2$ & $4.69\times 10^2$ & $2.50\times 10^{-3}$ & \textbf{$2.19\times 10^{-3}$} \\
$f_{4} $ & $2.30\times 10^4$ & $1.69\times 10^1$ & $2.37\times 10^3$ & $3.73\times 10^1$ & $2.67\times 10^2$ & $2.12\times 10^2$ & $4.63\times 10^2$ & $1.99\times 10^1$ & \textbf{$1.10\times 10^1$} \\
$f_{5} $ & $1.68\times 10^3$ & $4.49\times 10^1$ & $1.94\times 10^2$ & $1.67\times 10^1$ & $3.89\times 10^2$ & $5.53\times 10^1$ & $1.82\times 10^2$ & \textbf{$1.21\times 10^1$} & $1.86\times 10^1$ \\
$f_{6} $ & $2.19\times 10^3$ & $1.89\times 10^{-2}$ & $3.65\times 10^1$ & $2.40\times 10^{-2}$ & $2.60\times 10^1$ & $1.13\times 10^{-1}$ & $3.60\times 10^1$ & \textbf{$1.53\times 10^{-3}$} & $1.58\times 10^{-2}$ \\
$f_{7} $ & $2.14\times 10^3$ & $1.84\times 10^1$ & $2.78\times 10^2$ & $2.44\times 10^2$ & $2.30\times 10^2$ & $1.75\times 10^2$ & $3.54\times 10^2$ & $1.73\times 10^1$ & \textbf{$1.06\times 10^1$} \\
$f_{8} $ & $1.50\times 10^3$ & $1.71\times 10^1$ & $5.75\times 10^2$ & $2.47\times 10^1$ & $3.26\times 10^2$ & $2.13\times 10^2$ & $1.50\times 10^2$ & $1.43\times 10^1$ & \textbf{$1.38\times 10^1$} \\
$f_{9} $ & $1.74\times 10^3$ & $1.61\times 10^{-1}$ & $5.59\times 10^2$ & $2.63\times 10^1$ & $3.80\times 10^2$ & $1.34\times 10^2$ & $2.54\times 10^1$ & $1.03\times 10^{-1}$ & \textbf{$1.92\times 10^{-2}$} \\
$f_{10} $ & $1.44\times 10^4$ & $1.28\times 10^3$ & $3.30\times 10^3$ & $1.35\times 10^3$ & $3.16\times 10^3$ & $2.57\times 10^3$ & $3.44\times 10^3$ & $2.51\times 10^3$ & \textbf{$1.11\times 10^3$} \\
$f_{11} $ & $1.43\times 10^4$ & $5.57\times 10^1$ & $4.75\times 10^1$ & $1.37\times 10^1$ & $5.19\times 10^2$ & $2.57\times 10^1$ & $4.10\times 10^1$ & $1.69\times 10^1$ & \textbf{$1.26\times 10^1$} \\
$f_{12} $ & $5.87\times 10^4$ & $4.32\times 10^3$ & $3.10\times 10^3$ & $4.89\times 10^3$ & $2.51\times 10^4$ & $5.21\times 10^3$ & $4.93\times 10^3$ & $3.02\times 10^3$ & \textbf{$1.08\times 10^3$} \\
$f_{13} $ & $1.31\times 10^3$ & $1.45\times 10^1$ & $1.89\times 10^2$ & $4.77\times 10^1$ & $5.75\times 10^1$ & $3.65\times 10^1$ & $5.38\times 10^2$ & $4.62\times 10^1$ & \textbf{$1.05\times 10^1$} \\
$f_{14} $ & $1.63\times 10^4$ & $3.77\times 10^1$ & $1.51\times 10^2$ & $2.96\times 10^1$ & $6.65\times 10^1$ & $2.92\times 10^1$ & $5.83\times 10^1$ & \textbf{$1.89\times 10^1$} & $2.17\times 10^1$ \\
$f_{15} $ & $6.52\times 10^3$ & $1.84\times 10^1$ & $2.63\times 10^2$ & $6.87\times 10^1$ & $2.95\times 10^2$ & $4.51\times 10^1$ & $3.47\times 10^2$ & $4.13\times 10^1$ & \textbf{$1.09\times 10^1$} \\
$f_{16} $ & $6.62\times 10^3$ & $2.19\times 10^2$ & $2.85\times 10^3$ & $1.75\times 10^2$ & $3.39\times 10^2$ & $2.72\times 10^2$ & $4.41\times 10^2$ & \textbf{$1.13\times 10^2$} & $1.83\times 10^2$ \\
$f_{17} $ & $2.24\times 10^8$ & $4.28\times 10^1$ & $3.43\times 10^3$ & $4.58\times 10^2$ & $2.86\times 10^3$ & $5.03\times 10^2$ & $2.30\times 10^3$ & $5.45\times 10^2$ & \textbf{$2.85\times 10^1$} \\
$f_{18} $ & $9.42\times 10^3$ & $1.47\times 10^1$ & $3.49\times 10^2$ & $5.39\times 10^1$ & $3.12\times 10^2$ & $3.30\times 10^1$ & $6.12\times 10^2$ & $2.39\times 10^1$ & \textbf{$1.18\times 10^1$} \\
$f_{19} $ & $2.68\times 10^3$ & $4.52\times 10^1$ & $5.10\times 10^2$ & $4.29\times 10^1$ & $2.44\times 10^3$ & $6.09\times 10^1$ & $6.01\times 10^2$ & $3.36\times 10^1$ & \textbf{$2.23\times 10^1$} \\
$f_{20} $ & $6.94\times 10^5$ & $1.97\times 10^2$ & $2.94\times 10^3$ & $3.14\times 10^2$ & $6.57\times 10^2$ & $4.40\times 10^2$ & $3.27\times 10^2$ & $1.87\times 10^2$ & \textbf{$1.34\times 10^2$} \\
$f_{21} $ & $4.36\times 10^7$ & $2.36\times 10^2$ & $2.11\times 10^3$ & $2.01\times 10^2$ & $3.51\times 10^4$ & $2.01\times 10^2$ & $4.54\times 10^3$ & $3.37\times 10^2$ & \textbf{$1.29\times 10^2$} \\
$f_{22} $ & $6.61\times 10^3$ & $4.99\times 10^2$ & $3.18\times 10^3$ & $2.53\times 10^2$ & $5.60\times 10^3$ & $4.75\times 10^2$ & $1.07\times 10^3$ & $3.95\times 10^2$ & \textbf{$1.12\times 10^2$} \\
$f_{23} $ & $4.24\times 10^3$ & \textbf{$1.08\times 10^2$} & $1.99\times 10^2$ & $1.24\times 10^2$ & $1.58\times 10^2$ & $1.34\times 10^2$ & $2.70\times 10^2$ & $1.14\times 10^2$ & $1.32\times 10^2$ \\
$f_{24} $ & $2.34\times 10^3$ & $1.55\times 10^2$ & $3.23\times 10^2$ & $2.34\times 10^2$ & $1.47\times 10^3$ & $1.67\times 10^2$ & $1.03\times 10^3$ & $3.71\times 10^2$ & \textbf{$1.09\times 10^2$} \\
$f_{25} $ & $5.86\times 10^3$ & $1.97\times 10^2$ & $2.56\times 10^2$ & $1.95\times 10^2$ & $3.59\times 10^2$ & $2.95\times 10^2$ & $1.87\times 10^2$ & $1.36\times 10^2$ & \textbf{$1.19\times 10^2$} \\
$f_{26} $ & $2.71\times 10^3$ & $1.72\times 10^2$ & $5.12\times 10^2$ & $3.39\times 10^2$ & $2.11\times 10^3$ & $4.99\times 10^2$ & $1.44\times 10^3$ & $6.35\times 10^2$ & \textbf{$1.47\times 10^2$} \\
$f_{27} $ & $4.75\times 10^3$ & $1.54\times 10^2$ & $5.42\times 10^2$ & $1.84\times 10^2$ & $2.95\times 10^2$ & $2.04\times 10^2$ & $2.22\times 10^2$ & $1.36\times 10^2$ & \textbf{$1.33\times 10^2$} \\
$f_{28} $ & $9.28\times 10^2$ & $1.37\times 10^2$ & $1.78\times 10^2$ & $1.45\times 10^2$ & $2.85\times 10^2$ & $1.52\times 10^2$ & $4.67\times 10^2$ & $1.24\times 10^2$ & \textbf{$1.02\times 10^2$} \\
$f_{29} $ & $5.96\times 10^7$ & $1.95\times 10^2$ & $3.89\times 10^3$ & $3.67\times 10^2$ & $3.81\times 10^3$ & $4.33\times 10^2$ & $4.47\times 10^2$ & $1.70\times 10^2$ & \textbf{$1.27\times 10^2$} \\
$f_{30} $ & $1.96\times 10^6$ & $1.79\times 10^3$ & $2.08\times 10^4$ & $1.52\times 10^3$ & $2.97\times 10^4$ & $3.43\times 10^3$ & $5.61\times 10^4$ & \textbf{$1.16\times 10^3$} & $1.46\times 10^3$ \\
[1ex]
\hline\hline
\end{tabular}}
\end{table*}

\begin{figure}
\centering
\includegraphics[width=4.6 in]{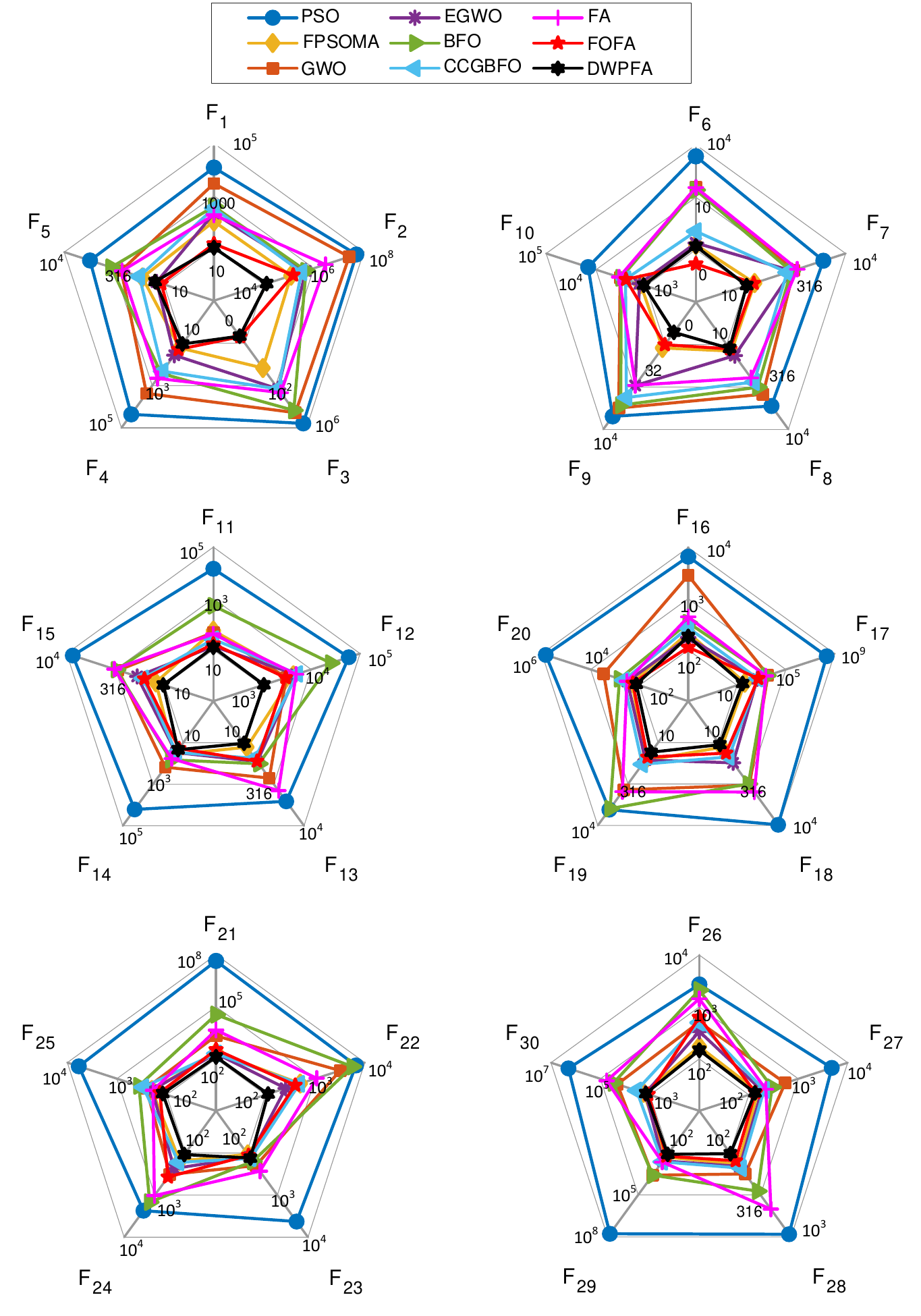}
\caption{Spider plot of results achieved on CEC2017 benchmark functions.}
\label{fig:5-4}
\end{figure}

According to Table \ref{tab:4-2}, it can be seen that the optimization performance obtained by the proposed DWPFA are markedly less than other algorithms in solving real-parameter optimization problems in the CEC2017 suite. Results illustrated in Table \ref{tab:4-2} and Figure \ref{fig:5-4}  demonstrate the superiority of DWPFA, yielding outstanding performance with 24 best solutions achieved out of 30 test problems, followed by FOFA and FPSOMA with 5 and 1 best solutions, respectively.

To intuitively show the performance of DWPFA, Figure \ref{fig:6-4} is plotted to illustrate the box-and-whisker diagrams of solutions obtained on different selected problems of each category, for all 200 runs with $D=50$. The vertical and horizontal axes indicate the optimal solution and the nine algorithms, respectively. From Figure \ref{fig:6-4} it is observed that despite the high complexity of functions, DWPFA provides promising results maintaining fewer values and shorter distribution of solutions comparing to other algorithms under evaluation, indicating excellent and steady performances of it. This implies that DWPFA is more effective for optimizing functions, and thus its superiority is apparent.
\begin{figure}[t!]
\centering
\includegraphics[width=4.5 in]{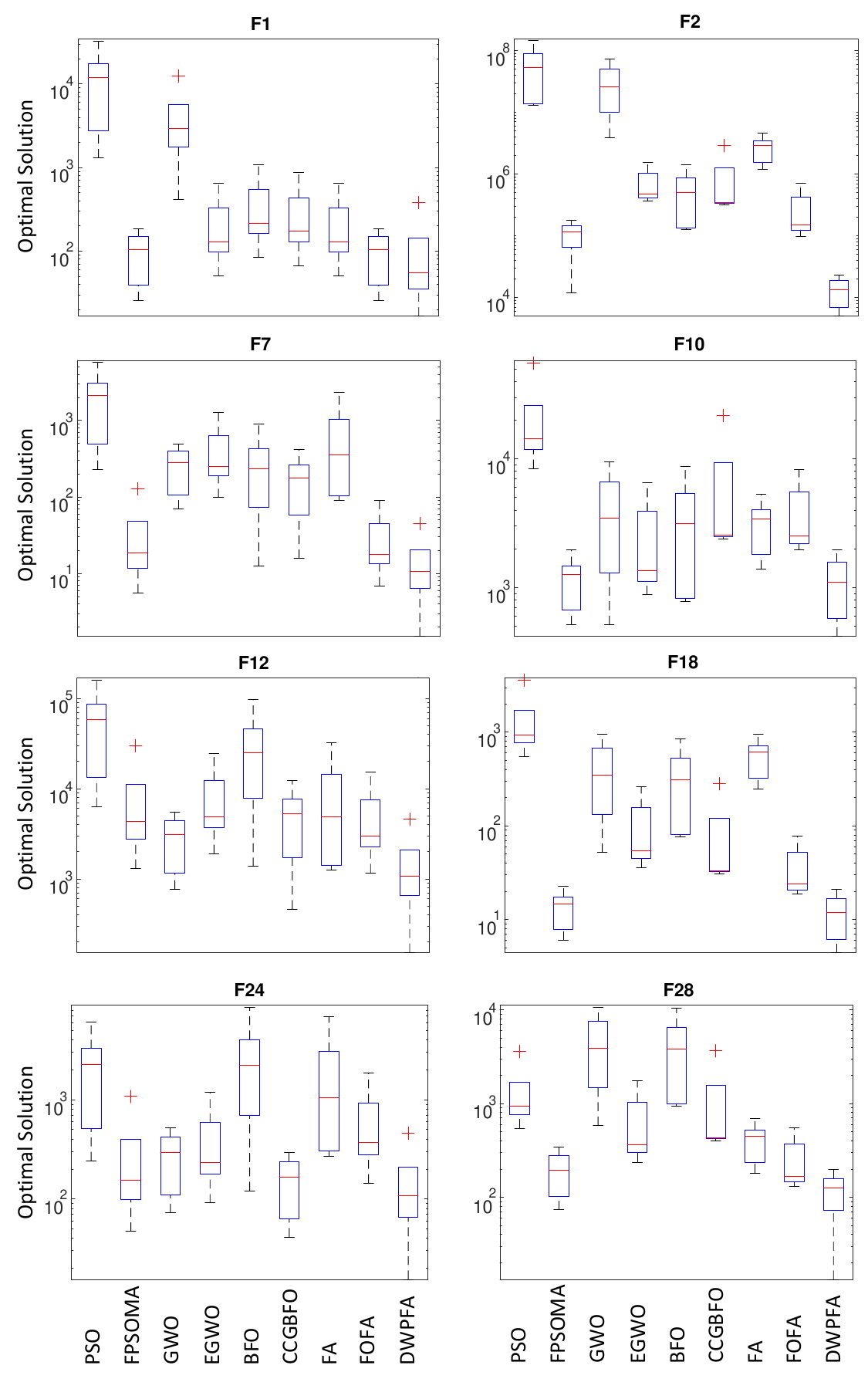}
\caption{The box-and-whisker comparative performance diagrams on the selected functions, $D=50$.}
\label{fig:6-4}
\end{figure}
In order to determine whether to accept or reject the null hypothesis, non-parametric tests can be utilized. Non-parametric tests determine whether the data sets to be compared have the same variance \cite{carrasco2020recent}. Accordingly, to statistically compare and analyze the quality of the solution, two non-parametric statistical hypothesis tests were used to compare the results, namely the Friedman test and the Friedman Aligned Ranks test \cite{carrasco2020recent}. The null hypothesis for the Friedman test represents the equality of medians between the populations, while the ranks assigned to the resulting differences (aligned observations) are called Friedman aligned ranks. Table \ref{tab:4-3} demonstrates the Friedman and Friedman Aligned test results sorted by the performance order \say{Rank}. The results indicate that DWPFA obtains the best rank, followed by FOFA and FPSOMA. The ranking of all algorithms in both tests is the same, except for EGWO, where its rank improved from fifth to fourth, taking CCGBFO's place.
\begin{table}[t!]
\caption{The Friedman and Friedman Aligned test results over CEC2017 test functions, $D=50$.}
\centering
\label{tab:4-3}
\scalebox{0.8}{
\begin{tabular}{l c c c l c c}
\hline\hline \\[-3mm]
\multicolumn{3}{c}{Friedman} &  & \multicolumn{3}{c}{Friedman Aligned} \\ \hline
\multicolumn{1}{c}{Algorithm} & \multicolumn{1}{c}{Score} & \multicolumn{1}{c}{Rank} & \multicolumn{1}{c}{ } & \multicolumn{1}{c}{Algorithm} & \multicolumn{1}{c}{Score} & \multicolumn{1}{c}{Rank}  \\[1ex] \hline
DWPFA & 2.1925 & 1 &  & DWPFA & 33.263 & 1 \\
FOFA & 2.3852 & 2 &  & FOFA & 34.174 & 2 \\
FPSOMA & 2.5682 & 3 &  & FPSOMA & 34.368 & 3 \\
CCGBFO & 2.7268 & 4 &  & EGWO & 34.404 & 4 \\
EGWO & 3.1481 & 5 &  & CCGBFO & 34.671 & 5 \\
FA & 3.1692 & 6 &  & FA & 35.816 & 6 \\
BFO & 3.3744 & 7 &  & BFO & 36.221 & 7 \\
GWO & 3.5760 & 8 &  & GWO & 36.414 & 8 \\
PSO & 3.8767 & 9 &  & PSO & 38.162 & 9 \\
[1ex]
\hline\hline
\end{tabular}}
\end{table}

\subsection{Numerical Example}
\label{sec:4.5.33}
\textcolor{black}{This section investigates the closed-loop performance of the proposed EM-FOPID approach compared to the results obtained by other relevant methods in terms of ITSE performance criterion}. Consider the following system adopted from literature \cite{mousavi2015memetic},
\begin{equation}
m_1{\rm {\mathfrak D}}^{n_1}{x}\left(t\right)+ m_2{\rm {\mathfrak D}}^{n_2}{x}\left(t\right)+ m_3x\left(t\right)=u\left(t\right),
\label{GrindEQ__4-S1_}
\end{equation}
where $m_1=0.8$, $m_2=0.9$, $m_3=1$, $n_1=2.2$, and $n_2=0.5$.

The control variable $u(t)$ can be considered as
\begin{equation}
u\left(t\right)=u\left(t-1\right)+K_{p} e\left(t\right)+K_{i} {\rm {\mathfrak D}}^{-\delta } e\left(t\right)+K_{d} {\rm {\mathfrak D}}^{\mu } e\left(t\right).
\label{GrindEQ__4-S2_}
\end{equation}

Considering Remark \ref{rem:4-11} and employing the extended memory characteristics, \eqref{GrindEQ__4-S2_} can be rewritten as,
\begin{align}
\label{GrindEQ__4-S3_}
u\left(t\right)& =\gamma u \left(t-1\right) \\ \nonumber
&+\frac{1}{2{\rm !}} \gamma \left(1-\gamma \right)u \left(t-2\right)+\frac{1}{3{\rm !}} \gamma \left(1-\gamma \right)\left(2-\gamma \right)u \left(t-3\right) \\ \nonumber
& +\frac{1}{4{\rm !}} \gamma \left(1-\gamma \right)\left(2-\gamma \right)\left(3-\gamma \right)u \left(t-4\right) \\ \nonumber
&+K_{p} e\left(t\right)+K_{i} {\rm {\mathfrak D}}^{-\delta } e\left(t\right)+K_{d} {\rm {\mathfrak D}}^{\mu } e\left(t\right).
\end{align}

Table \ref{tab:4-444} summarizes the step responses obtained by the controllers under study. Accordingly, it is observed that, compared to PID controllers, the FOPIDs demonstrate superior performance in terms of maximum overshoot and rise time. In addition, the proposed DWPFA algorithm delivers better tuning performance compared to other algorithms. Comparison study involving DWPFA-optimized EM-FOPID and other methods validates the effectiveness of embedding the memory characteristics in the controller. Accordingly, the DWPFA-optimized EM-FOPID outperforms all other methods and demonstrates more preferable performance with a maximum overshoot of 0.75\% and rise time of 0.002 seconds, followed by the DWPFA-optimized FOPID with 1.25\% overshoot and 0.003 seconds. In addition, although the FPSOMA-based FOPID \cite{mousavi2015memetic} provides an acceptable performance, large controller gains $K_p$, $K_i$, and $K_d$ are required, whereas, with similar fractional integral and derivative orders $\delta $ and $\mu $, the impact of incorporating the memory effects to the controller has led to smaller controller gains with an even better control performance.
\begin{table}[t!]
\caption{Performance comparison of the controllers.}
\centering
\label{tab:4-444}
\resizebox{\textwidth}{!}{
\begin{tabular}{l c c c c c c c c}
\hline\hline \\[-3mm]
\multicolumn{1}{l}{Controller} & \multicolumn{1}{c}{$K_{p} $} & \multicolumn{1}{c}{$K_{i} $ } & \multicolumn{1}{c}{$\delta $ } & \multicolumn{1}{c}{$K_{d} $} & \multicolumn{1}{c}{$\mu $} & \multicolumn{1}{c}{$\gamma $} & \multicolumn{1}{c}{Overshoot (\%)} & \multicolumn{1}{c}{Rise Time [\si{s}]}  \\[1ex] \hline
PID Ziegler–Nichols \cite{mousavi2015memetic} & 16.6281 & 9.4422 & - & 7.2230 & - & - & 25.92 & 0.223 \\
PID-DWPFA & 5.3100 & 10.8620 & - & 18.1150 & - & - & 12.28 & 0.137 \\
FOPID-FPSOMA \cite{mousavi2015memetic} & 393.9550 & 353.9850 & 0.12 & 117.8490 & 1.2240 & - & 1.73 & 0.003 \\
FOPID-DE \cite{biswas2009design} & 21.2200 & 1.3700 & 0.92 & 12.0500 & 0.93 & - & 7.69 & 0.023 \\
FOPID-DWPFA & 36.2200 & 27.0500 & 0.24 & 112.2000 & 1.13 & - & 1.25 & 0.003 \\
EM-FOPID-DWPFA & 24.8400 & 19.8540 & 0.16 & 84.6800 & 1.34 & 0.7 & 0.75 & 0.002 \\
[1ex]
\hline\hline
\end{tabular}}
\end{table}

\subsection{Optimal Pitch Angle Control of WT}
\label{sec:4.5.2}
In this section, the proposed DWPFA-optimized EM-FOPID is applied to generate the desired pitch angle reference of WT blades in region III, and its performance is evaluated with respect to the conventional PI and DWPFA-optimized conventional FOPID approaches under fault-free and faulty conditions. In this study, it is assumed that all required system signals are available for the controller. In practice, it is often necessary to estimate the wind speed and other signals. The WT model parameters are listed in Table \ref{tab:4-4}, and the WT is subjected to sensor, actuator, and system faults described in Table \ref{tab:4-1}. The proposed DWPFA algorithm is used to tune the controller parameters (illustrated in Table \ref{tab:4-5}), and comparative simulations are conducted to validate the efficiency of the proposed optimal EM-FOPID.
\begin{table}[t!]
\caption{WT model parameters.}
\centering
\label{tab:4-4}
\scalebox{0.8}{
\begin{tabular}{l l l l l l l}
\hline\hline \\[-3mm]
\multicolumn{1}{l}{Parameter} & \multicolumn{1}{l}{Value} & \multicolumn{1}{l}{Unit} & \multicolumn{1}{c}{ } & \multicolumn{1}{l}{Parameter} & \multicolumn{1}{l}{Value} & \multicolumn{1}{l}{Unit}  \\[1ex] \hline
$R$ & 57.5 & \si{m} &  & $D_{LS} $ & 775.49  & \si{Nms/rad} \\
$\rho $ & 1.225  & \si{kg/m^{3}} &  & $B_{G} $ & 46.6  & \si{Nms/rad} \\
$\omega _{n} $ & 11.11  & \si{rad/s} &  & $N_{GB} $ & 95 & - \\
$\xi $ & 0.6 & - &  & \si{\eta _{dt}} & 0.97 & - \\
$J_{R} $ & 55E+06  & \si{kg.m^{2}} &  & $\alpha _{gc} $ & 50  & \si{rad/s} \\
$J_{G} $ & 390  & \si{kg.m^{2}} &  & $\omega _{nom} $ & 162  & \si{rad/s} \\
$K_{LS} $ & 2.7E+09  & \si{Nm/rad} &  & $K_{opt} $ & 1.2171 & - \\
[1ex]
\hline\hline
\end{tabular}}
\end{table}

\begin{table}[t!]
\caption{Controllers parameters for the WT system.}
\centering
\label{tab:4-5}
\scalebox{0.8}{
\begin{tabular}{l c c c c c c}
\hline\hline \\[-3mm]
\multicolumn{1}{l}{Controller} & \multicolumn{1}{c}{$K_{p} $} & \multicolumn{1}{c}{$K_{i} $ } & \multicolumn{1}{c}{$\delta $ } & \multicolumn{1}{c}{$K_{d} $} & \multicolumn{1}{c}{$\mu $} & \multicolumn{1}{c}{$\gamma $}  \\[1ex] \hline
Conventional PI \cite{odgaard2013fault} & 4 & 1 & - & - & - & - \\
DWPFA-FOPID & 8.2 & 3 & 0.7 & 2 & 0.5 & - \\
DWPFA-EM-FOPID & 5.5 & 2.1 & 0.8 & 1.8 & 0.45 & 0.7 \\
[1ex]
\hline\hline
\end{tabular}}
\end{table}

\textcolor{black}{A critical issue in designing a controller is to ensure closed-loop stability; this applies whether the system is linear or non-linear. However, although it can generally be achieved for linear systems and some classes of non-linear systems, analytical investigation of the closed-loop stability for FOPID controllers for a 4.8-MW WT is not a straightforward task} due to the unavailability of the exact dynamic model in Laplace domain representation and the nonlinearities associated with the WT system. \textcolor{black}{On the contrary}, PID/FOPID controllers generally have the ability to destabilize a system if they are poorly designed. However, since the objective function uses ITSE, the optimization algorithm is expected to ensure stability and avoid any parameters that destabilize the system. Hence, if a set of control parameters cause instability at any wind speeds, the cost function would be a large value, and unstable modes would not be found to be optimal and will be avoided over successive generations. Consequently, the closed-loop system can be guaranteed to remain stable during the optimization process.

\textcolor{black}{The design parameters are considered within the search ranges $-500<\left(K_p,K_i,K_d\right)<500$, and $0<\left(\delta,\mu\right)<2$.} It is often essential that the controller's parameters are not chosen close to the marginal stability regions, which may lead to performance degradation in this particular case. Hence, to guarantee the system's stability, the controller's gains $K_p$, $K_i$, and $K_d$ must be non-negative, and the fractional orders $\delta$ and $\mu$ should be chosen such that they maintain a trade-off between a) getting as far as possible from stability margins and b) providing desirable integration (leading to higher precision) and derivation (leading to more stability) performances. The marginal stability regions are depicted in Figure \ref{fig:222-4}(a) via Venn diagram. It is worth mentioning that, although the whole area $0<\left(\delta,\mu\right)<2$ (as shown in Figure \ref{fig:222-4}(b)) maintain the stability and acceptable performance, the golden zone $0.3<\left(\delta,\mu\right)<0.95$ has found to deliver the best performance, where, as shown in Figure \ref{fig:222-4}(b) the optimal integral and derivative orders $\delta=0.8$ and $\mu=0.45$ are found within this zone.
\begin{figure}
\centering
\includegraphics[width=4.5 in]{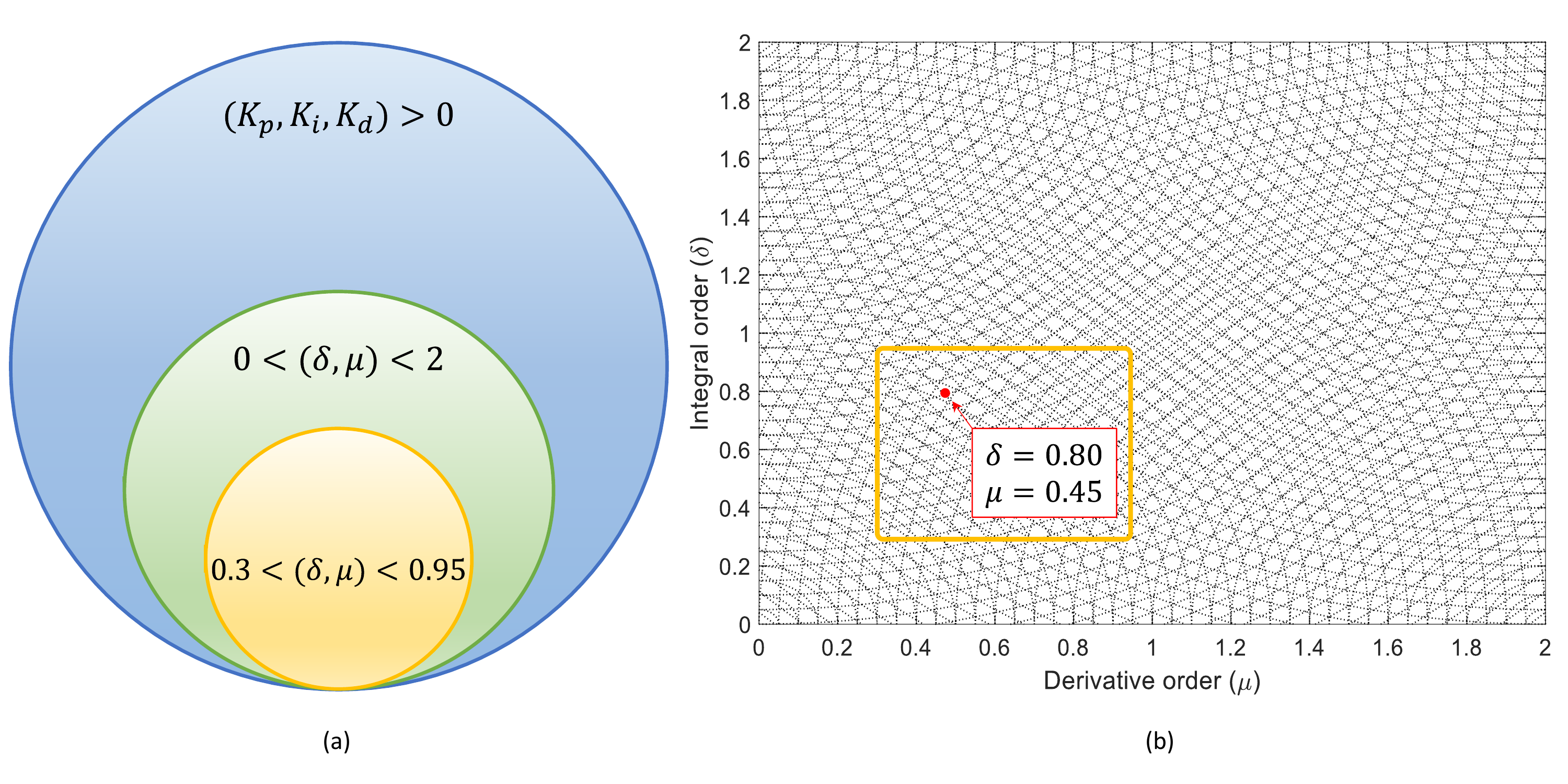}
\caption{The stability regions for $K_p$, $K_i$, $K_d$, $\delta$, and $\mu$.}
\label{fig:222-4}
\end{figure}

The wind profile covering a wind speed range of 5-20 \si{m/s} along with the occurrence time intervals of each fault scenario is depicted in Figure \ref{fig:7-4}. It consists of slow wind variations $\upsilon _{m} \left(t\right)$, stochastic wind behavior $\upsilon _{s} \left(t\right)$, the wind shear effects $\upsilon _{ws} \left(t\right)$, and the tower shadow effects $\upsilon _{ts} \left(t\right)$ \cite{dolan2006simulation} expressed as follows:
\begin{equation}
\upsilon _{w} \left(t\right)=\upsilon _{m} \left(t\right)+\upsilon _{s} \left(t\right)+\upsilon _{ws} \left(t\right)+\upsilon _{ts} \left(t\right).
\label{GrindEQ__4-27_}
\end{equation}
\begin{figure}
\centering
\includegraphics[width=4 in]{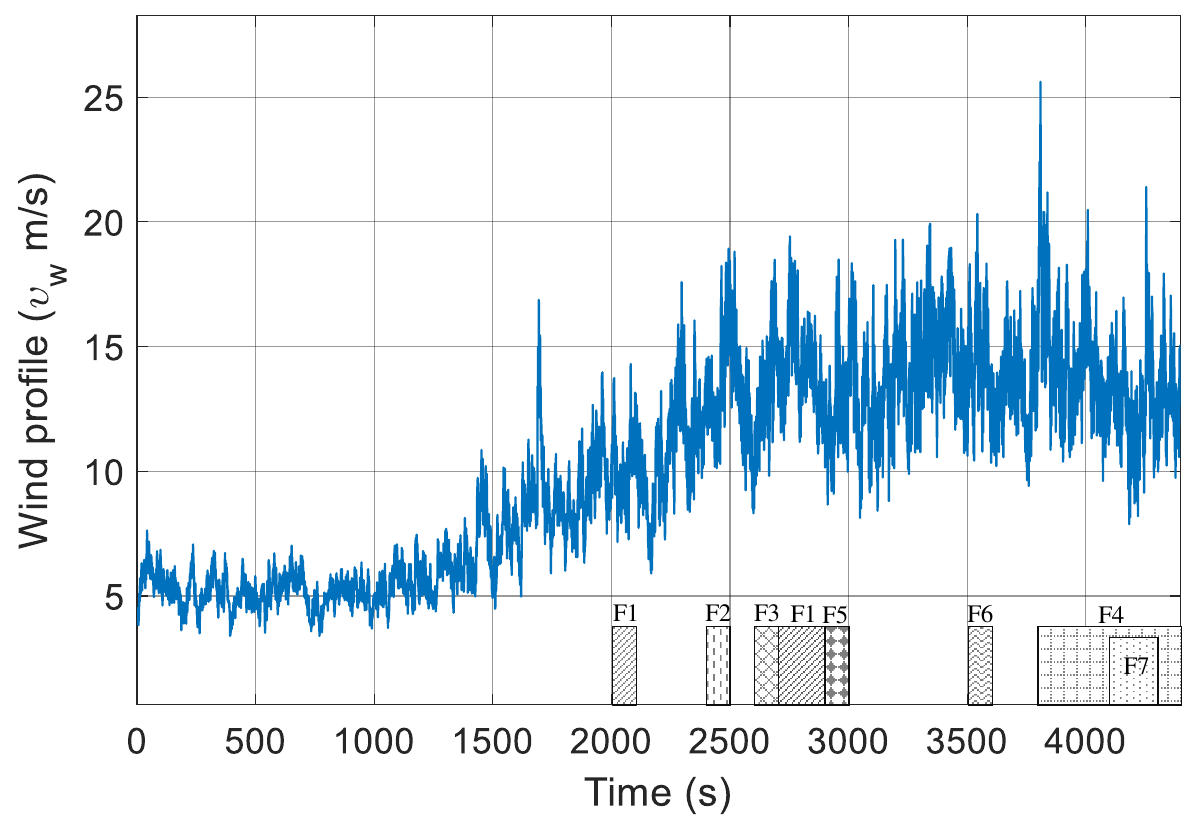}
\caption{The wind speed profile and the occurrence time intervals of each fault scenarios.}
\label{fig:7-4}
\end{figure}
According to Table \ref{tab:4-1}, four different sensor fault scenarios (F1-F4) with low levels of severity occur between the time intervals of 2000-4400 seconds. Also, two actuator faults (F5, F6) and one system fault (F7) with medium and high levels of severity occur between the time intervals of 2900-3600 seconds and 4100-4300 seconds, respectively. Simulations \textcolor{black}{are performed} using MATLAB/Simulink environment for the WT model presented in Section \ref{sec:4.2}.

\begin{rem}
\label{rem:4-4}
When actuator faults (hydraulic pressure drop or increment of air content in the oil) occur, the pitch angle changes accordingly, which degrades the reference tracking of the generator. This tracking ability degradation can lead to large fluctuations in the generator speed. It is worth mentioning that these faults occur relatively slow; thus, the pitch angle can track the reference. However, they need to be compensated in order to prevent the deterioration of tracking performance.
\end{rem}

Figure \ref{fig:8-4} depicts the power generated by the WT under the control of all three controllers under consideration. It can be seen that the profiles cover the full range of operation, demonstrating the suitability of this profile for comparison under the various fault scenarios. It is noteworthy to mention that, due to the stochastic wind behavior and its deviations, whenever the wind speed decreases largely in region III, the generated power also decreases largely. However, faults also impose the effects based on their severity and result in the decrement of generated power. According to Figure \ref{fig:8-4}, it can be seen that all three control approaches can compensate the effects of the fault, delivering different levels of performance. \textcolor{black}{The target of the power generation is to maintain the maximum power at all times (\textit{i.e.} $4.8E+6$[\si{W}]), especially when various faults happen. Here, the conventional PI controller's parameters are chosen as in \cite{odgaard2013fault}. From Figure \ref{fig:8-4}, it is evident that the effects of faults (F1-F3) with low severity are satisfactorily accommodated using all three controllers. However, when the highly severe fault (F5) happens due to hydraulic pressure drop, the conventional PI and optimal FOPID (OFOPID) controllers demonstrated several drops in the power generation and hence, could not deliver a satisfactorily fault accommodation performance as the proposed EM-OFOPID did. In the event of the air content increment in the oil with a medium level of severity (F6), the same performance is achieved with the conventional PI; however, in this case, the optimal FOPID has performed as well as the proposed EM-OFOPID, outperforming the conventional PI. It can also be seen that while the low severity fault (F4) is occurring in the time interval of 3805-4400 seconds, at some points, the conventional PI fails to accommodate the effects of the fault and lower power prodiction performance is demonstrated.} Besides, a significant decrease in the wind speed happens \textcolor{black}{within the interval} of 4180-4260 seconds, which associates with the previous fault (F4) and another fault (F7) \textcolor{black}{during} 4100-4300 seconds. Accordingly, the generated power is decreased; however, from the zoomed-in inset, it is evident that the proposed EM-OFOPID demonstrates superior performance compared to other controllers in terms of fault accommodation and power generation. \textcolor{black}{Figure \ref{fig:777-4} compares the performance of PI and EM-OFOPID controllers in terms of power generation during the time interval of fault F7 occurrence in the system. The fault corresponds to the slow friction changes in the drivetrain with different levels of severity with a 5\%, 10\%, 50\%, and 100\% increase in the coefficient during the time interval of 4100-4300 seconds, to demonstrate an insight of friction change in practice.} Accordingly, it can be observed that increasing the severity of F7 degrades the performance of PI, while EM-OFOPID can effectively compensate for the fault and demonstrate superior performance. In addition, the comparative $\Vert{P_g-P_{g,opt}}\Vert_2$ in all controllers validate the superior performance of EM-OFOPID with 1.4591e+08 over OFOPID with 1.4923e+08 and PI with 1.6814e+08. It is readily observed that taking advantage of more design parameters and DWPFA to tune them; the FOPID schemes increase the power generation compared to the conventional PI scheme. The results also reveal that utilizing a memory of pitch angles with the optimal FOPID effectively enhances the control performance, while the superiority of DWPFA-optimized EM-FOPID is apparent in comparison with the DWPFA-optimized conventional FOPID. To further testify the performance of DWPFA \textcolor{black}{compared to other EAs}, the comparative $\Vert{P_g-P_{g,opt}}\Vert_2$ of EM-OFOPID and OFOPID controllers tuned by optimization algorithms is demonstrated and sorted by the performance order \say{Rank} in Table \ref{tab:4-33}. The results manifest the superior performance of DWPFA over other algorithms, obtaining the best rank.

\begin{table}[t!]
\caption{The comparative $\Vert{P_g-P_{g,opt}}\Vert_2$ of EM-OFOPID and OFOPID controllers tuned by optimization algorithms.}
\centering
\label{tab:4-33}
\scalebox{0.8}{
\begin{tabular}{l c c c l c c}
\hline\hline \\[-3mm]
\multicolumn{3}{c}{OFOPID} &  & \multicolumn{3}{c}{EM-OFOPID} \\ \hline
\multicolumn{1}{c}{Algorithm} & \multicolumn{1}{c}{$\Vert{P_g-P_{g,opt}}\Vert_2$} & \multicolumn{1}{c}{Rank} & \multicolumn{1}{c}{ } & \multicolumn{1}{c}{Algorithm} & \multicolumn{1}{c}{$\Vert{P_g-P_{g,opt}}\Vert_2$} & \multicolumn{1}{c}{Rank}  \\[1ex] \hline
DWPFA & 1.4923e+08 & 1 &  & DWPFA & 1.4591e+08 & 1 \\
FPSOMA & 1.4940e+08 & 2 &  & FOFA & 1.4610e+08 & 2 \\
FOFA & 1.4961e+08 & 3 &  & FPSOMA & 1.4667e+08 & 3 \\
CCGBFO & 1.4996e+08 & 4 &  & CCGBFO & 1.4704e+08 & 4 \\
EGWO & 1.5058e+08 & 5 &  & EGWO & 1.4735e+08 & 5 \\
FA & 1.5094e+08 & 6 &  & FA & 1.4758e+08 & 6 \\
GWO & 1.5143e+08 & 7 &  & BFO & 1.4777e+08 & 7 \\
BFO & 1.5196e+08 & 8 &  & GWO & 1.4792e+08 & 8 \\
PSO & 1.5224e+08 & 9 &  & PSO & 1.4811e+08 & 9 \\
[1ex]
\hline\hline
\end{tabular}}
\end{table}

\begin{figure}[t!]
\centering
\includegraphics[width=4.5 in]{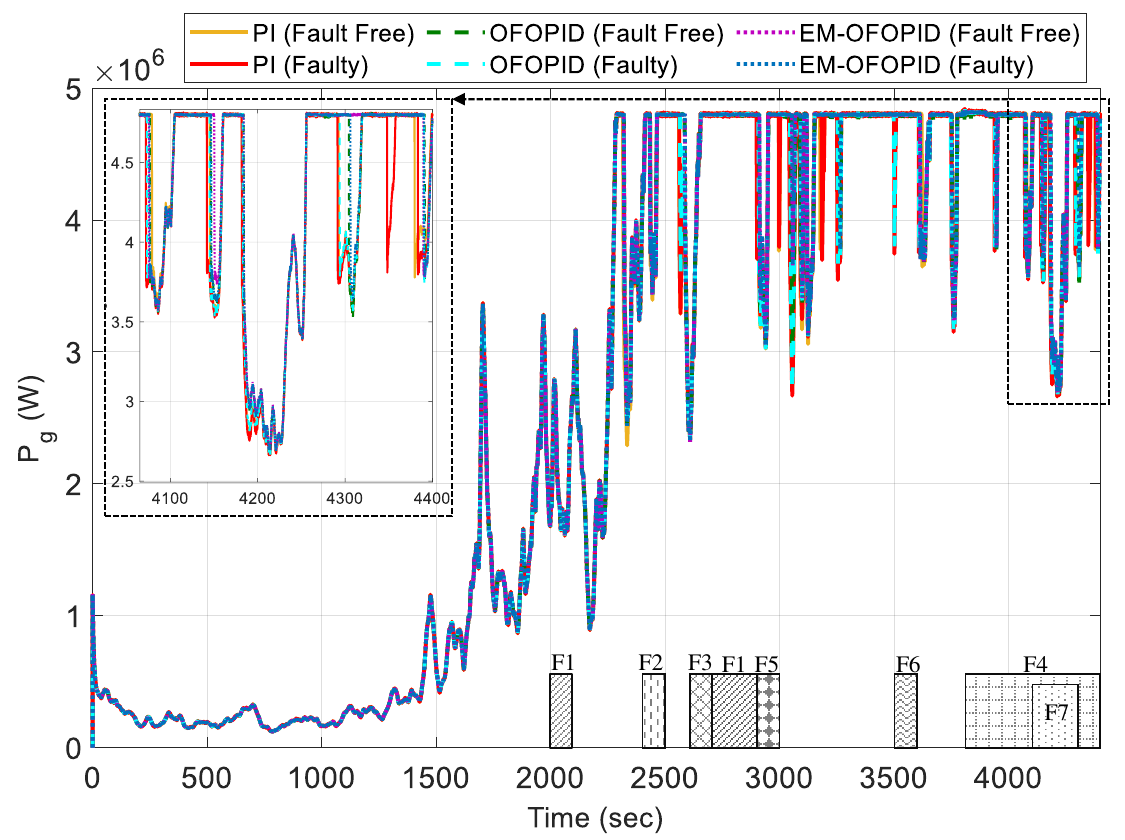}
\caption[Generator power under fault-free and faulty conditions; a comparison between PI, OFOPID, and EM-OFOPID methods.]{Generator power under fault-free and faulty conditions; a comparison between PI, OFOPID, and EM-OFOPID methods. The insets exhibit the dashed-line-highlighted regions.}
\label{fig:8-4}
\end{figure}
\begin{figure}[t!]
\centering
\includegraphics[width=4.5 in]{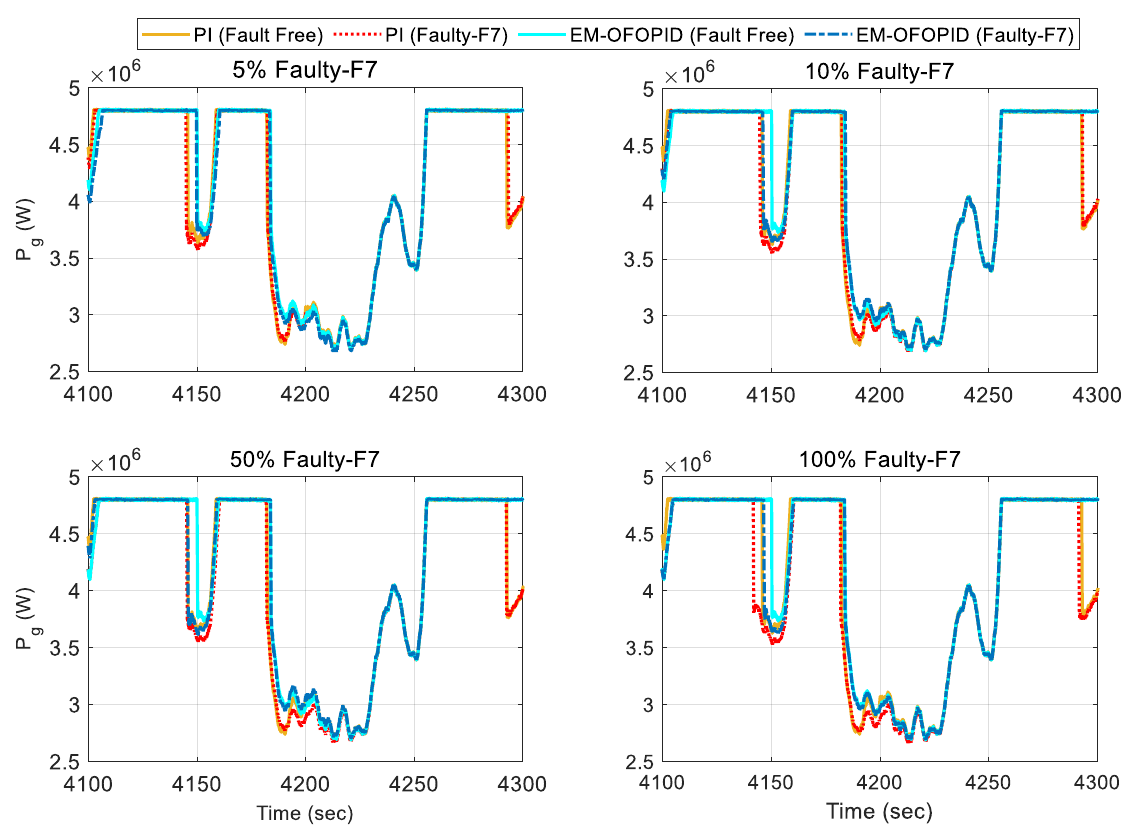}
\caption{Generator power under occurrence of faults F7 during the time interval of 4100-4300 s; a comparison between PI and EM-OFOPID methods.}
\label{fig:777-4}
\end{figure}
Figures \ref{fig:9-4} and \ref{fig:10-4} show the rotor and generator speed, respectively. As it is observed, despite the occurrence of different faults, the conventional PI and optimal FOPID approaches deliver sort of acceptable performance. In this regard, as the zoomed-in insets show in Figures. \ref{fig:9-4} and \ref{fig:10-4}, compared to the fault-free case, the performance of conventional PI degrades as the faults occur. The optimal FOPID also shows similar behavior; however, its performance is significantly better than the conventional PI. Considering Figures. \ref{fig:9-4} and \ref{fig:10-4} it is obvious that although each fault imposes its effects based on its severity level, the EM-OFOPID evidently demonstrates improved performance, that is to say, the proposed fault-tolerant EM-OFOPID controller with extended memory of pitch angles can work well even at the situation of simultaneous sensor, actuator, and system faults. \textcolor{black}{Figure \ref{fig:11-4} shows the scatter plot of generated power, where it is observed that, in comparison to other methods, the proposed control scheme delivers less fluctuations in the generated power and tends to be more consistent in the power generated at a given wind speed.} Figures \ref{fig:12-4}, \ref{fig:13-4}, and \ref{fig:14-4} respectively depict the measured pitch angle of blade 1 from sensor 1, blade 2 from sensor 2, and blade 3 from sensor 2 in the presence of different fault scenarios. As it is observed, the operational constraints on pitch angle ($-3^{\circ} \le \beta \le 90^{\circ} $) are respected.
\begin{figure}[t!]
\centering
\includegraphics[width=4.5 in]{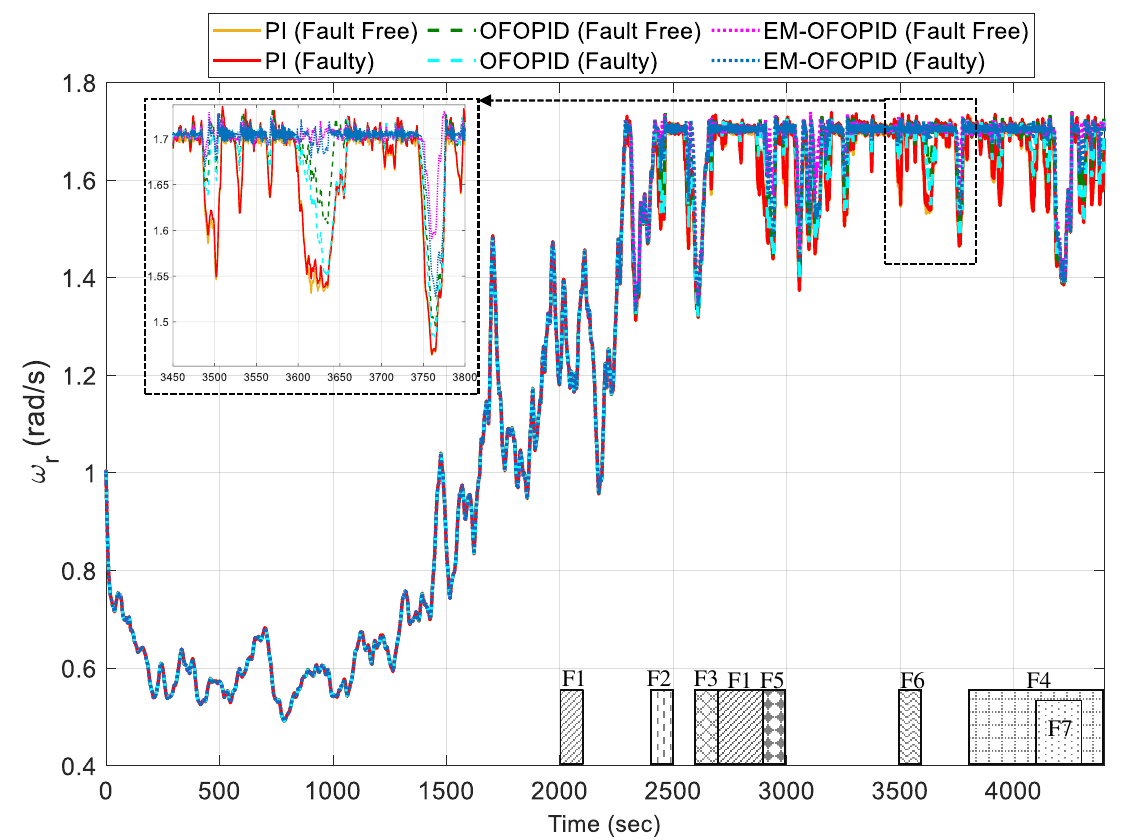}
\caption[Rotor speed under fault-free and faulty conditions; a comparison between PI, OFOPID, and EM-OFOPID methods.]{Rotor speed under fault-free and faulty conditions; a comparison between PI, OFOPID, and EM-OFOPID methods. The insets exhibit the dashed-line-highlighted regions.}
\label{fig:9-4}
\end{figure}
\begin{figure}[t!]
\centering
\includegraphics[width=4.5 in]{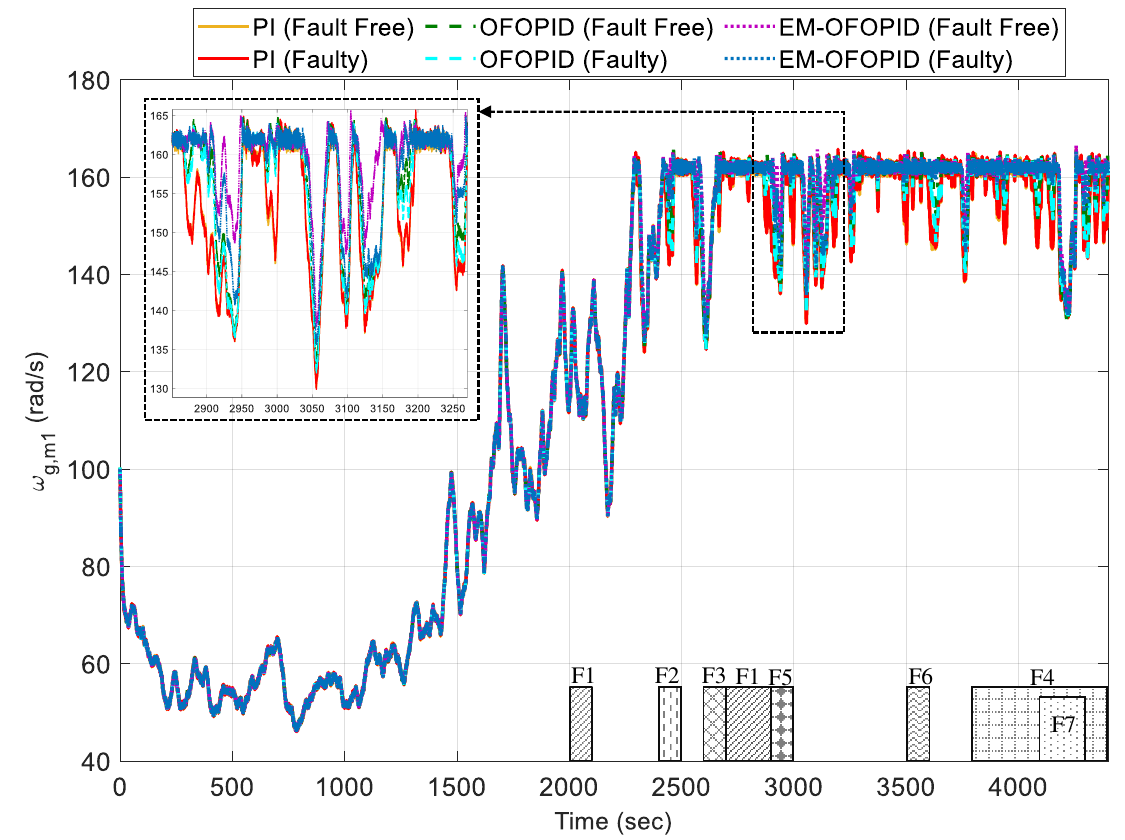}
\caption[Generator speed under fault-free and faulty conditions; a comparison between PI, OFOPID, and EM-OFOPID methods.]{Generator speed under fault-free and faulty conditions; a comparison between PI, OFOPID, and EM-OFOPID methods. The insets exhibit the dashed-line-highlighted regions.}
\label{fig:10-4}
\end{figure}
\begin{figure}[t!]
\centering
\includegraphics[width=5.5 in]{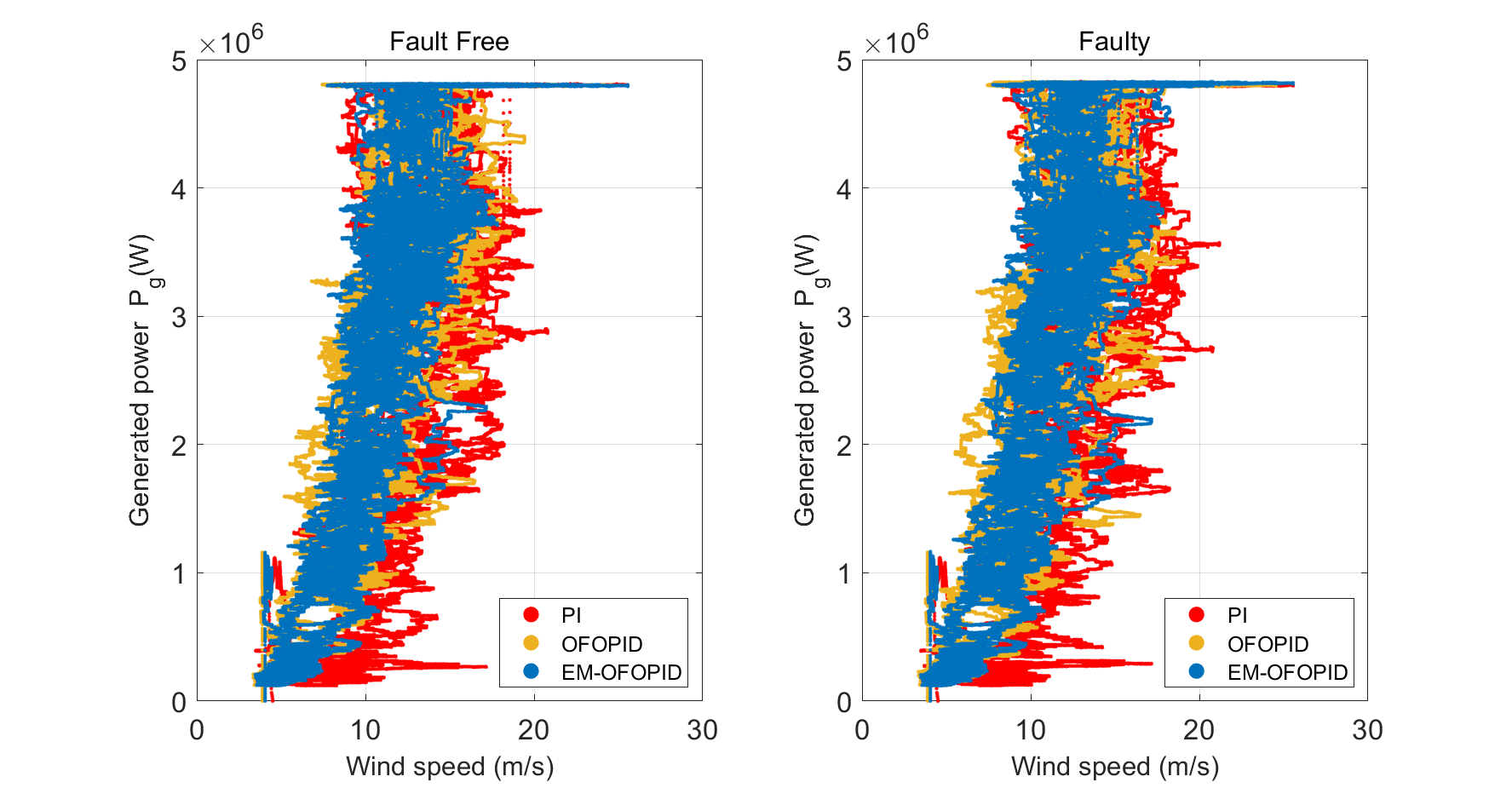}
\caption{Scatter plot of generated power; a comparison between PI, OFOPID, and EM-OFOPID methods.}
\label{fig:11-4}
\end{figure}
\begin{figure}[t!]
\centering
\includegraphics[width=4.5 in]{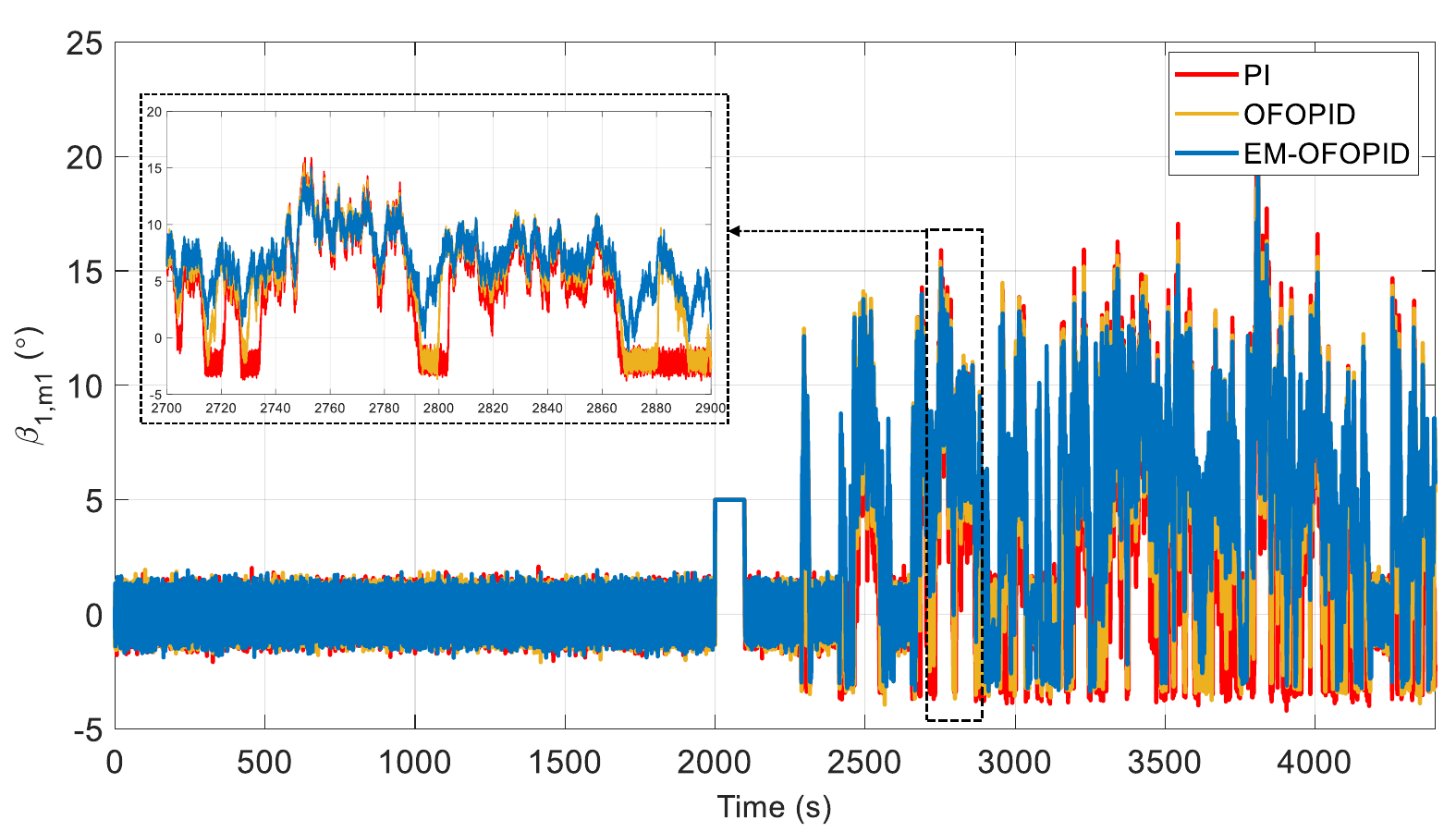}
\caption[Pitch angle of blade 1 under faulty condition; a comparison between PI, OFOPID, and EM-OFOPID methods.]{Pitch angle of blade 1 under faulty condition; a comparison between PI, OFOPID, and EM-OFOPID methods. The insets exhibit the dashed-line-highlighted regions.}
\label{fig:12-4}
\end{figure}
\begin{figure}[t!]
\centering
\includegraphics[width=4.5 in]{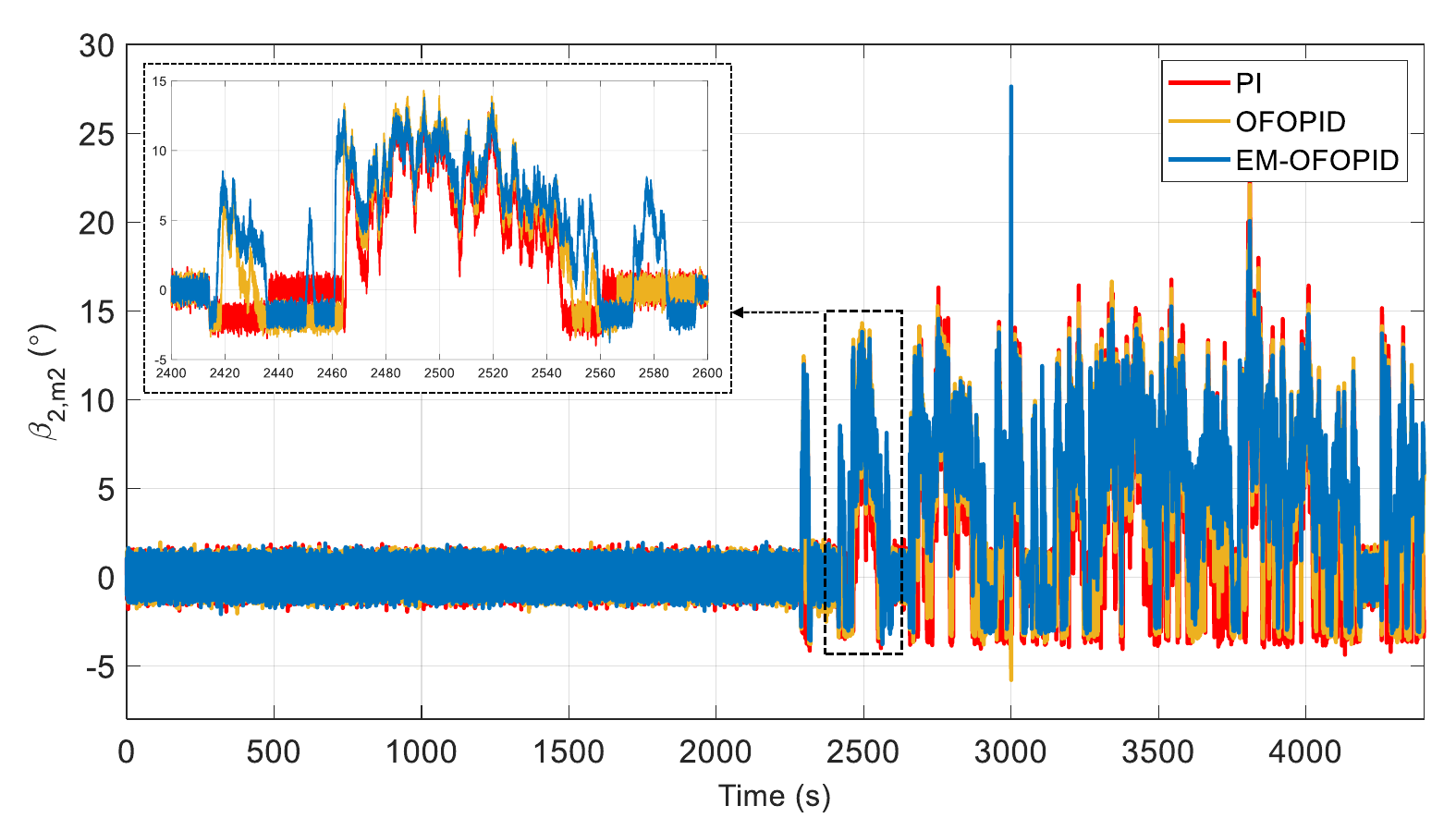}
\caption[Pitch angle of blade 2 under faulty condition; a comparison between PI, OFOPID, and EM-OFOPID methods.]{Pitch angle of blade 2 under faulty condition; a comparison between PI, OFOPID, and EM-OFOPID methods. The insets exhibit the dashed-line-highlighted regions.}
\label{fig:13-4}
\end{figure}
\begin{figure}[t!]
\centering
\includegraphics[width=4.5 in]{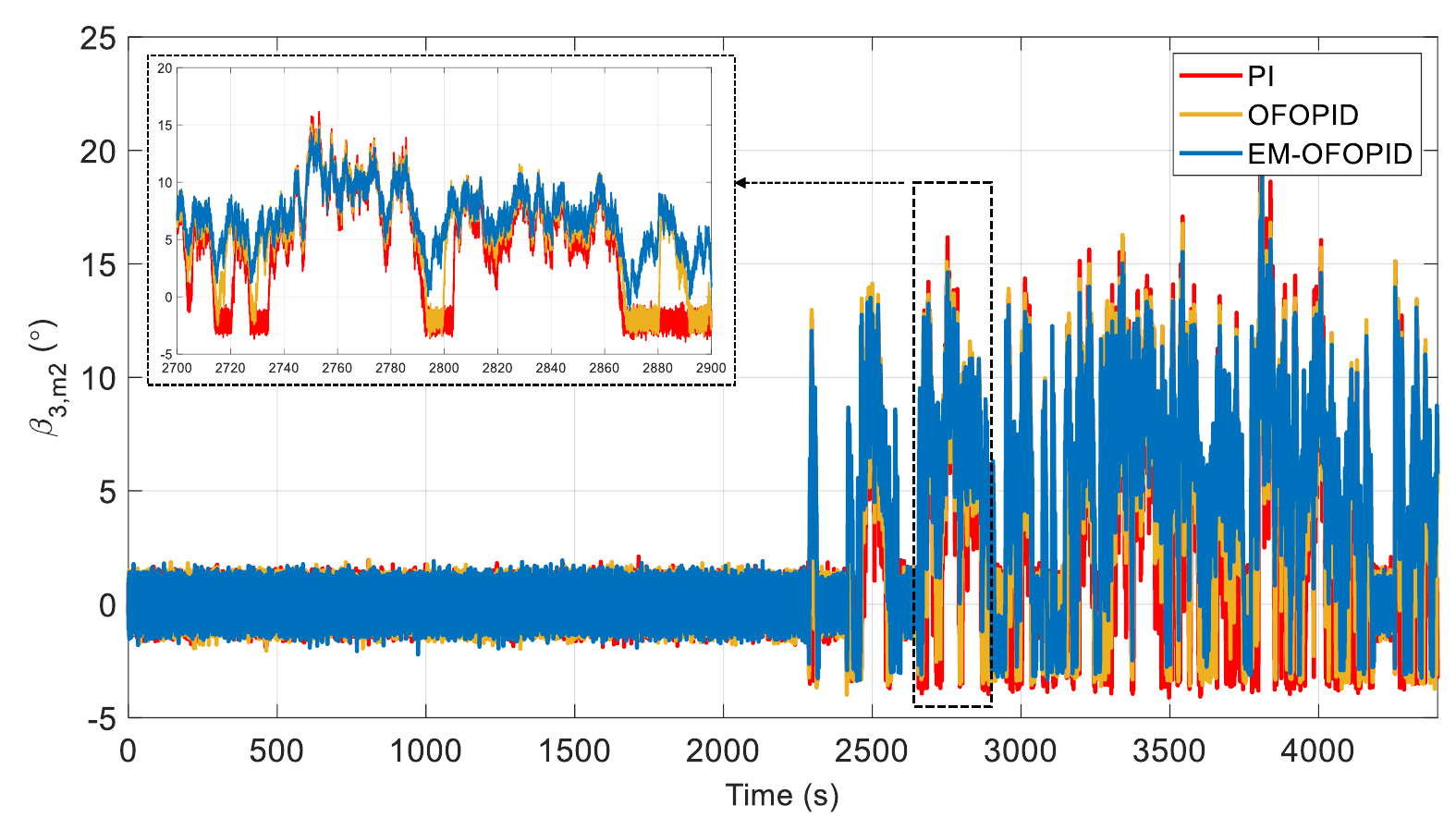}
\caption[Pitch angle of blade 3 under faulty condition; a comparison between PI, OFOPID, and EM-OFOPID methods.]{Pitch angle of blade 3 under faulty condition; a comparison between PI, OFOPID, and EM-OFOPID methods. The insets exhibit the dashed-line-highlighted regions.}
\label{fig:14-4}
\end{figure}

To sum up, according to the simulation results illustrated, the investigated conventional PI, DWPFA optimized FOPID, and DWPFA optimized EM-FOPID controllers, \textcolor{black}{efficiently} tolerate the effects of sensor, actuator, and system faults. However, as investigated, the EM-OFOPID demonstrated the best performance in mitigating the effects of fault scenarios and improving the power generation of the WT.

\begin{rem}
\label{rem:4-55}
Although non-PID approaches can often produce superior behaviour, there is a strong industrial preference for PID controllers due to some main reasons such as: a) simplicity of design and implementation, so PID controller do not require overly complex mathematical models for the design process, b) they can be re-tuned in the field if necessary, by operators who can make small changes to improve performance without having to go back to re-do complicated analysis, and c) considering the industry's current infrastructures and the hardship and costly efforts that have to be done to install the required hardware for other complex methods, the benefits of the proposed method in terms of no requirements for costly or complex hardware installments will be salient.
\end{rem}

\section{Conclusions}
\label{sec:4.6}
This chapter proposed a new fault-tolerant pitch control scheme to adjust the pitch angle of WT blades subjected to sensor, actuator, and system faults. The proposed FTPC scheme comprises a fractional-calculus-based extended memory of pitch angles augmented with FOPID controller to maintain improvement in power generation performance of WT. Furthermore, a novel dynamic weighted parallel firefly algorithm (DWPFA) has been proposed, and its performance was evaluated through well-defined CEC2017 benchmark functions in comparison with other EAs. Non-parametric Friedman and Friedman Aligned statistical tests were utilized to analyze the quality of solutions. Comparative simulation results revealed the superiority of DWPFA over other EAs. The performance of the proposed fault-tolerant EM-FOPID has been investigated in comparison to conventional PI and optimal FOPID approaches, on a 4.8-MW WT model in region III, where the controller parameters were tuned using DWPFA. \textcolor{black}{Simulation results demonstrated the efficaciousness of the proposed FTPC strategy} under fault-free and faulty conditions. \textcolor{black}{Accordingly}, the proposed DWPFA optimized EM-FOPID not only demonstrated the best performance in mitigating the effects of fault scenarios, but also improved the power generation of the WT.

A limitation for realization of the proposed EM-FOPID control scheme for the blade pitch control system is a slight increase in the computational complexity due to the memory requirements based on the fractional-order operators and the higher number of parameters that must be tuned, compared to the conventional controllers. Since approximations must be considered to implement such controllers, fractional-order operators' implementations are relatively complex and costly compared to their integer-order counterparts. However, for the specific controller presented here, it only takes 1.8\% and 6.4\% more time as compared to that of FOPID and PI controllers, respectively, to perform the blade pitch control. Besides, although the conventional PI and the optimal FOPID approaches provide lower computational complexities with respect to the proposed EM-FOPID, prioritizing more power generation and better fault-tolerant performances will make the proposed method a more preferred candidate providing a viable solution that can be implemented with ease without the needs of costly or complex hardware installments.



\chapter{Fault-Tolerant Maximum Power Extraction from Wind Turbines} 

\label{Chapter5} 

\section{Introduction}
\label{sec:5.1}
This chapter presents a nonlinear control approach to maximize power extraction of WECSs operating below their rated wind speeds. Due to nonlinearities associated with the dynamics of WECSs, the stochastic nature of wind, and the inevitable presence of faults in practice, developing reliable fault-tolerant control strategies to guarantee maximum power production of WECSs has always been considered important. A fault-tolerant fractional-order nonsingular terminal sliding mode control (FNTSMC) strategy to maximize the captured power of WTs subjected to actuator faults is developed. A nonsingular terminal sliding surface is proposed to ensure fast finite-time convergence, whereas the incorporation of fractional calculus in the controller enhances the convergence speed of system states and simultaneously suppresses chattering, resulting in extracted power maximization by precisely tracking the optimum rotor speed. Closed-loop stability is analysed and validated through the Lyapunov stability criterion.

Various linear and nonlinear control strategies have tackled this tracking problem to achieve the maximum power extraction objective such as adaptive neural network-based control \cite{ganjefar2014improving}, backstepping-based cascade control \cite{wang2018non},  optimal control \cite{mokhtari2018high}, optimal nonlinear model predictive control  \cite{song2021maximum}, and neuro-adaptive sliding mode control \cite{haq2020maximum}. Recently, authors in \cite{abolvafaei2021two} developed two fractional-order fast terminal sliding mode controllers to reduce the mechanical stress on the drivetrain and maximize the captured power of variable-speed WTs. A fractional-calculus-based model of the WT was presented, where the proposed controllers have successfully performed the maximum power extraction task. However, although the aforementioned studies have successfully dealt with the power maximization problem of WT systems despite the wind speed variations; a critical issue in WT control systems remains neglected; the existence of actuator faults that degrade the overall system's stability and power production performance. Hence, developing a fault-tolerant robust control scheme capable of accommodating the faults' effects can be favourable to ensure the desired power production performance. In this regard, an active FTC scheme for a DFIG-based WT with actuator fault and disturbance was developed in \cite{li2018active}. The control structure comprised a Takagi--Sugeno fuzzy observer to estimate the faults and disturbances and an FTC scheme to reduce their effects. As reported, the developed active FTC scheme has successfully reduced the peak current in the transient process. In another study \cite{habibi2018adaptive}, an adaptive modified PID controller was proposed to maximize the captured power of WT systems. According to the authors, the developed approach demonstrated more acceptable performance compared to other classical methods in dealing with unexpected actuator faults and wind speed fluctuations. However, the controller requires further improvements in order to mitigate the fault effects. Authors in \cite{schulte2015fault} proposed an actuator fault diagnosis and FTC approach by incorporating a Takagi--Sugeno fuzzy system and a sliding mode observer for WTs with a hydrostatic transmission. The simulation results were reported to demonstrate similar performance of the fault-free and faulty cases, revealing the desirable performance of the FTC approach. In another study \cite{azizi2019fault}, an active FTC scheme was developed for rotor speed regulation and maximum power extraction of a WT in the presence of actuator faults and uncertainties. In this regard, the authors designed a full-order compensator for fault and disturbance attenuation and an adaptive output feedback SMC with an integral surface to perform the FTC. The proposed strategy was reported to demonstrate better fault-tolerant capability and more robust behavior with fewer fluctuations and less fatigue on the rotor speed and output power than conventional PID and disturbance accommodation controllers.

Stemming from the desirable merits of SMC approaches such as fast dynamic response, good transient performance, stability, and robustness to matched parameter uncertainties, it has established itself as one of the most effective strategies to deal with different linear and nonlinear control problems \cite{mofid2021adaptive,benbouhenni2021synergetic,riaz2021review,mobayen2021adaptive}. In this regard, due to the highly nonlinear behavior of WECSs, power control and performance enhancement of these systems have been the topic of many SMC-based control strategies during the past decade \cite{eddine2016comprehensive, liu2018dfig, nayeh2020multivariable}. Authors in \cite{eddine2016comprehensive} developed an improved SMC controller with reduced chattering for power maximization of a grid-connected DFIG-based WECS under bounded uncertainties and disturbances. In another study \cite{liu2018dfig}, an exponential reaching law was proposed to reduce the chattering phenomenon and enhance the WT active and reactive power control performance in an SMC controller. In a similar study \cite{nayeh2020multivariable}, an improved SMC was developed to deal with the active and reactive power control problems of a DFIG-based WT subjected to various uncertainties. As reported, comparative investigations of the developed SMC approach and the $H_{\infty }$ robust control method demonstrated superior performance in terms of tracking error, overshoot, and settling time. The conventional SMC is relatively straightforward to design and implement. However, despite its satisfactory performance in practical applications, it has some defects such as the chattering problem, failing to establish a finite-time convergence of the systems states to the equilibrium point, and producing unnecessarily large control signals \cite{kelkoul2020stability}. Accordingly, to overcome these shortcomings and enhance their performance, various modifications have been developed in the literature, such as adaptive SMC \cite{jing2019adaptive}, higher-order SMC \cite{laghrouche2021barrier, mousavi2021robustp, zheng2018integral}, soft computing-based SMC \cite{xu2019event,hashtarkhani2019neural}, and fractional calculus-based SMC \cite{mousavi2021robust, xie2021coupled}. Higher-order SMCs, such as TSMC approaches, have successfully dealt with the finite-time convergence and large control signal problems associated with conventional SMCs \cite{ali2020lcc}. However, regardless of their provided improvements, they still need further chattering mitigation and convergence speed improvements. On the other hand, due to the distinctive memory features of fractional-order derivatives \cite{mousavi2018fractional}, the augmentation of fractional-order calculus with linear and nonlinear controllers has led to enhanced performance in \textcolor{black}{many control applications \cite{fahad2020advanced, mousavi2015memetic}.} In this context, the synthesis of fractional calculus with SMC controllers has shown to be an effective amendment to the controllers' performance by mitigating the chattering phenomenon and delivering faster convergence speed \cite{nicola2021fractional, sami2020super, mousavi2021robust}.

In accordance with the above-discussed literature and considering the desirable performance of SMC approaches in WT control problems, this chapter proposes a fractional-order nonsingular terminal sliding mode controller to maximize the power extraction of WECSs operating in the partial-load region. Its main contributions are as follows:
\begin{itemize}
\item	\textcolor{black}{A design that integrates the fractional calculus into NTSMC to effectively enhance the finite-time convergence speed and simultaneously alleviate the chattering phenomena. Therefore, the optimum rotor speed tracking is achieved with little error, resulting in more power extracted from the wind;}
\item	Validation and performance assessment of the fault-tolerant capability of proposed design using partial loss on the generator torque;
\item	\textcolor{black}{Comparative performance analysis of the developed control strategy with conventional SMC \cite{merida2014analysis} and second-order fast terminal SMC \cite{abolvafaei2019maximum}. Accordingly, taking advantage of the proposed control law, a desirable optimum rotor speed tracking performance with fewer fluctuations and faster transient response is achieved.}
\end{itemize}

The remainder of the chapter are organized as follows. Section \ref{sec:5.2} presents the modelling of the two-mass WT along with the problem statement and fault description. Section \ref{sec:5.3} is dedicated to the proposed controller's design process and presents the stability analysis based on the Lyapunov stability theorem. Section \ref{sec:5.4} presents the comparative simulation results, and finally, some conclusions are given in Section \ref{sec:5.5}.

\section{Problem Formulation}
\label{sec:5.2}
In this section the two-mass WT model under study is first presented. The power capture maximization problem in the partial-load region alongside the considered actuator fault are then introduced.

\subsection{WT Mechanical Model}
\label{sec:5.2.1}
The drivetrain provides the generator's required rotational speed by converting high torque on the low-speed shaft to the low torque on the high-speed shaft to be transferred to the generator unit. The mechanical model of the two-mass WT represented by Figure \ref{fig:1-5} can be described as follows \cite{azizi2019fault},
\begin{subequations}
\label{ENERGIES__4_}
\begin{align}
J_{R}\dot{\omega}_{r}&=T_a-T_{LS}-D_R\omega_r, \\
J_{G}\dot{\omega}_{g}&=T_{HS}-T_{G}-D_G\omega_g, \\
\dot{\theta}_{\Delta}&=\omega_r-\frac{\omega_g}{N_{GB}},
\end{align}
\end{subequations}
where $T_{LS}=k_{ls}\left(\theta_{HS}-\theta_{LS}\right)+D_{LS}\left(\omega_r-\omega_{LS}\right)$ and $T_{HS}= T_{LS}/N_{GB}$ represent the low and high speed shaft torque, respectively. $\theta_{\Delta}=\theta_{HS}-\theta_{LS}$ denotes the torsion angle of drivetrain, and $\omega_{LS}$ is the low shaft speed, and $\theta_{R}$ and $\theta_{LS}$ represent the rotation angle of the rotor and generator shafts, respectively. $D_R$, $D_G$, and $D_{LS}$ are the rotor external damping, the generator external damping, and low-speed shaft damping, respectively. The gearbox ratio is expressed as:
\begin{equation}
\label{ENERGIES__5_}
N_{GB}=\frac{\omega_g}{\omega_{LS}}=\frac{T_{LS}}{T_{HS}}.
\end{equation}

Using \eqref{ENERGIES__4_} and \eqref{ENERGIES__5_}, one can obtain
\begin{equation}
J_{t}\dot{\omega}_{r} =T_{a}-D_{t}\omega_{r}-T_{G},
\label{ENERGIES__6_}
\end{equation}
where $J_t=J_R+N^{2}_{GB}J_G $, and $D_t=D_R+N^{2}_{GB}D_G $ denote the induced total inertia, generator torque on the rotor side, and the induced total external damping on the rotor side, respectively.
\begin{figure}
\centering
\includegraphics[width=3 in]{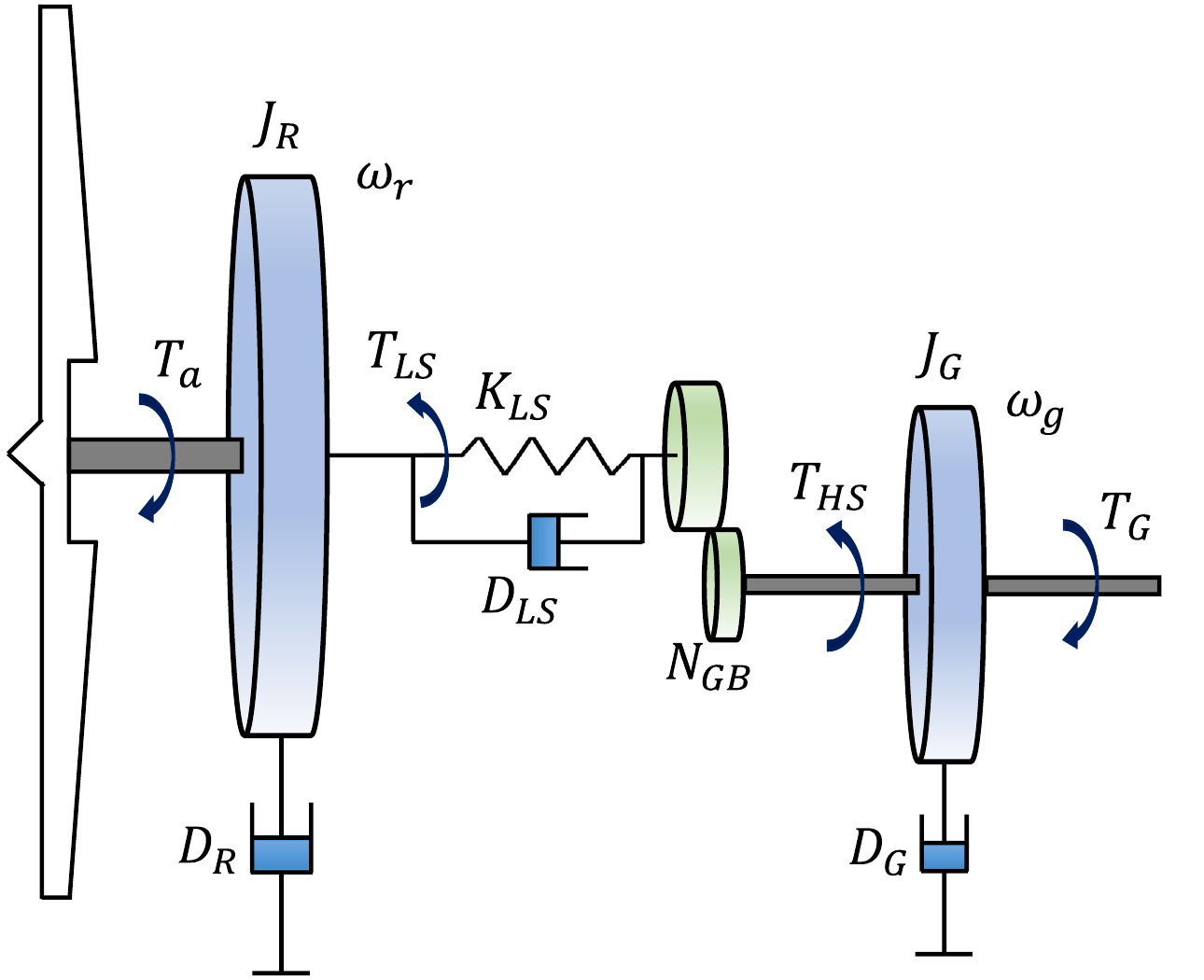}
\caption{\textcolor{black}{Schematic of the two-mass model. (Note: the blades are not included as a separate mass, and are shown as an illustration).}}
\label{fig:1-5}
\end{figure}

The dynamics of the generator are modelled as a first-order system to track the requested generator torque $T_{G,ref}$ as $\dot{T_{G}}=-T_G/\tau_T+T_{e,ref}/\tau_T$ \cite{johnson2006control}, where $\tau_T$ denotes the time constant. Accordingly, the electrical power produced in the generator can be expressed as $P_e=\eta_g\omega_g T_G $.

\subsection{Problem Statement}
\label{sec:5.2.2}
The control objective is to extract the maximum power from the wind energy in the partial-load region. To this end, the power coefficient $C_{P} $ needs to be obtained based on the optimum pitch angle $\beta_{opt}\left(t\right)$ and the optimum tip speed ratio $\lambda_{opt}\left(t\right)$, \textit{i.e.} $C_{P,\max }\triangleq C_{P} \left(\lambda _{opt}\left(t\right),\beta _{opt}\left(t\right) \right)$, where $C_{P,\max } $ denotes the maximum power coefficient. As a common procedure, when the wind speed $\upsilon_{w}\left(t\right)$ exceeds the rated wind speed, $\upsilon _{w,rated}\left(t\right)$, some pitch angle control strategies such as adaptive PI controller \cite{badihi2020fault}, fuzzy-PI controller \cite{badihi2014wind}, and gain-scheduling FOPID \cite{asgharnia2020load} are implemented to adjust $\beta\left(t\right)$ and ensure $\omega_r\left(t\right)$ tracks the rated rotor speed $\omega_{r,rated}\left(t\right)$.

In this work, the focus is on the case that $\upsilon_{w}\left(t\right)$ is lower than $\upsilon _{w,rated}\left(t\right)$ (\textit{i.e.} region II) and $\beta\left(t\right)$ is fixed at $\beta=0$. Accordingly, rewriting the aerodynamic power \eqref{Equation_1_} we have
\begin{equation}
P_{a}^*\left(t\right) =\frac{1}{2} \rho \pi R^{2} \upsilon_{w}^{3}\left(t\right) C_{P,opt} \Big(\lambda_{opt}\left(t\right) ,\beta\left(t\right) \Big),
\label{ENERGIES__7_}
\end{equation}
where $C_{P,\max}\triangleq C_{P} \left(\lambda_{opt}\left(t\right),0\right)$. Hence, the control objective is to define a control law that maximizes the power extraction by maintaining the maximum rotor efficiency during operation, by adjusting the rotor speed $\omega_r\left(t\right)$ to follow the optimum rotor speed $\omega_{r,opt}\left(t\right)$ and as a result, ensure $\lambda\left(t\right)=\lambda_{opt}\left(t\right)$ for $t\ge 0$. Accordingly, the reference rotor speed can be derived as follows,
\begin{equation}
\omega _{r,opt}\left(t\right) ={\lambda_{opt}\left(t\right) \upsilon_{w}\left(t\right) }/{R}.
\label{ENERGIES__8_}
\end{equation}

Therefore, the maximum power extraction is achieved when $e\left(t\right)=\omega_{r,opt}\left(t\right)-\omega_{r}\left(t\right)$ converges to zero as $t$ goes to infinity.

\subsection{Actuator Faults}
\label{sec:5.2.3}
Faults in a WT system can be classified into two main categories in terms of severity. The first category consists of highly extreme faults such as actuator/pitch breakdown, which requires immediate shutdown or grid disconnections to ensure the system's safety. In contrast, the second category comprises the non-extreme faults such as sensor or actuator degradation, where fault-tolerant strategies are usually adopted to preserve the system's operation with minimum performance degradation. Since this study focuses on the power maximization and tracking control at below-rated wind speeds (\textit{i.e.} $ \upsilon_{w}\left(t\right)< \upsilon_{w,rated}\left(t\right)$), the non-extreme actuator fault scenario will be considered. To this end, the actual control input $u_f\left(t\right)$ and the designed control input $u\left(t\right)$ are expressed as follows \cite{li2017adaptive}:
\begin{equation}
u_{f}\left(t\right)=\zeta_f\left(t\right) u\left(t\right),
\label{ENERGIES__88_}
\end{equation}
where the actuator efficiency factor (or the \textit{health indicator} \cite{li2017adaptive}) $\zeta_f\left(t\right)$ is a time-varying scalar function within interval $\zeta_f \in (0,1]$, where \say{0} and \say{1} correspond to total power loss and healthy actuation, respectively.

The actuator fault considered in this study is a partial loss of the generator output torque. The WT encounters an actuator failure and loses partial actuation power after $t=600 s$. The actuator efficiency coefficient is chosen as follows and depicted in Figure \ref{fig:2-5}.

\begin{equation}
\zeta_f\left(t\right) =
\begin{cases}
1 & \text{,  if $t<600$}\\
\zeta_f\left(t-1\right)*0.995 & \text{,  if $600\leq t<670$}\\
\frac{4}{5}+\frac{1}{5}\exp\left(\frac{-\left(t-670\right)}{20}\right)-\frac{1}{20}\mathsf{chirp}\left(\frac{\pi\left(t-670\right)}{1000}\right) & \text{,  if $t\geq670$}
\end{cases}
\label{ENERGIES__9_}
\end{equation}

\textcolor{black}{The actuator failures' complex fluctuation characteristics that are found in practice are simulated using the \say{$\mathsf {chirp}$} function. The \say{$\mathsf {chirp}$} sweep signal is used to verify the controller's performance in the case of faults with varying frequencies as occurs in real engineering applications.}

\begin{figure}
\centering
\includegraphics[width=4.5 in]{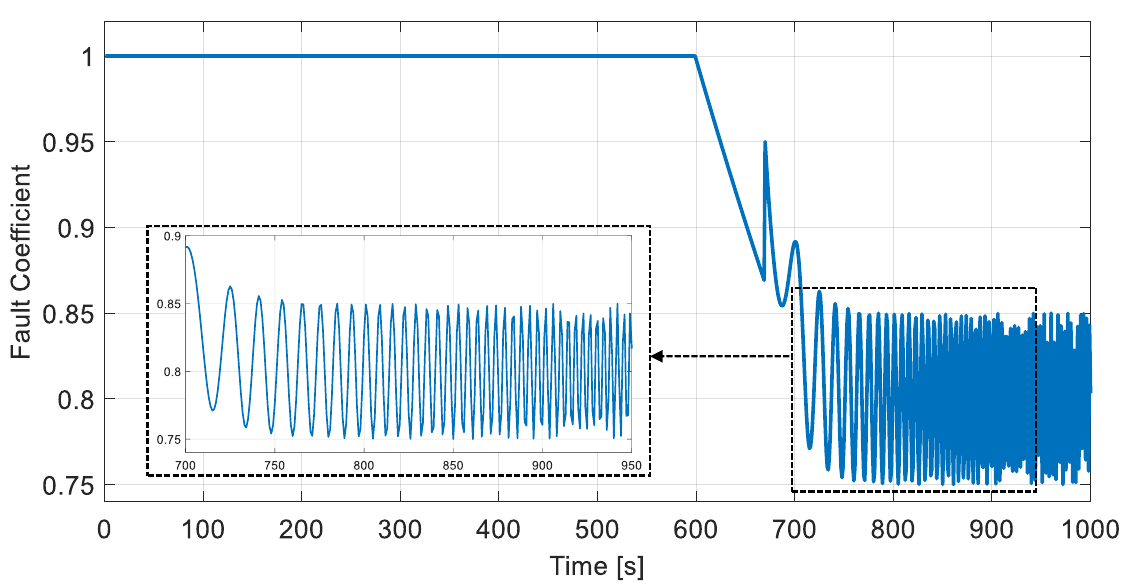}
\caption[Illustration of the actuator health indicator.]{Illustration of the actuator health indicator. $\zeta_f\left(t\right)=1$ denotes a healthy actuation and $\zeta_f\left(t\right)<1$ represents a partial loss of actuation power.}
\label{fig:2-5}
\end{figure}

\section{Proposed FNTSMC Controller Design}
\label{sec:5.3}

In this section, the maximum power extraction of WT is accomplished by rotor speed regulation so that the rotor speed tracking error $e=\omega_{r,opt}-\omega_{r}$ is minimized. For simplification of writings, the $t$ index is neglected in the equations. A fractional-order nonsingular terminal sliding (FNTS) surface is proposed as follows,
\begin{equation}
S=\phi e+{\rm {\mathfrak D}}^{\gamma-1 } e+\frac{1}{\alpha_1 }e^{\frac{l}{h}}+\frac{1}{\alpha_2 }\dot{e}^{\frac{p}{q}},
\label{ENERGIES__C1_}
\end{equation}
where $\phi>0$, $\alpha_1>0$, $\alpha_2>0$, $l>0$, $h>0$, $p>0$, $p>0$, are odd positive integers satisfying $1<\frac{p}{q}<2$, $1<\frac{l}{h}<2$, and $\frac{p}{q}<\frac{l}{h}$. ${\rm {\mathfrak D}}^{\gamma } \left(\cdot\right)$ represents the RL fractional derivative of order $1<\gamma <2$.

\begin{rem}
\label{rem:5-2}
When the system state is far from the equilibrium point, the term ${\rm {\mathfrak D}}^{\gamma-1} e$ in \eqref{ENERGIES__C1_} plays the main role by dominating the term $\frac{1}{\alpha_1 }e^{l/h} +\frac{1}{\alpha_2 }\dot{e}^{p/q}$, and guarantees a high convergence rate. Subsequently, as the system state approaches the equilibrium point, this time, the term $\frac{1}{\alpha_1 }e^{l/h} +\frac{1}{\alpha_2 }\dot{e}^{p/q}$ plays the main role and ensures a finite-time convergence.
\end{rem}

Differentiating \eqref{ENERGIES__C1_} with respect to time yields,
\begin{equation}
\label{ENERGIES__C2_}
\dot{S}=\phi \dot{e}+{\rm {\mathfrak D}}^{\gamma} e+\frac{l}{\alpha_1 h}e^{\frac{l}{h}-1}\dot{e}+\frac{p}{\alpha_2 q}\dot{e}^{\frac{p}{q}-1}\ddot{e}.
\end{equation}

Moreover, since $1<\frac{p}{q}<2$, $1<\frac{l}{h}<2$, and $\frac{p}{q}<\frac{l}{h}$, the singularity problem during the convergence of the terminal SMC is avoided. Considering \eqref{ENERGIES__6_}, \eqref{ENERGIES__C2_} can be rewritten as
\begin{equation}
\label{ENERGIES__C3_}
\dot{S}={\rm {\mathfrak D}}^{\gamma}e+\frac{p}{\alpha_2 q}\dot{e}^{\frac{p}{q}-2}\Bigg(\mathfrak{B}\dot{e}^{3-\frac{p}{q}}+\dot{e}\left(\ddot{\omega}_{r,opt}-\frac{\dot{T}_a-D_t\dot{\omega}_r}{J_t}+\frac{N_{GB}}{J_t}\dot{T}_{G}\right)\Bigg),
\end{equation}
where $\mathfrak{B}=\left(\alpha_1 p h\right)^{-1}\left(\alpha_1\alpha_2\phi q h+\alpha_2 q l e^{\frac{l}{h}-1}\right)$.

Setting $\dot{S}=0$, the following control law can be derived,
\begin{equation}
\label{ENERGIES__C4_}
\dot{T}_{G}=\left(\frac{N_{GB}}{J_{t}}\right)^{-1}\left(-\frac{\alpha_2 q}{p}\dot{e}^{1-\frac{p}{q}}{\rm {\mathfrak D}}^{\gamma} e-\mathfrak{B}\dot{e}^{2-\frac{p}{q}}-\ddot{\omega}_{r,opt}+\frac{\dot{T}_a-D_t\dot{\omega}_r}{J_t}\right).
\end{equation}

\begin{rem}
\label{rem:5-3}
It is worth mentioning that taking advantage of the developed control law \eqref{ENERGIES__C4_}, the system state remains on the sliding surface \eqref{ENERGIES__C1_} and satisfies the condition $\dot{s}=0$. Hence, the finite-time convergence of the tracking error to zero is guaranteed. However, in order to force the state toward the sliding surface in finite time and satisfy the sliding condition, a fractional-order switching law is suggested as follows:
\begin{equation}
\label{ENERGIES__C5_}
\dot{T}_{G,sw}=-\left(\frac{N_{GB}}{J_{t}}\right)^{-1}\frac{\alpha_2 q }{p} \dot{e}^{2-\frac{p}{q} } \left({\rm {\mathfrak D}}^{\gamma -1} \mathcal F_{s} \mathsf{sgn}\left(S\right)+\psi S\right),
\end{equation}
where $\psi>0$, ${\rm {\mathfrak D}}^{\gamma } \mathcal F_{s} =\kappa \left|S\right|,\; \kappa >0$, and $\mathcal F_{s} $ is an arbitrary positive auxiliary function.
\end{rem}

Accordingly, combining \eqref{ENERGIES__C4_} and \eqref{ENERGIES__C5_}, the equivalent control law can be expressed as:

\begin{multline}
\label{ENERGIES__C6_}
\dot{T}_{G,eq}=\dot{T}_{G}+\dot{T}_{G,sw}=\left(\frac{N_{GB}}{J_{t}}\right)^{-1}\Bigg(-\frac{\alpha_2 q}{p}\dot{e}^{1-\frac{p}{q}}{\rm {\mathfrak D}}^{\gamma} e-\mathfrak{B}\dot{e}^{2-\frac{p}{q}} -\ddot{\omega}_{r,opt} \\
+\frac{\dot{T}_a-D_t\dot{\omega}_r}{J_t}+\frac{\alpha_2 q }{p} \dot{e}^{2-\frac{p}{q} } \left({\rm {\mathfrak D}}^{\gamma -1} \mathcal F_{s} \mathsf{sgn}\left(S\right)+\psi S\right)\Bigg).
\end{multline}

\textcolor{black}{The block diagram of the proposed control scheme is illustrated in Figure \ref{fig:2526-5}.}
\begin{figure}
\centering
\includegraphics[width=4.5 in]{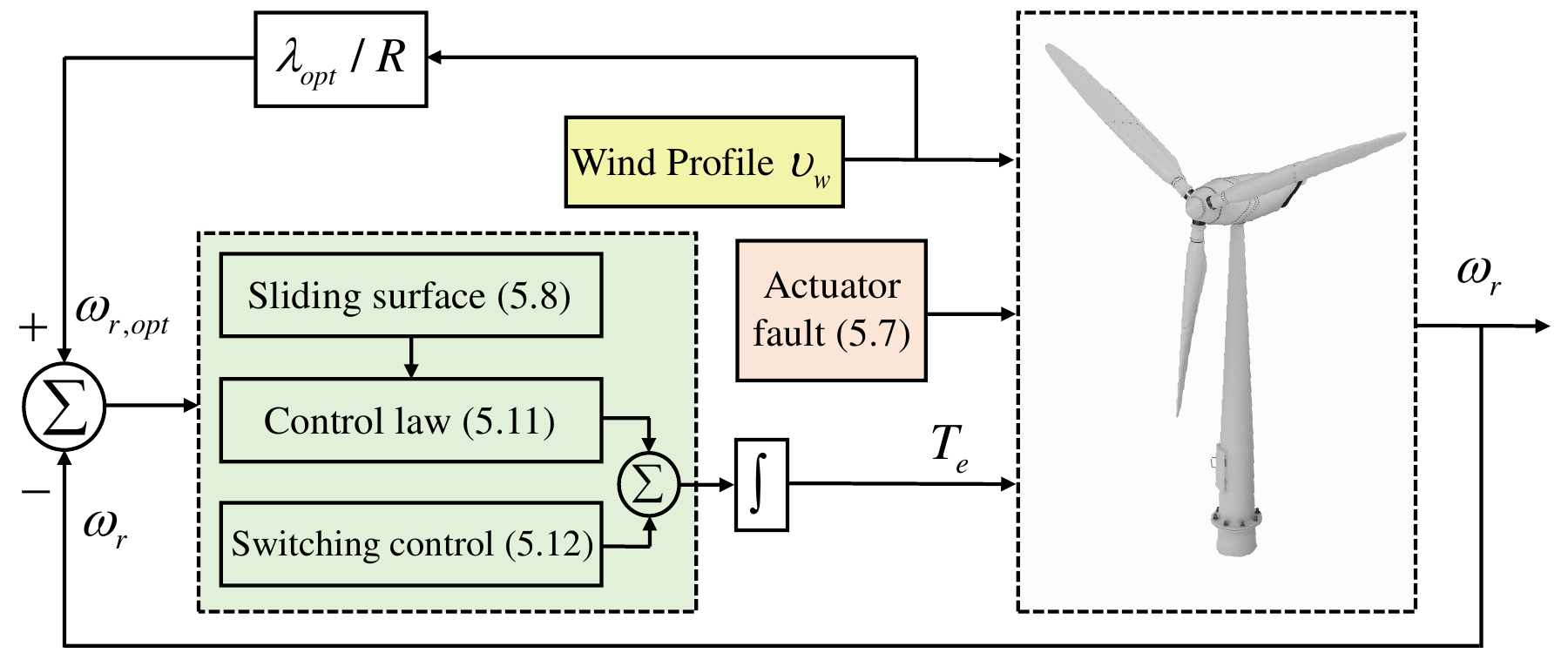}
\caption{\textcolor{black}{Block diagram of the proposed control scheme.}}
\label{fig:2526-5}
\end{figure}
\begin{thm}
By employing the FNTS surface \eqref{ENERGIES__C1_} and the switching control law \eqref{ENERGIES__C5_}, the tracking error dynamics reach the sliding surface in finite time and then converge to zero asymptotically.
\label{thm:1}
\end{thm}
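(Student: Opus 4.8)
The plan is to prove the claim in two stages, mirroring the two phases of any sliding-mode motion: first the \emph{reaching phase}, in which the error dynamics are driven onto the manifold $S=0$ in finite time, and then the \emph{sliding phase}, in which the tracking error is shown to decay to zero once it is constrained to the surface \eqref{ENERGIES__C1_}. Throughout, the natural Lyapunov candidate is $V=\tfrac{1}{2}S^{2}$, which is positive definite, so that establishing a suitable negativity property for its ordinary time derivative $\dot V=S\dot S$ is the crux of the reaching argument.

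For the reaching phase, I would substitute the total control law \eqref{ENERGIES__C6_} into the expression \eqref{ENERGIES__C3_} for $\dot S$. The equivalent part \eqref{ENERGIES__C4_} is constructed precisely so that $\dot S=0$ in the nominal case; inserting it cancels the drift terms ${\rm {\mathfrak D}}^{\gamma}e$, $\mathfrak B\dot e^{3-p/q}$, $\ddot\omega_{r,opt}$, and the aerodynamic/damping contribution $(\dot T_a-D_t\dot\omega_r)/J_t$ inherited from \eqref{ENERGIES__6_}. What survives is the switching contribution, leaving $\dot S=-\dot e\big({\rm {\mathfrak D}}^{\gamma-1}\mathcal F_{s}\,\mathsf{sgn}(S)+\psi S\big)$. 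The key observation is that the auxiliary function obeys ${\rm {\mathfrak D}}^{\gamma}\mathcal F_{s}=\kappa|S|$ with $\kappa>0$, so that ${\rm {\mathfrak D}}^{\gamma-1}\mathcal F_{s}=\kappa\,{\rm {\mathfrak D}}^{-1}|S|\ge0$ is a nonnegative, monotonically accumulating quantity. Feeding this back gives $\dot V=-\dot e\big({\rm {\mathfrak D}}^{\gamma-1}\mathcal F_{s}\,|S|+\psi S^{2}\big)$, from which I would extract a reaching inequality $\dot V\le-\eta|S|$ for a strictly positive $\eta$; integrating this bound yields the finite reaching-time estimate $t_r\le|S(0)|/\eta$.

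Once $S\equiv0$ is attained, I would study the reduced dynamics obtained by setting \eqref{ENERGIES__C1_} to zero, namely $\phi e+{\rm {\mathfrak D}}^{\gamma-1}e+\tfrac{1}{\alpha_1}e^{l/h}+\tfrac{1}{\alpha_2}\dot e^{p/q}=0$. Because $1<p/q<2$ and $1<l/h<2$ with $p/q<l/h$, the terminal terms are well defined and non-singular (as noted after \eqref{ENERGIES__C2_}), and the relation can be solved for $\dot e$ as a continuous function of $e$. I would then invoke the Mittag--Leffler stability machinery of Lemmas \ref{lem:2} and \ref{lem:3} with a second candidate $W(e)$: the fractional term ${\rm {\mathfrak D}}^{\gamma-1}e$ governs the fast approach far from the origin, while the fractional-power terminal terms enforce the finite-time-type decay near it, consistent with Remark \ref{rem:5-2}, so that $e\to0$.

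The hard part will be the reaching step, specifically the factor $\dot e$ multiplying the switching action in $\dot S$: unlike a textbook reaching law, sign-definiteness of $\dot V$ is not automatic, since $\dot V$ changes sign with $\dot e$. I would therefore argue that along closed-loop trajectories $\dot e$ retains the sign required to keep $\dot V$ negative, or equivalently absorb the offending factor into the gain design by choosing $\kappa$ and $\psi$ large enough to dominate it. A second delicate point is the sliding-phase convergence itself: because the reduced dynamics mix the nonlocal Riemann--Liouville term ${\rm {\mathfrak D}}^{\gamma-1}e$ with the fractional-power terminal terms, the classical chain rule and quadratic-form inequalities do not transfer verbatim, so I would lean on the fractional Lyapunov inequalities of Lemma \ref{lem:3} rather than an elementary phase-plane argument. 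Finally, to make the result genuinely fault-tolerant, I would verify that the actuator-efficiency mismatch induced by $\zeta_f(t)$ in \eqref{ENERGIES__88_} enters as a bounded lumped disturbance and that the switching gain $\kappa$ is selected to exceed its bound, so that the reaching inequality, and hence finite-time attractiveness of $S=0$, is preserved under the fault.
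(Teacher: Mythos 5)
Your algebra for the reaching phase is actually more faithful to the control law as written than the paper's own computation: substituting \eqref{ENERGIES__C6_} into \eqref{ENERGIES__C3_} with the exponent $\dot e^{\,2-p/q}$ of \eqref{ENERGIES__C5_} indeed leaves $\dot S=-\dot e\left(\mathfrak D^{\gamma-1}\mathcal F_{s}\,\mathsf{sgn}(S)+\psi S\right)$. But this is exactly where your proof breaks, and neither of your proposed repairs can close it. The factor $\dot e$ is genuinely sign-indefinite: $e=\omega_{r,opt}-\omega_{r}$ tracks a reference driven by a fluctuating wind profile, so $\dot e$ crosses zero repeatedly, and at any instant with $\dot e=0$ the switching action vanishes altogether, so no uniform inequality of the form $\dot V\le-\eta\left|S\right|$ can hold. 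Enlarging $\kappa$ and $\psi$ cannot help, because the gains multiply the same factor $\dot e$; no constant gain changes the sign of the product. The paper's proof closes precisely because the $\dot e$ prefactor is absent from the closed-loop $\dot S$ used in the Lyapunov step: in \eqref{ENERGIES__C10_} the reaching dynamics are taken as $\dot S=-\left(\mathfrak D^{\gamma-1}\mathcal F_{s}\,\mathsf{sgn}(S)+\psi S\right)$, which corresponds to scaling the switching action with $\dot e^{\,1-p/q}$ so that it cancels the coefficient $\tfrac{p}{\alpha_2 q}\dot e^{\,p/q-1}$ exactly. That exact cancellation, not sign tracking or gain domination, is the missing idea.

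Two further differences from the paper's argument matter. First, the paper does not use $V=\tfrac{1}{2}S^{2}$ but the augmented candidate \eqref{ENERGIES__C7_}, $V=\tfrac{1}{2}S^{2}+\tfrac{1}{2\kappa}\left(\mathfrak D^{\gamma-1}\mathcal F_{s}\right)^{2}$; differentiating the extra term and using $\mathfrak D^{\gamma}\mathcal F_{s}=\kappa\left|S\right|$ contributes $+\mathfrak D^{\gamma-1}\mathcal F_{s}\left|S\right|$ to $\dot V$, which cancels the signum term identically and removes any need for your claim $\mathfrak D^{\gamma-1}\mathcal F_{s}=\kappa\,\mathfrak D^{-1}\left|S\right|\ge 0$ (a claim resting on a composition rule for Riemann--Liouville operators that you would separately have to justify). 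Second, your finite reaching-time estimate $t_{r}\le\left|S(0)\right|/\eta$ presupposes $\dot V\le-\eta\left|S\right|$ with a uniform $\eta>0$; but $\mathfrak D^{\gamma-1}\mathcal F_{s}$ is an accumulated quantity that vanishes at $t=0$, so no such uniform lower bound exists, and the surviving term $-\psi S^{2}=-2\psi V$ alone gives only exponential, i.e.\ asymptotic, decay. The paper instead integrates the inequality $\dot V\le-\psi S^{2}$ rewritten through the definition of $V$, arriving at the logarithmic reaching-time bound \eqref{ENERGIES__C14_}. Your sliding-phase plan (analyzing the reduced dynamics on $S=0$ via the Mittag--Leffler lemmas) is reasonable and in fact more detailed than the paper, which simply asserts convergence of $e$ once $S=0$; but it cannot rescue the reaching step, which is where the proposal fails.
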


\begin{proof}
Consider the following Lyapunov function candidate:
\begin{equation}
\label{ENERGIES__C7_}
V=\frac{1}{2} S^{2} +\frac{1}{2\kappa } \left({\rm {\mathfrak D}}^{\gamma -1} \mathcal F_{s} \right)^{2}.
\end{equation}

Derivation of $V$ with respect to time yields
\begin{equation}
\label{ENERGIES__C8_}
\dot{V}=S\dot{S}+\frac{1}{\kappa } \left({\rm {\mathfrak D}}^{\gamma -1} \mathcal F_{s} \right){\rm {\mathfrak D}}^{\gamma } \mathcal F_{s} =S\dot{S}+{\rm {\mathfrak D}}^{\gamma -1} \mathcal F_{s} \left|S\right|.
\end{equation}

Taking the equivalent control law in the form of \eqref{ENERGIES__C6_} and substituting \eqref{ENERGIES__C3_} into \eqref{ENERGIES__C8_}, we obtain
\begin{equation}
\label{ENERGIES__C9_}
\dot{V}=S\Bigg[{\rm {\mathfrak D}}^{\gamma}e+\frac{p}{\alpha_2 q}\dot{e}^{\frac{p}{q}-2}\Bigg(\mathfrak{B}\dot{e}^{3-\frac{p}{q}}+\dot{e}\left(\ddot{\omega}_{r,opt} -\frac{\dot{T}_a-D_t\dot{\omega}_r}{J_t}+\frac{N_{GB}}{J_t}\dot{T}_{e,eq}\right)\Bigg)\Bigg]+{\rm {\mathfrak D}}^{\gamma -1} \mathcal F_{s} \left|S\right| \le 0.
\end{equation}

Considering \eqref{ENERGIES__C9_}, it can be verified that
\begin{align}
\label{ENERGIES__C10_}
\dot{V}&=S\left(-{\rm {\mathfrak D}}^{\gamma -1} \mathcal F_{s} \mathsf{sgn}\left(S\right)-\psi S\right)+{\rm {\mathfrak D}}^{\gamma -1} \mathcal F_{s} \left|S\right| \\ \nonumber
 &{\leq {\rm -{\mathfrak D}}^{\gamma -1} \mathcal F_{s} \left|S\right|-\psi S^{2} +{\rm {\mathfrak D}}^{\gamma -1} \mathcal F_{s} \left|S\right|} \\ \nonumber
 &{\leq -\psi S^{2} <0}.
\end{align}

Thus, asymptotical convergence of the system states to the FNTS surface $S\left(t\right)=0$ is achieved according to the Lyapunov stability criterion. From the Lyapunov function {\eqref{ENERGIES__C7_}} we have
\begin{equation}
\label{ENERGIES__C111_}
S^{2} =-2V+\frac{1}{\kappa }\left(\mathfrak D^{\gamma -1} \mathcal F_{s} \right)^{2}.
\end{equation}

Hence, the finite-time convergence can be investigated by rewriting {\eqref{ENERGIES__C10_}} as follows,
\begin{equation}
\label{ENERGIES__C11_}
\dot{V}=\frac{dV}{dt} \le -\psi S^{2} =2\psi V-\frac{\psi }{\kappa } \left(\mathfrak D^{\gamma -1} \mathcal F_{s} \right)^{2}.
\end{equation}

From {\eqref{ENERGIES__C11_}} the following inequality can be obtained,
\begin{equation}
\label{ENERGIES__C12_}
dt\le \frac{-dV}{2\psi V-\frac{\psi }{\kappa } \left(\mathfrak D^{\gamma -1} \mathcal F_{s} \right)^{2} }.
\end{equation}

Let $t_{e,r} $ represents the reaching time at which the regulation error reaches the sliding surface ($e\left(0\right)\ne 0\to e=0$, \textit{i.e.} $V_{(t_{e,r} )}=0$). Then, integrating both sides of {\eqref{ENERGIES__C12_}} yields
\begin{equation}
\label{ENERGIES__C13_}
\int _{0}^{t_{e,r} }dt \le \int _{V_{\left(0\right)}}^{V_{\left(t_{e,r} \right)}}\frac{-dV}{2\psi V-\frac{\psi }{\kappa } \left(\mathfrak D^{\gamma -1} \mathcal F_{s} \right)^{2} } =\left[\frac{-1}{2\psi } \ln \left(2\psi V-\frac{\psi }{\kappa } \left(\mathfrak D^{\gamma -1} \mathcal F_{s} \right)^{2} \right)\right] _{V_{\left(0\right)}}^{V_{\left(t_{e,r} \right)}},
\end{equation}
which further yields
\begin{equation}
\label{ENERGIES__C14_}
t_{e,r} \le \frac{1}{2\psi } \ln \left(\frac{\frac{\psi }{\kappa } \left(\mathfrak D^{\gamma -1} \mathcal F_{s} \right)^{2}-2\psi V_{\left(0\right)} }{\frac{\psi }{\kappa } \left(\mathfrak D^{\gamma -1} \mathcal F_{s} \right)^{2} } \right).
\end{equation}

According to the Lyapunov stability criterion and the provided analysis \eqref{ENERGIES__C111_}-\eqref{ENERGIES__C14_}, the finite-time convergence of the NFTS surface {\eqref{ENERGIES__C1_}} to zero is achieved at $t_{e,r} $. Moreover, $s=0$ results in the finite-time convergence of tracking error to zero.

This completes the Proof.

\end{proof}

\section{Simulation Results}
\label{sec:5.4}
This section studies the power extraction performance of the proposed FNTSMC algorithm. Accordingly, comparative investigations are provided to testify the proposed scheme’s performance in comparison with conventional SMC \cite{merida2014analysis} and second-order fast terminal SMC (SOFTSMC) \cite{abolvafaei2019maximum} approaches. The numerical simulations are carried out on a two-mass WT whose characteristics are given in Table \ref{tab:5-1}. The parameters correspond to the two-bladed variable-speed variable pitch controls advanced research turbine (CART) with hub height of 36 \si{m} \cite{stol2004geometry}. The wind profile consists of 1000 s, within the speed range of 3.4-6.7 \si{m/s} with an average speed of 5.2 \si{m/s} in region II is shown in Figure \ref{fig:3-5}.

\begin{table}[!t]
\centering
\caption{Two-mass wind turbine model parameters.}
\label{tab:5-1}
\scalebox{0.8}{
\begin{tabular}{llllllll}
\hline\hline \\[-3mm]
\textbf{Parameter}	& \textbf{Value}	& \textbf{Unit}  &  &  & \textbf{Parameter}	& \textbf{Value}	& \textbf{Unit} \\
\hline
$R$ & 21.65 & \si{m} &  &  &  $\rho$ & 1.308 & \si{kg/m^3} \\
$J_R$ & 3.25E+05 & \si{kg.m^2} &  &  &  $J_G$ & 34.4 & \si{kg.m^2} \\
$D_R$ & 27.36 & \si{N.m.s/rad} &  &  &  $D_G$ & 0.2 & \si{N.m.s/rad} \\
$D_{LS}$ & 2.691E+05 & \si{N.m.s/rad} &  &  &  $k_{ls}$ & 9500 & \si{N.m/rad} \\
$N_{GB}$ & 43.165 & \si{-} &  &  &  $P_{e,nom}$ & 600E+03 & \si{W} \\
\hline\hline
\end{tabular}}
\end{table}

\begin{figure}
\centering
\includegraphics[width=4.5 in]{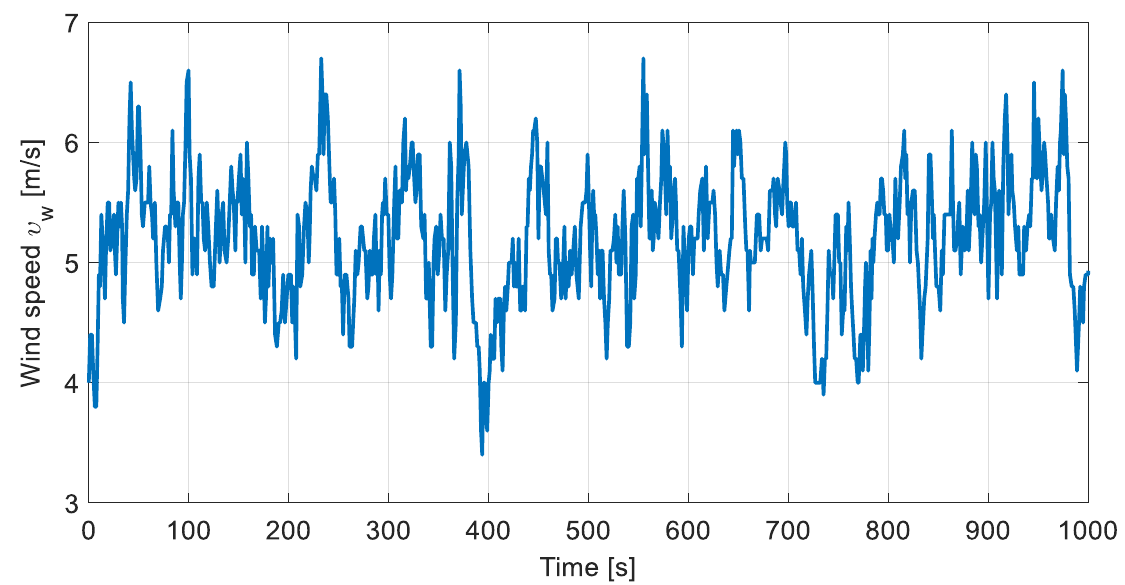}
\caption{Wind profile in region II.}
\label{fig:3-5}
\end{figure}

As stated in Section \ref{sec:5.2.2}, the control objective is to capture the maximum power by tracking the rotor speed with and without actuator fault. A partial loss on the generator's torque is considered as stated in Section \ref{sec:5.2.3} to demonstrate and verify the fault-tolerant performance of the controllers. Accordingly, during the first 600 seconds, the wind profile is applied to the system without the existence of any actuator faults. At $t=600$ s, the actuator failure starts as a partial linear loss of actuation power and ultimately changes its behavior to nonlinear mode as $t=670$ s.

Figure \ref{fig:4-5} shows the comparative performance illustration of SMC, SOFTSMC, and FNTSMC approaches in tracking the optimum rotor speed. \textcolor{black}{As the zoomed-in insets show, when the fault happens, the SMC controller is unable to supply a convenient control torque for efficient tracking of the optimal rotor speed. In addition, although the SOFTSMC delivers much better performance than that of the conventional SMC, its tracking performance is still inferior to the proposed FNTSMC's. As one can observe, the proposed FNTSMC controller outperforms other methods and presents a desirable tracking performance with fewer fluctuations and faster transient response.} The rotor speed tracking error and generator speed are presented in Figures \ref{fig:5-5} and \ref{fig:6-5}. From Figure \ref{fig:5-5} it can be seen that the tracking error associated with the proposed FNTSMC fluctuates in a small region, being 6 and 4 times smaller than similar regions for SMC and SOFTSMC, respectively. From Figures \ref{fig:4-5}-\ref{fig:6-5} one can observe that, during the initial seconds of fault occurrence (\textit{i.e.} $t=600-670$ s), compared to the SMC and SOFTSMC methods, the FNTSMC demonstrates excellent tracking performance. However, regardless of a bit of fluctuation in the FNTSMC’s performance at $t=670$ s, it gets back on track quickly, converges to its previous (no fault) small error region, and successfully tolerates the actuator fault. Furthermore, although SMC and SOFTSMC approaches have delivered acceptable fault-tolerant performances, their performance has been degraded after actuator fault occurrence, showing their weakness in the fault-tolerant task.

\begin{figure}
\centering
\includegraphics[width=4.5 in]{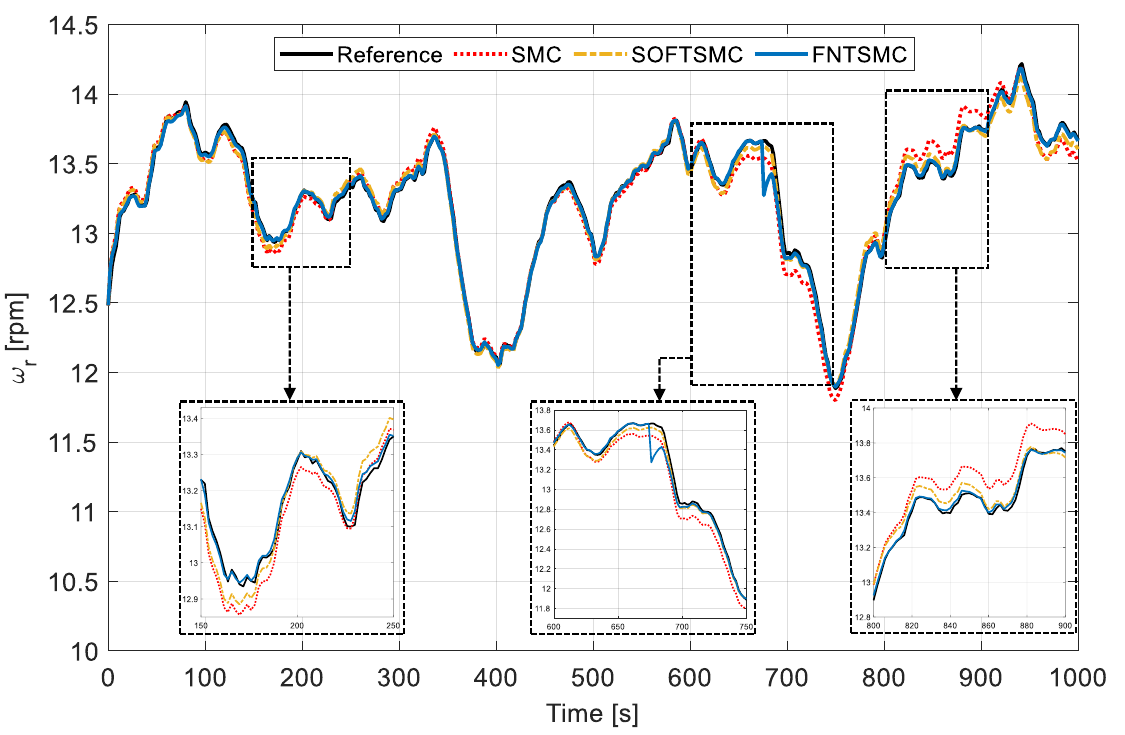}
\caption[Rotor speed tracking; a comparison between SMC, SOFTSMC, and FNTSMC approaches.]{Rotor speed tracking; a comparison between SMC, SOFTSMC, and FNTSMC approaches. The insets show the detail of the regions highlighted by dashed black lines.}
\label{fig:4-5}
\end{figure}

\begin{figure}
\centering
\includegraphics[width=4.5 in]{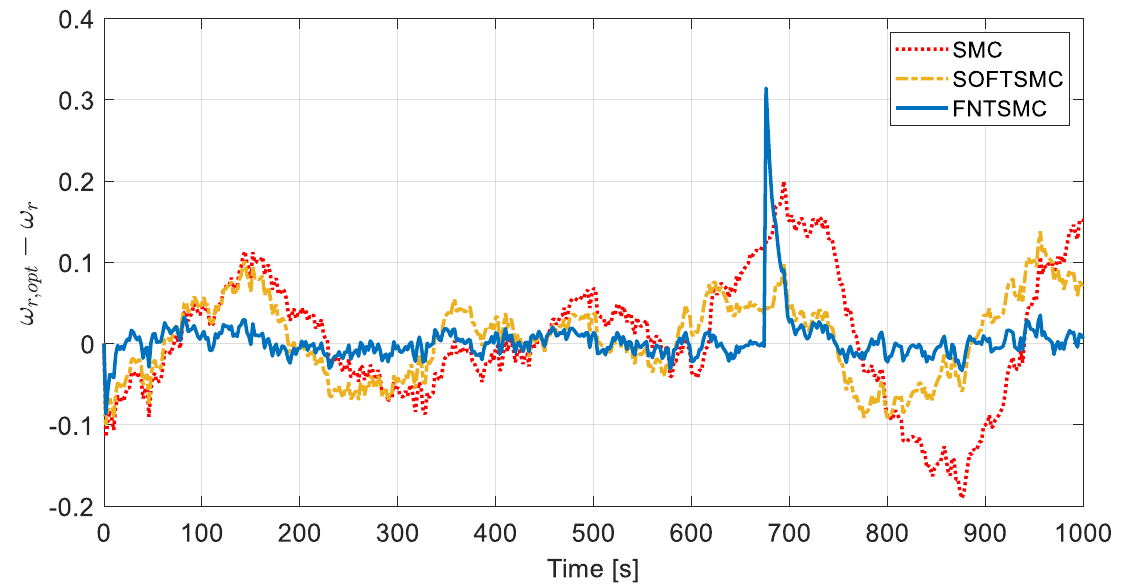}
\caption{\textcolor{black}{Rotor speed tracking error; a comparison between SMC, SOFTSMC, and FNTSMC approaches.}}
\label{fig:5-5}
\end{figure}

\begin{figure}
\centering
\includegraphics[width=4.5 in]{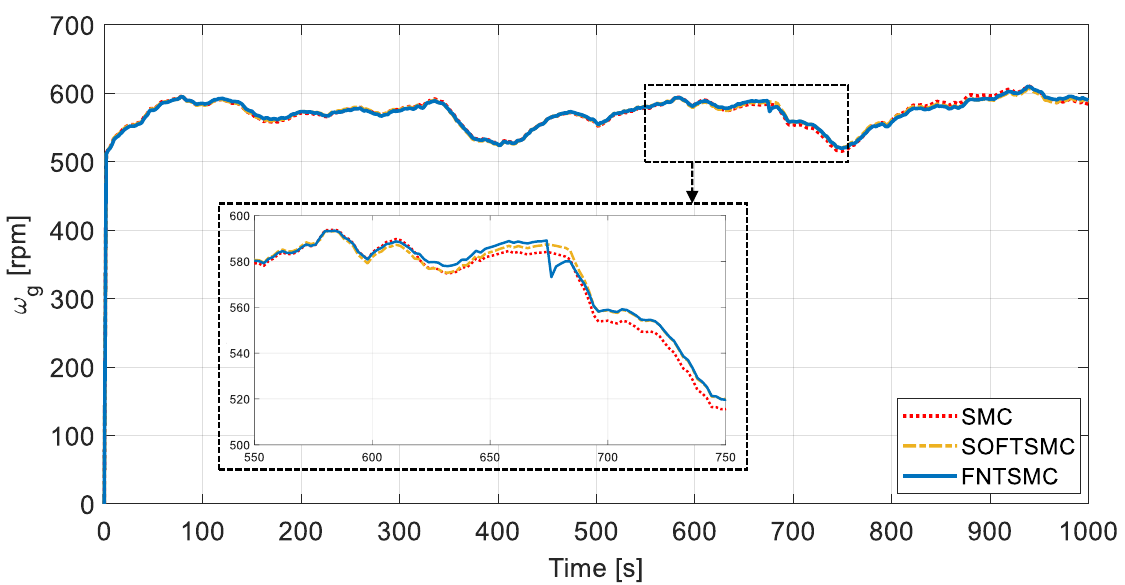}
\caption[Generator speed; a comparison between SMC, SOFTSMC, and FNTSMC approaches.]{Generator speed; a comparison between SMC, SOFTSMC, and FNTSMC approaches. The inset shows the detail of the region highlighted by dashed black lines.}
\label{fig:6-5}
\end{figure}

Figures \ref{fig:7-5}-\ref{fig:9-5} respectively illustrate the generator torque, low-speed shaft torque, and the electric power comparisons between the control approaches under study. \textcolor{black}{As mentioned in Section \ref{sec:5.2.3}, an actuator failure and partial loss of the actuation power happens after $t=600 s$.} \textcolor{black}{According to \ref{fig:8-5} it can be seen that, during the fault occurrence, the proposed FNTSMC controller excites less the drivetrain while providing a better power capture.} Considering Figures \ref{fig:7-5}-\ref{fig:9-5}, it can be observed that due to embedding the fractional-order component alongside the nonsingular terminal SMC design, the proposed FNTSMC algorithm tracks the rotor speed accurately and provides a better power capture in comparison with SMC and SOFTSMC approaches. Accordingly, the weakness of SMC and SOFTSMC approaches in supplying a suitable control torque to efficiently track the rotor speed $\omega_{r,opt}$ is apparent.

\begin{figure}
\centering
\includegraphics[width=4.5 in]{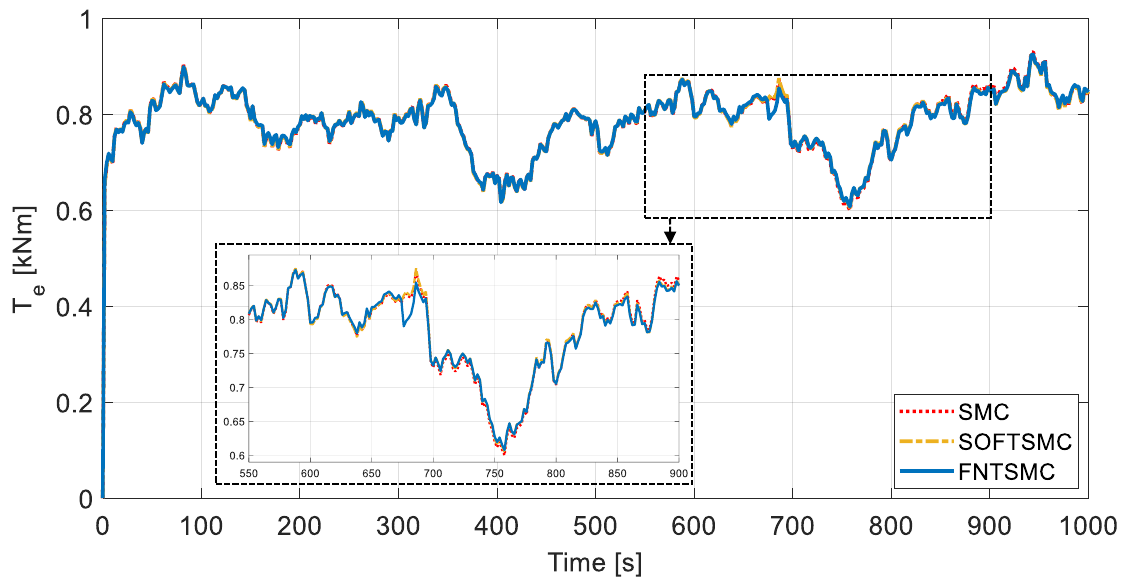}
\caption[Generator torque; a comparison between SMC, SOFTSMC, and FNTSMC approaches.]{Generator torque; a comparison between SMC, SOFTSMC, and FNTSMC approaches. The inset shows the detail of the region highlighted by dashed black lines.}
\label{fig:7-5}
\end{figure}

\begin{figure}
\centering
\includegraphics[width=4.5 in]{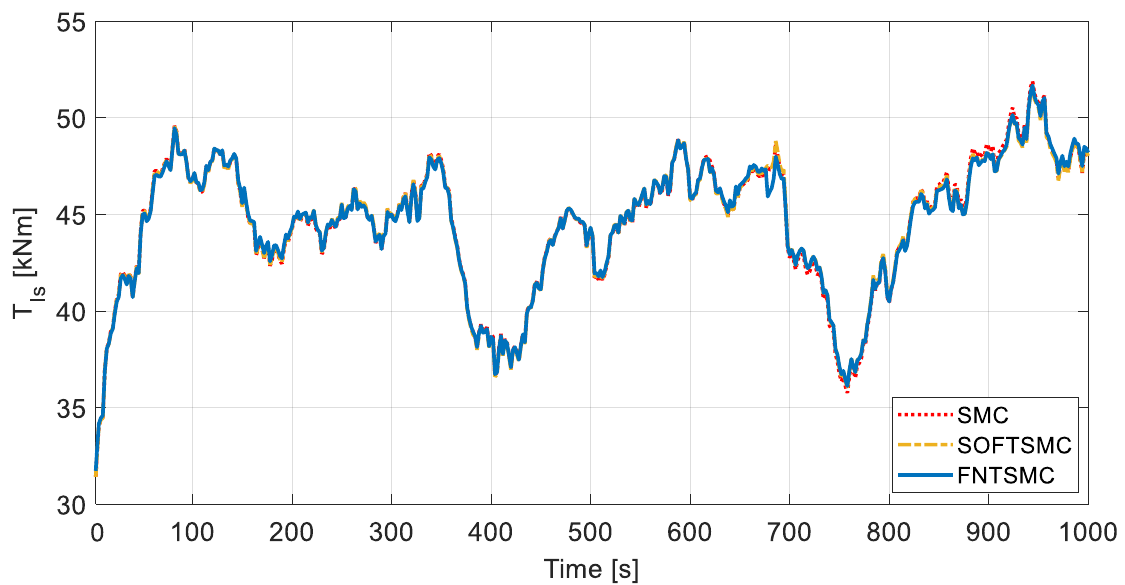}
\caption{Low-speed shaft torque; a comparison between SMC, SOFTSMC, and FNTSMC approaches.}
\label{fig:8-5}
\end{figure}

\begin{figure}
\centering
\includegraphics[width=4.5 in]{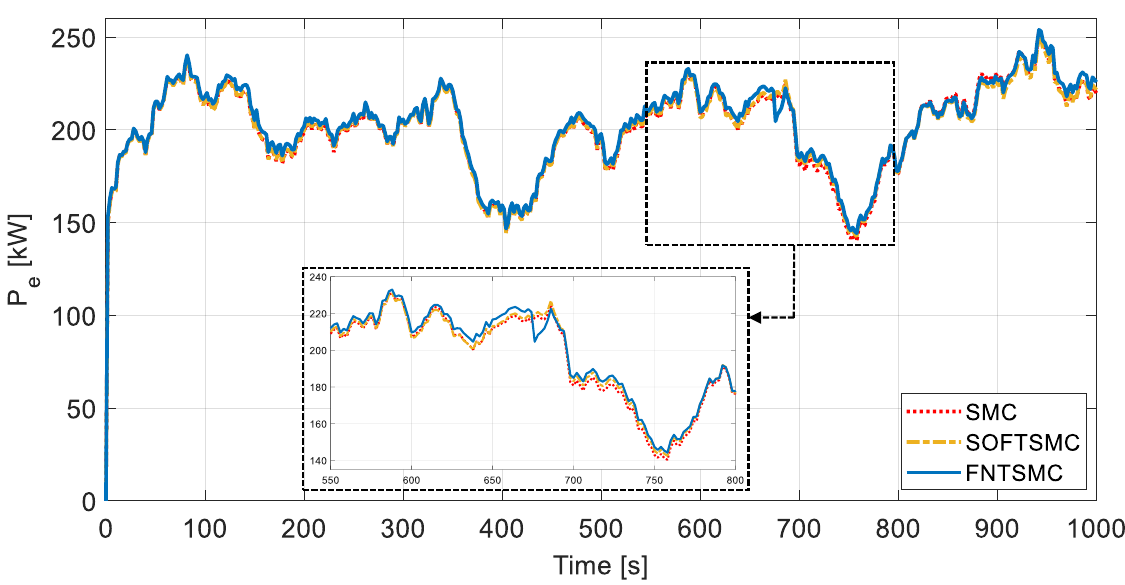}
\caption[Electric power; a comparison between SMC, SOFTSMC, and FNTSMC approaches.]{Electric power; a comparison between SMC, SOFTSMC, and FNTSMC approaches. The inset shows the detail of the region highlighted by dashed black lines.}
\label{fig:9-5}
\end{figure}

\textcolor{black}{To further evaluate the controllers’ performance, a comparative study in terms of the aerodynamic $\eta_{aero}$ and electrical $\eta_{elec}$ efficiencies is carried out, where $\eta_{aero}$ and $\eta_{elec}$ can be calculated as follows:}
\begin{equation}
\label{ENERGIES__S1_}
\textcolor{black}{\eta_{aero}\left(\%\right)=\frac{\int_{t_i}^{t_f} P_{a}\left(t\right) dt}{\int_{t_i}^{t_f} P_{a}^*\left(t\right) dt}\times 100,\; \; \; \; \; \; \eta_{elec}\left(\%\right)=\frac{\int_{t_i}^{t_f} P_{e}\left(t\right) dt}{\int_{t_i}^{t_f} P_{a}^*\left(t\right) dt}\times 100}
\end{equation}
\textcolor{black}{where $t_i$ and $t_f$ denote the initial and final times, respectively.}
\textcolor{black}{Accordingly, aerodynamic and electrical efficiencies of $\eta_{aero,SMC}=72.72 \%$, $\eta_{aero,SOFTSMC}=88.18 \%$, $\eta_{aero,FNTSMC}=91.34 \%$, $\eta_{elec,SMC}=69.64 \%$, $\eta_{elec,SOFTSMC}=94.82 \%$, and $\eta_{elec,FNTSMC}=97.03 \%$ are obtained for the control approaches under study. The achieved foregoing results indicate the superiority of the FNTSMC approach over SMC and SOFTSMC in terms of power capture, and aerodynamic and electric efficiency.}

\section{Conclusions}
\label{sec:5.5}
This chapter investigated the maximum power extraction problem of wind energy conversion systems operating below their rated wind speeds in the presence of actuator faults. Accordingly, a fractional-order nonsingular terminal sliding mode controller (FNTSMC) with enhanced finite-time convergence speed of system states and alleviated chattering was proposed to track the optimum rotor speed and maximize the power production. The closed-loop stability of the system and finite-time convergence of tracking error to the equilibrium point were guaranteed using the Lyapunov stability theory. An actuator fault in the form of a partial loss on the generator's torque was considered to evaluate the fault-tolerant performance and efficaciousness of the proposed method. The performance of the proposed FNTSMC scheme was investigated in comparison with conventional SMC and second-order fast SMC approaches on a two-mass WT system. \textcolor{black}{Simulation results and analysis (Figures \ref{fig:4-5} and \ref{fig:5-5}) demonstrated the notable rotor speed tracking performance of the proposed FNTSMC with fewer fluctuations and faster transient response.} As a result, its satisfactory power extraction performance with less excitation of the drivetrain was validated compared to conventional SMC and SOFTSMC approaches, both in fault-free and faulty situations. In addition, to further investigate the effectiveness of FNTSMC in terms of speed tracking and power extraction, a comparative study on the aerodynamic and electrical efficiencies of the control approaches was provided. Accordingly, taking advantage of the proposed FNTS surface, a superior power extraction performance was revealed compared to SMC and SOFTSMC.



\chapter{Active Fault-tolerant Control of DFIG-based Wind Turbines} 

\label{Chapter6} 

\section{Introduction}
\label{sec:6.1}
This chapter presents active fault-tolerant nonlinear control strategies for the rotor-side converter (RSC) control of DFIG driven WECS subjected to model uncertainties and rotor current sensor faults. Two fractional-order nonsingular terminal sliding mode controllers are proposed for rotor current regulation and speed trajectory tracking. Furthermore, the control scheme is incorporated with two state observers to estimate the rotor current dynamics during sensors' faults. Benefitting from the proposed sliding surfaces, fast finite-time convergence of system states is guaranteed, and the chattering is effectively suppressed. Closed-loop stability analysis is investigated in the sense of the Lyapunov stability criterion.

The DFIG comprises two back-to-back connected power rotor-side and grid-side converters (RSC and GSC), where the active and reactive powers are controlled via the RSC. In the context of power converters' control, various control approaches have been developed for DFIG-based WECSs \cite{patel2021nonlinear,kong2022nonlinear}. Although the aforementioned approaches have demonstrated desirable control performances, SMC-based techniques have been found more attractive and suitable due to fast response, insensitiveness to parametric uncertainties, and implementation ease \cite{hussain2019efficient}. Particularly, SMC-based methods can theoretically determine the final tracking precision according to the developed sliding surface and the reaching law, even if the controlled system suffers from uncertainties and disturbances \cite{mousavi2021robust,mousavi2021robustt}. Aiming at power point tracking, authors in \cite{merabet2018power} proposed an SMC control scheme for power converters' control of a DFIG-based WECS subjected to parametric uncertainties. They developed a rotor current control scheme for the RSC and a cascade control loop for the dc-link voltage regulation. In another study \cite{li2019sliding}, a nonlinear control paradigm based on SMC and feedback linearization technique was investigated for RSC control of a DFIG-based WT.

Despite the well-shown performance of the conventional SMC approaches in controlling the WECSs, to assure the finite-time reachability of the control system to the stable state as well as chattering mitigation, many researchers have developed higher-order SMC methods \cite{mousavi2021maximum,zaihidee2019application,morshed2019sliding}. On the other hand, embedding fractional calculus into the control structure can lead to increased freedom for parameter tuning, the flexibility of the controller design, and the system's accuracy \cite{mousavi2021fault,sebtahmadi2016current,mousavi2015memetic}. This has encouraged many researchers to use FO-SMC methods to tackle the SMC deficiencies. Authors in \cite{ullah2017adaptive} designed a fractional-order adaptive TSMC for both the RSC and GSC of the DFIG-based WT. Authors in \cite{patnaik2015fast} dealt with the RSC control problem of DFIG-based WECS subjected to diverse and challenging situations by developing a fast adaptive TSMC. Authors in \cite{morshed2015comparison} proposed an integral TSMC approach to enhance the power quality of WTs in the presence of uncertainties, parameter variations, and severe voltage sag conditions. However, although the chattering was reduced in the developed control scheme compared to the conventional SMC, the existence of a sign function in the control law had led to an undesirable chattering problem. Accordingly, the same authors later proposed an FLC-based auto-tuned integral TSMC \cite{morshed2019sliding} to simultaneously deal with the power quality enhancement of WTs and chattering elimination.

The measurement of the stator and rotor currents is vital for the current control of WECSs. Accordingly, the reliability and performance of the system would be degraded if the sensor(s) fail to provide the required measurements. Accordingly, numerous model-based and signal-based fault detection and isolation approaches have been investigated in the literature \cite{hussain2019efficient,abubakar2020induction,yang2019pcsmc,mamani2007algebraic}. Authors in \cite{mousavi2022active} proposed an observer-based fault-tolerant FTSMC scheme for the RSC control of a DFIG-driven WECS subjected to model uncertainties and rotor current sensor faults. In another study \cite{yang2019pcsmc}, a sliding mode perturbation and state observer was developed to aggregate the parameter uncertainties of WECS. This chapter proposes active fault-tolerant FNTSMC schemes for the RSC of DFIG-based WT. Accordingly, a current controller is developed to regulate the rotor current, and a speed controller carries out the speed trajectory tracking task. Furthermore, first, estimation and reconstruction of the rotor current during sensor faults is accomplished by an algebraic state observer. Then, a robust SMO is developed to estimate and reconstruct the rotor current during sensor faults. Moreover, by the virtue of inevitable false fault detections due to unavoidable performance degradations in the current sensors over time, a tolerance limit is defined for actual faults occurrence. Hence, the developed SMO can estimate and reconstruct the rotor current during sensors' faults with a high level of reliability. Comparative performance assessments are provided and reveal the promising performance of the proposed control scheme with respect to the well-performed fractional SMC (FSMC) \cite{li2020mitigating}.

The chapter is organized as follows. The DFIG-based WT model is presented in Section \ref{sec:6.2}. The proposed active fault-tolerant controller with a state estimator is described in \ref{sec:6.3}, and the simulation results are provided. Section \ref{sec:6.4} presents the developed SMO-based proposed active fault-tolerant controller together with the associated simulations and comparisons. Finally, Section \ref{sec:6.5} concludes the chapter.

\section{Modelling of DFIG-based WECS}
\label{sec:6.2}

\textcolor{black}{The dynamic model of DFIG-based WT (whose schematic has been show in Figure \ref{fig:2_222}) in a synchronously rotating $dq$ reference frame can be expressed as follows \cite{musarrat2021improved}:}

\begin{subequations}
\label{Equation_6-7}
\begin{align}
V_{ds}&=R_s I_{ds}+\frac{d}{dt}\psi_{ds}-\omega_s\psi_{qs}, \\
V_{qs}&=R_s I_{qs}+\frac{d}{dt}\psi_{qs}-\omega_s\psi_{ds}, \\
V_{dr}&=R_r I_{ds}+\frac{d}{dt}\psi_{dr}-\left(\omega_s-\omega_r\right)\psi_{qr}, \\
V_{dr}&=R_r I_{qs}+\frac{d}{dt}\psi_{qr}+\left(\omega_s-\omega_r\right)\psi_{dr},
\end{align}
\end{subequations}

with the flux equations of stator and rotor expressed as
\begin{subequations}
\label{Equation_6-8}
\begin{align}
\psi_{ds} &=L_{s} I_{ds} +L_{m} I_{dr}, \\
\psi_{qs} &=L_{s} I_{qs} +L_{m} I_{qr}, \\
\psi_{dr} &=L_{r} I_{dr} +L_{m} I_{ds}, \\
\psi_{qr} &=L_{r} I_{qr} +L_{m} I_{qs},
\end{align}
\end{subequations}
where the subscripts $r$ and $s$ stand for rotor and stator, respectively. $\left\{V_{ds} ,V_{qs} \right\}$ and $\left\{V_{dr} ,V_{qr} \right\}$ represent the $d$-axis and $q$-axis voltage components in [V], and $\left\{I_{ds} ,I_{qs} \right\}$ and $\left\{I_{dr} ,I_{qr} \right\}$ represent the current components in [A], respectively. $R_{s} $ and $R_{r} $ are the stator and rotor resistances in [$\Omega$], $L_{s} $, $L_{r} $, and $L_{m} $ denote the stator, rotor, and magnetizing inductances in [H], respectively. $\psi _{ds}$ and $\psi _{qs}$ represent the stator flux components, and $\psi _{dr}$ and $\psi _{qr}$ are the rotor flux components in [Wb]. $\omega_s$ and $\omega _{r} =N_P\Omega_{r}$ [\si{rad/s}] are the stator and rotor angular speeds, respectively, with $N_P$ being the number of pole pairs.

The rotating parts' mechanical dynamics can be expressed as $J\dot{\Omega}_{r}=T_{em}-T_r-f_r\Omega_{r}$, where $f_r$ is the friction coefficient, $J$ represents the rotating components' moment of inertia, and the electromagnetic torque $T_{em} $ can be expressed with stator fluxes and currents as
\begin{equation}
T_{em} =N_P \frac{L_m}{L_s} \left(\psi _{qs} I_{dr} -\psi _{ds} I_{qr} \right).
\label{Equation_6-10}
\end{equation}

Choosing a reference frame aligned to the d-axis of the stator flux $\psi_s$ leads to $\psi_{ds}=\psi_s$ and $\psi_{qs}=0$, yields
\begin{equation}
T_{em} =N_P \frac{L_m V_s}{\omega_s L_s} I_{qr}.
\label{Equation_6-11}
\end{equation}

Neglecting the per phase stator resistance we have $V_{qs}=V_s=\omega_s \psi_s$ and $V_{ds}=0$. Consequently, the rotor voltages and the stator active and reactive powers can be written as follows
\begin{subequations}
\label{Equation_6-12}
\begin{align}
V_{dr}&=R_r I_{dr}+\sigma L_r\frac{d}{dt}I_{dr}-\sigma L_r s\omega_s I_{qr}, \\
V_{qr}&=R_r I_{qr}+\sigma L_r\frac{d}{dt}I_{qr}-\sigma s\omega_s I_{dr}+s\frac{L_m V_s}{L_s},
\end{align}
\end{subequations}
\begin{subequations}
\label{Equation_6-13}
\begin{align}
P_s&=-\frac{L_m V_s}{L_s}I_{qr}, \\
Q_s&=\frac{V^{2}_{s}}{\omega_s L_s}-\frac{L_m V_s}{L_s}I_{dr},
\end{align}
\end{subequations}
where $\sigma=1-L_m/\left(L_r L_s\right)$ and the generator slip is defined as $s=\left(\omega_s -\omega_r\right)/{\omega_s}$.

The nonlinear system can be expressed as follows
\begin{equation}
\label{Equation_6-14}
\dot{x}=F\left(x\right)+Hu=f\left(x\right)+hu+d,
\end{equation}
where $x=[I_{dr} \, \, I_{qr}]^T$ denotes the state vector, $u=[V_{dr} \, \, V_{qr}]^T$ is the control signal, and $f\left(x\right)$ and $h$ represent the best approximation of $F\left(x\right)$ and $H$, respectively. The unknown lumped uncertainty is denoted by $d=\Delta f+\Delta hu=\left[d_1\, \, d_2\right]^T$, where $\Delta f$ and $\Delta h$ are the model and input uncertainties, respectively.

\begin{subequations}
\label{Equation_6-16}
\begin{align}
\label{}
f\left(x\right)&=
\begin{bmatrix}
-\frac{ R_r}{\sigma  L_r} I_{dr}+s\omega_s  I_{qr} \\
-\frac{ R_r}{\sigma  L_r} I_{qr}-s\omega_s  I_{dr}+s\frac{ L_m  V_s}{\sigma  L_r  L_s} \\
\end{bmatrix} \\
h&=
\begin{bmatrix}
\frac{1}{\sigma  L_r} & 0 \\
0 & \frac{1}{\sigma  L_r},
\end{bmatrix}
\end{align}
\end{subequations}

\section{Fault-Tolerant Controller Design}
\label{sec:6.3}
An active fault-tolerant FTSMC scheme is proposed in this section for the RSC of DFIG-based WT. Accordingly, a current controller is proposed to regulate the rotor current and the electromagnetic torque, while the speed trajectory tracking is accomplished with a speed controller. Furthermore, since the $dq$ components of the measured rotor currents are essential for reliable current control performance, a state estimator adopted from \cite{mamani2007algebraic} is used to estimate the rotor currents in the $dq$ reference frame. It is noteworthy that the $t$ index is neglected in the equations for simplification of writing.

\subsection{Current Controller Design}
\label{sec:6.3.1}
Defining the reference trajectory as $x_{ref}=\left[I_{dr-ref}\,\,\,I_{qr-ref}\right]^T$, and considering \eqref{Equation_6-11} we have
\begin{equation}
\label{Equation_6-17}
I_{qr-ref}=\frac{\omega_s L_s}{N_P L_m V_s}T_{em,ref}
\end{equation}
where $T_{em,ref}$ will be obtained from the speed controller later in section \ref{sec:6.3.2}. The active power reference corresponds to $I_{qr-ref}$, and the reactive power reference can be expressed as
\begin{equation}
\label{Equation_6-18}
Q_{s-ref}=\frac{V^2_s}{\omega_s L_s}-\frac{L_m V_s}{L_s}I_{dr}
\end{equation}
\begin{rem}
To achieve an improved power factor, $Q_{s-ref}=0$ is considered.
\end{rem}

Accordingly, $I_{dr-ref}$ can be obtained considering \eqref{Equation_6-18} as $I_{dr-ref}={V_s}/\left(\omega_s L_m\right)$. Defining the tracking error $\dot{e}=\dot{x}-\dot{x}_{ref}=f\left(x\right)+hu+d-\dot{x}_{ref}$, the following FTSMC sliding surface $S=\left[S_1\,\,\,S_2\right]^T$ is proposed
\begin{equation}
\label{Equation_6-19}
S=
\begin{bmatrix}
S_1 \\
S_2
\end{bmatrix}
=
\begin{bmatrix}
\kappa_1e_1+\kappa_2 \mathfrak D^{\gamma-1}e_1+\frac{1}{\beta_1} \mathfrak D^{\gamma}\left|e_1\right|^{\frac{p_1}{q_1}} \\
\kappa_3e_2+\kappa_4 \mathfrak D^{\gamma-1}e_2+\frac{1}{\beta_2} \mathfrak D^{\gamma}\left|e_2\right|^{\frac{p_2}{q_2}}
\end{bmatrix}
\end{equation}
where $\kappa_i>0$, $i=1,2,3,4$, $\beta >0$. $q_i$ and $p_i$, $i=1,2$, are odd positive integers satisfying $1<\left(p_i/q_i\right)<2$, and ${\rm {\mathfrak D}}^{\gamma} \left(\cdot\right)$ represents the Riemann-Liouville fractional derivative of order $0<\gamma <1$.

Differentiating \eqref{Equation_6-19} with respect to time yields
\begin{equation}
\label{Equation_6-20}
\dot{S}=
\begin{bmatrix}
\dot{S}_1 \\
\dot{S}_2
\end{bmatrix}
=
\begin{bmatrix}
\kappa_1 \dot{e}_1 +\kappa_2 \mathfrak D^{\gamma}e_1+\frac{p_1}{\beta_1 q_1}\mathfrak D^{\gamma}\left|e_1\right|^{\frac{p_1}{q_1}-1}\\
\kappa_3 \dot{e}_2 +\kappa_4 \mathfrak D^{\gamma}e_2+\frac{p_2}{\beta_2 q_2}\mathfrak D^{\gamma}\left|e_2\right|^{\frac{p_2}{q_2}-1}
\end{bmatrix}
\end{equation}
where
\begin{equation}
\label{Equation_6-21}
\dot{e}=
\begin{bmatrix}
\dot{e}_1 \\
\dot{e}_2
\end{bmatrix}
=
\begin{bmatrix}
f\left(x\right)+hu_1+d_1-\dot{I}_{dr-ref}\\
f\left(x\right)+hu_2+d_2-\dot{I}_{qr-ref}
\end{bmatrix}
\end{equation}

Substituting \eqref{Equation_6-21} into \eqref{Equation_6-20} yields

\begin{equation}
\label{Equation_6-22}
\dot{S}=
\begin{bmatrix}
\kappa_1 \left(f\left(x\right)+hu_1+d_1-\dot{I}_{dr-ref}\right) \dot{e}_1 +\kappa_2 \mathfrak D^{\gamma}e_1+\frac{p_1}{\beta_1 q_1}\mathfrak D^{\gamma}\left|e_1\right|^{\frac{p_1}{q_1}-1}\\
\kappa_3 \left(f\left(x\right)+hu_2+d_2-\dot{I}_{qr-ref}\right)\dot{e}_2 +\kappa_4 \mathfrak D^{\gamma}e_2+\frac{p_2}{\beta_2 q_2}\mathfrak D^{\gamma}\left|e_2\right|^{\frac{p_2}{q_2}-1}
\end{bmatrix}
\end{equation}

The control law is then synthesized as
\begin{align}
\label{Equation_6-23}
u&=\begin{bmatrix}
u_1 \\
u_2
\end{bmatrix} \\ \nonumber
&=\begin{bmatrix}
h^{-1}\kappa^{-1}_1\left(-f\left(x\right)-d_1+\dot{I}_{dr-ref}-\kappa_2 \mathfrak D^{\gamma}e_1-\frac{p_1}{\beta_1 q_1}\mathfrak D^{\gamma}\left|e_1\right|^{\frac{p_1}{q_1}-1}\right)\\
h^{-1}\kappa^{-1}_3\left(-f\left(x\right)-d_2+\dot{I}_{qr-ref}-\kappa_4 \mathfrak D^{\gamma}e_2-\frac{p_2}{\beta_2 q_2}\mathfrak D^{\gamma}\left|e_2\right|^{\frac{p_2}{q_2}-1}\right)
\end{bmatrix}
\end{align}

The equivalent control law can be expressed as $u_{eq}=u+u_{sw}$, where $u_{sw}$ compensates the lumped uncertainties' effects, and can be expressed as
\begin{equation}
\label{Equation_6-25}
u_{sw}=
\begin{bmatrix}
u_{sw,1} \\
u_{sw,2}
\end{bmatrix}
=
\begin{bmatrix}
-\kappa^{-1}_1\left(\eta_1 \mathsf{sgn}\left(S_1\right)+\eta_2 S_1\right)\\
-\kappa^{-1}_3\left(\eta_3 \mathsf{sgn}\left(S_2\right)+\eta_4 S_2\right)
 \end{bmatrix}
\end{equation}
where $\eta_i>0$, $i=1,2,3,4$.

\begin{thm}
\label{thm:1}
Applying the sliding surface \eqref{Equation_6-19} and the proposed control $u_{eq}$, the finite-time convergence of the tracking error dynamics to the sliding surface along with their asymptotical convergence to zero is achieved.
\end{thm}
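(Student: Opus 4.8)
The plan is to run the standard Lyapunov-based reachability argument, adapted to the fractional terminal surface \eqref{Equation_6-19}, and to split the claim into its two natural parts: finite-time reaching of the manifold $S=0$, followed by convergence of the tracking error along that manifold. First I would introduce the quadratic candidate $V=\frac{1}{2}S^{T}S=\frac{1}{2}(S_1^2+S_2^2)$, which is positive definite in $S$, and differentiate along the closed loop to get $\dot V=S_1\dot S_1+S_2\dot S_2$. The subsequent substitution uses the surface derivative \eqref{Equation_6-22} together with the equivalent control $u_{eq}=u+u_{sw}$, where $u$ is \eqref{Equation_6-23} and $u_{sw}$ is \eqref{Equation_6-25}.

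Next I would exploit the fact that, by design, the equivalent part $u$ cancels the drift $f(x)$, the lumped uncertainty $d_i$, the reference feedforward $\dot I_{dr-ref}$ and $\dot I_{qr-ref}$, and the fractional terms $\kappa_{2i}\mathfrak D^{\gamma}e_i$ and $(p_i/\beta_i q_i)\mathfrak D^{\gamma}|e_i|^{p_i/q_i-1}$ appearing in \eqref{Equation_6-22}, so that each channel collapses to the reaching dynamics $\dot S_i=-\eta_{2i-1}\mathsf{sgn}(S_i)-\eta_{2i}S_i$. Feeding this back gives $\dot V=-\eta_1|S_1|-\eta_2 S_1^2-\eta_3|S_2|-\eta_4 S_2^2$, which is strictly negative for $S\neq0$ because all gains are positive; this establishes the reaching condition $S^{T}\dot S<0$ and hence attractivity of the sliding manifold.

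To strengthen this to finite-time reaching I would retain only the signum contributions and bound $\dot V\le-\min(\eta_1,\eta_3)\bigl(|S_1|+|S_2|\bigr)\le -c\,V^{1/2}$ for some $c>0$, using the equivalence of the $\ell_1$ and $\ell_2$ norms on ${\mathbb R}^2$. Separating variables and integrating exactly as in the reaching-time computation of Chapter \ref{Chapter5}, namely \eqref{ENERGIES__C12_}--\eqref{ENERGIES__C14_}, then bounds the reaching instant by $t_r\le 2\sqrt{V(0)}/c$, so that $S$ vanishes in finite time.

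The remaining and genuinely delicate step is the sliding phase. Once $S=0$, each error satisfies the fractional terminal relation $\kappa_{2i-1}e_i+\kappa_{2i}\mathfrak D^{\gamma-1}e_i+\beta_i^{-1}\mathfrak D^{\gamma}|e_i|^{p_i/q_i}=0$, which is a nonlinear fractional-order equation rather than the elementary power-law ODE of integer-order terminal SMC. Here I would invoke the Mittag--Leffler stability results of Lemmas \ref{lem:2} and \ref{lem:3}, building a Lyapunov function in $e_i$ whose fractional derivative is negative definite and using the constraint $1<p_i/q_i<2$ to drive $e_i\to0$. The hard part will be precisely this reduced-order convergence: the memory inherent in $\mathfrak D^{\gamma}$ precludes the explicit settling-time integration available in the integer-order case, so the argument must rest on the ML-stability estimates and a fractional comparison inequality for $V(e)$ rather than on a closed-form convergence-time expression.
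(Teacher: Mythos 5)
Your reaching-phase argument is essentially the paper's own proof: the paper also takes the quadratic candidate $V=S^2/2$ (written per channel rather than as $\tfrac12 S^TS$, an immaterial difference), substitutes \eqref{Equation_6-22} with $u_{eq}=u+u_{sw}$ so that each channel collapses to $\dot S_i=-\eta_{2i-1}\mathsf{sgn}(S_i)-\eta_{2i}S_i$, and concludes $\dot V\le -\eta_1|S_1|-\eta_2S_1^2-\eta_3|S_2|-\eta_4S_2^2\le 0$. Where you diverge is in the two follow-up steps. For the finite-time bound you discard the linear term and use the classical comparison $\dot V\le -c\,V^{1/2}$, giving $t_r\le 2\sqrt{V(0)}/c$; the paper instead keeps both the signum and linear contributions, separates variables with $m=1/2$, and integrates to the logarithmic settling-time expressions \eqref{Equation_6-30}--\eqref{Equation_6-31}. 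Both are valid; the paper's version yields explicit gain-dependent reaching times (and exhibits the role of $\eta_2,\eta_4$ in accelerating the reach), while yours is shorter and suffices for the qualitative claim. Second, your treatment of the sliding phase is something the paper does not do at all: after establishing finite-time reaching, the paper simply asserts that the states then converge to zero, leaving the reduced-order fractional dynamics $\kappa_{2i-1}e_i+\kappa_{2i}\mathfrak D^{\gamma-1}e_i+\beta_i^{-1}\mathfrak D^{\gamma}|e_i|^{p_i/q_i}=0$ unanalyzed. Your proposal to close this via the Mittag--Leffler stability machinery of Lemmas \ref{lem:2} and \ref{lem:3} addresses a genuine gap in the published argument; you correctly flag that the memory of $\mathfrak D^{\gamma}$ rules out the explicit settling-time integration of the integer-order terminal case, so only asymptotic (ML-type) convergence on the manifold can be claimed, which is in fact all the theorem statement asks for. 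In short: same core mechanism, a coarser but adequate reaching-time estimate, and a more honest account of the sliding phase than the paper itself provides.
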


\begin{proof}
Consider the Lyapunov function candidate $V=S^2/2$. Differentiating $V$ with respect to time and substituting \eqref{Equation_6-19} and \eqref{Equation_6-22} yields
\begin{align}
\label{Equation_6-27}
\dot{V}=
\begin{bmatrix}
\dot{V}_1 \\
\dot{V}_2
\end{bmatrix}
&=
\begin{bmatrix}
S_1 \dot{S}_1\\
S_2 \dot{S}_2
\end{bmatrix}
=
\begin{bmatrix}
-S_1 \left(\eta_1 \mathsf{sgn}\left(S_1\right)+\eta_2 S_1\right) \\
-S_2 \left(\eta_3 \mathsf{sgn}\left(S_2\right)+\eta_4 S_2\right)
\end{bmatrix} \\ \nonumber
&\leq
\begin{bmatrix}
-\eta_1 \left|S_1\right| -\eta_2 S^2_1 \\
-\eta_3 \left|S_2\right| -\eta_4 S^2_2
\end{bmatrix}
\end{align}

From \eqref{Equation_6-27} one can observe that $\dot{V}\leq 0$. Accordingly, the system states' asymptotical convergence to $S=0$ is achieved. Considering $S^{2} =2V$, the finite-time convergence can be studied as follows. Rewriting \eqref{Equation_6-27} yields,
\begin{equation}
\label{Equation_6-28}
\dot{V}=
\begin{bmatrix}
dV_1/dt_1 \\
dV_2/dt_2
\end{bmatrix} =
\begin{bmatrix}
-\eta_1 \left|\sqrt{2V_1}\right|-2\eta_2 V_1\\
-\eta_3 \left|\sqrt{2V_2}\right|-2\eta_4 V_2
\end{bmatrix}
\end{equation}

From \eqref{Equation_6-28} and defining $m=1/2$ one can obtain that
\begin{align}
\label{Equation_6-29}
\begin{bmatrix}
dV_1/dt_1 \\
dV_2/dt_2
\end{bmatrix}
&\leq
\begin{bmatrix}
-\frac{dV_1}{\sqrt{2}\eta_1\sqrt{V_1}+2\eta_2 V_1}\\
-\frac{dV_2}{\sqrt{2}\eta_3\sqrt{V_2}+2\eta_4 V_2}
\end{bmatrix}
\leq
\begin{bmatrix}
-\frac{dV_1}{2^m\eta_1 V^m_1+2\eta_2 V_1}\\
-\frac{dV_2}{2^m\eta_3 V^m_2+2\eta_4 V_2}
\end{bmatrix} \\ \nonumber
&\leq
\begin{bmatrix}
-\frac{V^{-m}_1dV_1}{2^m\eta_1+2\eta_2 V^{m-1}_1} \\
-\frac{V^{-m}_2dV_2}{2^m\eta_3+2\eta_4 V^{m-1}_2}
\end{bmatrix}
\leq
\begin{bmatrix}
-\frac{dV^{1-m}_1}{\left(1-m\right)\left(2^m\eta_1+2\eta_2 V^{m-1}_1\right)} \\
-\frac{dV^{1-m}_2}{\left(1-m\right)\left(2^m\eta_3+2\eta_4 V^{m-1}_2\right)}
\end{bmatrix}
\end{align}

Solving \eqref{Equation_6-29} leads to the convergence time as
\begin{align}
\label{Equation_6-30}
\begin{bmatrix}
t_{s,1}-t_{r,1} \\
t_{s,2}-t_{r,2}
\end{bmatrix}
&\leq
\begin{bmatrix}
-\frac{1}{2\eta_2\left(1-m\right)}\int_{V_{1}(t_{r,1})}^{V_1(t_{s,1})}\frac{dV^{1-m}_1}{\frac{2^m \eta_1}{2\eta_2}+V^{1-m}_1} \\
-\frac{1}{2\eta_4\left(1-m\right)}\int_{V_{2}(t_{r,2})}^{V_2(t_{s,2})}\frac{dV^{1-m}_2}{\frac{2^m \eta_3}{2\eta_4}+V^{1-m}_2}
\end{bmatrix} \\ \nonumber
&\leq
\begin{bmatrix}
\frac{1}{2\eta_2\left(2-m\right)}\ln\left(V^{1-m}_1+\frac{2^m \eta_1}{2\eta_2}\right)^{V_1(t_{s,1})}_{V_{1}(t_{r,1})} \\
\frac{1}{2\eta_4\left(2-m\right)}\ln\left(V^{1-m}_2+\frac{2^m \eta_3}{2\eta_4}\right)^{V_2(t_{s,2})}_{V_{2}(t_{r,2})}
\end{bmatrix}
\end{align}
where $V_i\left(t_{s,i}\right)$ and $V_i\left(t_{r,i}\right)$, $i=1,2$, denote the settling and reaching time, respectively, and $V_0\left(t_{s,i}\right)=0$.

Accordingly, when the system acts on the sliding surface \eqref{Equation_6-19}, the states converge to zero at the following finite time

\begin{equation}
\label{Equation_6-31}
\begin{bmatrix}
t_{s,1} \\
t_{s,2}
\end{bmatrix}
\leq
\begin{bmatrix}
\frac{1}{2\eta_2\left(2-m\right)}\ln\left(V^{1-m}_1(t_{r,1})+\frac{2^m \eta_1}{2\eta_2}\right)+(t_{r,1}) \\
\frac{1}{2\eta_4\left(2-m\right)}\ln\left(V^{1-m}_2(t_{r,2})+\frac{2^m \eta_3}{2\eta_4}\right)+(t_{r,2})
\end{bmatrix}
\end{equation}
\end{proof}

\subsection{Speed Controller Design}
\label{sec:6.3.2}
The mechanical system dynamics $\dot{\Omega}_r$ can be rewritten as
\begin{equation}
\label{Equation_6-32}
\dot{\Omega}_r=\frac{T_{em}}{J}+d_3
\end{equation}
where $T_{em}$ denotes the control input, and $d_3=-\frac{T_r}{J}-\frac{f_r \Omega_r}{J}$ is the lumped uncertainty. Defining the speed tracking error as $e_3=\Omega_r-\Omega_{r-ref}$, differentiating it with respect to time $\dot{e}_3=\dot{\Omega}_r-\dot{\Omega}_{r-ref}$, and substituting it into \eqref{Equation_6-32} yields
\begin{equation}
\label{Equation_6-34}
\dot{e}_3=\frac{T_{em}}{J}+d_3-\dot{\Omega}_{r-ref}
\end{equation}

The sliding surface $S_3$ can be defined as
\begin{equation}
\label{Equation_6-35}
S_3=\kappa_5e_3+\kappa_6 \mathfrak D^{\gamma-1}e_3+\frac{1}{\beta_3} \mathfrak D^{\gamma}\left|e_3\right|^{\frac{p_3}{q_3}}
\end{equation}

Differentiating \eqref{Equation_6-35} with respect to time yields
\begin{equation}
\label{Equation_6-36}
\dot{S}_3=\kappa_5 \left(\frac{T_{em}}{J}+d_3-\dot{\Omega}_{r-ref}\right)+\kappa_6 \mathfrak D^{\gamma}e_3+\frac{p_3}{\beta_3 q_3}\mathfrak D^{\gamma} \left|e_3\right|^{\frac{p_3}{q_3}-1}
\end{equation}

Setting $\dot{S}_3=0$ derives the speed control law as follows
\begin{equation}
\label{Equation_6-37}
T_{em}=J\kappa^{-1}_5 \left(-d_3+\dot{\Omega}_{r-ref}-\kappa_6 \mathfrak D^{\gamma}e_3-\frac{p_3}{\beta_3 q_3}\mathfrak D^{\gamma}\left|e_3\right|^{\frac{p_3}{q_3}-1}\right)
\end{equation}

The equivalent control law can be expressed as $T_{em,eq}=T_{em}+T_{em,sw}$, where $T_{em,sw}=-\kappa^{-1}_5\left(\eta_5 \mathsf{sgn}\left(S_3\right)+\eta_6 S_3\right)$. In order to investigate the stability analysis, consider the Lyapunov function candidate as $V_3=S^2_3/2$. Differentiating $V_3$ with respect to time and substituting \eqref{Equation_6-36} and $T_{em,eq}$ yields
\begin{equation}
\label{Equation_6-38}
\dot{V}_3=S_3 \dot{S}_3=-S_3 \left(\eta_5 \mathsf{sgn}\left(S_3\right)+\eta_6 S_3\right)
\leq -\eta_5 \left|S_3\right| -\eta_6 S^2_3
\end{equation}

According to \eqref{Equation_6-38}, $\dot{V}_3\leq 0$; hence, the system states' asymptotical convergence to the sliding surface is achieved. A similar states convergence investigation as in section \ref{sec:6.3.1} results in the following finite time.
\begin{equation}
\label{Equation_6-39}
t_{s,3} \leq \frac{1}{2\eta_6\left(2-m\right)}\ln\left(V^{1-m}_3(t_{r,3})+\frac{2^m \eta_5}{2\eta_6}\right)+(t_{r,3})
\end{equation}

\section{Fault-Tolerant Control using Algebraic State Estimator for Current Estimation}
\label{sec:6.4}
In this section, first, an algebraic state estimator (ASE) inspired from \cite{mamani2007algebraic} is addressed to estimate the rotor currents in the $dq$ reference frame. Then, the feasibility of the proposed control scheme is validated in comparison with the well-performing FSMC \cite{li2020mitigating} through simulations.

\subsection{Algebraic State Estimator}
\label{sec:6.4.1}
Assuming constant parametric uncertainties as constant disturbance, \eqref{Equation_6-12} can be rewritten as follows
\begin{equation}
\label{Equation_6-40}
\dot{I}=\mathcal A_n I+\mathcal B_n V+\mathcal D_n V_s+\xi^*
\end{equation}
where $\xi^*=\Delta \mathcal AI+\Delta \mathcal BV + \Delta \mathcal DV_s+d$ represents the slow varying lumped disturbance, $I=[I_{dr}\,\,\, I_{qr}]^T$, $V=[V_{dr}\,\,\, V_{qr}]^T$, $\mathcal A_n=\begin{bmatrix}
-\frac{R_r}{\sigma L_r} & s\omega_s \\
\frac{s\omega_s}{L_r} & -\frac{R_r}{\sigma L_r}
\end{bmatrix}$, $\mathcal B_n=\left[\frac{1}{\sigma L_r}\,\,\,\frac{1}{\sigma L_r}\right]^T$, and $\mathcal D_n=\left[0\,\,\,-\frac{sL_m}{\sigma L_r L_s}\right]^T$. The Laplace transformation of \eqref{Equation_6-40} is
\begin{equation}
\label{Equation_6-41}
\mathfrak{p}I-I(0)=\mathcal A_n I+\mathcal B_n V +\mathcal D_n V_s+\frac{\xi^*}{\mathfrak{p}}
\end{equation}

Multiplying \eqref{Equation_6-41} by "$\mathfrak{p}$" and taking its second derivation with respect to "$\mathfrak{p}$", yields
\begin{equation}
\label{Equation_6-42}
\frac{d^2}{d\mathfrak{p}^2}\left[\mathfrak{p}^2 I\right]=\mathcal A_n \frac{d^2}{d\mathfrak{p}^2}\left[\mathfrak{p}I\right]+\mathcal B_n \frac{d^2}{d\mathfrak{p}^2}\left[\mathfrak{p}V\right]+\mathcal D_n\frac{d^2}{d\mathfrak{p}^2}\left[\mathfrak{p}V_s\right]
\end{equation}

Since $\frac{d^2}{d\mathfrak{p}^2}\left[\xi^*\right]\cong0$ and $\frac{d^2}{d\mathfrak{p}^2}\left[-\mathfrak{p}I(0)\right]=0$, it can be observed that \eqref{Equation_6-42} is independent of lumped disturbances and initial conditions. Expanding \eqref{Equation_6-42}, it can be expressed in the time domain as
\begin{align}
\label{Equation_6-44}
t^2I&-4\smallint tI\,dt+2\iint I\,dt^2 \\ \nonumber
&=\mathcal A_n\Big(\smallint t^2I\,dt-2\iint tI\,dt^2\Big) \\ \nonumber
&+\mathcal B_n\Big(\smallint t^2V\,dt-2\iint tV\,dt^2\Big) \\ \nonumber
&+\mathcal D_n\Big(\smallint t^2V_s\,dt-2\iint tV_s\,dt^2\Big)
\end{align}

Accordingly, the estimated current vector can be synthesized as
\begin{align}
\label{Equation_6-45}
\hat{I}&=\frac{4}{t^2}\smallint tI\,dt-\frac{2}{t^2}\iint I\,dt^2 \\ \nonumber
&+\frac{1}{t^2}\Big[\mathcal A_n\Big(\smallint t^2I\,dt-2\iint tI\,dt^2\Big)\Big] \\ \nonumber
&+\frac{1}{t^2}\Big[\mathcal B_n\Big(\smallint t^2V\,dt-2\iint tV\,dt^2\Big)\Big] \\ \nonumber
&+\frac{1}{t^2}\Big[\mathcal D_n\Big(\smallint t^2V_s\,dt-2\iint tV_s\,dt^2\Big)\Big]
\end{align}

The difference between the measured and the estimated rotor current in the $dq$ reference frame can be expressed by $\mathcal R=|I-\hat{I}|$, where $\hat{I}=[\hat{I}_{dr}\,\,\,\hat{I}_{qr}]^T$. Ideally, $\mathcal R=0$ and $\mathcal R>0$ should denote the no-fault and faulty situations, respectively. However, false fault detections are inevitable since some slight performance degradations might happen in the current sensors over time, which cannot be counted as actual faults. Accordingly, a tolerance limit $\mathcal T$ is defined for actual faults occurrence, where $\mathcal R \leq \mathcal T$ and $\mathcal R > \mathcal T$ denote the no-fault and faulty situations, respectively. Hence, an observer is developed for rotor current estimation and reconstruction during sensors' faults. The stator flux linkage and voltage can be expressed in the stator reference frame as
\begin{subequations}
\label{Equation_6-46}
\begin{align}
\psi_{s\alpha}&=L_s i_{s\alpha} + L_{m}i_{r\alpha}, \\
\psi_{s\beta}&=L_s i_{s\beta} + L_{m}i_{r\beta},
\end{align}
\end{subequations}
\begin{subequations}
\label{Equation_6-47}
\begin{align}
v_{s\alpha}&=R_s i_{s\alpha}+\dot{\psi}_{s\alpha}, \\
v_{s\beta}&=R_s i_{s\beta}+\dot{\psi}_{s\beta},
\end{align}
\end{subequations}

Substituting \eqref{Equation_6-46} into \eqref{Equation_6-47} yields
\begin{subequations}
\label{Equation_6-48}
\begin{align}
\dot{i}_{r\alpha}&=\frac{v_{s\alpha}-R_s i_{s\alpha}-L_s \dot{i}_{s\alpha}}{L_m}, \\
\dot{i}_{r\beta}&=\frac{v_{s\beta}-R_s i_{s\beta}-L_s \dot{i}_{s\beta}}{L_m},
\end{align}
\end{subequations}

According to \eqref{Equation_6-48}, the approximated rotor current in $\alpha-\beta$ frame can be expressed as
\begin{subequations}
\label{Equation_6-49}
\begin{align}
\hat{i}_{r\alpha}&=\frac{\int v_{s\alpha}-R_s \int i_{s\alpha}-L_s i_{s\alpha}}{L_m}, \\
\hat{i}_{r\beta}&=\frac{\int v_{s\beta}-R_s \int i_{s\beta}-L_s i_{s\beta}}{L_m},
\end{align}
\end{subequations}

Considering \eqref{Equation_6-49}, the dependency of rotor current approximations on the voltage measurements and stator currents is apparent. So, if $\mathcal R>\mathcal T$, the rotor current approximation \eqref{Equation_6-49} will be used to derive the current control law $u_{eq}$.

\subsection{Simulation Results}
\label{sec:6.4.2}
This section provides a comparative performance investigation of the proposed active fault-tolerant FTSMC scheme with that of FSMC \cite{li2020mitigating} for the rotor-side converter of DFIG-based 1.5 MW WT, whose characteristics are illustrated in Table \ref{tab:6-1}. The WT is considered to be subjected to parametric uncertainties and sensor faults. Accordingly, it is assumed that the sensor fault associated with $I_{dr}$ happens during 25-35 s and two faults associated with $I_{qr}$ happen during 15-20 s and 35-45 s. Furthermore, the lumped uncertainty $d=50\% f+50\% hu$ is considered to affect the system. The wind profile with an average speed of 5.3 \si{m/s} within the speed range of 3.7-6.7 \si{m/s} is illustrated in Figure \ref{fig:6-1}. Figure \ref{fig:6-2} depicts the rotor speed tracking comparative illustration. Accordingly, one can observe that both control approaches demonstrate a similar performance; however, the proposed FTSMC outperforms FSMC with more precise trajectory tracking performance along with much less tracking error.

\begin{table}[!t]
\centering
\caption{1.5MW DFIG Based WT Parameters.}
\label{tab:6-1}
\scalebox{0.8}{
\begin{tabular}{llllllll}
\hline\hline \\[-3mm]
\textbf{Parameter}	& \textbf{Symbol}	& \textbf{Value} & \textbf{Unit}  \\
\hline
Stator resistance & $R_s$ & 0.023 & \si{pu} \\
Rotor resistance & $R_r$ & 0.016  & \si{pu} \\
Stator inductance & $L_s$ & 0.18  & \si{pu} \\
Rotor inductance & $L_r$ & 0.16 & \si{pu} \\
Magnetizing inductance & $L_m$ & 2.9 & \si{pu} \\
Pole pairs & $N_P$  & 3 & \si{-} \\
Blade radius & $R$  & 35 & \si{m} \\
Rotor friction coefficient & $f_r$ & 0.00015 & \si{N.m.s/rad} \\
Rotor inertia & $J_r$ & 765.6 & \si{kg.m^2} \\
Gearbox ratio & $N_{GB}$ & 62.5  & \si{-} \\
Air density & $\rho$ & 1.225  & \si{kg/m^3} \\
Optimum tip-speed ratio & $\lambda_{opt}$  & 6.325 & \si{-} \\
\hline\hline
\end{tabular}}
\end{table}

\begin{figure}
\centering
\includegraphics[width=4 in]{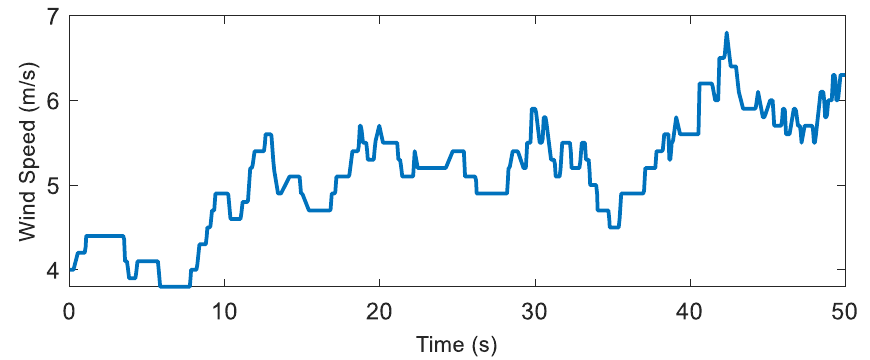}
\caption{Wind speed profile (50 s).}
\label{fig:6-1}
\end{figure}

The state observer's rotor current dynamics estimation and reconstruction performance in case of faulty sensors and lumped uncertainties is shown in Figure \ref{fig:6-3}. It can be seen that the estimation is fulfilled in a very short time with minimal error, demonstrating its remarkable performance. Figure \ref{fig:6-4} demonstrates the comparative grid active and reactive power tracking performance. As evident, both control approaches successfully carry out the grid active power tracking; however, in the case of reactive power regulation, the proposed FTSMC clearly surpass the FSMC with a precise power tracking performance and a small transient response. The sliding surfaces associated with FSMC and the proposed FTSMC are shown in Figure \ref{fig:6-5}. Accordingly, one can observe that although both controllers deliver bounded sliding surfaces, the proposed FTSMC outperforms the FSMC with more desirable chattering mitigation. The obtained foregoing results indicate the effective rotor currents reconstruction performance of the proposed fault-tolerant control technique and its superiority over FSMC in terms of speed and power tracking.

\begin{figure}
\centering
\includegraphics[width=4.5 in]{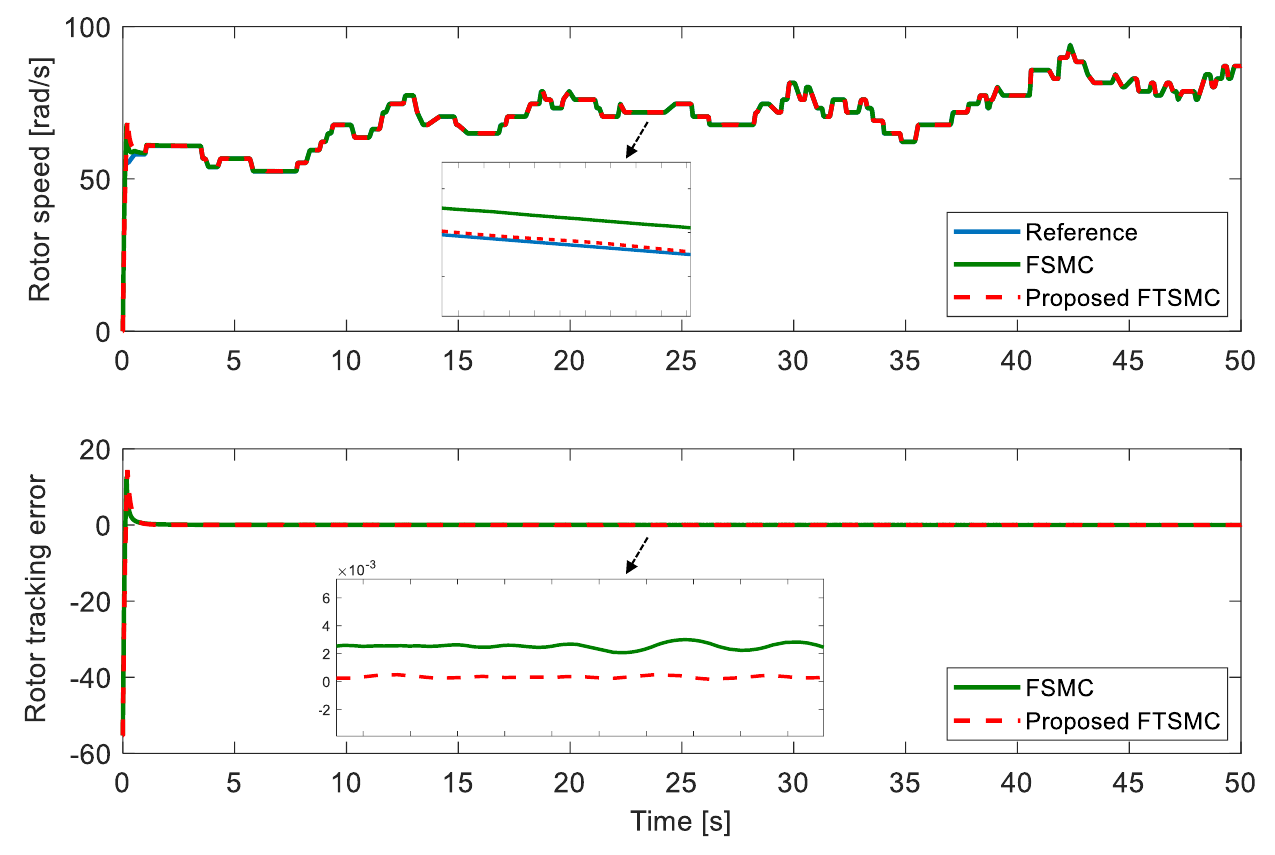}
\caption{Comparative rotor speed tracking performance.}
\label{fig:6-2}
\end{figure}

\begin{figure}
\centering
\includegraphics[width=4.5 in]{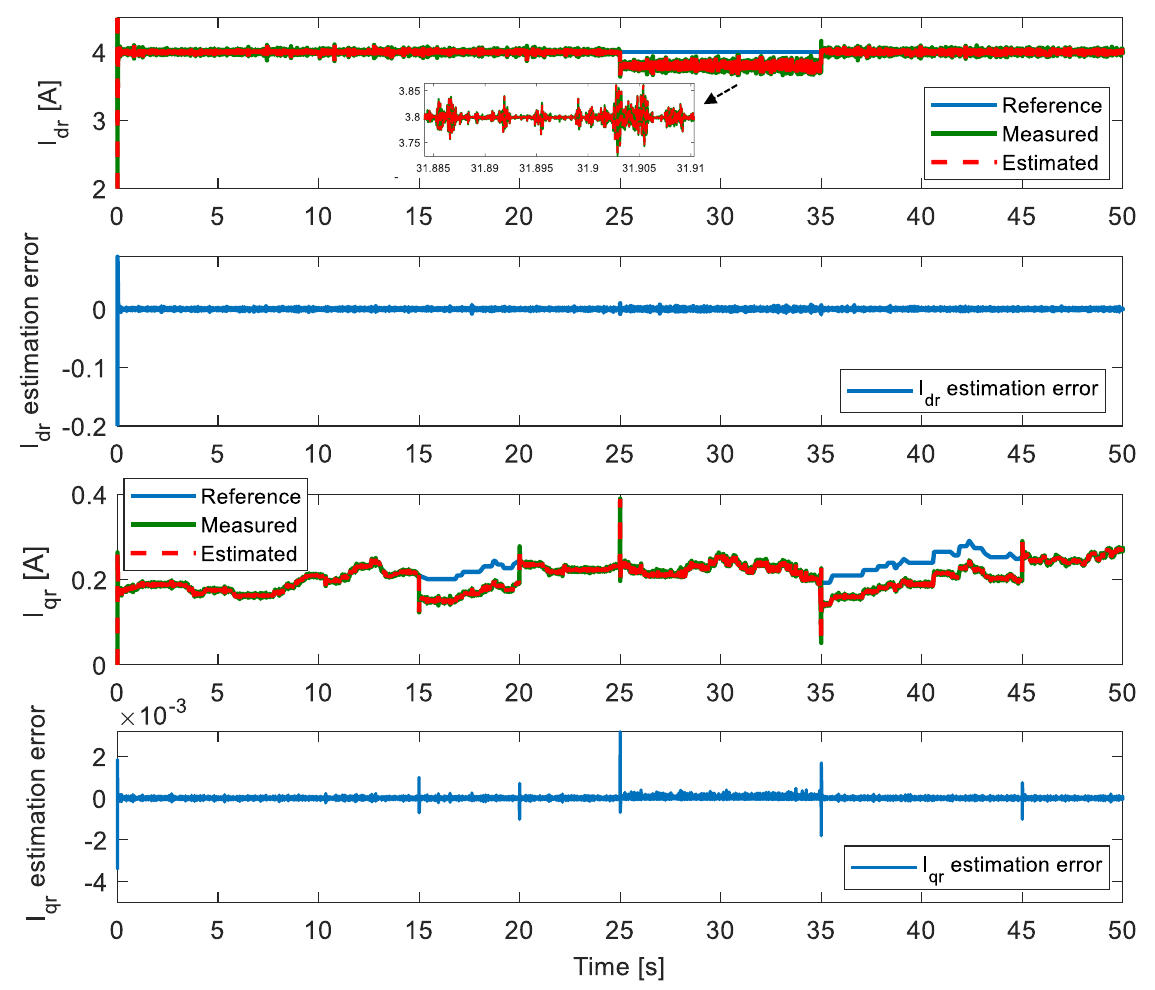}
\caption{Rotor current dynamics estimation.}
\label{fig:6-3}
\end{figure}

\begin{figure}
\centering
\includegraphics[width=4.5 in]{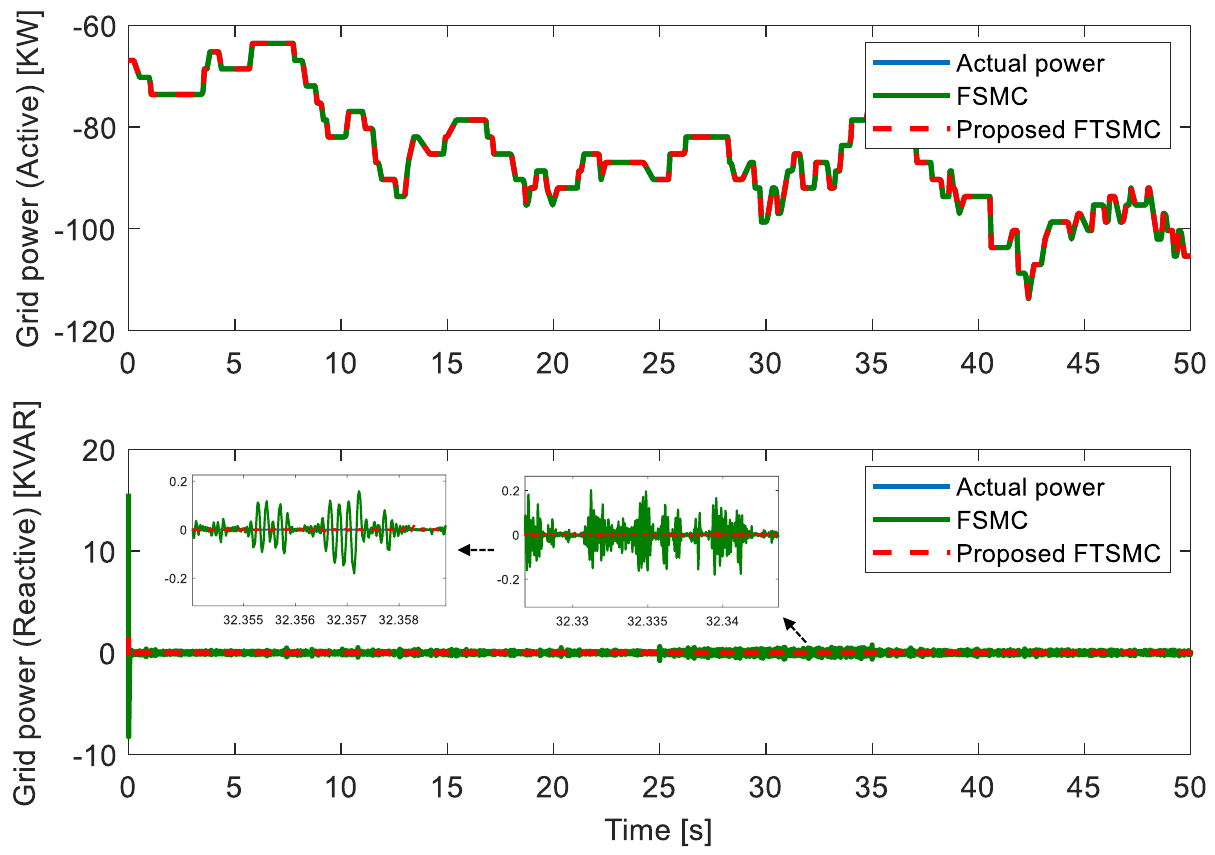}
\caption{Comparative active and reactive power tracking.}
\label{fig:6-4}
\end{figure}

\begin{figure}[ht!]
\centering
\includegraphics[width=4.5 in]{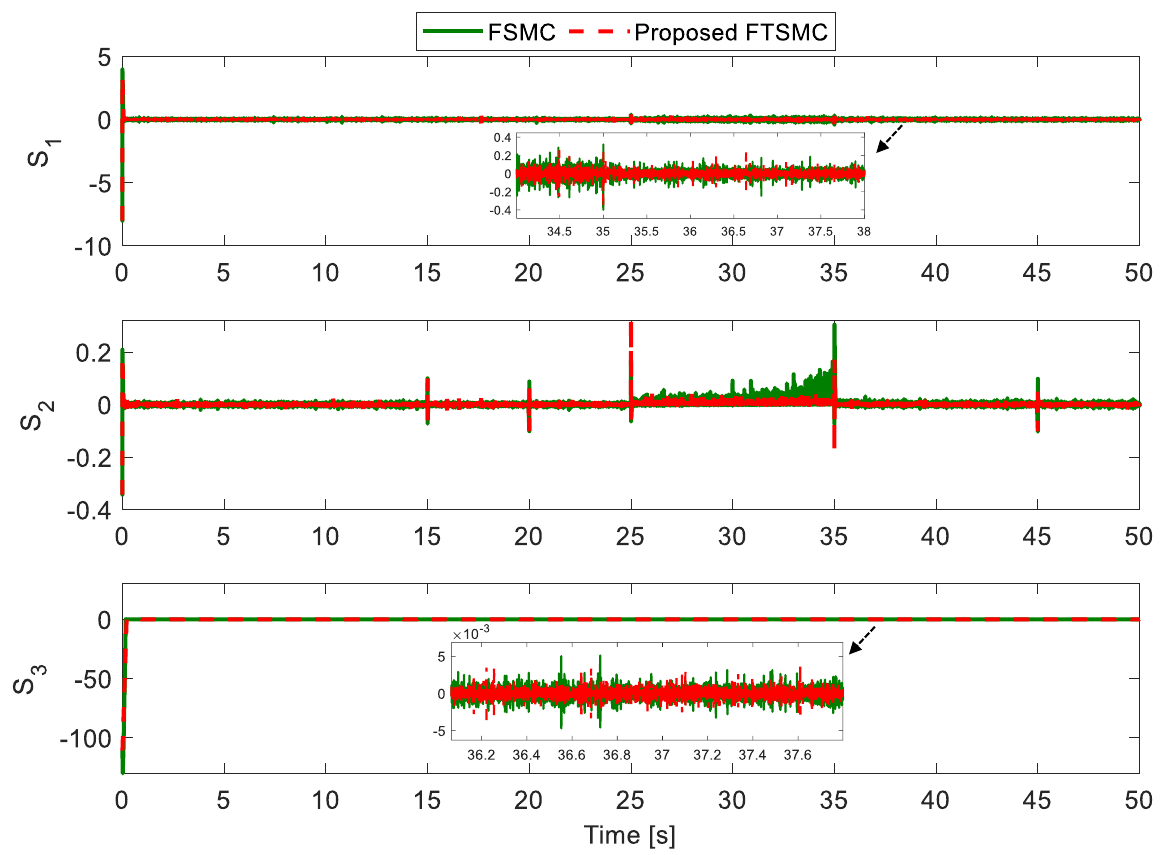}
\caption{Sliding surfaces.}
\label{fig:6-5}
\end{figure}

\section{Fault-Tolerant Control using Sliding Mode Observer for Current Estimation}
\label{sec:6.5}
In this section, first, a sliding mode observer is developed to estimate the rotor currents in the $dq$ reference frame. Then, the feasibility of the proposed control scheme is validated in comparison with the well-performed FSMC \cite{li2020mitigating} and the proposed ASE-based FTSMC approach through simulations.

\subsection{Sliding Mode Observer Design}
\label{sec:6.5.1}
In this section, a robust sliding mode observer is developed to estimate and reconstruct the rotor current in the $d-q$ reference frame during sensor faults.

\begin{assump}
\label{assump:6-1}
The system is subjected to slowly varying lumped disturbance $\xi^*=\Delta \mathcal A\mathcal I+\Delta \mathcal B \mathcal V + \Delta \mathcal D \mathcal V_s+d$, that satisfies the Lipschitz condition $\Vert \xi^*-\hat{\xi}^* \Vert \le J\Vert \mathcal I-\hat{\mathcal I} \Vert$.
\end{assump}
\begin{assump}
\label{assump:6-2}
It is assumed that a positive constant $\chi$ exists such that satisfies $\Vert \xi^*\Vert \le \chi$.
\end{assump}

Considering Assumption \ref{assump:6-1}, \eqref{Equation_6-16} can be rewritten as follows
\begin{equation}
\label{Equation_SMO1}
\dot{\mathcal I}=\mathcal A_n \mathcal I+\mathcal B_n \mathcal V+\mathcal D_n \mathcal V_s+\xi^*
\end{equation}
where $\mathcal I=[\mathcal I_{dr}\,\,\, \mathcal I_{qr}]^T$, $\mathcal V=[\mathcal V_{dr}\,\,\, \mathcal V_{qr}]^T$, and
\begin{align*}
\mathcal A_n&=\begin{bmatrix}
-\frac{\mathcal R_r}{\sigma \mathcal L_r} & s\omega_s \\
\frac{s\omega_s}{\mathcal L_r} & -\frac{\mathcal R_r}{\sigma \mathcal L_r}
\end{bmatrix}, \mathcal B_n=\left[\frac{1}{\sigma \mathcal L_r}\,\,\,\frac{1}{\sigma \mathcal L_r}\right]^T, \\ \nonumber
\mathcal D_n&=\left[0\,\,\,-\frac{s\mathcal L_m}{\sigma \mathcal L_r \mathcal L_s}\right]^T.
\end{align*}

The SMO is designed as
\begin{equation}
\label{Equation_SMO2}
\dot{\hat{\mathcal I}}=\mathcal A_n \hat{\mathcal I}+\mathcal B_n \mathcal V+\mathcal D_n \mathcal V_s+\mathcal N h\left(\varsigma\right)+\hat{\xi}^*,
\end{equation}
where $\varsigma=\mathcal I-\hat{\mathcal I}$ is the observer's sliding surface, $\hat{\mathcal I}=[\hat{\mathcal I}_{dr}\,\,\,\hat{\mathcal I}_{qr}]^T$, $h\left(\varsigma\right)=2/\left(1+e^{-a \varsigma}\right)-1$, $a>0$ is an arbitrary scalar, and $\mathcal N$ is a constant matrix with appropriate dimension.

Subtracting \eqref{Equation_SMO1} from \eqref{Equation_SMO2}, the error dynamic equation of the SMO can be obtained as follows
\begin{equation}
\label{Equation_SMO3}
\dot{\tilde{\mathcal I}}=\mathcal A_n \tilde{\mathcal I}-\mathcal N h\left(\varsigma\right)+\tilde{\xi}^*,
\end{equation}
where $\tilde{\mathcal I}=e=\mathcal I-\hat{\mathcal I}$ denotes the estimation error, and $\tilde{\xi}^*=\xi^*-\hat{\xi}^*$.

\begin{lem}
\label{lem:66-1}
\cite{gentle2007matrix} Matrices $M$ and $Q$ with proper dimensions exist, satisfying
\begin{equation}
\label{Equation_SMO4}
M^TQ+Q^TM\le \zeta M^TM+\zeta^{-1}Q^TQ.
\end{equation}
\end{lem}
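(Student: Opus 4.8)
The plan is to recognise the claimed bound as a matrix version of Young's inequality and to obtain it directly from the positive semidefiniteness of a Gram-type product, rather than treating it as an existence claim. First I would fix the conventions that make the statement meaningful: $\zeta>0$ is a positive scalar, and $M$ and $Q$ are taken with compatible ``proper'' dimensions so that both $M^{T}Q$ and $Q^{T}M$ are well-defined square matrices of the same size. Under these conventions the sum $M^{T}Q+Q^{T}M$ is symmetric, and the inequality in \eqref{Equation_SMO4} is understood in the Loewner (symmetric-matrix) ordering, i.e. the difference $\zeta M^{T}M+\zeta^{-1}Q^{T}Q-\left(M^{T}Q+Q^{T}M\right)$ is to be shown positive semidefinite.

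The single key step is a completion of squares at the matrix level. I would introduce the auxiliary matrix $\Xi=\sqrt{\zeta}\,M-\zeta^{-1/2}Q$, which is well-defined precisely because $\zeta>0$. Since $\Xi^{T}\Xi$ is a Gram matrix it is automatically positive semidefinite, so $\Xi^{T}\Xi\ge 0$; expanding the product gives
\[
\Xi^{T}\Xi=\zeta M^{T}M-M^{T}Q-Q^{T}M+\zeta^{-1}Q^{T}Q\ge 0.
\]
Rearranging this relation in the symmetric-matrix ordering moves the cross terms to the other side and produces exactly $M^{T}Q+Q^{T}M\le \zeta M^{T}M+\zeta^{-1}Q^{T}Q$, which is the assertion of the lemma.

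There is no genuine analytic obstacle here; the difficulty is purely one of bookkeeping, and I expect the main point requiring care to be the justification of the ordering and the positivity of $\zeta$. Specifically, I would make explicit that positive semidefiniteness of $\Xi^{T}\Xi$ is what licenses writing ``$\ge 0$'' as a matrix inequality, and that dropping the hypothesis $\zeta>0$ both renders $\zeta^{-1/2}$ undefined and can invalidate the bound. With these conventions stated, the completion-of-squares identity above closes the argument in one line, and the lemma is then available for the Lyapunov-based convergence analysis of the sliding mode observer in the remainder of the section.
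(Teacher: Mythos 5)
Your proof is correct, and it is the standard argument for this matrix Young-type inequality: the paper itself gives no proof of Lemma~\ref{lem:66-1}, citing it directly from the matrix-algebra literature, so your completion-of-squares derivation is exactly the argument that citation points to. Expanding $\Xi^{T}\Xi\ge 0$ with $\Xi=\sqrt{\zeta}\,M-\zeta^{-1/2}Q$ and rearranging in the Loewner order is precisely how the result is obtained, and your two clarifications are genuine improvements on the paper's statement: the lemma is a universal claim (the inequality holds for \emph{all} $M$, $Q$ of compatible dimensions), not an existence claim as the phrasing ``exist, satisfying'' suggests, and the hypothesis $\zeta>0$ must be stated explicitly since the paper omits it even though the result fails without it. One small remark for context: in the paper's subsequent stability proof, the step bounding $\tilde{\xi}^{*T}\mathcal{P}e+e^{T}\mathcal{P}\tilde{\xi}^{*}$ by $\mu\tilde{\xi}^{*T}\tilde{\xi}^{*}+\mu^{-1}e^{T}\mathcal{P}^{2}e$ is an application of this lemma (with $M=\tilde{\xi}^{*}$, $Q=\mathcal{P}e$, $\zeta=\mu$), even though the text attributes it to Lemma~\ref{lem:66-2}, so your proof is the one actually underpinning that estimate.
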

\begin{lem}
\label{lem:66-2}
\cite{gentle2007matrix} Given the positive definite matrix $J$ and $\mathcal A_n$ being Hurwitz, the positive definite matrix $\mathcal P$ exists, satisfying
\begin{equation}
\label{Equation_SMO5}
\mathcal A^T_n \mathcal P+\mathcal P \mathcal A_n+\mu J^T J +\mu^{-1}\mathcal P^2 <0.
\end{equation}
\end{lem}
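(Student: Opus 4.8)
The plan is to read \eqref{Equation_SMO5} as a quadratic (Riccati-type) matrix inequality in the unknown symmetric $\mathcal{P}$ and to produce a positive-definite solution constructively from the Lyapunov structure that the Hurwitz hypothesis provides. First I would linearise the quadratic term: since $\mathcal{P}=\mathcal{P}^T$ we have $\mu^{-1}\mathcal{P}^2=\mu^{-1}\mathcal{P}^T\mathcal{P}$, so a Schur complement on the $-\mu I$ block shows that \eqref{Equation_SMO5} holds if and only if
\begin{equation*}
\begin{bmatrix} \mathcal{A}_n^T\mathcal{P}+\mathcal{P}\mathcal{A}_n+\mu J^TJ & \mathcal{P} \\ \mathcal{P} & -\mu I \end{bmatrix}<0 .
\end{equation*}
Establishing the lemma therefore reduces to proving feasibility of this linear matrix inequality in the pair $(\mathcal{P},\mu)$ subject to $\mathcal{P}>0$ and the scalar $\mu>0$.

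Next I would exhibit a feasible point through the algebraic Riccati equation attached to the inequality, namely
\begin{equation*}
\mathcal{A}_n^T\mathcal{P}+\mathcal{P}\mathcal{A}_n+\mu^{-1}\mathcal{P}^2+\mu J^TJ+\varepsilon I=0
\end{equation*}
for a small margin $\varepsilon>0$. Standard Riccati theory delivers a symmetric stabilising solution $\mathcal{P}$ whenever the Hamiltonian matrix
\begin{equation*}
\mathcal{H}=\begin{bmatrix} \mathcal{A}_n & \mu^{-1}I \\ -(\mu J^TJ+\varepsilon I) & -\mathcal{A}_n^T \end{bmatrix}
\end{equation*}
has no eigenvalue on the imaginary axis, the solution being read off from its stable invariant subspace and being positive definite because the constant term $\mu J^TJ+\varepsilon I$ is positive definite. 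The Hurwitz property of $\mathcal{A}_n$ is precisely what seeds an $n$-dimensional stable subspace and makes $\mathcal{P}$ well defined, and by construction the left-hand side of \eqref{Equation_SMO5} then equals $-\varepsilon I<0$. In parallel, Lemma \ref{lem:66-1} supplies the completion-of-squares bound $2(\mathcal{P}\tilde{\mathcal I})^T\tilde{\xi}^{*}\le \mu^{-1}\tilde{\mathcal I}^T\mathcal{P}^2\tilde{\mathcal I}+\mu(\tilde{\xi}^{*})^T\tilde{\xi}^{*}$, which together with the Lipschitz bound of Assumption \ref{assump:6-1} is exactly what injects the pair $\mu^{-1}\mathcal{P}^2+\mu J^TJ$ into the observer Lyapunov derivative; thus verifying \eqref{Equation_SMO5} is equivalent to rendering that derivative negative definite along the error dynamics \eqref{Equation_SMO3}.

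The main obstacle will be guaranteeing that $\mathcal{H}$ has no imaginary-axis eigenvalues, equivalently the bounded-real smallness condition $\lVert J(sI-\mathcal{A}_n)^{-1}\rVert_{\infty}<1$. This cannot come from the Hurwitz assumption alone, because the scaling $\mu$ cancels out of that frequency-domain condition, so existence ultimately demands that the decay rate of $\mathcal{A}_n$ dominate the gain $J$. I would close this gap by exploiting the design freedom in the observer, enlarging the effective decay of $\mathcal{A}_n$ through the injection matrix $\mathcal{N}$ (and, if necessary, a prior linear output correction) until the smallness condition is met; once it holds, the positive-definite $\mathcal{P}$ and the strict inequality \eqref{Equation_SMO5} follow at once from the stabilising Riccati solution.
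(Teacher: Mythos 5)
The first thing to flag is that the paper never actually proves this lemma: it is stated with a bare citation to a matrix-algebra reference, and no argument follows it, so there is no in-paper proof to compare yours against. Your machinery (Schur complement, the associated Riccati equation and Hamiltonian, the bounded-real reading) is the standard and correct way to analyze \eqref{Equation_SMO5}, and your central observation is not merely an obstacle to one proof strategy --- it is decisive. Substituting $\tilde{\mathcal P}=\mathcal P/\mu$ turns \eqref{Equation_SMO5} into $\mathcal A_n^T\tilde{\mathcal P}+\tilde{\mathcal P}\mathcal A_n+J^TJ+\tilde{\mathcal P}^2<0$, so $\mu$ provides no freedom at all, and solvability is equivalent, by the strict bounded-real lemma, to $\lVert J(sI-\mathcal A_n)^{-1}\rVert_{\infty}<1$. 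This does not follow from ``$\mathcal A_n$ Hurwitz and $J$ positive definite,'' and the lemma as stated is in fact false: in the scalar case $\mathcal A_n=-1$, $J=2$, AM--GM gives $\mu J^2+\mu^{-1}\mathcal P^2\ge 4\mathcal P$, so the left-hand side of \eqref{Equation_SMO5} is at least $2\mathcal P>0$ for every $\mathcal P>0$ and $\mu>0$. What you uncovered is therefore a missing hypothesis in the statement itself: the Lipschitz gain $J$ of Assumption \ref{assump:6-1} must be small relative to the decay of $\mathcal A_n$, precisely the $H_\infty$ condition above.

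Where your own proposal has a gap is the closing step: ``fixing'' the smallness condition through the injection matrix $\mathcal N$ is a redesign of the observer, not a proof of the lemma, and it is not even available inside this paper's framework. In the error dynamics \eqref{Equation_SMO3}, $\mathcal N$ multiplies the bounded sigmoidal term $h(\varsigma)$, which saturates at $\pm 1$, so it cannot supply linear decay for large errors; moreover, the Lyapunov computation in Theorem \ref{thm:66-1} applies the lemma to the raw $\mathcal A_n$ and handles $-2e^T\mathcal N h(e)$ as a separate sign-definite term, so no reshaping of $\mathcal A_n$ by $\mathcal N$ ever enters the inequality. Realizing your idea would require adding a genuine linear output-injection term $L\varsigma$ to the observer \eqref{Equation_SMO2} and restating the lemma for $\mathcal A_n-L$ --- a legitimate repair, but a different statement. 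The honest conclusion your analysis points to is: as written, the lemma is unprovable (indeed false), and either the hypothesis $\lVert J(sI-\mathcal A_n)^{-1}\rVert_{\infty}<1$ must be added, or the observer and the lemma must be restated around a modified system matrix.
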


\begin{thm}
\label{thm:66-1}
Given an appropriate matrix $\mathcal{N}$ satisfying $\Vert \mathcal{N}\Vert = \Vert \chi \Vert$ yields locally bounded stability of the error system \eqref{Equation_SMO3}.
\end{thm}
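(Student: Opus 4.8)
The plan is to establish uniform ultimate boundedness of the estimation error $\tilde{\mathcal I}$ governed by \eqref{Equation_SMO3} through a quadratic Lyapunov argument, from which the claimed locally bounded stability follows. First I would invoke Lemma \ref{lem:66-2}: since the nominal current-dynamics matrix $\mathcal A_n$ is Hurwitz, there exists a symmetric positive-definite $\mathcal P$ satisfying \eqref{Equation_SMO5}. Taking the candidate $V=\tilde{\mathcal I}^T\mathcal P\tilde{\mathcal I}$ and differentiating it along \eqref{Equation_SMO3} (recalling $\varsigma=\tilde{\mathcal I}$) gives
\begin{equation}
\dot V=\tilde{\mathcal I}^T\left(\mathcal A_n^T\mathcal P+\mathcal P\mathcal A_n\right)\tilde{\mathcal I}-2\tilde{\mathcal I}^T\mathcal P\mathcal N h(\varsigma)+2\tilde{\mathcal I}^T\mathcal P\tilde{\xi}^{*}.
\end{equation}

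Next I would neutralize the disturbance cross-term. Applying Lemma \ref{lem:66-1} with the identifications $M\leftrightarrow\mathcal P\tilde{\mathcal I}$ and $Q\leftrightarrow\tilde{\xi}^{*}$ together with a scaling $\mu>0$ yields $2\tilde{\mathcal I}^T\mathcal P\tilde{\xi}^{*}\le\mu^{-1}\tilde{\mathcal I}^T\mathcal P^{2}\tilde{\mathcal I}+\mu\,\tilde{\xi}^{*T}\tilde{\xi}^{*}$, and the Lipschitz condition of Assumption \ref{assump:6-1} bounds the residual by $\mu\,\tilde{\mathcal I}^T J^T J\tilde{\mathcal I}$. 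Grouping the quadratic contributions reproduces exactly the left side of \eqref{Equation_SMO5}, so Lemma \ref{lem:66-2} renders the assembled matrix negative definite; hence this portion of $\dot V$ is dominated by $-\lambda_{\min}(\cdot)\,\Vert\tilde{\mathcal I}\Vert^{2}$ for some positive $\lambda_{\min}$.

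The remaining and most delicate step is the nonlinear injection term $-2\tilde{\mathcal I}^T\mathcal P\mathcal N h(\varsigma)$. Because $h(\varsigma)=2/(1+e^{-a\varsigma})-1$ is a smooth, odd, monotone sigmoid whose entries lie in $(-1,1)$, it acts as a continuous surrogate of $\mathsf{sgn}(\tilde{\mathcal I})$: near the origin it is stabilizing, so the term is nonpositive when $\mathcal P\mathcal N$ is chosen positive definite, while for large $\Vert\tilde{\mathcal I}\Vert$ it saturates at unit magnitude. The argument I would make is that the design choice $\Vert\mathcal N\Vert=\Vert\chi\Vert$, combined with the disturbance bound $\Vert\xi^{*}\Vert\le\chi$ of Assumption \ref{assump:6-2}, lets the injection offset the residual disturbance, so that this term contributes at most a constant $\beta>0$ to $\dot V$. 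The hard part will be controlling the saturation region rigorously — showing that merely matching $\mathcal N$ to $\chi$ is \emph{sufficient} rather than plausible, and accounting for the gap between the sigmoid and the ideal sign function without invoking an extra gain margin.

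Collecting the three estimates gives $\dot V\le-\alpha\,\Vert\tilde{\mathcal I}\Vert^{2}+\beta$ with $\alpha>0$, so $\dot V<0$ whenever $\Vert\tilde{\mathcal I}\Vert>\sqrt{\beta/\alpha}$. By the standard ultimate-boundedness theorem this confines $\tilde{\mathcal I}$ to a neighborhood of the origin and establishes locally bounded stability of the error system \eqref{Equation_SMO3}, completing the proof.
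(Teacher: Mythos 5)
Your proposal follows essentially the same route as the paper's proof: the same quadratic Lyapunov function $V=e^{T}\mathcal{P}e$, the same use of Lemma \ref{lem:66-1} together with the Lipschitz condition of Assumption \ref{assump:6-1} to absorb the disturbance cross-term into $e^{T}\left(\mu J^{T}J+\mu^{-1}\mathcal{P}^{2}\right)e$, and the same invocation of Lemma \ref{lem:66-2} to make the assembled quadratic form negative definite. The only divergence is the final step, the injection term: the paper simply asserts $\dot{V}_{4}=\Psi-2e^{T}\mathcal{N}h\left(e\right)<0$, implicitly treating $e^{T}\mathcal{N}h\left(e\right)$ as sign-favorable and never actually using the hypothesis $\Vert\mathcal{N}\Vert=\Vert\chi\Vert$, whereas you concede the term may contribute positively and instead conclude ultimate boundedness via $\dot{V}\le-\alpha\Vert e\Vert^{2}+\beta$; your version is, if anything, the more defensible reading of ``locally bounded stability.'' The ``hard part'' you flag also dissolves without any delicate matching of $\mathcal{N}$ to $\chi$: since each entry of $h\left(\varsigma\right)$ lies in $\left(-1,1\right)$, the injection term is bounded in magnitude by a constant multiple of $\Vert e\Vert$, which is linear in $\Vert e\Vert$ and hence dominated by the negative quadratic term outside a ball, so the ultimate-boundedness conclusion stands without further argument about the saturation region.
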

\begin{proof}
Consider the Lyapunov function candidate as
\begin{equation}
\label{Equation_SMO5_1}
V_4=e^T \mathcal P e.
\end{equation}

The time-dependant derivation of \eqref{Equation_SMO5_1} yields
\begin{equation}
\label{Equation_SMO6}
\dot{V}_4=e^T\left(\mathcal{A}^T_n \mathcal{P}\right)e-2e^T \mathcal{N}h\left(e\right) +\tilde{\xi}^{*T} \mathcal{P}e+e^T\mathcal{P}\tilde{\xi}^*.
\end{equation}

From Lemma \ref{lem:66-2} one obtains
\begin{align}
\label{Equation_SMO7}
\tilde{\xi}^{*T} \mathcal{P}e+e^T\mathcal{P}\tilde{\xi}^* &\le \mu \tilde{\xi}^{*T}\tilde{\xi}^*+\mu^{-1}e^T\mathcal{P}^2 e \\ \nonumber
&=\mu\tilde{\xi}^{*2}+\mu^{-1}e^T\mathcal{P}^2e \le \mu J^2 \Vert \tilde{\xi}^* \Vert +\mu^{-1}e^T \mathcal{P}^2 e \\ \nonumber
&=e^T\left(\mu J^T J+\mu^{-1} \mathcal{P}^2\right)e.
\end{align}

Defining $\Psi=e^T\left(\mathcal{A}^T_n \mathcal{P}\right)e+\tilde{\xi}^{*T} \mathcal{P}e+e^T\mathcal{P}\tilde{\xi}^*$ one can obtain
\begin{equation}
\label{Equation_SMO8}
\Psi \le e^T\left(\mathcal{A}^T_n \mathcal{P}\right)e+e^T\left(\mu J^T J+\mu^{-1} \mathcal{P}^2\right)e <0,
\end{equation}
which results in $\dot{V}_4=\Psi-2e^T \mathcal{N}h\left(e\right) <0$. Accordingly, the locally bounded stability of the error system \eqref{Equation_SMO3} is achieved, and the developed SMO can perform the rotor currents estimation.
\end{proof}

Similar to Section \ref{sec:6.4}, let us define the difference between the measured and the estimated rotor current in the $d-q$ reference frame by $\mathcal R=|\tilde{I}|$. Flawlessly, $\mathcal R=0$ and $\mathcal R>0$ should denote the no-fault and faulty situations, respectively. However, false fault detections are inevitable due to unavoidable performance degradations in the current sensors over time, which can violate the above assumption in practice. To avoid this issue, a tolerance limit $\mathcal T$ can be defined for actual faults occurrence, where $\mathcal R \leq \mathcal T$ and $\mathcal R > \mathcal T$ illustrate the no-fault and faulty situations, respectively. As a result, the developed SMO can estimate and reconstruct the rotor current during sensors' faults. The stator flux linkage and voltage can be expressed in the stator reference frame as

\begin{equation}
\label{Equation_SMO9}
\begin{cases}
\varphi_{s\alpha}=\mathcal L_s i_{s\alpha} + \mathcal L_{m}i_{r\alpha}, \\
\varphi_{s\beta}=\mathcal L_s i_{s\beta} + \mathcal L_{m}i_{r\beta},
\end{cases}
\end{equation}
\begin{equation}
\label{Equation_SMO10}
\begin{cases}
v_{s\alpha}=\mathcal R_s i_{s\alpha}+\dot{\varphi}_{s\alpha}, \\ \nonumber
v_{s\beta}=\mathcal R_s i_{s\beta}+\dot{\varphi}_{s\beta}.
\end{cases}
\end{equation}

Considering the time derivative of \eqref{Equation_SMO9} in \eqref{Equation_SMO10} yields,
\begin{subequations}
\label{Equation_SMO11}
\begin{align}
\dot{i}_{r\alpha}&=\frac{1}{\mathcal L_m}v_{s\alpha}-\frac{\mathcal R_s}{\mathcal L_m}i_{s\alpha}-\frac{\mathcal L_s}{\mathcal L_m} \dot{i}_{s\alpha}, \\
\dot{i}_{r\beta}&=\frac{1}{\mathcal L_m}v_{s\beta}-\frac{\mathcal R_s}{\mathcal L_m}i_{s\beta}-\frac{\mathcal L_s}{\mathcal L_m} \dot{i}_{s\beta}.
\end{align}
\end{subequations}

Considering \eqref{Equation_SMO11}, the rotor current can be approximated in $\alpha-\beta$ frame as
\begin{subequations}
\label{Equation_SMO12}
\begin{align}
\hat{i}_{r\alpha}&=\frac{1}{\mathcal L_m}\int v_{s\alpha}-\frac{\mathcal R_s} {\mathcal L_m}\int i_{s\alpha}-\frac{\mathcal L_s}{\mathcal L_m} i_{s\alpha}, \\
\hat{i}_{r\beta}&=\frac{1}{\mathcal L_m}\int v_{s\beta}-\frac{\mathcal R_s}{\mathcal L_m}\int i_{s\beta}-\frac{\mathcal L_s}{\mathcal L_m} i_{s\beta}.
\end{align}
\end{subequations}

As apparent from \eqref{Equation_SMO12}, the approximations of rotor current are dependant on the stator currents and voltage measurements. As a result, if $\mathfrak R>\mathcal Q$, the current control law $u_{eq}$ is derived from the rotor current approximation \eqref{Equation_SMO12}.

\subsection{Simulation Results}
\label{sec:6.5.2}
The effectiveness of the proposed SMO-based active fault-tolerant FTSMC scheme for the rotor current regulation and speed trajectory tracking of a 1.5 MW DFIG-based WT is tested with respect to FSMC \cite{li2020mitigating} and algebraic state-observer-based FTSMC \cite{mousavi2022active} approaches through simulations in the MATLAB/Simulink platform. Multiple fault occurrences are assumed to happen during the time intervals of 35-50 s and 120-140 s for $I_{dr}$, and 15-35 s, 65-75 s, and 150-170 s for $I_{qr}$ while evaluating the controllers. Furthermore, similar to Section \ref{sec:6.4} the lumped uncertainty $d=50\% f+50\% hu$ is considered to affect the system. Figure \ref{fig:66-2} shows the wind profile with an average speed of 5.2 \si{m/s} within the speed range of 3.8-5.8 \si{m/s}. Figure \ref{fig:66-3} shows a comparative illustration of rotor speed tracking. It can be seen from Figure \ref{fig:66-3} that all three approaches demonstrate a similar tracking performance. However, the zoomed-in insets reveal the superior performance of the proposed SMO-AFTSMC approach, with more precise trajectory tracking performance. Furthermore, much less tracking error of the proposed scheme compared with other methods can be observed, indicating its superior performance.

\begin{figure}
\centering
\includegraphics[width=4.2 in]{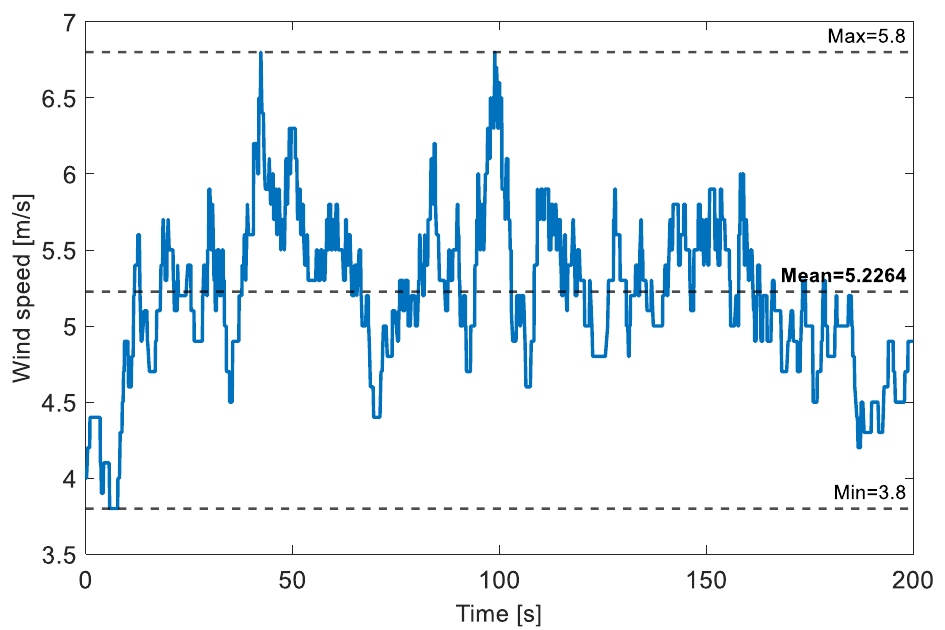}
\caption{Wind speed profile (200 s).}
\label{fig:66-2}
\end{figure}

\begin{figure}
\centering
\includegraphics[width=4.5 in]{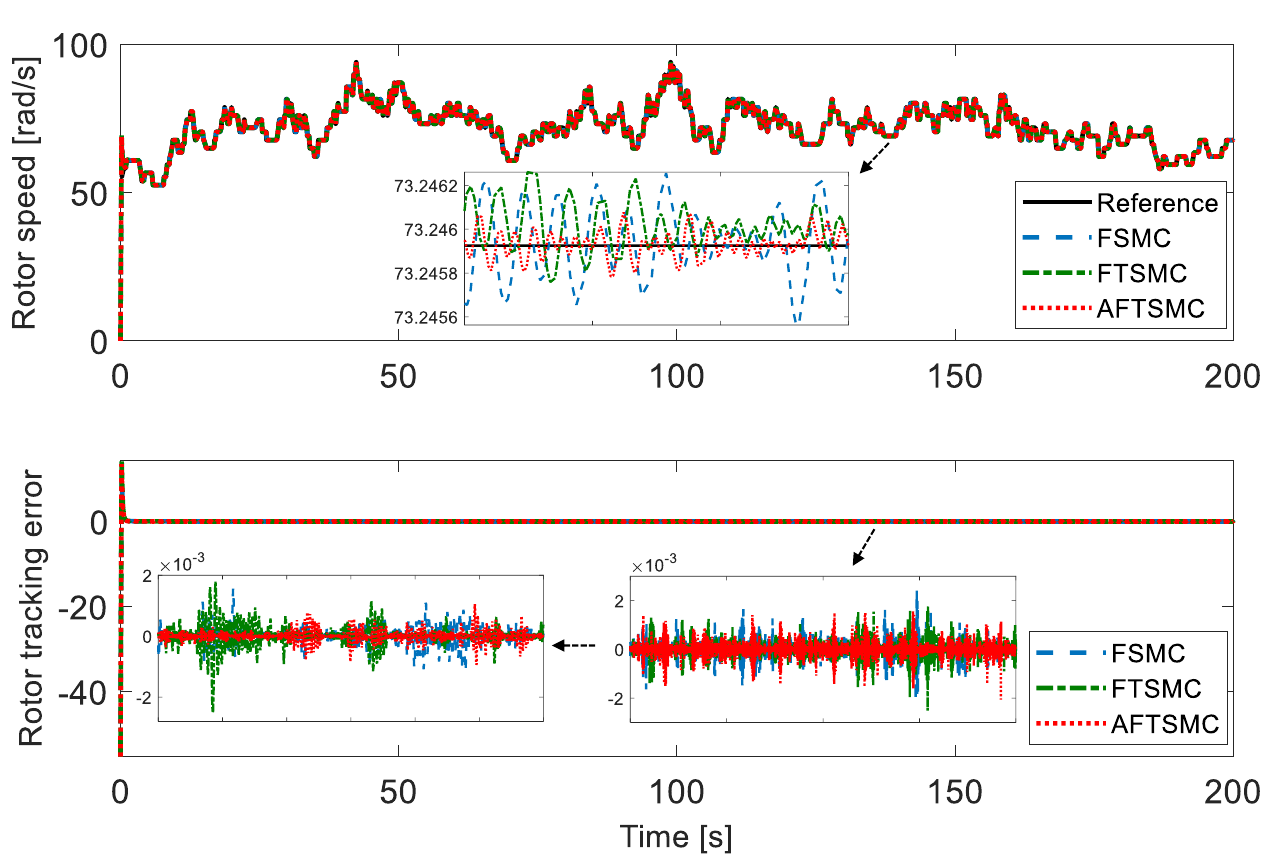}
\caption{Comparative rotor speed tracking performance.}
\label{fig:66-3}
\end{figure}

Figures \ref{fig:66-4} and \ref{fig:66-5} depict the rotor current dynamics estimation and reconstruction performance of the developed SMO in the presence of faulty sensors and lumped uncertainties. As observed from Figs. \ref{fig:66-4} and \ref{fig:66-5}, the estimation and reconstruction of $I_{dr}$ and $I_{qr}$ is desirably fulfilled in a very short time with minimal error, demonstrating its remarkable performance. In addition, the provided comparisons with the algebraic state observer (ASE) \cite{mousavi2022active} demonstrate the superiority of the developed SMO with less estimation error.

\begin{figure}
\centering
\includegraphics[width=4.5 in]{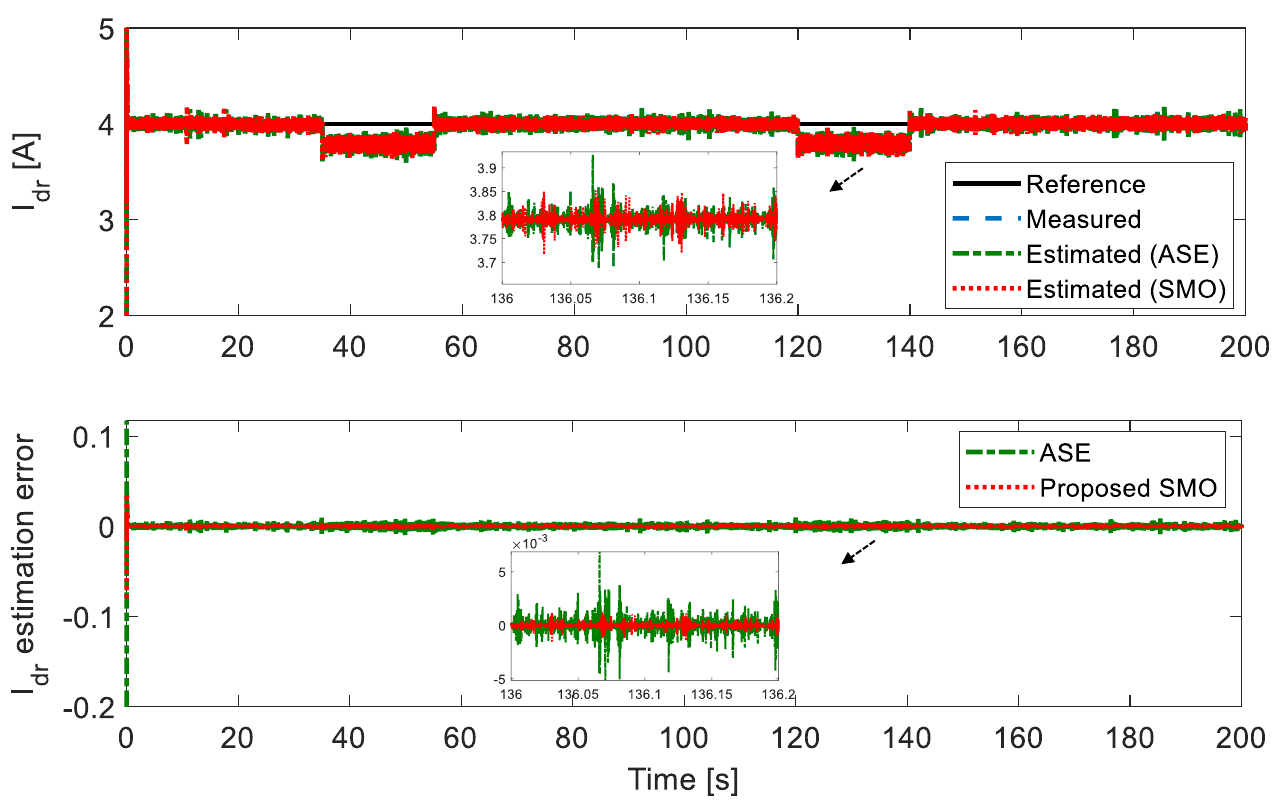}
\caption{Rotor current dynamics estimation ($I_{dr}$).}
\label{fig:66-4}
\end{figure}

\begin{figure}[ht!]
\centering
\includegraphics[width=4.5 in]{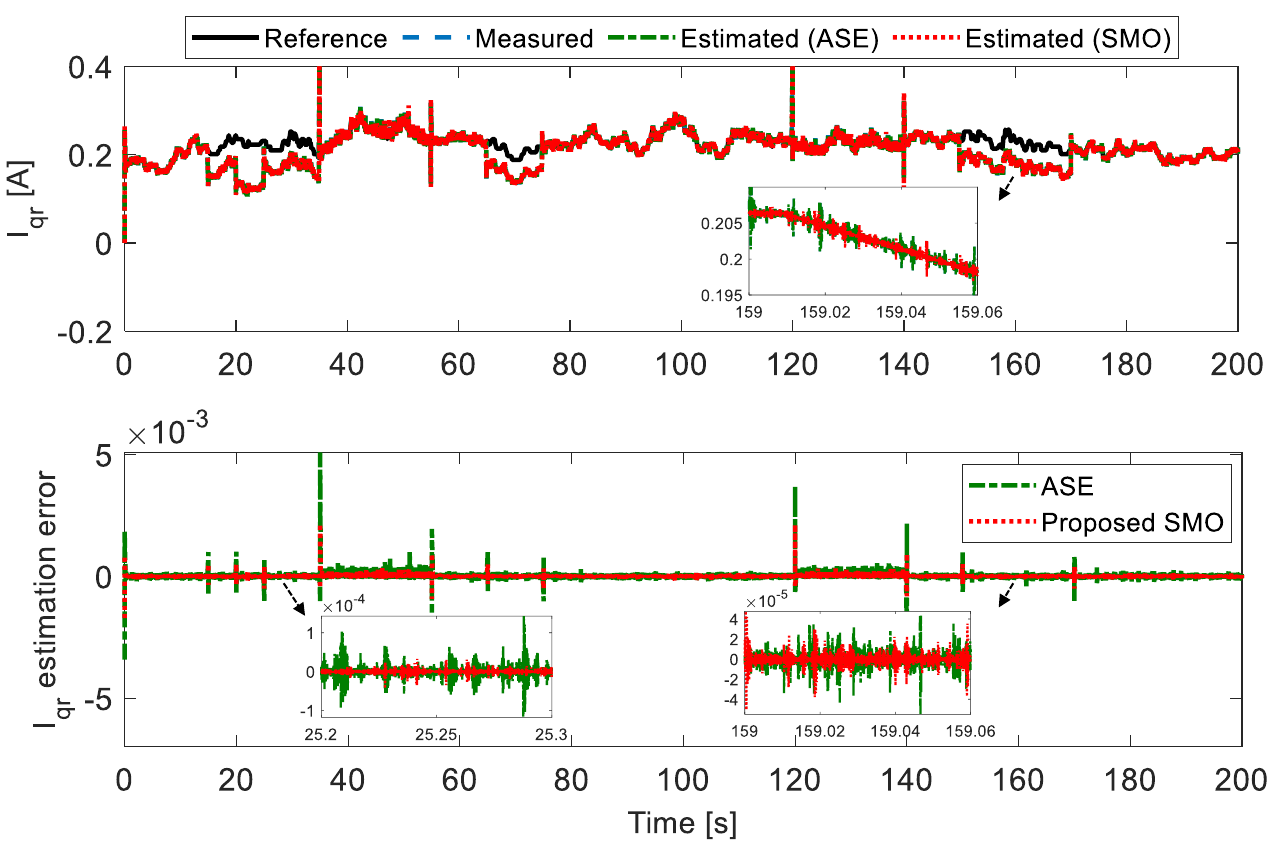}
\caption{Rotor current dynamics estimation ($I_{qr}$).}
\label{fig:66-5}
\end{figure}

The grid active and reactive power tracking performances are shown in Figure \ref{fig:66-6}. Accordingly, one can observe that all the control schemes have successfully carried out the grid active power tracking. However, as clearly seen from the zoomed-in insets, with a precise power tracking performance and a small transient response, the proposed SMO-AFTSMC has delivered a more improved power factor with superior reactive power regulation, surpassing the ASE-FTSMC and FSMC approaches. Figure \ref{fig:66-7} depicts the comparative sliding surfaces of the current and speed FSMC and FTSMC controllers. In this sense, although both controllers deliver bounded sliding surfaces, the proposed FTSMC provides better chattering mitigation compared to the FSMC. From the obtained foregoing results, it is evident that the developed SMO was found to deliver superior performance with respect to the ASE observer. In addition, the proposed fault-tolerant control scheme offers superior speed and power tracking performance compared to the FSMC approach.

\begin{figure}[ht!]
\centering
\includegraphics[width=4.5 in]{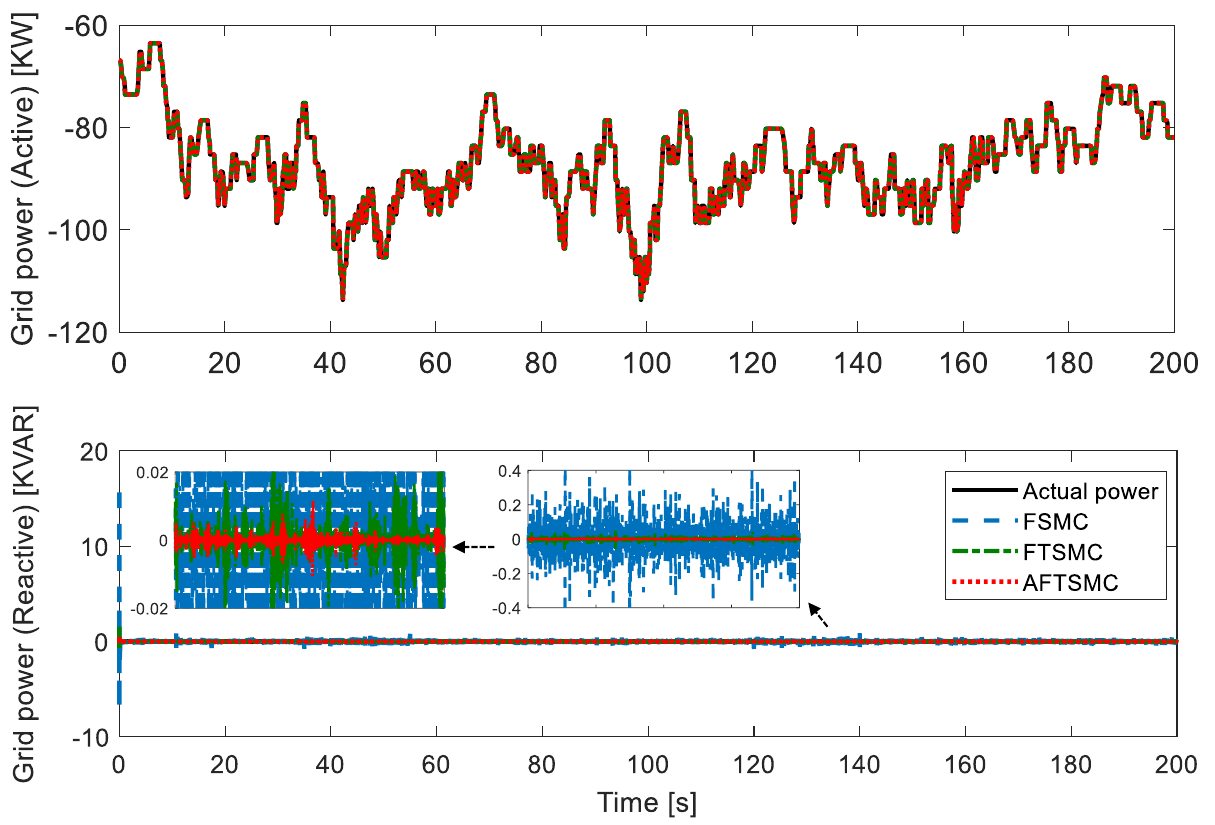}
\caption{Comparative active and reactive power tracking.}
\label{fig:66-6}
\end{figure}

\begin{figure}[ht!]
\centering
\includegraphics[width=4.5 in]{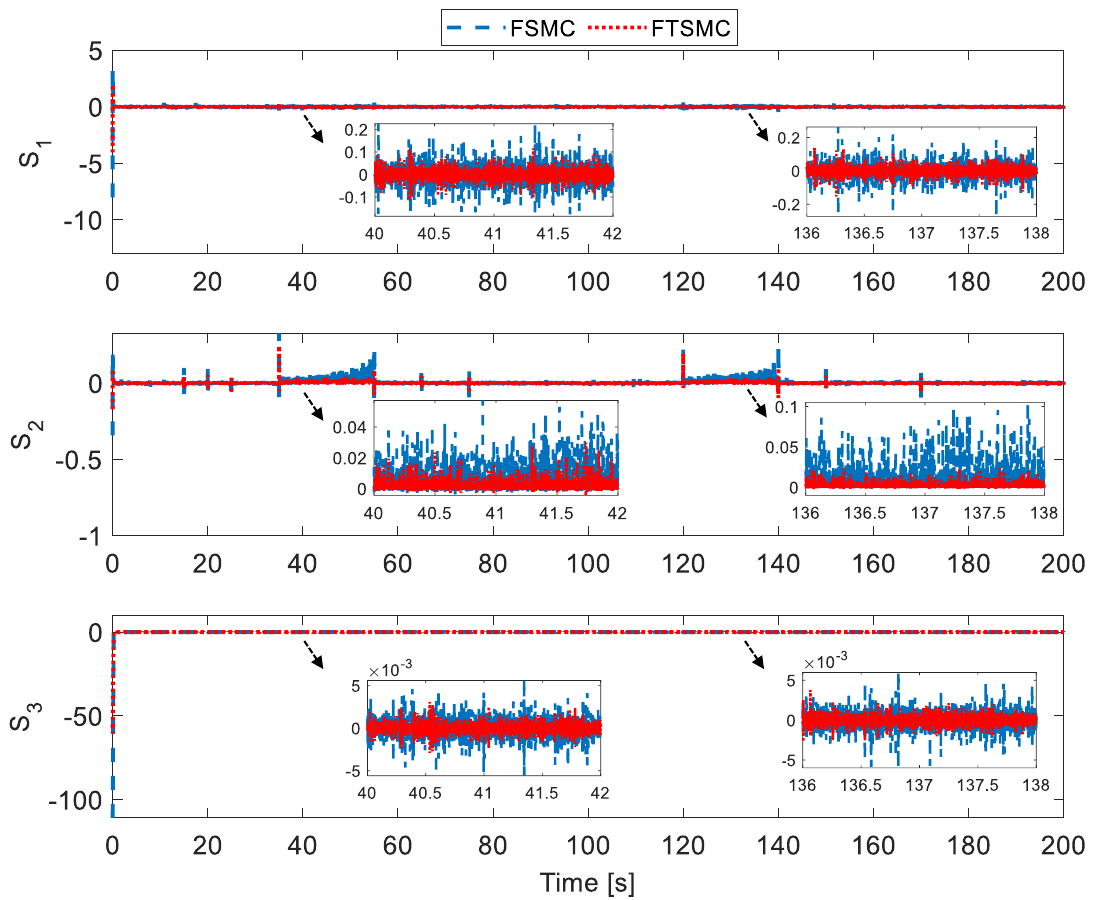}
\caption{Comparative current and speed controllers' sliding surfaces.}
\label{fig:66-7}
\end{figure}

\section{Conclusions}
\label{sec:6.6}
This chapter proposed two active fault-tolerant control schemes for rotor side converter of DFIG-based WECSs subjected to model uncertainties and sensor faults. The control schemes comprised two fractional-order nonsingular terminal sliding mode controllers with alleviated chattering and finite-time reaching time for rotor current regulation and speed trajectory tracking. An algebraic state observer was first incorporated with the proposed controllers; where benefitting from the augmented state observer, the precise fault reconstruction allowed to achieve remarkable fault-tolerant control in the presence of sensor faults, with similar behavior as the fault-free case. Then a sliding mode observer was developed to deal with the same problem, demonstrating a superior estimation performance to that of the algebraic state observer. The proposed control techniques' performance under faulty situations has been testified in comparison with the fractional-order SMC approach. According to the simulation results, the proposed fault-tolerant control schemes demonstrated impressive speed and power tracking performance with reduced chattering.

\chapter{Conclusions and Future Works} 

\label{Chapter7} 
In this chapter, first, a summary of the contributions is presented, and then recommendations for the future works are discussed.

\section{Conclusions and Contributions}
\label{sec:7.1}
The contributions of this thesis are twofold. On the one hand, various novel control approaches have been proposed to tackle some control problems of wind turbine systems, such as blade pitch control, rotor speed control and maximum power extraction, and doubly-fed induction generator (DFIG) -driven wind energy conversion systems (WECSs) power control in the presence of various fault scenarios. On the other hand, the main deficiencies of the conventional sliding mode control (SMC) approach have been tackled by developing novel higher-order and fractional-order SMCs to mitigate the chattering problem, increase the convergence speed, achieve better tracking precision, and prevent unnecessarily large control signals. The main contributions can be expressed as follows:

In Chapter \ref{Chapter4}, a new fault-tolerant pitch control strategy was proposed to effectively adjust the blade pitch angle of wind turbines (WTs) and maintain improvement in power generation performance of WT in the presence of sensor, actuator, and system faults. In this respect, taking advantage of the memory characteristics of fractional calculus, a fractional-order PID (FOPID) controller with extended memory of pitch angles was proposed. To tune the developed controller's parameters, a novel optimization algorithm, namely the dynamic weighted parallel firefly algorithm (DWPFA), was also proposed, where its optimization performance was extensively investigated and validated. Comparative simulations in fault-free and faulty conditions were conducted, and the effectiveness of the proposed fault-tolerant pitch control scheme in mitigating the fault effects and WT power generation improvements were demonstrated. A limitation for the realization of the proposed control scheme is a slight increase in the computational complexity due to the memory requirements based on the fractional-order operators and the higher number of parameters that must be tuned compared to the conventional controllers. Since approximations must be considered to implement such controllers, fractional-order operators' implementations are relatively complex and costly compared to their integer-order counterparts. However, although the conventional PI and the optimal FOPID approaches provide lower computational complexities with respect to the proposed FOPID with extended memory, prioritizing more power generation and better fault-tolerant performances will make the proposed method a more preferred candidate providing a viable solution that can be implemented with ease without the needs of costly or complex hardware installments.

In Chapter \ref{Chapter5}, a fractional-order nonsingular terminal sliding mode controller (FNTSMC) was proposed to address the maximum power extraction problem of WECSs subjected to actuator fault in the form of a partial loss on the generator's torque. The proposed FNTSMC demonstrated its remarkable performance in alleviating the chattering problem of conventional SMCs and enhanced the finite-time convergence speed of system states. Comparative performance investigations with conventional SMC and second-order fast SMC approaches were provided, and the notable optimal rotor speed tracking performance and power maximization of the proposed FNTSMC with fewer fluctuations and faster transient response were demonstrated.

In Chapter \ref{Chapter6}, active fault-tolerant nonlinear control schemes were proposed for rotor-side converter (RSC) control of DFIG-based WECSs in the presence of model uncertainties and sensor faults. The control schemes comprised an FNTSMC to regulate the rotor current and another FNTSMC for the speed trajectory tracking task, where the proposed controllers demonstrated desirable chattering mitigation performance. In addition, the control scheme was first augmented with an algebraic state estimator and then a robust SMO for rotor current estimation and reconstruction during sensor faults. It was found that the developed SMO can estimate and reconstruct the rotor current during sensors' faults with a high level of reliability. Performance investigations with respect to fractional-order SMC have been provided, and the remarkable fault-tolerant performance of the proposed control scheme was revealed with similar behavior as the fault-free case.

\section{Recommendations for Future Works}
\label{sec:7.2}

Following is a list of possible directions for future research identified in this thesis:

\begin{enumerate}
  \item Achieving the desired WECS control performance does not entirely depend on the developed controller structure. Proper tuning of the controller parameters and appropriate and realistic WECS modelling also plays a critical role in the performance validation of the developed control schemes. Furthermore, the controller parameters are often precomputed offline by trial-and-error, thus preventing them from consistently delivering the best performance, especially for WECS with varying operating conditions. Hence, optimal tuning procedures using optimization algorithms and adaptive soft computing-based methods such as fuzzy, neural networks, and learning approaches can result in more precise parameters and better control performances. Furthermore, barrier function (BF) -based SMC approaches have proven their remarkable performance in forcing the state trajectories to converge to a predefined neighborhood of zero in finite time without knowing the upper bound of disturbances \cite{mobayen2022barrier,laghrouche2021barrier}. The BF-based SMC approaches have been found to deliver effective approximations of the external disturbances, yielding a more stable closed-loop system. Accordingly, considering the WECS exposure to various disturbances, incorporating BFs with other modified SMCs can yield even better performances. This open research area is still to be investigated through WECS control.
  \item Despite the remarkable performance of SMCs in dealing with WECS control problems, these model-based methodologies require \textit{a priori} knowledge of the system dynamics. The importance of this issue will be even more salient when the WECS is dealing with faults and external disturbances, which are utterly inevitable in WECS. According to this thesis's investigations, only 8\% of the current literature studies have utilized observers, which demands more investigations in this field. In this respect, various observers such as higher-order observers, extended state observers, and sliding mode observers can be employed with the SMC-based controllers to estimate and reconstruct the system dynamics, especially when dealing with faults.
  \item Offshore wind farms are becoming increasingly popular since wind conditions are less turbulent at sea, and it is more economical to install them there. However, due to the extra dynamics introduced by floating platforms employed by the turbines in the deep sea, WT control has been a more challenging task to deal with. In this regard, model predictive control approaches can be counted as efficient solutions since they can handle the constraint requirements of the floating platform in real-time.
  \item Data-driven methods can serve as efficient ways to attain control objectives, as they do not rely on the mathematical model of the system and depend merely on the input/output (I/O) data. As subsets of data-driven approaches, reinforcement learning and deep learning methods can be incorporated with SMCs. However, artificial neural networks that are deep learning algorithms have already been implemented with SMCs and have shown their capabilities to increase the controller's performance. It is worth noting that deep learning methods learn from a training set and then apply the learning to a new data set, while reinforcement learning approaches dynamically learn by adjusting actions based on continuous feedback from the system. Accordingly, reinforcement learning approaches can be a more efficient breakthrough to enhance the SMC-based controllers' performance in dealing with WECS control problems with varying operating conditions.
  \item Data-driven approaches perform real-time analytics, learning, and decision-making based on a massive amount of data, including various sets of sensor data, information about the surrounding environment and weather conditions, and wind speed estimations. Hence, having access to powerful processing resources to run predictive or learning methods can effectively enhance the data processing, reduce the computational burden, and eventually improve the controller's performance. This opens a new window to the cloud-based (internet of things) data-driven sliding mode control of wind energy conversion systems, which is welcomed to be investigated in the future.
\end{enumerate}


\appendix 




\printbibliography[heading=bibintoc]

@article{taher2018new,
  title={A new approach using combination of sliding mode control and feedback linearization for enhancing fault ride through capability of DFIG-based WT},
  author={Taher, Seyed Abbas and Dehghani Arani, Zahra and Rahimi, Mohsen and Shahidehpour, Mohammad},
  journal={International Transactions on Electrical Energy Systems},
  volume={28},
  number={10},
  pages={e2613},
  year={2018},
  publisher={Wiley Online Library}
}

@article{deraz2013new,
  title={A new control strategy for a stand-alone self-excited induction generator driven by a variable speed wind turbine},
  author={Deraz, SA and Kader, FE Abdel},
  journal={Renewable Energy},
  volume={51},
  pages={263--273},
  year={2013},
  publisher={Elsevier}
}

@article{patel2021nonlinear,
  title={Nonlinear rotor side converter control of DFIG based wind energy system},
  author={Patel, Ravi and Hafiz, Faizal and Swain, Akshya and Ukil, Abhisek},
  journal={Electric Power Systems Research},
  volume={198},
  pages={107358},
  year={2021},
  publisher={Elsevier}
}

@article{yang2019pcsmc,
  title={PCSMC design of permanent magnetic synchronous generator for maximum power point tracking},
  author={Yang, Bo and Zhong, Linen and Yu, Tao and Shu, Hongchun and Cao, Pulin and An, Na and Sang, Yiyan and Jiang, Lin},
  journal={IET Generation, Transmission \& Distribution},
  volume={13},
  number={14},
  pages={3115--3126},
  year={2019},
  publisher={Wiley Online Library}
}

@article{merabet2016implementation,
  title={Implementation of sliding mode control system for generator and grid sides control of wind energy conversion system},
  author={Merabet, Adel and Ahmed, Khandker Tawfique and Ibrahim, Hussein and Beguenane, Rachid},
  journal={IEEE Transactions on Sustainable Energy},
  volume={7},
  number={3},
  pages={1327--1335},
  year={2016},
  publisher={IEEE}
}

@article{yin2020recurrent,
  title={Recurrent neural network based adaptive integral sliding mode power maximization control for wind power systems},
  author={Yin, Xiuxing and Jiang, Zhansi and Pan, Li},
  journal={Renewable Energy},
  volume={145},
  pages={1149--1157},
  year={2020},
  publisher={Elsevier}
}

@article{djilali2019real,
  title={Real-time neural sliding mode field oriented control for a DFIG-based wind turbine under balanced and unbalanced grid conditions},
  author={Djilali, Larbi and Sanchez, Edgar N and Belkheiri, Mohammed},
  journal={IET Renewable Power Generation},
  volume={13},
  number={4},
  pages={618--632},
  year={2019},
  publisher={Wiley Online Library}
}

@article{hussain2019efficient,
  title={An efficient wind speed computation method using sliding mode observers in wind energy conversion system control applications},
  author={Hussain, Jakeer and Mishra, Mahesh K},
  journal={IEEE Transactions on Industry Applications},
  volume={56},
  number={1},
  pages={730--739},
  year={2019},
  publisher={IEEE}
}

@article{subramaniyam2021memory,
  title={Memory-based ISMC design of DFIG-based wind turbine model via T-S fuzzy approach},
  author={Subramaniyam, Ramasamy and Joo, Young Hoon},
  journal={IET Control Theory \& Applications},
  volume={15},
  number={3},
  pages={348--359},
  year={2021},
  publisher={Wiley Online Library}
}

@inproceedings{mousavi2022active,
  title={Active Fault-tolerant Fractional-order Terminal Sliding Mode Control for DFIG-based Wind Turbines Subjected to Sensor Faults},
  author={Mousavi, Yashar and Bevan, Geraint and Kucukdemiral, Ibrahim Beklan and Fekih, Afef},
  booktitle={IEEE IAS Global Conference on Emerging Technologies (GlobConET'22)},
  year={2022},
  organization={IEEE}
}

@article{kong2022nonlinear,
  title={Nonlinear MPC for DFIG-based wind power generation under unbalanced grid conditions},
  author={Kong, Xiaobing and Wang, Xuan and Abdelbaky, Mohamed Abdelkarim and Liu, Xiangjie and Lee, Kwang Y},
  journal={International Journal of Electrical Power \& Energy Systems},
  volume={134},
  pages={107416},
  year={2022},
  publisher={Elsevier}
}

@article{abubakar2020induction,
  title={Induction motor fault detection based on multi-sensory control and wavelet analysis},
  author={Abubakar, Ukashatu and Mekhilef, Saad and Gaeid, Khalaf S and Mokhlis, Hazlie and Al Mashhadany, Yousif I},
  journal={IET Electric Power Applications},
  volume={14},
  number={11},
  pages={2051--2061},
  year={2020},
  publisher={IET}
}

@article{sebtahmadi2016current,
  title={A current control approach for an abnormal grid supplied ultra sparse Z-Source matrix converter with a particle swarm optimization proportional-integral induction motor drive controller},
  author={Sebtahmadi, Seyed Sina and Borhan Azad, Hanieh and Islam, Didarul and Seyedmahmoudian, Mehdi and Horan, Ben and Mekhilef, Saad},
  journal={Energies},
  volume={9},
  number={11},
  pages={899},
  year={2016},
  publisher={Multidisciplinary Digital Publishing Institute}
}

@article{zaihidee2019application,
  title={Application of fractional order sliding mode control for speed control of permanent magnet synchronous motor},
  author={Zaihidee, Fardila M and Mekhilef, Saad and Mubin, Marizan},
  journal={IEEE Access},
  volume={7},
  pages={101765--101774},
  year={2019},
  publisher={IEEE}
}

@article{lv2018firefly,
  title={The firefly algorithm with Gaussian disturbance and local search},
  author={Lv, Li and Zhao, Jia},
  journal={Journal of Signal Processing Systems},
  volume={90},
  number={8},
  pages={1123--1131},
  year={2018},
  publisher={Springer}
}

@article{li2020mitigating,
  title={Mitigating subsynchronous control interaction using fractional sliding mode control of wind farm},
  author={Li, Penghan and Wang, Jie and Xiong, Linyun and Ma, Meiling and Wang, Ziqiang and Huang, Sunhua},
  journal={Journal of the Franklin Institute},
  volume={357},
  number={14},
  pages={9523--9542},
  year={2020},
  publisher={Elsevier}
}

@article{fister2013comprehensive,
  title={A comprehensive review of firefly algorithms},
  author={Fister, Iztok and Fister Jr, Iztok and Yang, Xin-She and Brest, Janez},
  journal={Swarm and Evolutionary Computation},
  volume={13},
  pages={34--46},
  year={2013},
  publisher={Elsevier}
}

@article{nayak2021hyper,
  title={Hyper-parameter tuned light gradient boosting machine using memetic firefly algorithm for hand gesture recognition},
  author={Nayak, Janmenjoy and Naik, Bighnaraj and Dash, Pandit Byomakesha and Souri, Alireza and Shanmuganathan, Vimal},
  journal={Applied Soft Computing},
  volume={107},
  pages={107478},
  year={2021},
  publisher={Elsevier}
}

@article{wang2017firefly,
  title={Firefly algorithm with neighborhood attraction},
  author={Wang, Hui and Wang, Wenjun and Zhou, Xinyu and Sun, Hui and Zhao, Jia and Yu, Xiang and Cui, Zhihua},
  journal={Information Sciences},
  volume={382},
  pages={374--387},
  year={2017},
  publisher={Elsevier}
}

@article{altabeeb2019improved,
  title={An improved hybrid firefly algorithm for capacitated vehicle routing problem},
  author={Altabeeb, Asma M and Mohsen, Abdulqader M and Ghallab, Abdullatif},
  journal={Applied Soft Computing},
  volume={84},
  pages={105728},
  year={2019},
  publisher={Elsevier}
}

@article{ma2015optimal,
  title={Optimal real-time control of wind turbine during partial load operation},
  author={Ma, Zheren and Yan, Zeyu and Shaltout, Mohamed L and Chen, Dongmei},
  journal={IEEE Transactions on Control Systems Technology},
  volume={23},
  number={6},
  pages={2216--2226},
  year={2015},
  publisher={IEEE}
  doi={10.1109/TCST.2015.2410735},
}

@article{li2017adaptive,
  title={Adaptive fault-tolerant control of wind turbines with guaranteed transient performance considering active power control of wind farms},
  author={Li, Dan-Yong and Li, Peng and Cai, Wen-Chuan and Song, Yong-Duan and Chen, Hou-Jin},
  journal={IEEE Transactions on Industrial Electronics},
  volume={65},
  number={4},
  pages={3275--3285},
  year={2017},
  publisher={IEEE}
  doi={10.1109/TIE.2017.2748036},
}

@article{madsen2020experimental,
  title={Experimental analysis of the scaled DTU10MW TLP floating wind turbine with different control strategies},
  author={Madsen, FJ and Nielsen, TRL and Kim, T and Bredmose, H and Pegalajar-Jurado, A and Mikkelsen, RF and Lomholt, AK and Borg, M and Mirzaei, M and Shin, P},
  journal={Renewable Energy},
  year={2020},
  publisher={Elsevier}
  doi={10.1016/j.renene.2020.03.145},
}

@article{ren2016nonlinear,
  title={Nonlinear PI control for variable pitch wind turbine},
  author={Ren, Yaxing and Li, Liuying and Brindley, Joseph and Jiang, Lin},
  journal={Control Engineering Practice},
  volume={50},
  pages={84--94},
  year={2016},
  publisher={Elsevier}
  doi={},
}

@article{zhang2015load,
  title={Load mitigation of unbalanced wind turbines using PI-R individual pitch control},
  author={Zhang, Yunqian and Cheng, Ming and Chen, Zhe},
  journal={IET Renewable Power Generation},
  volume={9},
  number={3},
  pages={262--271},
  year={2014},
  publisher={IET}
  doi={},
}

@article{van2015advanced,
  title={Advanced pitch angle control based on fuzzy logic for variable-speed wind turbine systems},
  author={Van, Tan Luong and Nguyen, Thanh Hai and Lee, Dong-Choon},
  journal={IEEE Transactions on Energy Conversion},
  volume={30},
  number={2},
  pages={578--587},
  year={2015},
  publisher={IEEE}
  doi={},
}

@article{venkaiah2020hydraulically,
  title={Hydraulically actuated horizontal axis wind turbine pitch control by model free adaptive controller},
  author={Venkaiah, P and Sarkar, Bikash K},
  journal={Renewable Energy},
  volume={147},
  pages={55--68},
  year={2020},
  publisher={Elsevier}
  doi={},
}

@article{asgharnia2018performance,
  title={Performance and robustness of optimal fractional fuzzy PID controllers for pitch control of a wind turbine using chaotic optimization algorithms},
  author={Asgharnia, Amirhossein and Shahnazi, Reza and Jamali, Ali},
  journal={ISA transactions},
  volume={79},
  pages={27--44},
  year={2018},
  publisher={Elsevier}
  doi={},
}

@article{badihi2020fault,
  title={Fault-Tolerant Individual Pitch Control for Load Mitigation in Wind Turbines with Actuator Faults},
  author={Badihi, Hamed and Zhang, Youmin and Pillay, Pragasen and Rakheja, Subhash},
  journal={IEEE Transactions on Industrial Electronics},
  year={2020},
  publisher={IEEE}
  doi={},
}

@article{habibi2018adaptive,
  title={Adaptive PID control of wind turbines for power regulation with unknown control direction and actuator faults},
  author={Habibi, Hamed and Nohooji, Hamed Rahimi and Howard, Ian},
  journal={IEEE Access},
  volume={6},
  pages={37464--37479},
  year={2018},
  publisher={IEEE}
  doi={},
}

@article{lan2018fault,
  title={Fault-tolerant wind turbine pitch control using adaptive sliding mode estimation},
  author={Lan, Jianglin and Patton, Ron J and Zhu, Xiaoyuan},
  journal={Renewable Energy},
  volume={116},
  pages={219--231},
  year={2018},
  publisher={Elsevier}
  doi={},
}

@article{cho2018model,
  title={Model-based fault detection, fault isolation and fault-tolerant control of a blade pitch system in floating wind turbines},
  author={Cho, Seongpil and Gao, Zhen and Moan, Torgeir},
  journal={Renewable Energy},
  volume={120},
  pages={306--321},
  year={2018},
  publisher={Elsevier}
  doi={},
}

@article{badihi2014wind,
  title={Wind turbine fault diagnosis and fault-tolerant torque load control against actuator faults},
  author={Badihi, Hamed and Zhang, Youmin and Hong, Henry},
  journal={IEEE Transactions on Control Systems Technology},
  volume={23},
  number={4},
  pages={1351--1372},
  year={2014},
  publisher={IEEE}
  doi={},
}

@article{mousavi2021robust,
  title={Robust adaptive fractional-order nonsingular terminal sliding mode stabilization of three-axis gimbal platforms},
  author={Mousavi, Yashar and Zarei, Amin and Jahromi, Zeinabosadat Sane},
  journal={ISA transactions},
  year={2021},
  publisher={Elsevier}
}

@article{mousavi2018fractional,
  title={Fractional calculus-based firefly algorithm applied to parameter estimation of chaotic systems},
  author={Mousavi, Yashar and Alfi, Alireza},
  journal={Chaos, Solitons \& Fractals},
  volume={114},
  pages={202--215},
  year={2018},
  publisher={Elsevier}
  doi={},
}

@article{angel2018fractional,
  title={Fractional order PID for tracking control of a parallel robotic manipulator type delta},
  author={Angel, L and Viola, J},
  journal={ISA transactions},
  volume={79},
  pages={172--188},
  year={2018},
  publisher={Elsevier}
  doi={},
}

@article{ren2018optimal,
  title={Optimal design of a fractional-order proportional-integer-differential controller for a pneumatic position servo system},
  author={Ren, Hai-Peng and Fan, Jun-Tao and Kaynak, Okyay},
  journal={IEEE Transactions on Industrial Electronics},
  volume={66},
  number={8},
  pages={6220--6229},
  year={2018},
  publisher={IEEE}
  doi={},
}

@article{naidu2020power,
  title={Power quality enhancement in a grid-connected hybrid system with coordinated PQ theory \& fractional order PID controller in DPFC},
  author={Naidu, R Pavan Kumar and Meikandasivam, S},
  journal={Sustainable Energy, Grids and Networks},
  pages={100317},
  year={2020},
  publisher={Elsevier}
  doi={},
}

@article{azarmi2015analytical,
  title={Analytical design of fractional order PID controllers based on the fractional set-point weighted structure: Case study in twin rotor helicopter},
  author={Azarmi, Roohallah and Tavakoli-Kakhki, Mahsan and Sedigh, Ali Khaki and Fatehi, Alireza},
  journal={Mechatronics},
  volume={31},
  pages={222--233},
  year={2015},
  publisher={Elsevier}
  doi={},
}

@article{amoura2016closed,
  title={Closed-loop step response for tuning PID-fractional-order-filter controllers},
  author={Amoura, Karima and Mansouri, Rachid and Bettayeb, Ma{\^a}mar and Al-Saggaf, Ubaid M},
  journal={ISA transactions},
  volume={64},
  pages={247--257},
  year={2016},
  publisher={Elsevier}
  doi={},
}

@article{mousavi2015memetic,
  title={A memetic algorithm applied to trajectory control by tuning of fractional order proportional-integral-derivative controllers},
  author={Mousavi, Yashar and Alfi, Alireza},
  journal={Applied Soft Computing},
  volume={36},
  pages={599--617},
  year={2015},
  publisher={Elsevier}
  doi={},
}

@article{lee2010fractional,
  title={Fractional-order PID controller optimization via improved electromagnetism-like algorithm},
  author={Lee, Ching-Hung and Chang, Fu-Kai},
  journal={Expert Systems with Applications},
  volume={37},
  number={12},
  pages={8871--8878},
  year={2010},
  publisher={Elsevier}
  doi={},
}

@INPROCEEDINGS{yang2009firefly,
  title={Firefly algorithms for multimodal optimization},
  author={Yang, Xin-She},
  booktitle={International symposium on stochastic algorithms},
  pages={169--178},
  year={2009},
  organization={Springer}
  doi={},
}

@article{wang2019novel,
  title={A novel firefly algorithm based on gender difference and its convergence},
  author={Wang, Chun-Feng and Song, Wen-Xin},
  journal={Applied Soft Computing},
  volume={80},
  pages={107--124},
  year={2019},
  publisher={Elsevier}
  doi={},
}

@article{niknam2012reserve,
  title={Reserve constrained dynamic economic dispatch: A new fast self-adaptive modified firefly algorithm},
  author={Niknam, Taher and Azizipanah-Abarghooee, Rasoul and Roosta, Alireza},
  journal={IEEE Systems Journal},
  volume={6},
  number={4},
  pages={635--646},
  year={2012},
  publisher={IEEE}
  doi={},
}

@article{pazhoohesh2017optimal,
  title={Optimal harmonic reduction approach for PWM AC--AC converter using nested memetic algorithm},
  author={Pazhoohesh, Farid and Hasanvand, Saeed and Mousavi, Yashar},
  journal={Soft Computing},
  volume={21},
  number={10},
  pages={2761--2776},
  year={2017},
  publisher={Springer}
  doi={},
}

@article{gui2019multi,
  title={A multi-role based differential evolution},
  author={Gui, Ling and Xia, Xuewen and Yu, Fei and Wu, Hongrun and Wu, Ruifeng and Wei, Bo and Zhang, Yinglong and Li, Xiong and He, Guoliang},
  journal={Swarm and Evolutionary Computation},
  volume={50},
  pages={100508},
  year={2019},
  publisher={Elsevier}
  doi={},
}

@article{johnson2006control,
  title={Control of variable-speed wind turbines: standard and adaptive techniques for maximizing energy capture},
  author={Johnson, Kathryn E and Pao, Lucy Y and Balas, Mark J and Fingersh, Lee J},
  journal={IEEE Control Systems Magazine},
  volume={26},
  number={3},
  pages={70--81},
  year={2006},
  publisher={IEEE}
  doi={},
}

@article{abolvafaei2020maximum,
  title={Maximum power extraction from wind energy system using homotopy singular perturbation and fast terminal sliding mode method},
  author={Abolvafaei, Mahnaz and Ganjefar, Soheil},
  journal={Renewable Energy},
  volume={148},
  pages={611--626},
  year={2020},
  publisher={Elsevier}
  doi={},
}

@article{odgaard2013fault,
  title={Fault-tolerant control of wind turbines: A benchmark model},
  author={Odgaard, Peter Fogh and Stoustrup, Jakob and Kinnaert, Michel},
  journal={IEEE Transactions on control systems Technology},
  volume={21},
  number={4},
  pages={1168--1182},
  year={2013},
  publisher={IEEE}
  doi={},
}

@article{azizi2019fault,
  title={Fault tolerant control of wind turbines with an adaptive output feedback sliding mode controller},
  author={Azizi, Askar and Nourisola, Hamid and Shoja-Majidabad, Sajjad},
  journal={Renewable energy},
  volume={135},
  pages={55--65},
  year={2019},
  publisher={Elsevier}
  doi={},
}

@article{tang2018active,
  title={Active power control of wind turbine generators via coordinated rotor speed and pitch angle regulation},
  author={Tang, Xuesong and Yin, Minghui and Shen, Chun and Xu, Yan and Dong, Zhao Yang and Zou, Yun},
  journal={IEEE Transactions on Sustainable Energy},
  volume={10},
  number={2},
  pages={822--832},
  year={2018},
  publisher={IEEE}
  doi={},
}

@article{luo2007strategies,
  title={Strategies to smooth wind power fluctuations of wind turbine generator},
  author={Luo, Changling and Banakar, Hadi and Shen, Baike and Ooi, Boon-Teck},
  journal={IEEE Transactions on Energy Conversion},
  volume={22},
  number={2},
  pages={341--349},
  year={2007},
  publisher={IEEE}
  doi={},
}

@article{habibi2019reliability,
  title={Reliability improvement of wind turbine power generation using model-based fault detection and fault tolerant control: A review},
  author={Habibi, Hamed and Howard, Ian and Simani, Silvio},
  journal={Renewable Energy},
  volume={135},
  pages={877--896},
  year={2019},
  publisher={Elsevier}
  doi={},
}

@article{ruiz2018wind,
  title={Wind turbine fault detection and classification by means of image texture analysis},
  author={Ruiz, Magda and Mujica, Luis E and Alferez, Santiago and Acho, Leonardo and Tutiven, Christian and Vidal, Yolanda and Rodellar, Jose and Pozo, Francesc},
  journal={Mechanical Systems and Signal Processing},
  volume={107},
  pages={149--167},
  year={2018},
  publisher={Elsevier}
  doi={},
}

@book{sabatier2007advances,
  title={Advances in fractional calculus},
  author={Sabatier, JATMJ and Agrawal, Ohm Parkash and Machado, JA Tenreiro},
  volume={4},
  number={9},
  year={2007},
  publisher={Springer}
  doi={},
}

@article{luo2019enhanced,
  title={Enhanced grey wolf optimizer with a model for dynamically estimating the location of the prey},
  author={Luo, Kaiping},
  journal={Applied Soft Computing},
  volume={77},
  pages={225--235},
  year={2019},
  publisher={Elsevier}
  doi={},
}

@article{chen2020enhanced,
  title={An enhanced Bacterial Foraging Optimization and its application for training kernel extreme learning machine},
  author={Chen, Huiling and Zhang, Qian and Luo, Jie and Xu, Yueting and Zhang, Xiaoqin},
  journal={Applied Soft Computing},
  volume={86},
  pages={105884},
  year={2020},
  publisher={Elsevier}
  doi={},
}

@article{carrasco2020recent,
  title={Recent trends in the use of statistical tests for comparing swarm and evolutionary computing algorithms: Practical guidelines and a critical review},
  author={Carrasco, Jacinto and Garc{\'\i}a, Salvador and Rueda, MM and Das, S and Herrera, Francisco},
  journal={Swarm and Evolutionary Computation},
  volume={54},
  pages={100665},
  year={2020},
  publisher={Elsevier}
  doi={},
}

@article{dolan2006simulation,
  title={Simulation model of wind turbine 3p torque oscillations due to wind shear and tower shadow},
  author={Dolan, Dale SL and Lehn, Peter W},
  journal={IEEE Transactions on energy conversion},
  volume={21},
  number={3},
  pages={717--724},
  year={2006},
  publisher={IEEE}
  doi={},
}

@inproceedings{awad2017ensemble,
  title={Ensemble sinusoidal differential covariance matrix adaptation with Euclidean neighborhood for solving CEC2017 benchmark problems},
  author={Awad, Noor H and Ali, Mostafa Z and Suganthan, Ponnuthurai N},
  booktitle={2017 IEEE Congress on Evolutionary Computation (CEC)},
  pages={372--379},
  year={2017},
  organization={IEEE}
  doi={},
}

@article{viola2017design,
  title={Design and robust performance evaluation of a fractional order PID controller applied to a {DC} motor},
  author={Viola, Jairo and Angel, L and Sebastian, Jos{\'e} M},
  journal={IEEE/CAA Journal of Automatica Sinica},
  volume={4},
  number={2},
  pages={304--314},
  year={2017},
  publisher={IEEE}
  doi={},
}

@article{mousavi2020enhanced,
  title={Enhanced fractional chaotic whale optimization algorithm for parameter identification of isolated wind-diesel power systems},
  author={Mousavi, Yashar and Alfi, Alireza and Kucukdemiral, Ibrahim Beklan},
  journal={IEEE Access},
  year={2020},
  publisher={IEEE}
  doi={},
}

@article{aissaoui2013fuzzy,
  title={A Fuzzy-PI control to extract an optimal power from wind turbine},
  author={Aissaoui, Abdel Ghani and Tahour, Ahmed and Essounbouli, Najib and Nollet, Fr{\'e}d{\'e}ric and Abid, Mohamed and Chergui, Moulay Idriss},
  journal={Energy conversion and management},
  volume={65},
  pages={688--696},
  year={2013},
  publisher={Elsevier}
}

@article{bianchi2012gain,
  title={Gain scheduled control based on high fidelity local wind turbine models},
  author={Bianchi, Fernando D and S{\'a}nchez-Pe{\~n}a, Ricardo S and Guadayol, Marc},
  journal={Renewable energy},
  volume={37},
  number={1},
  pages={233--240},
  year={2012},
  publisher={Elsevier}
}

@article{jain2018fault,
  title={Fault-tolerant economic model predictive control for wind turbines},
  author={Jain, Tushar and Yam{\'e}, Joseph Julien},
  journal={IEEE transactions on sustainable energy},
  volume={10},
  number={4},
  pages={1696--1704},
  year={2018},
  publisher={IEEE}
}

@article{colombo2020pitch,
  title={Pitch angle control of a wind turbine operating above the rated wind speed: A sliding mode control approach},
  author={Colombo, L and Corradini, ML and Ippoliti, G and Orlando, G},
  journal={ISA transactions},
  volume={96},
  pages={95--102},
  year={2020},
  publisher={Elsevier}
}

@article{xu2019event,
  title={Event-trigger-based adaptive fuzzy hierarchical sliding mode control of uncertain under-actuated switched nonlinear systems},
  author={Xu, Ning and Chen, Yun and Xue, Anke and Zong, Guangdeng and Zhao, Xudong},
  journal={ISA transactions},
  year={2019},
  publisher={Elsevier}
}

@article{asgharnia2020load,
  title={Load mitigation of a class of 5-MW wind turbine with RBF neural network based fractional-order PID controller},
  author={Asgharnia, A and Jamali, A and Shahnazi, R and Maheri, A},
  journal={ISA transactions},
  volume={96},
  pages={272--286},
  year={2020},
  publisher={Elsevier}
}

@article{shan2021distributed,
  title={A distributed parallel firefly algorithm with communication strategies and its application for the control of variable pitch wind turbine},
  author={Shan, Jie and Pan, Jeng-Shyang and Chang, Cheng-Kuo and Chu, Shu-Chuan and Zheng, Shi-Guang},
  journal={ISA transactions},
  year={2021},
  publisher={Elsevier}
}

@article{benamor2019new,
  title={A new rooted tree optimization algorithm for indirect power control of wind turbine based on a doubly-fed induction generator},
  author={Benamor, A and Benchouia, MT and Srairi, K and Benbouzid, MEH},
  journal={ISA transactions},
  volume={88},
  pages={296--306},
  year={2019},
  publisher={Elsevier}
}

@article{kumar2021power,
  title={Power system stability enhancement by damping and control of Sub-synchronous torsional oscillations using Whale optimization algorithm based Type-2 wind turbines},
  author={Kumar, Rajeev and Singh, Rajveer and Ashfaq, Haroon and Singh, Sudhir Kumar and Badoni, Manoj},
  journal={ISA transactions},
  volume={108},
  pages={240--256},
  year={2021},
  publisher={Elsevier}
}

@article{xiong2020output,
  title={Output power quality enhancement of PMSG with fractional order sliding mode control},
  author={Xiong, Linyun and Li, Penghan and Ma, Meiling and Wang, Ziqiang and Wang, Jie},
  journal={International Journal of Electrical Power \& Energy Systems},
  volume={115},
  pages={105402},
  year={2020},
  publisher={Elsevier}
}

@article{bounar2019pso,
  title={PSO--GSA based fuzzy sliding mode controller for DFIG-based wind turbine},
  author={Bounar, N and Labdai, S and Boulkroune, A},
  journal={ISA transactions},
  volume={85},
  pages={177--188},
  year={2019},
  publisher={Elsevier}
}

@article{mousavi2021maximum,
  title={Maximum power extraction from wind turbines using a fault-tolerant fractional-order nonsingular terminal sliding mode controller},
  author={Mousavi, Yashar and Bevan, Geraint P and Kucukdemiral, Ibrahim B and Fekih, Afef},
  journal={Energies},
  year={2021},
  publisher={MDPI}
}

@article{biswas2009design,
  title={Design of fractional-order PI$\lambda$D$\mu$ controllers with an improved differential evolution},
  author={Biswas, Arijit and Das, Swagatam and Abraham, Ajith and Dasgupta, Sambarta},
  journal={Engineering applications of artificial intelligence},
  volume={22},
  number={2},
  pages={343--350},
  year={2009},
  publisher={Elsevier}
}

@article{beltran2008sliding,
  title={Sliding mode power control of variable-speed wind energy conversion systems},
  author={Beltran, Brice and Ahmed-Ali, Tarek and Benbouzid, Mohamed El Hachemi},
  journal={IEEE Transactions on energy conversion},
  volume={23},
  number={2},
  pages={551--558},
  year={2008},
  publisher={IEEE}
}

@inproceedings{mamani2007algebraic,
  title={An algebraic state estimation approach for {DC} motors},
  author={Mamani, G and Becedas, J and Feliu-Batlle, V and Sira-Ram{\i}rez, H},
  booktitle={Proceedings of the World Congress on Engineering and Computer Science 2007 WCECS 2007},
  year={2007},
  organization={Citeseer}
}

@article{trachanatzi2020firefly,
  title={A firefly algorithm for the environmental prize-collecting vehicle routing problem},
  author={Trachanatzi, Dimitra and Rigakis, Manousos and Marinaki, Magdalene and Marinakis, Yannis},
  journal={Swarm and Evolutionary Computation},
  volume={57},
  pages={100712},
  year={2020},
  publisher={Elsevier}
}

@article{musarrat2021improved,
  title={An Improved Fault Ride Through Scheme and Control Strategy for {DFIG}-Based Wind Energy Systems},
  author={Musarrat, Md Nafiz and Fekih, Afef and Islam, Md Rabiul},
  journal={IEEE Transactions on Applied Superconductivity},
  volume={31},
  number={8},
  pages={1--6},
  year={2021},
  publisher={IEEE}
}

@article{mousavi2021robustt,
  title={Robust optimal higher-order-observer-based dynamic sliding mode control for {VTOL} unmanned aerial vehicles},
  author={Mousavi, Yashar and Zarei, Amin and Mousavi, Arash and Biari, Mohsen},
  journal={International Journal of Automation and Computing},
  volume={18},
  number={5},
  pages={802--813},
  year={2021},
  publisher={Springer}
}

@article{benadja2018hardware,
  title={Hardware testing of sliding mode controller for improved performance of VSC-HVDC based offshore wind farm under {DC} fault},
  author={Benadja, Mounir and Rezkallah, Miloud and Benhalima, Seghir and Hamadi, Abdelhamid and Chandra, Ambrish},
  journal={IEEE Transactions on Industry Applications},
  volume={55},
  number={2},
  pages={2053--2063},
  year={2018},
  publisher={IEEE}
}

@article{merabet2018power,
  title={Power-current controller based sliding mode control for {{DFIG}}-wind energy conversion system},
  author={Merabet, Adel and Eshaft, Hisham and Tanvir, Aman A},
  journal={IET Renewable Power Generation},
  volume={12},
  number={10},
  pages={1155--1163},
  year={2018},
  publisher={IET}
}

@article{li2019sliding,
  title={Sliding mode controller based on feedback linearization for damping of sub-synchronous control interaction in {{DFIG}}-based wind power plants},
  author={Li, Penghan and Xiong, Linyun and Wu, Fei and Ma, Meiling and Wang, Jie},
  journal={International Journal of Electrical Power \& Energy Systems},
  volume={107},
  pages={239--250},
  year={2019},
  publisher={Elsevier}
}

@article{yang2018robust,
  title={Robust sliding-mode control of wind energy conversion systems for optimal power extraction via nonlinear perturbation observers},
  author={Yang, Bo and Yu, Tao and Shu, Hongchun and Dong, Jun and Jiang, Lin},
  journal={Applied Energy},
  volume={210},
  pages={711--723},
  year={2018},
  publisher={Elsevier}
}

@article{ullah2017adaptive,
  title={Adaptive fractional order terminal sliding mode control of a doubly fed induction generator-based wind energy system},
  author={Ullah, Nasim and Ali, Muhammad Asghar and Ibeas, Asier and Herrera, Jorge},
  journal={IEEE Access},
  volume={5},
  pages={21368--21381},
  year={2017},
  publisher={IEEE}
}

@inproceedings{morshed2015comparison,
  title={A comparison study between two sliding mode based controls for voltage sag mitigation in grid connected wind turbines},
  author={Morshed, Mohammad Javad and Fekih, Afef},
  booktitle={2015 IEEE Conference on Control Applications (CCA)},
  pages={1913--1918},
  year={2015},
  organization={IEEE}
}

@article{morshed2019sliding,
  title={A sliding mode approach to enhance the power quality of wind turbines under unbalanced voltage conditions},
  author={Morshed, Mohammad Javad and Fekih, Afef},
  journal={IEEE/CAA Journal of Automatica Sinica},
  volume={6},
  number={2},
  pages={566--574},
  year={2019},
  publisher={IEEE}
}

@article{patnaik2015fast,
  title={Fast adaptive finite-time terminal sliding mode power control for the rotor side converter of the {DFIG} based wind energy conversion system},
  author={Patnaik, RK and Dash, PK},
  journal={Sustainable Energy, Grids and Networks},
  volume={1},
  pages={63--84},
  year={2015},
  publisher={Elsevier}
}

@article{gentle2007matrix,
  title={Matrix algebra},
  author={Gentle, James E},
  journal={Springer texts in statistics, Springer, New York, NY, doi},
  volume={10},
  pages={978--0},
  year={2007},
  publisher={Springer}
}

@article{mousavi2021fault,
  title={Fault-tolerant optimal pitch control of wind turbines using dynamic weighted parallel firefly algorithm},
  author={Mousavi, Yashar and Bevan, Geraint and Kucukdemiral, Ibrahim Beklan},
  journal={ISA transactions},
  year={2021},
  publisher={Elsevier}
}

@article{sharmila2019fuzzy,
  title={Fuzzy Sampled-Data Control for DFIG-Based Wind Turbine With Stochastic Actuator Failures},
  author={Sharmila, V and Rakkiyappan, R and Joo, Young Hoon},
  journal={IEEE Transactions on Systems, Man, and Cybernetics: Systems},
  year={2019},
  publisher={IEEE}
}

@article{muyeen2007comparative,
  title={Comparative study on transient stability analysis of wind turbine generator system using different drive train models},
  author={Muyeen, SM and Ali, Md Hasan and Takahashi, Rion and Murata, Toshiaki and Tamura, Junji and Tomaki, Yuichi and Sakahara, Atsushi and Sasano, Eiichi},
  journal={IET Renewable Power Generation},
  volume={1},
  number={2},
  pages={131--141},
  year={2007},
  publisher={IET}
}

@article{shi2019perturbation,
  title={Perturbation estimation based nonlinear adaptive power decoupling control for DFIG wind turbine},
  author={Shi, Kai and Yin, Xin and Jiang, Lin and Liu, Yang and Hu, Yihua and Wen, Huiqing},
  journal={IEEE Transactions on Power Electronics},
  volume={35},
  number={1},
  pages={319--333},
  year={2019},
  publisher={IEEE}
}

@article{mahmoud2019adaptive,
  title={Adaptive and predictive control strategies for wind turbine systems: a survey},
  author={Mahmoud, Magdi S and Oyedeji, Mojeed O},
  journal={IEEE/CAA Journal of Automatica Sinica},
  volume={6},
  number={2},
  pages={364--378},
  year={2019},
  publisher={IEEE}
}

@article{zhang2015discrete,
  title={A discrete-time direct torque control for direct-drive PMSG-based wind energy conversion systems},
  author={Zhang, Zhe and Zhao, Yue and Qiao, Wei and Qu, Liyan},
  journal={IEEE Transactions on Industry Applications},
  volume={51},
  number={4},
  pages={3504--3514},
  year={2015},
  publisher={IEEE}
}

@article{sun2017discrete,
  title={Discrete-time fractional order terminal sliding mode tracking control for linear motor},
  author={Sun, Guanghui and Ma, Zhiqiang and Yu, Jinyong},
  journal={IEEE Transactions on Industrial Electronics},
  volume={65},
  number={4},
  pages={3386--3394},
  year={2017},
  publisher={IEEE}
}

@article{sun2018practical,
  title={Practical tracking control of linear motor via fractional-order sliding mode},
  author={Sun, Guanghui and Wu, Ligang and Kuang, Zhian and Ma, Zhiqiang and Liu, Jianxing},
  journal={Automatica},
  volume={94},
  pages={221--235},
  year={2018},
  publisher={Elsevier}
}

@article{yin2014fractional,
  title={Fractional-order sliding mode based extremum seeking control of a class of nonlinear systems},
  author={Yin, Chun and Chen, YangQuan and Zhong, Shou-ming},
  journal={Automatica},
  volume={50},
  number={12},
  pages={3173--3181},
  year={2014},
  publisher={Elsevier}
}

@article{utkin2015discussion,
  title={Discussion aspects of high-order sliding mode control},
  author={Utkin, Vadim},
  journal={IEEE Transactions on Automatic Control},
  volume={61},
  number={3},
  pages={829--833},
  year={2015},
  publisher={IEEE}
}

@article{liu2014finite,
  title={Finite time control for MIMO nonlinear system based on higher-order sliding mode},
  author={Liu, Xiangjie and Han, Yaozhen},
  journal={ISA transactions},
  volume={53},
  number={6},
  pages={1838--1846},
  year={2014},
  publisher={Elsevier}
}

@article{mathiyalagan2020second,
  title={Second-order sliding mode control for nonlinear fractional-order systems},
  author={Mathiyalagan, Kalidass and Sangeetha, G},
  journal={Applied Mathematics and Computation},
  volume={383},
  pages={125264},
  year={2020},
  publisher={Elsevier}
}

@book{bianchi2006wind,
  title={Wind turbine control systems: principles, modelling and gain scheduling design},
  author={Bianchi, Fernando D and De Battista, Hernan and Mantz, Ricardo J},
  year={2006},
  publisher={Springer Science \& Business Media}
}

@book{anaya2011wind,
  title={Wind energy generation: modelling and control},
  author={Anaya-Lara, Olimpo and Jenkins, Nick and Ekanayake, Janaka B and Cartwright, Phill and Hughes, Michael},
  year={2011},
  publisher={John Wiley \& Sons}
}

@article{mandic2012active,
  title={Active torque control for gearbox load reduction in a variable-speed wind turbine},
  author={Mandic, Goran and Nasiri, Adel and Muljadi, Eduard and Oyague, Francisco},
  journal={IEEE Transactions on Industry Applications},
  volume={48},
  number={6},
  pages={2424--2432},
  year={2012},
  publisher={IEEE}
}

@inproceedings{xu2011influence,
  title={Influence of different flexible drive train models on the transient responses of DFIG wind turbine},
  author={Xu, Zhan and Pan, Zaiping},
  booktitle={2011 International Conference on Electrical Machines and Systems},
  pages={1--6},
  year={2011},
  organization={IEEE}
}

@article{villanueva2018grid,
  title={Grid-voltage-oriented sliding mode control for DFIG under balanced and unbalanced grid Faults},
  author={Villanueva, Iv{\'a}n and Rosales, Antonio and Ponce, Pedro and Molina, Arturo},
  journal={IEEE Transactions on Sustainable Energy},
  volume={9},
  number={3},
  pages={1090--1098},
  year={2018},
  publisher={IEEE}
}

@article{valenciaga2009geometric,
  title={A geometric approach for the design of MIMO sliding controllers. Application to a wind-driven doubly fed induction generator},
  author={Valenciaga, F and Puleston, PF and Spurgeon, Sarah K},
  journal={International Journal of Robust and Nonlinear Control: IFAC-Affiliated Journal},
  volume={19},
  number={1},
  pages={22--39},
  year={2009},
  publisher={Wiley Online Library}
}

@article{liu2018dfig,
  title={DFIG wind turbine sliding mode control with exponential reaching law under variable wind speed},
  author={Liu, Yifang and Wang, Zhijie and Xiong, Linyun and Wang, Jie and Jiang, Xiuchen and Bai, Gehao and Li, Renfu and Liu, Sanming},
  journal={International Journal of Electrical Power \& Energy Systems},
  volume={96},
  pages={253--260},
  year={2018},
  publisher={Elsevier}
}

@article{nayeh2020multivariable,
  title={Multivariable robust control of a horizontal wind turbine under various operating modes and uncertainties: A comparison on sliding mode and $H_{\infty}$ control},
  author={Nayeh, Reza Faraji and Moradi, Hamed and Vossoughi, Gholamreza},
  journal={International Journal of Electrical Power \& Energy Systems},
  volume={115},
  pages={105474},
  year={2020},
  publisher={Elsevier}
}

@article{errouissi2017novel,
  title={A novel PI-type sliding surface for PMSG-based wind turbine with improved transient performance},
  author={Errouissi, Rachid and Al-Durra, Ahmed},
  journal={IEEE Transactions on Energy Conversion},
  volume={33},
  number={2},
  pages={834--844},
  year={2017},
  publisher={IEEE}
}

@article{suleimenov2020disturbance,
  title={Disturbance observer-based integral sliding mode control for wind energy conversion systems},
  author={Suleimenov, Kanat and Sarsembayev, Bayandy and Duc Hong Phuc, Bui and Do, Ton Duc},
  journal={Wind Energy},
  volume={23},
  number={4},
  pages={1026--1047},
  year={2020},
  publisher={Wiley Online Library}
}

@article{yuan2018coordinated,
  title={The coordinated control of wind-diesel hybrid micro-grid based on sliding mode method and load estimation},
  author={Yuan, Minghan and Fu, Yang and Mi, Yang and Li, Zhenkun and Wang, Chengshan},
  journal={IEEE Access},
  volume={6},
  pages={76867--76875},
  year={2018},
  publisher={IEEE}
}

@inproceedings{qingmei2019novel,
  title={A Novel Sliding Mode Control of Doubly-fed Induction Generator for Optimal Power Extraction},
  author={Qingmei, Kong and Xiangdong, Wang and Shujiang, Li},
  booktitle={2019 IEEE Innovative Smart Grid Technologies-Asia (ISGT Asia)},
  pages={1318--1323},
  year={2019},
  organization={IEEE}
}

@article{saravanakumar2015validation,
  title={Validation of an integral sliding mode control for optimal control of a three blade variable speed variable pitch wind turbine},
  author={Saravanakumar, Rajendran and Jena, Debashisha},
  journal={International Journal of Electrical Power \& Energy Systems},
  volume={69},
  pages={421--429},
  year={2015},
  publisher={Elsevier}
}

@article{cui2017observer,
  title={Observer based robust integral sliding mode load frequency control for wind power systems},
  author={Cui, Yanliang and Xu, Lanlan and Fei, Minrui and Shen, Yubin},
  journal={Control Engineering Practice},
  volume={65},
  pages={1--10},
  year={2017},
  publisher={Elsevier}
}

@inproceedings{pan2008maximum,
  title={Maximum power point tracking of wind energy conversion systems based on sliding mode extremum seeking control},
  author={Pan, Tinglong and Ji, Zhicheng and Jiang, Zhenhua},
  booktitle={2008 IEEE Energy 2030 Conference},
  pages={1--5},
  year={2008},
  organization={IEEE}
}

@inproceedings{ihedrane2017power,
  title={Power Control of Wind Turbine System based on DFIG-Generator, using Sliding Mode Technique},
  author={Ihedrane, Yasmine and Bossoufi, Badre and others},
  booktitle={2017 International Renewable and Sustainable Energy Conference (IRSEC)},
  pages={1--6},
  year={2017},
  organization={IEEE}
}

@inproceedings{zheng2009sliding,
  title={Sliding mode MPPT control of variable speed wind power system},
  author={Zheng, Xuemei and Li, Lin and Xu, Dianguo and Platts, Jim},
  booktitle={2009 Asia-Pacific Power and Energy Engineering Conference},
  pages={1--4},
  year={2009},
  organization={IEEE}
}

@article{weng2014sliding,
  title={Sliding mode regulator for maximum power tracking and copper loss minimisation of a doubly fed induction generator},
  author={Weng, Yung-Tsai and Hsu, Yuan-Yih},
  journal={IET Renewable Power Generation},
  volume={9},
  number={4},
  pages={297--305},
  year={2014},
  publisher={IET}
}

@inproceedings{amira2020sliding,
  title={Sliding Mode Control of Doubly-fed Induction Generator in Wind Energy Conversion System},
  author={Amira, Lakhdara and Tahar, Bahi and Abdelkrim, Moussaoui},
  booktitle={2020 8th International Conference on Smart Grid (icSmartGrid)},
  pages={96--100},
  year={2020},
  organization={IEEE}
}

@inproceedings{el2016comparative,
  title={Comparative study of the sliding mode and backstepping control in power control of a doubly fed induction generator},
  author={El Azzaoui, Marouane and Mahmoudi, Hassane and Bossoufi, Badre and El Ghamrasni, Madiha},
  booktitle={2016 International Symposium on Fundamentals of Electrical Engineering (ISFEE)},
  pages={1--5},
  year={2016},
  organization={IEEE}
}

@article{bekakra2014dfig,
  title={DFIG sliding mode control fed by back-to-back PWM converter with DC-link voltage control for variable speed wind turbine},
  author={Bekakra, Youcef and Attous, Djilani Ben},
  journal={Frontiers in Energy},
  volume={8},
  number={3},
  pages={345--354},
  year={2014},
  publisher={Springer}
}

@article{hu2010direct,
  title={Direct active and reactive power regulation of DFIG using sliding-mode control approach},
  author={Hu, Jiabing and Nian, Heng and Hu, Bin and He, Yikang and Zhu, ZQ},
  journal={IEEE Transactions on energy conversion},
  volume={25},
  number={4},
  pages={1028--1039},
  year={2010},
  publisher={IEEE}
}

@article{pande2013discrete,
  title={Discrete sliding mode control strategy for direct real and reactive power regulation of wind driven DFIG},
  author={Pande, VN and Mate, UM and Kurode, Shailaja},
  journal={Electric Power Systems Research},
  volume={100},
  pages={73--81},
  year={2013},
  publisher={Elsevier}
}

@article{mahboub2017sliding,
  title={Sliding mode control of grid connected brushless doubly fed induction generator driven by wind turbine in variable speed},
  author={Mahboub, M Abdelbasset and Drid, Said and Sid, MA and Cheikh, Ridha},
  journal={International Journal of System Assurance Engineering and Management},
  volume={8},
  number={2},
  pages={788--798},
  year={2017},
  publisher={Springer}
}

@inproceedings{jeong2008sliding,
  title={A sliding-mode approach to control the active and reactive powers for a DFIG in wind turbines},
  author={Jeong, Hea-Gwang and Kim, Won-Sang and Lee, Kyo-Beum and Jeong, Byung-Chang and Song, Seung-Ho},
  booktitle={2008 IEEE Power Electronics Specialists Conference},
  pages={120--125},
  year={2008},
  organization={IEEE}
}

@inproceedings{hamane2014control,
  title={Control of wind turbine based on DFIG using Fuzzy-PI and sliding mode controllers},
  author={Hamane, B and Doumbia, ML and Bouhamida, M and Benghanem, M},
  booktitle={2014 Ninth international conference on ecological vehicles and renewable energies (EVER)},
  pages={1--8},
  year={2014},
  organization={IEEE}
}

@inproceedings{hagh2015direct,
  title={Direct power control of DFIG based wind turbine based on wind speed estimation and particle swarm optimization},
  author={Hagh, Mehrdad Tarafdar and Roozbehani, S and Najaty, F and Ghaemi, S and Tan, Yingjie and Muttaqi, KM},
  booktitle={2015 Australasian Universities Power Engineering Conference (AUPEC)},
  pages={1--6},
  year={2015},
  organization={IEEE}
}

@inproceedings{tohidi2013multivariable,
  title={Multivariable input-output linearization sliding mode control of DFIG based wind energy conversion system},
  author={Tohidi, Akbar and Shamsaddinlou, Ali and Sedigh, Ali Khaki},
  booktitle={2013 9th Asian Control Conference (ASCC)},
  pages={1--6},
  year={2013},
  organization={IEEE}
}

@article{jafari2017analysis,
  title={Analysis and simulation of a sliding mode controller for mechanical part of a doubly-fed induction generator-based wind turbine},
  author={Jafari, Ahmad and Shahgholian, Ghazanfar},
  journal={IET Generation, Transmission \& Distribution},
  volume={11},
  number={10},
  pages={2677--2688},
  year={2017},
  publisher={IET}
}

@article{djoudi2018sliding,
  title={Sliding mode control of DFIG powers in the case of unknown flux and rotor currents with reduced switching frequency},
  author={Djoudi, Abdelhak and Bacha, Seddik and Iman-Eini, Hossein and Rekioua, Toufik},
  journal={International Journal of Electrical Power \& Energy Systems},
  volume={96},
  pages={347--356},
  year={2018},
  publisher={Elsevier}
}

@article{saad2015low,
  title={Low voltage ride through of doubly-fed induction generator connected to the grid using sliding mode control strategy},
  author={Saad, Naggar H and Sattar, Ahmed A and Mansour, Abd El-Aziz M},
  journal={Renewable Energy},
  volume={80},
  pages={583--594},
  year={2015},
  publisher={Elsevier}
}

@article{shang2012sliding,
  title={Sliding-mode-based direct power control of grid-connected wind-turbine-driven doubly fed induction generators under unbalanced grid voltage conditions},
  author={Shang, Lei and Hu, Jiabing},
  journal={IEEE Transactions on Energy Conversion},
  volume={27},
  number={2},
  pages={362--373},
  year={2012},
  publisher={IEEE}
}

@article{shehata2015sliding,
  title={Sliding mode direct power control of RSC for DFIGs driven by variable speed wind turbines},
  author={Shehata, EG},
  journal={Alexandria Engineering Journal},
  volume={54},
  number={4},
  pages={1067--1075},
  year={2015},
  publisher={Elsevier}
}

@article{djilali2018real,
  title={Real-time implementation of sliding-mode field-oriented control for a DFIG-based wind turbine},
  author={Djilali, Larbi and Sanchez, Edgar N and Belkheiri, Mohammed},
  journal={International Transactions on Electrical Energy Systems},
  volume={28},
  number={5},
  pages={e2539},
  year={2018},
  publisher={Wiley Online Library}
}

@article{martinez2013sliding,
  title={Sliding-mode control of a wind turbine-driven double-fed induction generator under non-ideal grid voltages},
  author={Martinez, Miren Itsaso and Susperregui, Ana and Tapia, Gerardo and Xu, Lie},
  journal={IET Renewable Power Generation},
  volume={7},
  number={4},
  pages={370--379},
  year={2013},
  publisher={IET}
}

@article{jaladi2018dc,
  title={DC-link transient improvement of SMC-based hybrid control of DFIG-WES under asymmetrical grid faults},
  author={Jaladi, Kiran Kumar and Sandhu, Kanwarjit Singh},
  journal={International Transactions on Electrical Energy Systems},
  volume={28},
  number={12},
  pages={e2633},
  year={2018},
  publisher={Wiley Online Library}
}

@article{munteanu2008energy,
  title={Energy-reliability optimization of wind energy conversion systems by sliding mode control},
  author={Munteanu, Iulian and Bacha, Seddik and Bratcu, Antoneta Iuliana and Guiraud, Joel and Roye, Daniel},
  journal={IEEE Transactions on Energy Conversion},
  volume={23},
  number={3},
  pages={975--985},
  year={2008},
  publisher={IEEE}
}

@article{dahiya2017hybridized,
  title={Hybridized gravitational search algorithm tuned sliding mode controller design for load frequency control system with doubly fed induction generator wind turbine},
  author={Dahiya, Preeti and Sharma, Veena and Naresh, Ram},
  journal={Optimal Control Applications and Methods},
  volume={38},
  number={6},
  pages={993--1003},
  year={2017},
  publisher={Wiley Online Library}
}

@article{dahiya2019optimal,
  title={Optimal sliding mode control for frequency regulation in deregulated power systems with DFIG-based wind turbine and TCSC--SMES},
  author={Dahiya, Preeti and Sharma, Veena and Naresh, R},
  journal={Neural Computing and Applications},
  volume={31},
  number={7},
  pages={3039--3056},
  year={2019},
  publisher={Springer}
}

@article{barambones2019robust,
  title={Robust wind speed estimation and control of variable speed wind turbines},
  author={Barambones, Oscar},
  journal={Asian Journal of Control},
  volume={21},
  number={2},
  pages={856--867},
  year={2019},
  publisher={Wiley Online Library}
}

@article{martinez2011sliding,
  title={Sliding-Mode Control for a DFIG-based Wind Turbine under Unbalanced Voltage},
  author={Martinez, M Itsaso and Susperregui, Ana and Tapia, Gerardo and Camblong, Haritza},
  journal={IFAC Proceedings Volumes},
  volume={44},
  number={1},
  pages={538--543},
  year={2011},
  publisher={Elsevier}
}

@inproceedings{aghatehrani2011sliding,
  title={Sliding mode control approach for voltage regulation in microgrids with DFIG based wind generations},
  author={Aghatehrani, Rasool and Kavasseri, Rajesh},
  booktitle={2011 IEEE Power and Energy Society General Meeting},
  pages={1--8},
  year={2011},
  organization={IEEE}
}

@article{tang2019non,
  title={Non-linear extended state observer-based sliding mode control for a direct-driven wind energy conversion system with permanent magnet synchronous generator},
  author={Tang, Yongwei and Li, Juan and Li, Shengquan and Cao, QingFeng and Wu, Yuanwang},
  journal={The Journal of Engineering},
  volume={2019},
  number={15},
  pages={613--617},
  year={2019},
  publisher={IET}
}

@article{yang2018passivity,
  title={Passivity-based sliding-mode control design for optimal power extraction of a PMSG based variable speed wind turbine},
  author={Yang, Bo and Yu, Tao and Shu, Hongchun and Zhang, Yuming and Chen, Jian and Sang, Yiyan and Jiang, Lin},
  journal={Renewable Energy},
  volume={119},
  pages={577--589},
  year={2018},
  publisher={Elsevier}
}

@article{hu2019sliding,
  title={Sliding mode extremum seeking control based on improved invasive weed optimization for MPPT in wind energy conversion system},
  author={Hu, Lu and Xue, Fei and Qin, Zijian and Shi, Jiying and Qiao, Wen and Yang, Wenjing and Yang, Ting},
  journal={Applied energy},
  volume={248},
  pages={567--575},
  year={2019},
  publisher={Elsevier}
}

@article{yin2015sliding,
  title={Sliding mode voltage control strategy for capturing maximum wind energy based on fuzzy logic control},
  author={Yin, Xiu-xing and Lin, Yong-gang and Li, Wei and Gu, Ya-jing and Lei, Peng-fei and Liu, Hong-wei},
  journal={International Journal of Electrical Power \& Energy Systems},
  volume={70},
  pages={45--51},
  year={2015},
  publisher={Elsevier}
}

@inproceedings{jingfeng2015maximum,
  title={Maximum power point tracking in variable speed wind turbine system via optimal torque sliding mode control strategy},
  author={Jingfeng, Mao and Aihua, Wu and Guoqing, WU and Xudong, Zhang},
  booktitle={2015 34th Chinese Control Conference (CCC)},
  pages={7967--7971},
  year={2015},
  organization={IEEE}
}

@inproceedings{errami2013maximum,
  title={Maximum power point tracking of a wind power system based on the PMSG using sliding mode direct torque control},
  author={Errami, Youssef and Maaroufi, Mohamed and Ouassaid, Mohammed},
  booktitle={2013 International Renewable and Sustainable Energy Conference (IRSEC)},
  pages={218--223},
  year={2013},
  organization={IEEE}
}

@inproceedings{kusumawardana2019simple,
  title={Simple MPPT based on Maximum Power with Double Integral Sliding Mode Current Control for Vertical Axis Wind Turbine},
  author={Kusumawardana, Arya and Gumilar, Langlang and Prihanto, Dwi and Wicaksono, Hendro and Saputra, Syaiqun Nizar Trisna and Prasetyo, Dedi},
  booktitle={2019 IEEE Conference on Energy Conversion (CENCON)},
  pages={31--36},
  year={2019},
  organization={IEEE}
}

@article{pan2020wind,
  title={Wind energy conversion systems analysis of PMSG on offshore wind turbine using improved SMC and Extended State Observer},
  author={Pan, Lin and Shao, Chengpeng},
  journal={Renewable Energy},
  volume={161},
  pages={149--161},
  year={2020},
  publisher={Elsevier}
}

@article{valenciaga2003power,
  title={Power control of a solar/wind generation system without wind measurement: A passivity/sliding mode approach},
  author={Valenciaga, Fernando and Puleston, Pablo F and Battaiotto, Pedro E},
  journal={IEEE Transactions on Energy Conversion},
  volume={18},
  number={4},
  pages={501--507},
  year={2003},
  publisher={IEEE}
}

@inproceedings{ayadi2015sliding,
  title={Sliding mode approach for blade pitch angle control wind turbine using PMSG under DTC},
  author={Ayadi, Marwa and Salem, Fatma Ben and Derbel, Nabil},
  booktitle={2015 16th International Conference on Sciences and Techniques of Automatic Control and Computer Engineering (STA)},
  pages={758--762},
  year={2015},
  organization={IEEE}
}

@inproceedings{xin2014sliding,
  title={Sliding mode control of pitch angle for direct driven PM wind turbine},
  author={Xin, Wang and Wanli, Zhu and Bin, Qin and Pengcheng, Li},
  booktitle={The 26th Chinese Control and Decision Conference (2014 CCDC)},
  pages={2447--2452},
  year={2014},
  organization={IEEE}
}

@inproceedings{lee2010sliding,
  title={Sliding mode controller for torque and pitch control of wind power system based on PMSG},
  author={Lee, Sung-Hun and Joo, Young-Jun and Back, Juhoon and Seo, Jin Heon},
  booktitle={ICCAS 2010},
  pages={1079--1084},
  year={2010},
  organization={IEEE}
}

@article{thakur2018control,
  title={Control of a PMSG Wind-Turbine Under Asymmetrical Voltage Sags Using Sliding Mode Approach},
  author={Thakur, Devbratta and Jiang, Jin},
  journal={IEEE Power and Energy Technology Systems Journal},
  volume={5},
  number={2},
  pages={47--55},
  year={2018},
  publisher={IEEE}
}

@inproceedings{jena2017novel,
  title={A novel SMC based vector control strategy used for decoupled control of PMSG based variable speed wind turbine system},
  author={Jena, Narendra Kumar and Pradhan, Haimabati and Choudhury, Abhijeet and Mohanty, KB and Sanyal, SK},
  booktitle={2017 International Conference on Circuit, Power and Computing Technologies (ICCPCT)},
  pages={1--6},
  year={2017},
  organization={IEEE}
}

@inproceedings{gajewski2017analysis,
  title={Analysis of Sliding Mode Control of variable speed wind turbine system with PMSG},
  author={Gajewski, Piotr and Pie{\'n}kowski, Krzysztof},
  booktitle={2017 International Symposium on Electrical Machines (SME)},
  pages={1--6},
  year={2017},
  organization={IEEE}
}

@article{huang2019dc,
  title={Dc-link voltage regulation for wind power system by complementary sliding mode control},
  author={Huang, Yanwei and Zhang, Zhongyang and Huang, Wenchao and Chen, Shaobin},
  journal={IEEE Access},
  volume={7},
  pages={22773--22780},
  year={2019},
  publisher={IEEE}
}

@article{merzoug2012sliding,
  title={Sliding mode control (SMC) of permanent magnet synchronous generators (PMSG)},
  author={Merzoug, MS and Benall, H and Louze, L},
  journal={Energy Procedia},
  volume={18},
  pages={43--52},
  year={2012},
  publisher={Elsevier}
}

@article{mozayan2016sliding,
  title={Sliding mode control of PMSG wind turbine based on enhanced exponential reaching law},
  author={Mozayan, Seyed Mehdi and Saad, Maarouf and Vahedi, Hani and Fortin-Blanchette, Handy and Soltani, Mohsen},
  journal={IEEE Transactions on Industrial Electronics},
  volume={63},
  number={10},
  pages={6148--6159},
  year={2016},
  publisher={IEEE}
}

@article{soufi2016particle,
  title={Particle swarm optimization based sliding mode control of variable speed wind energy conversion system},
  author={Soufi, Youcef and Kahla, Sami and Bechouat, Mohcene},
  journal={International Journal of Hydrogen Energy},
  volume={41},
  number={45},
  pages={20956--20963},
  year={2016},
  publisher={Elsevier}
}

@article{de2000dynamical,
  title={Dynamical sliding mode power control of wind driven induction generators},
  author={De Battista, Hern{\'a}n and Mantz, Ricardo J and Christiansen, Carlos F},
  journal={IEEE transactions on energy conversion},
  volume={15},
  number={4},
  pages={451--457},
  year={2000},
  publisher={IEEE}
}

@article{matas2008feedback,
  title={Feedback linearization of direct-drive synchronous wind-turbines via a sliding mode approach},
  author={Matas, Jos{\'e} and Castilla, Miguel and Guerrero, Josep M and de Vicu{\~n}a, Luis Garc{\'\i}a and Miret, Jaume},
  journal={IEEE Transactions on Power Electronics},
  volume={23},
  number={3},
  pages={1093--1103},
  year={2008},
  publisher={IEEE}
}

@inproceedings{pati2013sliding,
  title={A sliding mode controller based DTC scheme for performance improvement of cage induction generator used for wind power applications},
  author={Pati, Swagat and Samantray, Swati and Patel, Nimai Charan},
  booktitle={2013 Annual International Conference on Emerging Research Areas and 2013 International Conference on Microelectronics, Communications and Renewable Energy},
  pages={1--6},
  year={2013},
  organization={IEEE}
}

@inproceedings{pati2012performance,
  title={Performance improvement of Indirect vector controlled Induction Generator system with sliding mode controller},
  author={Pati, Swagat and Mohanty, Kanungo Barada and Das, Debiprasanna},
  booktitle={2012 Annual IEEE India Conference (INDICON)},
  pages={444--449},
  year={2012},
  organization={IEEE}
}

@inproceedings{mi2014sliding,
  title={The sliding mode pitch angle controller design for squirrel-cage induction generator wind power generation system},
  author={Mi, Yang and Bao, Xiaowei and Yang, Yang and Zhang, Han and Wang, Peng},
  booktitle={Proceedings of the 33rd Chinese control conference},
  pages={8113--8117},
  year={2014},
  organization={IEEE}
}

@article{de2000sliding,
  title={Sliding mode control of wind energy systems with DOIG-power efficiency and torsional dynamics optimization},
  author={De Battista, Hern{\'a}n and Puleston, Pablo F and Mantz, Ricardo J and Christiansen, Carlos F},
  journal={IEEE Transactions on Power Systems},
  volume={15},
  number={2},
  pages={728--734},
  year={2000},
  publisher={IEEE}
}

@article{puleston2000sliding,
  title={Sliding mode control for efficiency optimization of wind energy systems with double output induction generator},
  author={Puleston, PF and Mantz, RJ and Battaiotto, PE and Valenciaga, F},
  journal={International journal of energy research},
  volume={24},
  number={1},
  pages={77--92},
  year={2000},
  publisher={Wiley Online Library}
}

@article{amimeur2012sliding,
  title={Sliding mode control of a dual-stator induction generator for wind energy conversion systems},
  author={Amimeur, H and Aouzellag, D and Abdessemed, R and Ghedamsi, K},
  journal={International Journal of Electrical Power \& Energy Systems},
  volume={42},
  number={1},
  pages={60--70},
  year={2012},
  publisher={Elsevier}
}

@article{ur2019disturbance,
  title={A Disturbance Observer Based Sliding Mode Control for Variable Speed Wind Turbine},
  author={ur Rehman, Ateeq and Ali, Nihad and Khan, Owais and Pervaiz, Mahmood},
  journal={IETE Journal of Research},
  pages={1--8},
  year={2019},
  publisher={Taylor \& Francis}
}

@article{faskhodi2019output,
  title={Output Feedback Robust Siding Mode Controller Design for Wind Turbine},
  author={Faskhodi, Azita Sharifi and Fakharian, Ahmad},
  journal={Journal of Electrical Engineering \& Technology},
  volume={14},
  number={6},
  pages={2477--2485},
  year={2019},
  publisher={Springer}
}

@inproceedings{gui2020complementary,
  title={Complementary Sliding Mode Control for Variable Speed Variable Pitch Wind Turbine Based on Feedback Linearization},
  author={Gui, Kai and Cen, Lihui and Liu, Fang},
  booktitle={2020 Chinese Control And Decision Conference (CCDC)},
  pages={284--289},
  year={2020},
  organization={IEEE}
}

@inproceedings{larrea2018sliding,
  title={A Sliding Mode Pitch Control for Multi-Megawatt Offshore Wind Turbines},
  author={Larrea-Le{\'o}n, Carlos and Seshagiri, Sridhar},
  booktitle={2018 Clemson University Power Systems Conference (PSC)},
  pages={1--8},
  year={2018},
  organization={IEEE}
}

@inproceedings{corradini2017sliding,
  title={A sliding mode pitch controller for wind turbines operating in high wind speeds region},
  author={Corradini, Maria Letizia and Ippoliti, Gianluca and Orlando, Giuseppe},
  booktitle={2017 4th International Conference on Control, Decision and Information Technologies (CoDIT)},
  pages={0024--0029},
  year={2017},
  organization={IEEE}
}

@inproceedings{yinzhu2016study,
  title={The study of variable speed variable pitch controller for wind power generation systems based on sliding mode control},
  author={Yinzhu, Zhu and Yang, Mi},
  booktitle={2016 IEEE 11th Conference on Industrial Electronics and Applications (ICIEA)},
  pages={415--420},
  year={2016},
  organization={IEEE}
}

@article{zaafouri2018uncertain,
  title={Uncertain saturated discrete-time sliding mode control for a wind turbine using a two-mass model},
  author={Zaafouri, Chaker and Torchani, Borhen and Sellami, Anis and Garcia, Germain},
  journal={Asian Journal of Control},
  volume={20},
  number={2},
  pages={802--818},
  year={2018},
  publisher={Wiley Online Library}
}

@article{torchani2016variable,
  title={Variable speed wind turbine control by discrete-time sliding mode approach},
  author={Torchani, Borhen and Sellami, Anis and Garcia, Germain},
  journal={Isa Transactions},
  volume={62},
  pages={81--86},
  year={2016},
  publisher={Elsevier}
}

@article{zhang2013sliding,
  title={Sliding mode control-based active power control for wind farm with variable speed wind generation system},
  author={Zhang, Zhenzhen and Xu, Hongbing and Zou, Jianxiao and Zheng, Gang},
  journal={Proceedings of the Institution of Mechanical Engineers, Part C: Journal of Mechanical Engineering Science},
  volume={227},
  number={3},
  pages={449--458},
  year={2013},
  publisher={SAGE Publications Sage UK: London, England}
}

@article{berrada2020new,
  title={New structure of sliding mode control for variable speed wind turbine},
  author={Berrada, Youssef and Boumhidi, Ismail},
  journal={IFAC Journal of Systems and Control},
  volume={14},
  pages={100113},
  year={2020},
  publisher={Elsevier}
}

@inproceedings{beltran2007sliding,
  title={Sliding mode power control of variable speed wind energy conversion systems},
  author={Beltran, Brice and Ahmed-Ali, Tarek and Benbouzid, Mohamed El Hachemi},
  booktitle={2007 IEEE International Electric Machines \& Drives Conference},
  volume={2},
  pages={943--948},
  year={2007},
  organization={IEEE}
}

@article{agarwala2019design,
  title={Design of a nonlinear multi-input--multi-output sliding mode pitch angle and plunge controller for a 5MW wind turbine blade tip},
  author={Agarwala, Ranjeet and Chin, Robert A and Malali, Praveen},
  journal={Energy Sources, Part A: Recovery, Utilization, and Environmental Effects},
  volume={41},
  number={23},
  pages={2929--2943},
  year={2019},
  publisher={Taylor \& Francis}
}

@article{prasad2019non,
  title={Non-linear sliding mode control for frequency regulation with variable-speed wind turbine systems},
  author={Prasad, Sheetla and Purwar, Shubhi and Kishor, Nand},
  journal={International Journal of Electrical Power \& Energy Systems},
  volume={107},
  pages={19--33},
  year={2019},
  publisher={Elsevier}
}

@article{hu2017active,
  title={Active structural control for load mitigation of wind turbines via adaptive sliding-mode approach},
  author={Hu, Yinlong and Chen, Michael ZQ and Li, Chanying},
  journal={Journal of the Franklin Institute},
  volume={354},
  number={11},
  pages={4311--4330},
  year={2017},
  publisher={Elsevier}
}

@article{yin2019adaptive,
  title={Adaptive robust integral sliding mode pitch angle control of an electro-hydraulic servo pitch system for wind turbine},
  author={Yin, Xiuxing and Zhang, Wencan and Jiang, Zhansi and Pan, Li},
  journal={Mechanical Systems and Signal Processing},
  volume={133},
  pages={105704},
  year={2019},
  publisher={Elsevier}
}

@article{yin2015adaptive,
  title={Adaptive sliding mode back-stepping pitch angle control of a variable-displacement pump controlled pitch system for wind turbines},
  author={Yin, Xiu-xing and Lin, Yong-gang and Li, Wei and Liu, Hong-wei and Gu, Ya-jing},
  journal={ISA transactions},
  volume={58},
  pages={629--634},
  year={2015},
  publisher={Elsevier}
}

@article{ayyarao2019modified,
  title={Modified vector controlled DFIG wind energy system based on barrier function adaptive sliding mode control},
  author={Ayyarao, Tummala SLV},
  journal={Protection and Control of Modern Power Systems},
  volume={4},
  number={1},
  pages={1--8},
  year={2019},
  publisher={SpringerOpen}
}

@article{rajendran2014variable,
  title={Variable speed wind turbine for maximum power capture using adaptive fuzzy integral sliding mode control},
  author={Rajendran, Saravanakumar and Jena, Debashisha},
  journal={Journal of Modern Power Systems and Clean Energy},
  volume={2},
  number={2},
  pages={114--125},
  year={2014},
  publisher={SGEPRI}
}

@inproceedings{ameli2019adaptive,
  title={Adaptive Integral Sliding Mode Design for the Pitch Control of a Variable Speed Wind Turbine},
  author={Ameli, Sina and Morshed, Mohammad Javad and Fekih, Afef},
  booktitle={2019 IEEE Conference on Control Technology and Applications (CCTA)},
  pages={290--295},
  year={2019},
  organization={IEEE}
}

@inproceedings{merabet2011adaptive,
  title={Adaptive sliding mode speed control for wind turbine systems},
  author={Merabet, Adel and Beguenane, Rachid and Thongam, Jogendra S and Hussein, Ibrahim},
  booktitle={IECON 2011-37th Annual Conference of the IEEE Industrial Electronics Society},
  pages={2461--2466},
  year={2011},
  organization={IEEE}
}

@inproceedings{barambones2015wind,
  title={Wind turbine control scheme based on adaptive sliding mode controller and observer},
  author={Barambones, Oscar and de Durana, Jose Maria Gonzalez},
  booktitle={2015 IEEE 20th Conference on Emerging Technologies \& Factory Automation (ETFA)},
  pages={1--7},
  year={2015},
  organization={IEEE}
}

@inproceedings{barambones2016adaptive,
  title={Adaptive sliding mode control strategy for a wind turbine systems using a HOSM wind torque observer},
  author={Barambones, Oscar and de Durana, Jose M Gonzalez},
  booktitle={2016 IEEE International Energy Conference (ENERGYCON)},
  pages={1--6},
  year={2016},
  organization={IEEE}
}

@article{dash2018adaptive,
  title={Adaptive fractional integral terminal sliding mode power control of UPFC in DFIG wind farm penetrated multimachine power system},
  author={Dash, PK and Patnaik, RK and Mishra, SP},
  journal={Protection and Control of Modern Power Systems},
  volume={3},
  number={1},
  pages={8},
  year={2018},
  publisher={Springer}
}

@article{falehi2020innovative,
  title={An innovative optimal RPO-FOSMC based on multi-objective grasshopper optimization algorithm for DFIG-based wind turbine to augment MPPT and FRT capabilities},
  author={Falehi, Ali Darvish},
  journal={Chaos, Solitons \& Fractals},
  volume={130},
  pages={109407},
  year={2020},
  publisher={Elsevier}
}

@article{rui2019fractional,
  title={Fractional-order sliding mode control for hybrid drive wind power generation system with disturbances in the grid},
  author={Rui, Xiaoming and Yin, Wenliang and Dong, Yongxing and Lin, Lanlan and Wu, Xin},
  journal={Wind Energy},
  volume={22},
  number={1},
  pages={49--64},
  year={2019},
  publisher={Wiley Online Library}
}

@article{talebi2018fractional,
  title={Fractional order sliding mode controller design for large scale variable speed wind turbine for power optimization},
  author={Talebi, Jalal and Ganjefar, Soheil},
  journal={Environmental Progress \& Sustainable Energy},
  volume={37},
  number={6},
  pages={2124--2131},
  year={2018},
  publisher={Wiley Online Library}
}

@article{talebi2019fractional,
  title={Fractional-order back-stepping sliding-mode torque control for a wind energy conversion system},
  author={Talebi, Jalal and Ganjefar, Soheil},
  journal={Environmental Progress \& Sustainable Energy},
  volume={38},
  number={4},
  pages={13110},
  year={2019},
  publisher={Wiley Online Library}
}

@article{ebrahimkhani2016robust,
  title={Robust fractional order sliding mode control of doubly-fed induction generator (DFIG)-based wind turbines},
  author={Ebrahimkhani, Sadegh},
  journal={ISA transactions},
  volume={63},
  pages={343--354},
  year={2016},
  publisher={Elsevier}
}

@article{kerrouche2019fractional,
  title={Fractional-order sliding mode control for D-STATCOM connected wind farm based DFIG under voltage unbalanced},
  author={Kerrouche, KDE and Wang, L and Mezouar, A and Boumediene, L and Van Den Bossche, A},
  journal={Arabian Journal for Science and Engineering},
  volume={44},
  number={3},
  pages={2265--2280},
  year={2019},
  publisher={Springer}
}

@article{li2020fractional,
  title={Fractional-order sliding mode control for damping of subsynchronous control interaction in DFIG-based wind farms},
  author={Li, Penghan and Xiong, Linyun and Wang, Ziqiang and Ma, Meiling and Wang, Jie},
  journal={Wind Energy},
  volume={23},
  number={3},
  pages={749--762},
  year={2020},
  publisher={Wiley Online Library}
}

@article{merida2014analysis,
  title={Analysis and synthesis of sliding mode control for large scale variable speed wind turbine for power optimization},
  author={M{\'e}rida, Jov{\'a}n and Aguilar, Luis T and D{\'a}vila, Jorge},
  journal={Renewable Energy},
  volume={71},
  pages={715--728},
  year={2014},
  publisher={Elsevier}
}

@article{golnary2018design,
  title={Design and comparison of quasi continuous sliding mode control with feedback linearization for a large scale wind turbine with wind speed estimation},
  author={Golnary, Farshad and Moradi, Hamed},
  journal={Renewable Energy},
  volume={127},
  pages={495--508},
  year={2018},
  publisher={Elsevier}
}

@article{golnary2019dynamic,
  title={Dynamic modelling and design of various robust sliding mode controls for the wind turbine with estimation of wind speed},
  author={Golnary, Farshad and Moradi, Hamed},
  journal={Applied Mathematical Modelling},
  volume={65},
  pages={566--585},
  year={2019},
  publisher={Elsevier}
}

@article{beltran2008high,
  title={High-order sliding-mode control of variable-speed wind turbines},
  author={Beltran, Brice and Ahmed-Ali, Tarek and Benbouzid, Mohamed El Hachemi},
  journal={IEEE Transactions on Industrial electronics},
  volume={56},
  number={9},
  pages={3314--3321},
  year={2008},
  publisher={IEEE}
}

@article{eddine2016comprehensive,
  title={A comprehensive review of LVRT capability and sliding mode control of grid-connected wind-turbine-driven doubly fed induction generator},
  author={Eddine, Kamel Djamel and Mezouar, Abdelkader and Boumediene, Larbi and Van Den Bossche, Alex PM},
  journal={Automatika},
  volume={57},
  number={4},
  pages={922--935},
  year={2016},
  publisher={Taylor \& Francis}
}

@article{hwang2013adaptive,
  title={Adaptive fuzzy hierarchical sliding-mode control for the trajectory tracking of uncertain underactuated nonlinear dynamic systems},
  author={Hwang, Chih-Lyang and Chiang, Chiang-Cheng and Yeh, Yao-Wei},
  journal={IEEE Transactions on Fuzzy Systems},
  volume={22},
  number={2},
  pages={286--299},
  year={2013},
  publisher={IEEE}
}

@article{wu2018adaptive,
  title={Adaptive fuzzy sliding mode control for translational oscillator with rotating actuator: A fuzzy model},
  author={Wu, Tiebin and Gui, Weihua and Hu, Dong and Du, Chenglong},
  journal={IEEE Access},
  volume={6},
  pages={55861--55869},
  year={2018},
  publisher={IEEE}
}

@article{ma2018cooperative,
  title={Cooperative fault diagnosis for uncertain nonlinear multiagent systems based on adaptive distributed fuzzy estimators},
  author={Ma, Hong-Jun and Xu, Linxing},
  journal={IEEE transactions on cybernetics},
  year={2018},
  publisher={IEEE}
}

@article{xiong2019coordinated,
  title={A coordinated high-order sliding mode control of DFIG wind turbine for power optimization and grid synchronization},
  author={Xiong, Linyun and Li, Penghan and Wu, Fei and Ma, Meiling and Khan, Muhammad Waseem and Wang, Jie},
  journal={International Journal of Electrical Power \& Energy Systems},
  volume={105},
  pages={679--689},
  year={2019},
  publisher={Elsevier}
}

@inproceedings{benbouzid2012high,
  title={A high-order sliding mode observer for sensorless control of DFIG-based wind turbines},
  author={Benbouzid, Mohamed and Beltran, Brice and Mangel, Herv{\'e} and Mamoune, Abdeslam},
  booktitle={IECON 2012-38th Annual Conference on IEEE Industrial Electronics Society},
  pages={4288--4292},
  year={2012},
  organization={IEEE}
}

@article{pratap2018robust,
  title={Robust control of variable speed wind turbine using quasi-sliding mode approach},
  author={Pratap, Bhanu and Singh, Navdeep and Kumar, Vineet},
  journal={Procedia Computer Science},
  volume={125},
  pages={398--404},
  year={2018},
  publisher={Elsevier}
}

@inproceedings{zhang2009new,
  title={A new pitch control strategy for wind turbines base on quasi-sliding mode control},
  author={Zhang, Lei and Chunliang, E and Li, Haidong and Xu, Honghua},
  booktitle={2009 International Conference on Sustainable Power Generation and Supply},
  pages={1--4},
  year={2009},
  organization={IEEE}
}

@article{evangelista2012lyapunov,
  title={Lyapunov-designed super-twisting sliding mode control for wind energy conversion optimization},
  author={Evangelista, Carolina and Puleston, P and Valenciaga, Fernando and Fridman, Leonid M},
  journal={IEEE Transactions on industrial electronics},
  volume={60},
  number={2},
  pages={538--545},
  year={2012},
  publisher={IEEE}
}

@article{evangelista2013active,
  title={Active and reactive power control for wind turbine based on a MIMO 2-sliding mode algorithm with variable gains},
  author={Evangelista, Carolina and Valenciaga, Fernando and Puleston, Paul},
  journal={IEEE Transactions on Energy Conversion},
  volume={28},
  number={3},
  pages={682--689},
  year={2013},
  publisher={IEEE}
}

@article{beltran2012second,
  title={Second-order sliding mode control of a doubly fed induction generator driven wind turbine},
  author={Beltran, Brice and Benbouzid, Mohamed El Hachemi and Ahmed-Ali, Tarek},
  journal={IEEE Transactions on Energy Conversion},
  volume={27},
  number={2},
  pages={261--269},
  year={2012},
  publisher={IEEE}
}

@article{benbouzid2014second,
  title={Second-order sliding mode control for DFIG-based wind turbines fault ride-through capability enhancement},
  author={Benbouzid, Mohamed and Beltran, Brice and Amirat, Yassine and Yao, Gang and Han, Jingang and Mangel, Herv{\'e}},
  journal={ISA transactions},
  volume={53},
  number={3},
  pages={827--833},
  year={2014},
  publisher={Elsevier}
}

@article{evangelista2016receding,
  title={Receding horizon adaptive second-order sliding mode control for doubly-fed induction generator based wind turbine},
  author={Evangelista, Carolina A and Pisano, Alessandro and Puleston, Paul and Usai, Elio},
  journal={IEEE Transactions on control systems technology},
  volume={25},
  number={1},
  pages={73--84},
  year={2016},
  publisher={IEEE}
}

@article{martinez2017second,
  title={Second-order sliding-mode-based global control scheme for wind turbine-driven DFIGs subject to unbalanced and distorted grid voltage},
  author={Martinez, Miren Itsaso and Susperregui, Ana and Tapia, Gerardo},
  journal={IET Electric Power Applications},
  volume={11},
  number={6},
  pages={1013--1022},
  year={2017},
  publisher={IET}
}

@article{susperregui2013second,
  title={Second-order sliding-mode controller design and tuning for grid synchronisation and power control of a wind turbine-driven doubly fed induction generator},
  author={Susperregui, Ana and Martinez, Miren Itsaso and Tapia, Gerardo and Vechiu, Ionel},
  journal={IET Renewable Power Generation},
  volume={7},
  number={5},
  pages={540--551},
  year={2013},
  publisher={IET}
}

@article{meghni2017second,
  title={A second-order sliding mode and fuzzy logic control to optimal energy management in wind turbine with battery storage},
  author={Meghni, Billel and Dib, Djalel and Azar, Ahmad Taher},
  journal={Neural Computing and Applications},
  volume={28},
  number={6},
  pages={1417--1434},
  year={2017},
  publisher={Springer}
}

@article{valenciaga2015multiple,
  title={Multiple-input--multiple-output high-order sliding mode control for a permanent magnet synchronous generator wind-based system with grid support capabilities},
  author={Valenciaga, Fernando and Fernandez, Roberto Daniel},
  journal={IET Renewable Power Generation},
  volume={9},
  number={8},
  pages={925--934},
  year={2015},
  publisher={IET}
}

@article{abolvafaei2019maximum,
  title={Maximum power extraction from a wind turbine using second-order fast terminal sliding mode control},
  author={Abolvafaei, Mahnaz and Ganjefar, Soheil},
  journal={Renewable Energy},
  volume={139},
  pages={1437--1446},
  year={2019},
  publisher={Elsevier}
}

@inproceedings{rajendran2015adaptive,
  title={Adaptive nonsingular terminal sliding mode control for variable speed wind turbine},
  author={Rajendran, Saravanakumar and Jena, Debashisha},
  booktitle={2015 IEEE 28th Canadian Conference on Electrical and Computer Engineering (CCECE)},
  pages={937--942},
  year={2015},
  organization={IEEE}
}

@inproceedings{zheng2017full,
  title={Full-order nonsingular terminal sliding mode control for variable pitch wind turbine},
  author={Zheng, Xuemei and Song, Rui and Pang, Songnan and Li, Qiuming},
  booktitle={2017 12th IEEE Conference on Industrial Electronics and Applications (ICIEA)},
  pages={2072--2077},
  year={2017},
  organization={IEEE}
}

@article{patnaik2016adaptive,
  title={Adaptive terminal sliding mode power control of DFIG based wind energy conversion system for stability enhancement},
  author={Patnaik, RK and Dash, PK and Mahapatra, Kaveri},
  journal={International Transactions on Electrical Energy Systems},
  volume={26},
  number={4},
  pages={750--782},
  year={2016},
  publisher={Wiley Online Library}
}

@article{patnaik2020adaptive,
  title={Adaptive third order terminal sliding mode power control of DFIG based wind farm for power system stabilisation},
  author={Patnaik, RK and Dash, PK and Mishra, SP},
  journal={International Journal of Dynamics and Control},
  volume={8},
  number={2},
  pages={629--643},
  year={2020},
  publisher={Springer}
}

@article{wu2018maximal,
  title={Maximal wind energy capture fuzzy terminal sliding mode control for DFIG with speed sensorless},
  author={Wu, Zhongqiang and Wang, Xinyi},
  journal={IEEJ Transactions on Electrical and Electronic Engineering},
  volume={13},
  number={7},
  pages={953--962},
  year={2018},
  publisher={Wiley Online Library}
}

@article{zheng2018integral,
  title={Integral-type terminal sliding-mode control for grid-side converter in wind energy conversion systems},
  author={Zheng, Xuemei and Feng, Yong and Han, Fengling and Yu, Xinghuo},
  journal={IEEE Transactions on Industrial Electronics},
  volume={66},
  number={5},
  pages={3702--3711},
  year={2018},
  publisher={IEEE}
}

@article{pradhan2018composite,
  title={A Composite Sliding Mode Controller for Wind Power Extraction in Remotely Located Solar PV--Wind Hybrid System},
  author={Pradhan, Subarni and Singh, Bhim and Panigrahi, Bijaya Ketan and Murshid, Shadab},
  journal={IEEE Transactions on Industrial Electronics},
  volume={66},
  number={7},
  pages={5321--5331},
  year={2018},
  publisher={IEEE}
}

@article{karabacak2019new,
  title={A new perturb and observe based higher order sliding mode MPPT control of wind turbines eliminating the rotor inertial effect},
  author={Karabacak, Murat},
  journal={Renewable Energy},
  volume={133},
  pages={807--827},
  year={2019},
  publisher={Elsevier}
}

@article{benamor2019novel,
  title={A novel rooted tree optimization apply in the high order sliding mode control using super-twisting algorithm based on DTC scheme for DFIG},
  author={Benamor, A and Benchouia, MT and Srairi, K and Benbouzid, MEH},
  journal={International Journal of Electrical Power \& Energy Systems},
  volume={108},
  pages={293--302},
  year={2019},
  publisher={Elsevier}
}

@article{xiong2020high,
  title={High-order sliding mode control of DFIG under unbalanced grid voltage conditions},
  author={Xiong, Linyun and Li, Penghan and Wang, Jie},
  journal={International Journal of Electrical Power \& Energy Systems},
  volume={117},
  pages={105608},
  year={2020},
  publisher={Elsevier}
}

@article{belabbas2019comparative,
  title={Comparative study of back-stepping controller and super twisting sliding mode controller for indirect power control of wind generator},
  author={Belabbas, Belkacem and Allaoui, Tayeb and Tadjine, Mohamed and Denai, Mouloud},
  journal={International Journal of System Assurance Engineering and Management},
  volume={10},
  number={6},
  pages={1555--1566},
  year={2019},
  publisher={Springer}
}

@article{evangelista2012multivariable,
  title={Multivariable 2-sliding mode control for a wind energy system based on a double fed induction generator},
  author={Evangelista, Carolina Alejandra and Valenciaga, Fernando and Puleston, P},
  journal={International journal of hydrogen energy},
  volume={37},
  number={13},
  pages={10070--10075},
  year={2012},
  publisher={Elsevier}
}

@article{moussa2019super,
  title={Super-twisting sliding mode control for brushless doubly fed induction generator based on WECS},
  author={Moussa, Oussama and Abdessemed, Rachid and Benaggoune, Said},
  journal={International Journal of System Assurance Engineering and Management},
  volume={10},
  number={5},
  pages={1145--1157},
  year={2019},
  publisher={Springer}
}

@article{liu2016second,
  title={Second-order sliding mode control for power optimisation of DFIG-based variable speed wind turbine},
  author={Liu, Xiangjie and Han, Yaozhen and Wang, Chengcheng},
  journal={IET Renewable Power Generation},
  volume={11},
  number={2},
  pages={408--418},
  year={2016},
  publisher={IET}
}

@article{krim2018power,
  title={Power management and second-order sliding mode control for standalone hybrid wind energy with battery energy storage system},
  author={Krim, Youssef and Abbes, Dhaker and Krim, Saber and Faouzi Mimouni, Mohamed},
  journal={Proceedings of the Institution of Mechanical Engineers, Part I: Journal of Systems and Control Engineering},
  volume={232},
  number={10},
  pages={1389--1411},
  year={2018},
  publisher={SAGE Publications Sage UK: London, England}
}

@inproceedings{morshed2018sliding,
  title={A Sliding mode-based approach to Enhance power quality in grid connected wind turbines},
  author={Morshed, Mohammad Javad and Fekih, Afef},
  booktitle={2018 Annual American Control Conference (ACC)},
  pages={6132--6137},
  year={2018},
  organization={IEEE}
}

@article{evangelista2010wind,
  title={Wind turbine efficiency optimization. Comparative study of controllers based on second order sliding modes},
  author={Evangelista, C and Puleston, P and Valenciaga, F},
  journal={international journal of hydrogen energy},
  volume={35},
  number={11},
  pages={5934--5939},
  year={2010},
  publisher={Elsevier}
}

@article{krim2019second,
  title={A second-order sliding-mode control for a real time emulator of a wind power system synchronized with electrical network},
  author={Krim, Youssef and Abbes, Dhaker and Krim, Saber and Mimouni, Mohamed Faouzi},
  journal={International Transactions on Electrical Energy Systems},
  volume={29},
  number={9},
  pages={e12051},
  year={2019},
  publisher={Wiley Online Library}
}

@article{krim2018classical,
  title={Classical vector, first-order sliding-mode and high-order sliding-mode control for a grid-connected variable-speed wind energy conversion system: A comparative study},
  author={Krim, Youssef and Abbes, Dhaker and Krim, Saber and Mimouni, Mohamed Faouzi},
  journal={Wind Engineering},
  volume={42},
  number={1},
  pages={16--37},
  year={2018},
  publisher={SAGE Publications Sage UK: London, England}
}

@article{kelkoul2020stability,
  title={Stability analysis and study between Classical Sliding Mode Control (SMC) and Super Twisting Algorithm (STA) for Doubly Fed Induction Generator (DFIG) under Wind turbine},
  author={Kelkoul, Bahia and Boumediene, Abdelmadjid},
  journal={Energy},
  pages={118871},
  year={2020},
  publisher={Elsevier}
}

@inproceedings{liu2014sliding,
  title={Sliding mode control for DFIG-based wind energy conversion optimization with switching gain adjustment},
  author={Liu, Xiangjie and Han, Yaozhen},
  booktitle={Proceeding of the 11th World Congress on Intelligent Control and Automation},
  pages={1213--1218},
  year={2014},
  organization={IEEE}
}

@inproceedings{yao2007adaptive,
  title={Adaptive fuzzy sliding-mode control in variable speed adjustable pitch wind turbine},
  author={Yao, Xingjia and Liu, Yingming and Guo, Changchun},
  booktitle={2007 IEEE International Conference on Automation and Logistics},
  pages={313--318},
  year={2007},
  organization={IEEE}
}

@article{subramaniam2019passivity,
  title={Passivity-based fuzzy ISMC for wind energy conversion systems with PMSG},
  author={Subramaniam, Ramasamy and Joo, Young Hoon},
  journal={IEEE Transactions on Systems, Man, and Cybernetics: Systems},
  year={2019},
  publisher={IEEE}
}

@article{benchabane2012improved,
  title={An improved efficiency of fuzzy sliding mode control of permanent magnet synchronous motor for wind turbine generator pumping system},
  author={Benchabane, Fateh and Titaouine, Abdenacer and Bennis, Ouafae and Guettaf, Abderazak and Yahia, Khaled and Taibi, Djamel},
  journal={Applied Solar Energy},
  volume={48},
  number={2},
  pages={112--117},
  year={2012},
  publisher={Springer}
}

@article{hwang2019disturbance,
  title={Disturbance observer-based integral fuzzy sliding-mode control and its application to wind turbine system},
  author={Hwang, Sounghwan and Park, Jin Bae and Joo, Young Hoon},
  journal={IET Control Theory \& Applications},
  volume={13},
  number={12},
  pages={1891--1900},
  year={2019},
  publisher={IET}
}

@article{do2016disturbance,
  title={Disturbance observer-based fuzzy SMC of WECSs without wind speed measurement},
  author={Do, Ton Duc},
  journal={IEEE access},
  volume={5},
  pages={147--155},
  year={2016},
  publisher={IEEE}
}

@article{lahlou2019sliding,
  title={Sliding mode controller based on type-2 fuzzy logic PID for a variable speed wind turbine},
  author={Lahlou, Zineb and Meziane, Khaddouj Ben and Boumhidi, Ismail},
  journal={International Journal of System Assurance Engineering and Management},
  volume={10},
  number={4},
  pages={543--551},
  year={2019},
  publisher={Springer}
}

@article{tahir2018new,
  title={A new sliding mode control strategy for variable-speed wind turbine power maximization},
  author={Tahir, Khalfallah and Belfedal, Cheikh and Allaoui, Tayeb and Denai, Mouloud and Doumi, M'hamed},
  journal={International Transactions on Electrical Energy Systems},
  volume={28},
  number={4},
  pages={e2513},
  year={2018},
  publisher={Wiley Online Library}
}

@article{qu2018neural,
  title={Neural network-based $H_{\infty}$ sliding mode control for nonlinear systems with actuator faults and unmatched disturbances},
  author={Qu, Qiuxia and Zhang, Huaguang and Yu, Rui and Liu, Yang},
  journal={Neurocomputing},
  volume={275},
  pages={2009--2018},
  year={2018},
  publisher={Elsevier}
}

@article{van2019adaptive,
  title={Adaptive neural network-based backstepping sliding mode control approach for dual-arm robots},
  author={Van Nguyen, Thai and Thai, Nguyen Huu and Pham, Hai Tuan and Phan, Tuan Anh and Nguyen, Linh and Le, Hai Xuan and Nguyen, Hiep Duc},
  journal={Journal of Control, Automation and Electrical Systems},
  volume={30},
  number={4},
  pages={512--521},
  year={2019},
  publisher={Springer}
}

@article{ling150robust,
  title={Robust adaptive motion tracking of piezoelectric actuated stages using online neural-network-based sliding mode control},
  author={Ling, Jie and Feng, Zhao and Zheng, Dongdong and Yang, Jun and Yu, Haoyong and Xiao, Xiaohui},
  journal={Mechanical Systems and Signal Processing},
  volume={150},
  pages={107235},
  publisher={Elsevier}
}

@article{yin2015novel,
  title={A novel fuzzy integral sliding mode current control strategy for maximizing wind power extraction and eliminating voltage harmonics},
  author={Yin, Xiu-xing and Lin, Yong-gang and Li, Wei and Gu, Ya-jing and Liu, Hong-wei and Lei, Peng-fei},
  journal={Energy},
  volume={85},
  pages={677--686},
  year={2015},
  publisher={Elsevier}
}

@article{kenne2017new,
  title={A new adaptive control strategy for a class of nonlinear system using RBF neuro-sliding-mode technique: application to SEIG wind turbine control system},
  author={Kenn{\'e}, Godpromesse and Fotso, Armel Simo and Lamnabhi-Lagarrigue, Fran{\c{c}}oise},
  journal={International Journal of Control},
  volume={90},
  number={4},
  pages={855--872},
  year={2017},
  publisher={Taylor \& Francis}
}

@inproceedings{boulouma2016rbf,
  title={RBF neural network sliding mode control of a PMSG based wind energy conversion system},
  author={Boulouma, Sabri and Belmili, Hocine},
  booktitle={2016 International Renewable and Sustainable Energy Conference (IRSEC)},
  pages={438--443},
  year={2016},
  organization={IEEE}
}

@inproceedings{lamzouri2018robust,
  title={A Robust Double Integral Sliding Mode Control Based Neural Network of a Large Wind Turbine},
  author={Lamzouri, Fatima Ez-zahra and Boufounas, El-Mahjoub and El Amrani, Aumeur},
  booktitle={2018 International Conference on Electronics, Control, Optimization and Computer Science (ICECOCS)},
  pages={1--6},
  year={2018},
  organization={IEEE}
}

@article{haq2020maximum,
  title={Maximum power extraction strategy for variable speed wind turbine system via neuro-adaptive generalized global sliding mode controller},
  author={Haq, Izhar Ul and Khan, Qudrat and Khan, Ilyas and Akmeliawati, Rini and Nisar, Kottakkaran Soopy and Khan, Imran},
  journal={IEEE Access},
  volume={8},
  pages={128536--128547},
  year={2020},
  publisher={IEEE}
}

@article{hong2019enhanced,
  title={Enhanced radial fuzzy wavelet neural network with sliding mode control for a switched reluctance wind turbine distributed generation system},
  author={Hong, Chih-Ming and Chen, Chiung-Hsing},
  journal={Engineering Optimization},
  volume={51},
  number={7},
  pages={1133--1151},
  year={2019},
  publisher={Taylor \& Francis}
}

@inproceedings{djilali2017neural,
  title={Neural sliding mode field oriented control for DFIG based wind turbine},
  author={Djilali, Larbi and Sanchez, Edgar N and Belkheiri, Mohamed},
  booktitle={2017 IEEE International Conference on Systems, Man, and Cybernetics (SMC)},
  pages={2087--2092},
  year={2017},
  organization={IEEE}
}

@inproceedings{boufounas2014optimal,
  title={Optimal neural network sliding mode control for a variable speed wind turbine based on APSO algorithm},
  author={Boufounas, El-mahjoub and Boumhidi, Jaouad and Boumhidi, Ismail},
  booktitle={2014 Second World Conference on Complex Systems (WCCS)},
  pages={419--424},
  year={2014},
  organization={IEEE}
}

@inproceedings{berrada2015optimal,
  title={Optimal neural network sliding mode control without reaching phase using genetic algorithm for a wind turbine},
  author={Berrada, Youssef and Boufounas, El-mahjoub and Boumhidi, Ismail},
  booktitle={2015 10th International Conference on Intelligent Systems: Theories and Applications (SITA)},
  pages={1--6},
  year={2015},
  organization={IEEE}
}

@article{djilali2018neural,
  title={Neural Sliding Mode Control of a DFIG Based Wind Turbine with Measurement Delay},
  author={Djilali, Larbi and Caballero-Barrag{\'a}n, H and Osuna-Ibarra, LP and Sanchez, Edgar N and Loukianov, AG},
  journal={IFAC-PapersOnLine},
  volume={51},
  number={13},
  pages={456--461},
  year={2018},
  publisher={Elsevier}
}

@article{xin2014chattering,
  title={Chattering Free Sliding Mode Pitch Control of PMSG Wind Turbine},
  author={Xin, Wang and Wan-li, Zhu and Ceng, Song and Bin, Qin},
  journal={IFAC Proceedings Volumes},
  volume={47},
  number={3},
  pages={6758--6763},
  year={2014},
  publisher={Elsevier}
}

@article{schulte2015fault,
  title={Fault-tolerant control of wind turbines with hydrostatic transmission using Takagi--Sugeno and sliding mode techniques},
  author={Schulte, Horst and Gauterin, Eckhard},
  journal={Annual Reviews in Control},
  volume={40},
  pages={82--92},
  year={2015},
  publisher={Elsevier}
}

@inproceedings{beltran2009high,
  title={High-order sliding mode control of a DFIG-based wind turbine for power maximization and grid fault tolerance},
  author={Beltran, Brice and Benbouzid, MEH and Ahmed-Ali, Tarek},
  booktitle={2009 IEEE International Electric Machines and Drives Conference},
  pages={183--189},
  year={2009},
  organization={IEEE}
}

@article{sami2012fault,
  title={Fault tolerant adaptive sliding mode controller for wind turbine power maximisation},
  author={Sami, Montadher and Patton, Ron J},
  journal={IFAC Proceedings Volumes},
  volume={45},
  number={13},
  pages={499--504},
  year={2012},
  publisher={Elsevier}
}

@inproceedings{sami2012wind,
  title={Wind turbine power maximisation based on adaptive sensor fault tolerant sliding mode control},
  author={Sami, Montadher and Patton, Ron J},
  booktitle={2012 20th Mediterranean Conference on Control \& Automation (MED)},
  pages={1183--1188},
  year={2012},
  organization={IEEE}
}

@article{dursun2020second,
  title={Second-order sliding mode voltage-regulator for improving MPPT efficiency of PMSG-based WECS},
  author={Dursun, Emre Hasan and Kulaksiz, Ahmet Afsin},
  journal={International Journal of Electrical Power \& Energy Systems},
  volume={121},
  pages={106149},
  year={2020},
  publisher={Elsevier}
}

@article{morshed2020design,
  title={Design of a chattering-free integral terminal sliding mode approach for DFIG-based wind energy systems},
  author={Morshed, Mohammad Javad and Fekih, Afef},
  journal={Optimal Control Applications and Methods},
  volume={41},
  number={5},
  pages={1718--1734},
  year={2020},
  publisher={Wiley Online Library}
}

@article{song2019dynamic,
  title={Dynamic event-triggered sliding mode control: Dealing with slow sampling singularly perturbed systems},
  author={Song, Jun and Niu, Yugang},
  journal={IEEE Transactions on Circuits and Systems II: Express Briefs},
  year={2019},
  publisher={IEEE}
}

@article{liu2020event,
  title={Event-triggered sliding mode control of nonlinear dynamic systems},
  author={Liu, Xinxin and Su, Xiaojie and Shi, Peng and Shen, Chao and Peng, Yan},
  journal={Automatica},
  volume={112},
  pages={108738},
  year={2020},
  publisher={Elsevier}
}

@article{jiang2019takagi,
  title={Takagi--Sugeno Model Based Event-Triggered Fuzzy Sliding-Mode Control of Networked Control Systems With Semi-Markovian Switchings},
  author={Jiang, Baoping and Karimi, Hamid Reza and Kao, Yonggui and Gao, Cunchen},
  journal={IEEE Transactions on Fuzzy Systems},
  volume={28},
  number={4},
  pages={673--683},
  year={2019},
  publisher={IEEE}
}

@article{ni2016fast,
  title={Fast fixed-time nonsingular terminal sliding mode control and its application to chaos suppression in power system},
  author={Ni, Junkang and Liu, Ling and Liu, Chongxin and Hu, Xiaoyu and Li, Shilei},
  journal={IEEE Transactions on Circuits and Systems II: Express Briefs},
  volume={64},
  number={2},
  pages={151--155},
  year={2016},
  publisher={IEEE}
}

@article{jing2019adaptive,
  title={Adaptive sliding mode disturbance rejection control with prescribed performance for robotic manipulators},
  author={Jing, Chenghu and Xu, Hongguang and Niu, Xinjian},
  journal={ISA transactions},
  volume={91},
  pages={41--51},
  year={2019},
  publisher={Elsevier}
}

@article{liu2020adaptive,
  title={Adaptive sliding mode control for uncertain active suspension systems with Prescribed performance},
  author={Liu, Yan-Jun and Chen, Hao},
  journal={IEEE Transactions on Systems, Man, and Cybernetics: Systems},
  year={2020},
  publisher={IEEE}
}

@article{hou2019discrete,
  title={Discrete-time terminal sliding-mode tracking control with alleviated chattering},
  author={Hou, Huazhou and Yu, Xinghuo and Xu, Long and Chuei, Raymond and Cao, Zhenwei},
  journal={IEEE/ASME Transactions on Mechatronics},
  volume={24},
  number={4},
  pages={1808--1817},
  year={2019},
  publisher={IEEE}
}

@article{yi2019adaptive,
  title={Adaptive second-order fast nonsingular terminal sliding mode control for robotic manipulators},
  author={Yi, Shanchao and Zhai, Junyong},
  journal={ISA transactions},
  volume={90},
  pages={41--51},
  year={2019},
  publisher={Elsevier}
}

@article{guo2020terminal,
  title={Terminal sliding mode control of mems gyroscopes with finite-time learning},
  author={Guo, Yuyan and Xu, Bin and Zhang, Rui},
  journal={IEEE Transactions on Neural Networks and Learning Systems},
  year={2020},
  publisher={IEEE}
}

@article{wang2019discrete,
  title={Discrete-Time Fast Terminal Sliding Mode Control Design for DC--DC Buck Converters with Mismatched Disturbances},
  author={Wang, Zuo and Li, Shihua and Li, Qi},
  journal={IEEE Transactions on Industrial Informatics},
  volume={16},
  number={2},
  pages={1204--1213},
  year={2019},
  publisher={IEEE}
}

@article{gonzalez2011variable,
  title={Variable gain super-twisting sliding mode control},
  author={Gonzalez, Tenoch and Moreno, Jaime A and Fridman, Leonid},
  journal={IEEE Transactions on Automatic Control},
  volume={57},
  number={8},
  pages={2100--2105},
  year={2011},
  publisher={IEEE}
}

@article{hou2020composite,
  title={Composite super-twisting sliding mode control design for PMSM speed regulation problem based on a novel disturbance observer},
  author={Hou, Qiankang and Ding, Shihong and Yu, Xinghuo},
  journal={IEEE Transactions on Energy Conversion},
  year={2020},
  publisher={IEEE}
}

@article{ullah2020variable,
  title={Variable gain high order sliding mode control approaches for PMSG based variable speed wind energy conversion system},
  author={Ullah, Ameen and Khan, Laiq and Khan, Qudrat and Ahmad, Saghir},
  journal={Turkish Journal of Electrical Engineering \& Computer Sciences},
  volume={28},
  number={5},
  pages={2997--3012},
  year={2020},
  publisher={The Scientific and Technological Research Council of Turkey}
}

@article{djilali2020first,
  title={First and High Order Sliding Mode Control of a DFIG-Based Wind Turbine},
  author={Djilali, Larbi and Sanchez, Edgar N and Belkheiri, Mohammed},
  journal={Electric Power Components and Systems},
  pages={1--12},
  year={2020},
  publisher={Taylor \& Francis}
}

@article{tria2017integral,
  title={An integral sliding mode controller with super-twisting algorithm for direct power control of wind generator based on a doubly fed induction generator},
  author={Tria, FZ and Srairi, Kamel and Benchouia, MT and Benbouzid, Mohamed El Hachemi},
  journal={International Journal of System Assurance Engineering and Management},
  volume={8},
  number={4},
  pages={762--769},
  year={2017},
  publisher={Springer}
}

@article{mazen2020modeling,
  title={Modeling and Performance Improvement of Direct Power Control of Doubly-Fed Induction Generator Based Wind Turbine through Second-Order Sliding Mode Control Approach},
  author={Mazen Alhato, Mohammed and Bouall{\`e}gue, Soufiene and Rezk, Hegazy},
  journal={Mathematics},
  volume={8},
  number={11},
  pages={2012},
  year={2020},
  publisher={Multidisciplinary Digital Publishing Institute}
}

@article{sami2020sensorless,
  title={Sensorless fractional order composite sliding mode control design for wind generation system},
  author={Sami, Irfan and Ullah, Shafaat and Ullah, Nasim and Ro, Jong-Suk},
  journal={ISA transactions},
  year={2020},
  publisher={Elsevier}
}

@article{pape2019offshore,
  title={An offshore wind farm with DC collection system featuring differential power processing},
  author={Pape, Marten and Kazerani, Mehrdad},
  journal={IEEE Transactions on Energy Conversion},
  volume={35},
  number={1},
  pages={222--236},
  year={2019},
  publisher={IEEE}
}

@article{yang2016survey,
  title={A survey of fault diagnosis for onshore grid-connected converter in wind energy conversion systems},
  author={Yang, Zhimin and Chai, Yi},
  journal={Renewable and Sustainable Energy Reviews},
  volume={66},
  pages={345--359},
  year={2016},
  publisher={Elsevier}
}

@article{njiri2016state,
  title={State-of-the-art in wind turbine control: Trends and challenges},
  author={Njiri, Jackson G and S{\"o}ffker, Dirk},
  journal={Renewable and Sustainable Energy Reviews},
  volume={60},
  pages={377--393},
  year={2016},
  publisher={Elsevier}
}

@article{soliman2020novel,
  title={A Novel Adaptive Control Method for Performance Enhancement of Grid-Connected Variable-Speed Wind Generators},
  author={Soliman, Mahmoud A and Hasanien, Hany M and Al-Durra, Ahmed and Alsaidan, Ibrahim},
  journal={IEEE Access},
  volume={8},
  pages={82617--82629},
  year={2020},
  publisher={IEEE}
}

@inproceedings{wang2007survey,
  title={A survey on wind power technologies in power systems},
  author={Wang, Chen and Wang, Liming and Shi, Libao and Ni, Yixin},
  booktitle={2007 IEEE Power Engineering Society General Meeting},
  pages={1--6},
  year={2007},
  organization={IEEE}
}

@article{yin2016turbine,
  title={Turbine stability-constrained available wind power of variable speed wind turbines for active power control},
  author={Yin, Minghui and Xu, Yan and Shen, Chun and Liu, Jiankun and Dong, Zhao Yang and Zou, Yun},
  journal={IEEE Transactions on Power Systems},
  volume={32},
  number={3},
  pages={2487--2488},
  year={2016},
  publisher={IEEE}
}

@article{ghanbarpour2020dependable,
  title={Dependable power extraction in wind turbines using model predictive fault tolerant control},
  author={Ghanbarpour, Kamyar and Bayat, Farhad and Jalilvand, Abolfazl},
  journal={International Journal of Electrical Power \& Energy Systems},
  volume={118},
  pages={105802},
  year={2020},
  publisher={Elsevier}
}

@article{kuhne2018fault,
  title={Fault estimation and fault-tolerant control of the FAST NREL 5-MW reference wind turbine using a proportional multi-integral observer},
  author={K{\"u}hne, Patrick and P{\"o}schke, Florian and Schulte, Horst},
  journal={International Journal of Adaptive Control and Signal Processing},
  volume={32},
  number={4},
  pages={568--585},
  year={2018},
  publisher={Wiley Online Library}
}

@article{shaker2014active,
  title={Active sensor fault tolerant output feedback tracking control for wind turbine systems via T--S model},
  author={Shaker, Montadher Sami and Patton, Ron J},
  journal={Engineering Applications of Artificial Intelligence},
  volume={34},
  pages={1--12},
  year={2014},
  publisher={Elsevier}
}

@article{kamal2013fuzzy,
  title={Fuzzy fault-tolerant control of wind-diesel hybrid systems subject to sensor faults},
  author={Kamal, Elkhatib and Aitouche, Abdel and Oueidat, Mohamad},
  journal={IEEE Transactions on Sustainable Energy},
  volume={4},
  number={4},
  pages={857--866},
  year={2013},
  publisher={IEEE}
}

@article{shahbazi2012fpga,
  title={FPGA-based reconfigurable control for fault-tolerant back-to-back converter without redundancy},
  author={Shahbazi, Mahmoud and Poure, Philippe and Saadate, Shahrokh and Zolghadri, Mohammad Reza},
  journal={IEEE Transactions on Industrial Electronics},
  volume={60},
  number={8},
  pages={3360--3371},
  year={2012},
  publisher={IEEE}
}

@article{zhang2020adaptive,
  title={Adaptive sliding mode-based lateral stability control of steer-by-wire vehicles with experimental validations},
  author={Zhang, Jie and Wang, Hai and Zheng, Jinchuan and Cao, Zhenwei and Man, Zhihong and Yu, Ming and Chen, Long},
  journal={IEEE Transactions on Vehicular Technology},
  volume={69},
  number={9},
  pages={9589--9600},
  year={2020},
  publisher={IEEE}
}


\end{document}